\documentclass[10pt,journal,twocolumn,a4paper]{IEEEtran}
\usepackage{amsfonts,amsmath,amssymb,amstext,amsthm,cite,color,dsfont,enumerate,hyperref,makeidx,verbatim,xfrac,xr,url}
\usepackage{pict2e}
\usepackage[mathscr]{euscript}
\hypersetup{
pdftitle		={The Sphere Packing Bound via Augustin's Method},
pdfauthor		={Bar{\i}s Nakiboglu},}
\title{The Sphere Packing Bound via Augustin's Method}
\author{Bar\i\c{s} Nakibo\u{g}lu \thanks{e-mail:\href{mailto:bnakib@metu.edu.tr}{bnakib@metu.edu.tr}}}
\IEEEaftertitletext{ 
\flushright
\vspace{-2\baselineskip}
{\it 
Can{\i}m halam Fatma Nakibo\u{g}lu Aydi\c{c}'in an{\i}s{\i}na adanm{\i}{\c{s}}t{\i}r. \hspace{1.6cm}~
\\
Dedicated to the memory of my dear aunt Fatma Nakibo\u{g}lu Aydi\c{c}.}\qquad~\\
~\\
}
\externaldocument[A-]{main-A}
\externaldocument[C-]{main-C}
\theoremstyle{plain}
\newtheorem{lemma}{Lemma} 
\newtheorem{theorem}{Theorem}

\newtheorem*{conjecture*}{Conjecture}

\theoremstyle{definition}
\newtheorem{definition}{Definition} 
\newtheorem{assumption}{Assumption}

\newtheorem{remark}{Remark}
\newtheorem*{remark*}{Remark}
\definecolor{mygray}{gray}{0.4}
%%%%%%%%%%%%%%%%%%%%%%%%%%%%%%%%%%%%%%%%%%%%%%%%%%%%%%%%%%%%%%%%%%%%%%%%%%%%%%%%%%%%%%%%%%%%%%%%%%%%%%%%%%%%%%%%%%%%%%%%
\newcommand{\set} [1]			{{\mathscr{{#1}}}}
\newcommand{\alg}[1]			{{\mathcal{{#1}}}}
\newcommand{\rndv}[1]      {{\mathsf{{#1}}}}

\newcommand{\msr}[1]       {{\it    {{#1}}}}
\newcommand{\cnst}[1]      {{\mathit{{#1}}}}

\newcommand{\cntt}[1]      {{\widetilde{{\mathit{{#1}}}}}}

%%%%%%%%%%%%%%%%%%%%%%%%%%%%%%%%%%%%%%%%%%%%%%%%%%%%%%%%%%%%%%%%%%%%%%%%%%%%%%%%%%%%%%%%%%%%%%%%%%%%%%%%%%%%%%%%%%%%%%%%
%\newcommand{\naturals}[0]	{{\mathbb{N}}}
\newcommand{\sss}[1]		{{\mathit{2}^{{#1}}}}
\newcommand{\integers}[1]	{{\mathbb{Z}}_{^{{#1}}}}

\newcommand{\reals}[1]		{{\mathbb{R}}_{^{{#1}}}}
%\newcommand{\reals}[1]		{{\Re}_{^{{#1}}}}
%%%%%%%%%%%%%%%%%%%%%%%%%%%%%%%%%%%%%%%%%%%%%%%%%%%%%%%%%%%%%%%%%%%%%%%%%%%%%%%%%%%%%%%%%%%%%%%%%%%%%%%%%%%%%%%%%%%%%%%%
\newcommand{\bigo} [1]     {{\cnst{O}\left({{#1}}\right)}}

%%%%%%%%%%%%%%%%%%%%%%%%%%%%%%%%%%%%%%%%%%%%%%%%%%%%%%%%%%%%%%%%%%%%%%%%%%%%%%%%%%%%%%%%%%%%%%%%%%%%%%%%%%%%%%%%%%%%%%%%

\renewcommand{\vec}[1]     {\overrightarrow{{#1}}}

\newcommand{\dif}[1]       {{\mathrm{d}{#1}}}  
%%%%%%%%%%%%%%%%%%%%%%%%%%%%%%%%%%%%%%%%%%%%%%%%%%%%%%%%%%%%%%%%%%%%%%%%%%%%%%%%%%%%%%%%%%%%%%%%%%%%%%%%%%%%%%%%%%%%%%%%
%%%%%%%%%%%%%%%%%%%%%%%%%%%%%%%%%%%%%%%%%%%%%%%%%%%%%%%%%%%%%%%%%%%%%%%%%%%%%%%%%%%%%%%%%%%%%%%%%%%%%%%%%%%%%%%%%%%%%%%%
\newcommand{\der}[2]        {\tfrac{\dif{#1}}{\dif{#2}}}  
  
\newcommand{\supp}[1]       {\mathtt{supp}({{#1}})}       

\newcommand{\DEF}[0]			{{\!\!~\triangleq\!~}}  
\newcommand{\mtimes}[0]			{{\circledast}}

\newcommand{\AC}[0]            {{\prec}}

\newcommand{\abs}[1]           {{\left\lvert{{#1}}\right\lvert}}

\newcommand{\lon}[1]           {{{\left\lVert{{#1}}\right\lVert}}} 

%\newcommand{\norm}[2]          {{{\left\lVert{{#1}}\right\lVert}_{{#2}}}}
%%%%%%%%%%%%%%%%%%%%%%%%%%%%%%%%%%%%%%%%%%%%%%%%%%%%%%%%%%%%%%%%%%%%%%%%%%%%%%%%%%%%%%%%%%%%%%%%%%%%%%%%%%%%%%%%%%%%%%%%
\newcommand{\IND}[1]           {{\mathds{1}_{\{#1\}}}}    
\newcommand{\ind}[0]           {{\imath}}
\newcommand{\jnd}[0]           {{\jmath}}
\newcommand{\knd}[0]           {{\kappa}}
\newcommand{\tin}[0]           {{\cnst{t}}}
\newcommand{\blx}[0]           {{\cnst{n}}}
\newcommand{\tlx}[0]           {{\cnst{T}}}

\newcommand{\pint}[0]          {{\cnst{\zeta}}}
%%%%%%%%%%%%%%%%%%%%%%%%%%%%%%%%%%%%%%%%%%%%%%%%%%%%%%%%%%%%%%%%%%%%%%%%%%%%%%%%%%%%%%%%%%%%%%%%%%%%%%%%%%%%%%%%%%%%%%%%
\newcommand{\PXS}[2]         {{\bf P}_{{#1}}\!\left[{#2}\right]}
\newcommand{\EXS}[2]         {{\bf E}_{{#1}}\!\left[{#2}\right]}

\newcommand{\PX}[1]          {\PXS{\!}{{#1}}}                      %Probability
\newcommand{\EX}[1]          {\EXS{\!}{{#1}}}                      %Expectation
                      %Variation 

  %Conditional Probability
  %Conditional Expectation
  %Conditional Expectation
%%%%%%%%%%%%%%%%%%%%%%%%%%%%%%%%%%%%%%%%%%%%%%%%%%%%%%%%%%%%%%%%%%%%%%%%%%%%%%%%%%%%%%%%%%%%%%%%%%%%%%%%%%%%%%%%%%%%%%%% 
\newcommand{\Lp}[2]          {{{\alg{L}}}^{{#1}}({#2})}   
\newcommand{\Lon}[1]         {{\Lp{1}{#1}}}
\newcommand{\fX}[0]          {{\cnst{f}}}   
\newcommand{\FX}[0]          {{\cnst{F}}}   

\newcommand{\gX}[0]          {{\cnst{g}}}

\newcommand{\fXS}[0]         {{\set{F}}}

\newcommand{\cm}[1]         {{{\cnst{g}}_{{#1}}}}

%%%%%%%%%%%%%%%%%%%%%%%%%%%%%%%%%%%%%%%%%%%%%%%%%%%%%%%%%%%%%%%%%%%%%%%%%%%%%%%%%%%%%%%%%%%%%%%%%%%%%%%%%%%%%%%%%%%%%%%%%%%%%%%%%%%%%%%%%%%%%%%%%%%%%%%%%%%%%%%%%%%%%%%%%%%%%%%%%%%%%%%%%%%%%%%%%%%%%%%%%%%%%%%%

\renewcommand{\div}[3]			{{\cnst{d}}_{{#1}}            \!\left({\left.           \! {#2}\right\Vert {#3}}                 \right)}

\newcommand{\RD}[3]				{{\cnst{D}}_{{#1}}            \!\left(\left.            \! {#2}\right\Vert {#3}                  \right)}

\newcommand{\RDF}[4]			{{\cnst{D}}_{{#1}}^{{#2}}     \!\left(\left.            \! {#3}\right\Vert {#4}                  \right)}

\newcommand{\RMI}[3]			{{\cnst{I}}_{{#1}}            \!\left(                  \! {#2};         \!{#3}                \!\right)}

\newcommand{\RC}[2]				{{\cnst{C}}_{{#1},{#2}}}

\newcommand{\CRC}[3]			{{\cnst{C}}_{{#1},{#2},{#3}}}

\newcommand{\RCI}[3]			{{\cntt{C}}_{{#1},{#2}}^{#3}}

\newcommand{\RR}[2]				{{\cnst{S}}_{{#1},{#2}}}
\newcommand{\RRR}[3]			{{\cnst{S}}_{{#1},{#2}}({#3})}

\newcommand{\SC}[1]				{{\cnst{C}}({#1})}

\newcommand{\costc}[0]			{{\cnst{\varrho}}}
\newcommand{\lgm}[0]			{{\cnst{\lambda}}}
    
\newcommand{\rfm}[0]			{{{\msr{\nu}}}}

\newcommand{\cset}[0]			{{\set{A}}}

\newcommand{\spa}[2]			{{\cntt{E}_{sp\!}^{{#1}}}\left({#2}\right)}

\newcommand{\spe}[1]			{{\cnst{E}_{sp\!}}       \left({#1}\right)}
\newcommand{\sce}[1]			{{\cnst{E}_{sc\!}}       \left({#1}\right)}
\newcommand{\rate}[0]			{{\cnst{R}}}
%%%%%%%%%%%%%%%%%%%%%%%%%%%%%%%%%%%%%%%%%%%%%%%%%%%%%%%%%%%%%%%%%%%%%%%%%%%%%%%%%%%%%%%%%%%%%%%%%%%%%%%%%%%%%%%%%%%
%%%%%%%%%%%%%%%%%%%%%%%%%%%%%%%%%%%%%%%%%%%%%%%%%%%%%%%%%%%%%%%%%%%%%%%%%%%%%%%%%%%%%%%%%%%%%%%%%%%%%%%%%%%%%%%%%%%
\newcommand{\GCD}[1]			{{{{\cnst{\Phi}}}\left({{#1}}\right)}}

\newcommand{\cln}[1]          {{{\xi}_{{#1}}}}
\newcommand{\cla}[2]          {{{\xi}_{{#1}}^{{#2}}}}
%%%%%%%%%%%%%%%%%%%%%%%%%%%%%%%%%%%%%%%%%%%%%%%%%%%%%%%%%%%%%%%%%%%%%%%%%%%%%%%%%%%%%%%%%%%%%%%%%%%%%%%%%%%%%%%%%%%%%%
%%%%%%%%%%%%%%%%%%%%%%%%%%%%%%%%%%%%%%%%%%%%%%%%%%%%%%%%%%%%%%%%%%%%%%%%%%%%%%%%%%%%%%%%%%%%%%%%%%%%%%%%%%%%%%%%%%%%%%
\newcommand{\rens}[0]         {{\cnst{\varphi}}}
\newcommand{\rnb}[0]          {{\cnst{\beta}}}
\newcommand{\rnf}[0]          {{\cnst{\phi}}}

\newcommand{\rng}[0]          {{\cnst{\rho}}}
\newcommand{\rno}[0]          {{\cnst{\alpha}}}
\newcommand{\rnt}[0]          {{\cnst{\eta}}}
%%%%%%%%%%%%%%%%%%%%%%%%%%%%%%%%%%%%%%%%%%%%%%%%%%%%%%%%%%%%%%%%%%%%%%%%%%%%%%%%%%%%%%%%%%%%%%%%%%%%%%%%%%%%%%%%%%%%%%%% 
%%%%%%%%%%%%%%%%%%%%%%%%%%%%%%%%%%%%%%%%%%%%%%%%%%%%%%%%%%%%%%%%%%%%%%%%%%%%%%%%%%%%%%%%%%%%%%%%%%%%%%%%%%%%%%%%%%%%%%%%
                  
\newcommand{\Pe}[0]            {{\it P_{{{\bf e}}}}}      
                  
\newcommand{\Pem}[1]           {{\it P_{{{\bf e}}}^{{#1}}}}         
\newcommand{\tenc}[0]          {\widetilde{\varPsi}} 
\newcommand{\enc}[0]           {{\varPsi}} 
\newcommand{\dec}[0]           {{\varTheta}}    

%%%%%%%%%%%%%%%%%%%%%%%%%%%%%%%%%%%%%%%%%%%%%%%%%%%%%%%%%%%%%%%%%%%%%%%%%%%%%%%%%%%%%%%%%%%%%%%%%%%%%%%%%%%%%%%%%%%%%%%%
\newcommand{\brl}[0]           {{\alg{B}}}
\newcommand{\rborel}[1]        {{\brl}({#1})}

%%%%%%%%%%%%%%%%%%%%%%%%%%%%%%%%%%%%%%%%%%%%%%%%%%%%%%%%%%%%%%%%%%%%%%%%%%%%%%%%%%%%%%%%%%%%%%%%%%%%%%%%%%%%%%%%%%%%%%%% 

\newcommand{\oev}[0]           {{\set{E}}}

\newcommand{\cha}[0]           {{\set{W}}}

\newcommand{\Pcha}[1]          {{\varLambda}^{{{#1}}}}

%%%%%%%%%%%%%%%%%%%%%%%%%%%%%%%%%%%%%%%%%%%%%%%%%%%%%%%%%%%%%%%%%%%%%%%%%%%%%%%%%%%%%%%%%%%%%%%%%%%%%%%%%%%%%%%%%%%%%%%%
\newcommand{\smea}[1]          {{{\alg{M}}({#1})}}
\newcommand{\fmea}[1]          {{{\alg{M}}^{^{+}}\!({#1})}}

\newcommand{\pmea}[1]          {{{\alg{P}}({#1})}}

\newcommand{\pdis}[1]          {{{\set{P}}({#1})}}

%%%%%%%%%%%%%%%%%%%%%%%%%%%%%%%%%%%%%%%%%%%%%%%%%%%%%%%%%%%%%%%%%%%%%%%%%%%%%%%%%%%%%%%%%%%%%%%%%%%%%%%%%%%%%%%%%%%%%%%%
\newcommand{\dinp}[0]          {{\cnst{x}}}

\newcommand{\inpS}[0]          {{\set{X}}}
\newcommand{\inpA}[0]          {{\alg{X}}}

%%%%%%%%%%%%%%%%%%%%%%%%%%%%%%%%%%%%%%%%%%%%%%%%%%%%%%%%%%%%%%%%%%%%%%%%%%%%%%%%%%%%%%%%%%%%%%%%%%%%%%%%%%%%%%%%%%%%%%%%
\newcommand{\dout}[0]          {{\cnst{y}}}
\newcommand{\out}[0]           {{\rndv{Y}}}
\newcommand{\outS}[0]          {{\set{Y}}}
\newcommand{\outA}[0]          {{\alg{Y}}}

%%%%%%%%%%%%%%%%%%%%%%%%%%%%%%%%%%%%%%%%%%%%%%%%%%%%%%%%%%%%%%%%%%%%%%%%%%%%%%%%%%%%%%%%%%%%%%%%%%%%%%%%%%%%%%%%%%%%%%%%
\newcommand{\dsta}[0]          {{\cnst{z}}}

\newcommand{\staS}[0]          {{\set{Z}}}
\newcommand{\staA}[0]          {{\alg{Z}}}

%%%%%%%%%%%%%%%%%%%%%%%%%%%%%%%%%%%%%%%%%%%%%%%%%%%%%%%%%%%%%%%%%%%%%%%%%%%%%%%%%%%%%%%%%%%%%%%%%%%%%%%%%%%%%%%%%%%%%%%%
\newcommand{\dmes}[0]          {{\cnst{m}}}

\newcommand{\mesS}[0]          {{\set{M}}}

%%%%%%%%%%%%%%%%%%%%%%%%%%%%%%%%%%%%%%%%%%%%%%%%%%%%%%%%%%%%%%%%%%%%%%%%%%%%%%%%%%%%%%%%%%%%%%%%%%%%%%%%%%%%%%%%%%%%%%%%

\newcommand{\estS}[0]          {{\widehat{{\set{M}}}}}

%%%%%%%%%%%%%%%%%%%%%%%%%%%%%%%%%%%%%%%%%%%%%%%%%%%%%%%%%%%%%%%%%%%%%%%%%%%%%%%%%%%%%%%%%%%%%%%%%%%%%%%%%%%%%%%%%%%%%%%%%
%%%%%%%%%%%%%%%%%%%%%%%%%%%%%%%%%%%%%%%%%%%%%%%%%%%%%%%%%%%%%%%%%%%%%%%%%%%%%%%%%%%%%%%%%%%%%%%%%%%%%%%%%%%%%%%%%%%%%%%%
%%%%%%%%%%%%%%%%%%%%%%%%%%%%%%%%%%%%%%%%%%%%%%%%%%%%%%%%%%%%%%%%%%%%%%%%%%%%%%%%%%%%%%%%%%%%%%%%%%%%%%%%%%%%%%%%%%%%%%%%
%%%%%%%%%%%%%%%%%%%%%%%%%%%%%%%%%%%%%%%%%%%%%%%%%%%%%%%%%%%%%%%%%%%%%%%%%%%%%%%%%%%%%%%%%%%%%%%%%%%%%%%%%%%%%%%%%%%%%%%%
%%%%%%%%%%%%%%%%%%%%%%%%%%%%%%%%%%%%%%%%%%%%%%%%%%%%%%%%%%%%%%%%%%%%%%%%%%%%%%%%%%%%%%%%%%%%%%%%%%%%%%%%%%%%%%%%%%%%%%%%

\newcommand{\mean}[0]        {{{\msr{\mu}}}}

\newcommand{\mA}[0]				{{\msr{a}}}    
\newcommand{\amn}[1]			{{{\mA}_{{#1}}}}

\newcommand{\mB}[0]				{{\msr{b}}}    
\newcommand{\bmn}[1]			{{{\mB}_{{#1}}}}
\newcommand{\bma}[2]			{{{\mB}_{{#1}}^{{#2}}}}

\newcommand{\Bmn}[1]			{{{\cnst{B}}_{{#1}}}}

\newcommand{\mP}[0]				{{\msr{p}}}

\newcommand{\mQ}[0]				{{\msr{q}}}    
\newcommand{\qmn}[1]			{{{\mQ}_{{#1}}}}
\newcommand{\qma}[2]			{{{\mQ}_{{#1}}^{{#2}}}}
\newcommand{\Qm}[0]				{{{\cnst{Q}}}}

\newcommand{\mS}[0]				{{\msr{s}}}    
\newcommand{\smn}[1]			{{{\mS}_{{#1}}}}

\newcommand{\mU}[0]				{{\msr{u}}}

\newcommand{\Um}[0]				{{{\cnst{U}}}}
\newcommand{\Umn}[1]			{{{\cnst{U}}_{{#1}}}}

\newcommand{\mV}[0]				{{\msr{v}}}    
\newcommand{\vmn}[1]			{{{\mV}_{{#1}}}}
\newcommand{\vma}[2]			{{{\mV}_{{#1}}^{{#2}}}}
\newcommand{\Vm}[0]				{{{\cnst{V}}}}
\newcommand{\Vmn}[1]			{{{\cnst{V}}_{{#1}}}}

\newcommand{\mW}[0]				{{\msr{w}}}    
\newcommand{\wmn}[1]			{{{\mW}_{{#1}}}}
\newcommand{\wma}[2]			{{{\mW}_{{#1}}^{{#2}}}}
\newcommand{\Wm}[0]				{{{\cnst{W}}}}
\newcommand{\Wmn}[1]			{{{\cnst{W}}_{{#1}}}}
\newcommand{\Wma}[2]			{{{\cnst{W}}_{{#1}}^{{#2}}}}

%%%%%%%%%%%%%%%%%%%%%%%%%%%%%%%%%%%%%%%%%%%%%%%%%%%%%%%%%%%%%%%%%%%%%%%%%%%%%%%%%%%%%%%%%%%%%%%%%%%%%%%%%%%%%%%%%%%%%%%%%%%%%

%%%%%%%%%%%%%%%%%%%%%%%%%%%%%%%%%%%%%%%%%%%%%%%%%%%%%%%%%%%%%%%%%%%%%%%%%%%%%%%%%%%%%%%%%%%%%%%%%%%%%%%%%%%%%%%%%%%%%%%%%%%%%

\newcommand{\altug}[0]								{Altu\u{g}~}
\newcommand{\csiszar}[0]							{Csisz\'{a}r~}
\newcommand{\fabregas}[0]							{F\`{a}bregas~}
\newcommand{\harremoes}[0]							{Harremo\"{e}s~}
\newcommand{\korner}[0]								{K\"{o}rner~}
\newcommand{\renyi}[0]								{R\'{e}nyi~}

%\newcommand{\esseen}[0]								{Ess\'{e}en~}
%%%%%%%%%%%%%%%%%%%%%%%%%%%%%%%%%%%%%%%%%%%%%%%%%%%%%%%%%%%%%%%%%%%%%%%%%%%%%%%%%%%%%%%%%%%%%%%%%%%%%%%%%%%%%%%%%%%%%%%%%%%%%%%%%%%%%%%%%%%%%%%%%%%%%%%%%%%%%%%%%%%%%%
\makeatletter
\DeclareRobustCommand{\bigplus}{%
	\mathop{\vphantom{\sum}\mathpalette\@bigplus\relax}\slimits@
}
\newcommand{\@bigplus}[2]{\vcenter{\hbox{\make@bigplus{#1}}}}
\newcommand{\make@bigplus}[1]{%
	\sbox\z@{$\m@th#1\sum$}%
	\setlength{\unitlength}{\wd\z@}%
	\begin{picture}(1.4,1.4)
	%\roundcap
	\linethickness{.17ex}
	\Line(.7,.14)(.7,1.26)
	\Line(.14,.7)(1.26,.7)
	\end{picture}%
}
\DeclareRobustCommand{\bigtimes}{%
	\mathop{\vphantom{\sum}\mathpalette\@bigtimes\relax}\slimits@
}
\newcommand{\@bigtimes}[2]{\vcenter{\hbox{\make@bigtimes{#1}}}}
\newcommand{\make@bigtimes}[1]{%
	\sbox\z@{$\m@th#1\sum$}%
	\setlength{\unitlength}{\wd\z@}%
	\begin{picture}(1,1)
	%\roundcap
	\linethickness{.17ex}
	\Line(.1,.1)(.9,.9)
	\Line(.1,.9)(.9,.1)
	\end{picture}%
}
\makeatother
%%%%%%%%%%%%%%%%%%%%%%%%%%%%%%%%%%%%%%%%%%%%%%%%%%%%%%%%%%%%%%%%%%%%%%%%%%%%%%%%%%%%%%%%%%%%%%%%%%%%%%%%%%%%%%%%%%%%%%%%
%%%%%%%%%%%%%%%%%%%%%%%%%%%%%%%%%%%%%%%%%%%%%%%%%%%%%%%%%%%%%%%%%%%%%%%%%%%%%%%%%%%%%%%%%%%%%%%%%%%%%%%%%%%%%%%%%%%%%%%%
\begin{document}
\pagestyle{plain}
\pagenumbering{arabic}
\hypersetup{hidelinks}
\maketitle 
\thispagestyle{empty}
%!TEX root=../main-B.tex
\begin{abstract}
A sphere packing bound (SPB) with a prefactor that is polynomial
in the block length $n$
is established for codes on a length $n$ product channel $W_{[1,n]}$ 
assuming that the maximum order \textonehalf~R\'{e}nyi capacity among 
the component channels,  i.e. $\max_{t\in[1,n]} C_{\sfrac{1}{2},W_{t}}$,
is $\mathit{O}(\ln n)$.
The reliability function of the discrete stationary product channels 
with feedback is bounded from above by the sphere packing exponent.
Both results are proved by first establishing a non-asymptotic SPB.
The latter result continues to hold under a milder stationarity hypothesis.
\end{abstract}
\tableofcontents
%!TEX root=../main-B.tex
\section{Introduction}\label{sec:introduction}
Most proofs establishing the infeasibility of certain performance
for the channel coding problem under fixed rate, fixed error probability, 
or slowly vanishing error probability hypotheses
rely on either a type based expurgation \cite{altugW14C,haroutunian68,shannonGB67A}
or a distinction of cases based on types  \cite{moulin17,polyanskiyPV10,strassen62,tomamichelT13}.
Although similar bounds can, usually, be obtained using the information spectrum approach \cite{hayashi09B} 
with greater generality, one has to give up the initial non-asymptotic bound in order to do so. 
This relative advantage of the method of types \cite{csiszar98,csiszarkorner} over 
the information spectrum approach \cite{han} emerges from four distinct assumptions: 
the product structure of the sample space,
the product structure of the probability measures,
finiteness of the input set, 
and the stationarity of the channel.
The finite input set assumption and the product structure assumptions can be removed and 
the stationarity assumption can be relaxed if one gives up the concept of type for 
the concept of typicality.
The typicality arguments are, usually, employed for deriving asymptotic results,
but they can also be used to obtain non-asymptotic bounds. 

 Augustin's proof of the sphere packing bound (SPB) in \cite{augustin78}  stands out in 
 this high level classification of the techniques for deriving 
 infeasibility results for the channel coding problem.
 It establishes a non-asymptotic bound without assuming the finiteness of the 
 input set or the stationarity of the channel.
 The main aim of this article is to build an understanding of Augustin's method around 
 the concepts of capacity and center.
 We believe such an understanding can guide us when we apply Augustin's method to the
 other information transmission problems.
 To build such an understanding, we derive the SPBs using Augustin's method 
 in a way that makes the role of the \renyi capacity and center more explicit.

Shannon, Gallager, and Berlekamp \cite[Thm. 2]{shannonGB67A} published the 
first rigorous proof of the SPB for arbitrary discrete 
stationary product channels\footnote{Recently, Dalai gave an account 
of the earlier results in \cite[Appendix B]{dalai13A}.} (DSPCs) in 1967.
Haroutunian \cite[Thm. 2]{haroutunian68} published an alternative proof that holds 
for arbitrary stationary product channels (SPCs) with finite input sets
in 1968.
In \cite{haroutunian68}, Haroutunian expressed the sphere packing exponent in 
an alternative form, which he proved to be equal to the one in \cite{shannonGB67A}.
Augustin published yet another proof of 
the SPB that holds for ---possibly non-stationary--- product channels
with arbitrary (i.e. possibly infinite) input sets, \cite[Thm. 4.7]{augustin69}
in 1969. 
Augustin's SPB \cite[Thm. 4.7a]{augustin69} holds even for product channels 
with infinite channel capacity.  
In the same article, Augustin also  established a SPB 
with a polynomial prefactor \cite[Thm. 4.8]{augustin69}, 
under a hypothesis that is satisfied by all DSPCs.

The first two proofs of the SPB for product channels, 
presented in \cite{haroutunian68} and \cite{shannonGB67A}, 
rely on expurgations based on the empirical distribution, i.e. 
the type or the composition, of the input codewords; 
as a result, they are valid only for the SPCs with finite input sets.
Hence, even the non-stationary discrete product channels
are beyond the reach of the results presented in \cite{haroutunian68} 
and \cite{shannonGB67A},  unless the channel has certain 
symmetries or the channel is ---at least approximately--- periodic.
In order to see how the periodicity can be used to overcome
non-stationarity, consider the product of a sequence of channels 
that alternates 
between two distinct channels at odd and even time instances. 
The resulting product channel is formally non-stationary;
yet it can also be interpreted as a stationary product channel 
with larger components.
Thus results of \cite{haroutunian68} and \cite{shannonGB67A} 
are applicable to non-stationary but 
periodic DPCs, as well.
Furthermore, if the channels in the sequence are from 
a finite set \(\alg{W}\) of possible component channels  
and the frequencies 
of elements of \(\alg{W}\)
are asymptotically stable, 
then the results of 
\cite{haroutunian68} or \cite{shannonGB67A} can be applied through 
larger component channels and appropriate worst case 
assumptions.\footnote{The asymptotic stability of frequencies of 
	elements of \(\alg{W}\), rather than the periodicity of the channel, suffices
	because the lack of contiguity for the subcomponents of the component channels 
	is inconsequential for the performance of codes on product channels.}
One can obtain the same result by making minor changes in the proofs 
presented in \cite{haroutunian68}, or in \cite{shannonGB67A}, 
see \cite[\S V.A]{dalaiW17} for one such modification for a related problem.
In fact, with such changes one can handle infinite \(\alg{W}\)'s 
under appropriate finite approximability and asymptotic stability 
assumptions,
albeit with crude approximation error terms and through a rather
complicated proof.

The stationarity of the channel has been assumed even in the proofs of the SPB tailored 
for specific noise models, such as the ones for the Poisson channels
 in \cite{burnashevK99}, \cite{wyner88-b}. 
In their current form, without major changes, neither the approach of Burnashev and Kutoyants 
in \cite{burnashevK99} nor Wyner's approach in \cite{wyner88-b}
---relying on discretization ---
can establish the SPB for a zero dark current Poisson channel whose inputs are 
intensity functions, i.e. \(\fX\)'s, that are bounded as follows:
\begin{align}
\notag
0\leq \fX(\tin)&\leq \gX(\tin)
&
&\forall \tin\in\reals{+} 
\end{align}
where \(\gX\) is a non-periodic function that is integrable on all bounded intervals.
On the other hand, 
this channel satisfies the hypothesis of \cite[Thm. 4.7b]{augustin69}
by \cite[(\ref*{A-eq:poissonchannel-product-capacity})]{nakiboglu19A}
and the SPB for this channel follows from  Augustin's general proof 
for the product channels,
provided that \(\gX\) satisfies rather mild conditions.

The Shannon, Gallager, Berlekamp proof and Haroutunian's proof had greater 
impact on the field than Augustin's proof. 
Variants of Haroutunian's proof can be found in  
\cite{csiszar98,csiszarkorner,dalai17,haroutunianHH07,omura75}.
%\cite[Thm. IV.2]{csiszar98}, \cite[Thm. 10.6]{csiszarkorner}, \cite{omura75}.
For the DSPCs, Haroutunian's method leads to a SPB
with a polynomial prefactor, i.e. a prefactor of the form \(e^{-\bigo{\ln \blx}}\).
The prefactor of the Shannon, Gallager, Berlekamp proof
in \cite{shannonGB67A} is \(e^{-\bigo{\sqrt{\blx}}}\),
which is considerably worse.
In \cite{valemboisF04}, Valembois and Fossorier have improved 
the prefactor of \cite{shannonGB67A}  for moderate block lengths;
the asymptotic behavior of the prefactor, however, is still \(e^{-\bigo{\sqrt{\blx}}}\).
In \cite{wiechmanS08}, Wiechman and Sason improved the prefactor of \cite{valemboisF04}  
for channels with certain symmetries by eliminating the type based expurgation 
step of the derivation and the resulting contribution to the rate back-off term 
in the bound.
This improvement, however, is inconsequential for the asymptotic behavior 
of the prefactor in \cite{wiechmanS08}, which is
\(e^{-\bigo{\sqrt{\blx}}}\), as well.
Augustin derived a SPB with a polynomial prefactor for  certain product channels 
in \cite[Thm. 4.8]{augustin69}; however, the prefactors of his general results
\cite[Thm. 4.7]{augustin69} and \cite[Thm. 31.4]{augustin78}
are \(e^{-\bigo{\sqrt{\blx}}}\).
One of our main contributions is establishing the SPB with a polynomial prefactor
for a large class of product channels. 

Using the list decoding variant of Gallager's bound 
\cite{elias57},\cite[ex 5.20]{gallager}, 
one can see that the exponential decay 
rate of the SPB, i.e. the sphere packing exponent, is tight.
But determining the right prefactor for the SPB is still an open problem 
even for the DSPCs. 
\altug and Wagner \cite{altugW11} considered the DSPCs with positive 
transition probabilities satisfying certain 
symmetry conditions \cite[p. 94]{gallager} and established a SBP
with a prefactor of the form \(\blx^{-\frac{1+\epsilon}{2\rno}}\) 
for any \(\epsilon>0\) for certain \(\rno\) in \((0,1)\).
Their result is tight because later they have proved in \cite{altugW14D} that 
Gallager's bound \cite{gallager65} 
can be improved to have a prefactor \(\blx^{-\frac{1}{2\rno}}\),
for the  aforementioned \(\rno\), for arbitrary DSPCs.
For arbitrary DSPCs, we only have bounds for the constant composition codes that are also
due to \altug and Wagner \cite{altugW14A}.

The SPB has been conjectured to hold for the channel codes on DSPCs with feedback.
Assuming certain symmetries, Dobrushin \cite{dobrushin62A} proved it to be the case.
However, it was challenging to prove the conjecture for arbitrary DSPCs with feedback 
because of the reliance of the standard proofs on the type based expurgations.
In \cite{haroutunian77}, Haroutunian established a lower bound on the error probability 
of codes on arbitrary DSPCs 
with feedback; but the exponent of Haroutunian's bound is equal to the sphere packing exponent only 
for DSPCs with certain symmetries.
Haroutunian points out in \cite{haroutunian77} that his exponent is strictly larger than 
the sphere packing exponent even for the stationary binary input binary output channel 
with the following transition probability matrix
\begin{align}
\notag
\begin{bmatrix}
\sfrac{1}{2} & \sfrac{1}{2} \\
0			 & 1			
\end{bmatrix}.
\end{align}
There are other partial results \cite{comoN10,palaiyanurS10,palaiyanurthesis}
establishing the SPB for certain families of codes 
---rather than all codes--- on the DSPCs with feedback.

Augustin presented a proof sketch establishing the SPB 
for codes on arbitrary DSPCs with feedback
in \cite[Thm. 41.7]{augustin78}. 
A complete proof following Augustin's sketch can be found in 
\cite{nakiboglu19E}. 
One of our main contributions is the new derivation of the SPB for 
codes on DSPCs with feedback. 
Furthermore, our result 
holds for non-stationary and non-periodic DPCs with feedback 
under an appropriate stationarity hypothesis, 
see Assumption \ref{assumption:astationary} and Theorem \ref{thm:fDPCexponent}.

Few years after Augustin's manuscript \cite{augustin78}, Sheverdyaev suggested another proof  in \cite{sheverdyaev82}. 
Sheverdyaev's proof, however, is supported rather weakly at certain critical 
points.
Palaiyanur's thesis \cite[A7]{palaiyanurthesis} includes an in depth discussion of the subtleties
of \cite{sheverdyaev82}.
It is worth mentioning that 
Sheverdyaev has two major claims about DSPCs with feedback in \cite{sheverdyaev82}.
Our reservations are for the claim about the SPB. 
	Sheverdyaev proves the claim about the strong converse satisfactorily
	and demonstrates that the exponential decay rate of 
	the probability of successful transmission is not changed 
	with the availability of the feedback, for rates above capacity. 
	Earlier that year, \csiszar and \korner published \cite{csiszarK82}, which establishes 
	the same result. 
	The result in question was also reported by Augustin in \cite[Thm. 41.3]{augustin78}, as 
	\csiszar and \korner pointed out  in \cite{csiszarK82}.

In the rest of this section, we describe our notation, model, and contributions.
In \S\ref{sec:notation}, we describe the notion we use throughout the article.
In \S\ref{sec:model}, we define the channel coding problem, product channels,
stationarity, memorylessness, and product channels with feedback.
In \S\ref{sec:contributions}, we provide an overview of the article
and our main contributions.

\subsection{The Notation}\label{sec:notation}
We denote the set of all reals by \(\reals{}\), positive reals by \(\reals{+}\), 
non-negative reals by \(\reals{\geq0}\), and integers by \(\integers{}\). 
For any \(\dinp\in\reals{}\), \(\lfloor\dinp\rfloor\) is the greatest integer less than or equal to \(\dinp\) 
and \(\lceil\dinp\rceil\) is the least integer greater than or equal to  \(\dinp\).
We call \((-\infty,\!\infty]\) valued functions continuous  if they satisfy 
the topological definition of continuity for 
the order topology on \((-\infty,\!\infty]\).

For any set \(\outS\), we denote the set of all subsets of \(\outS\),
 i.e. the power set of \(\outS\), by \(\sss{\outS}\)
and the set of all probability mass functions that are non-zero only 
on finitely many members of \(\outS\) by \(\pdis{\outS}\).
We call the set of all \(\dout\)'s 
for which \(\mP(\dout)>0\) the support of \(\mP\) 
and denote it by \(\supp{\mP}\).
Let \(\inpS\) be another set; then we denote the set of all 
functions from \(\inpS\) to \(\outS\) by \(\outS^{\inpS}\).

For any measurable space \((\outS,\outA)\), we denote 
the set of all finite signed measures by \(\smea{\outA}\),
the set of all non-zero finite measures by \(\fmea{\outA}\),
and 
the set of all probability measures by  \(\pmea{\outA}\). 
For any pair of measurable spaces \((\inpS,\inpA)\) and \((\outS,\outA)\),
we denote the set of all \((\inpA,\outA)\)-measurable functions
from \(\inpS\) to \(\outS\) by \(\outA^{\inpA}\)
and 
the set of all transition probabilities  from \((\inpS,\inpA)\) to \((\outS,\outA)\) 
by \(\pmea{\outA|\inpA}\).
The formal definition of a transition probability is as follows.  
\begin{definition}\label{def:transitionprobability}  
Let \((\inpS,\inpA)\) and \((\outS,\outA)\) be measurable  spaces. 
Then a function \(\Wm:\outA\times\inpS\to[0,1]\)
is called a transition probability (stochastic kernel, Markov kernel) 
from \((\inpS,\inpA)\) to \((\outS,\outA)\) if it satisfies the following two constraints: 
\begin{enumerate}[(i)]
\item For all \(\dinp\in\inpS\), the function \(\Wm(\cdot|\dinp):\outA\to[0,1]\) 
is a probability measure on \((\outS,\outA)\).
\item For all \(\oev\in\outA\), the function \(\Wm(\oev|\cdot):\inpS\to[0,1]\) 
is a \((\inpA,\rborel{[0,1]})\)-measurable function. 
\end{enumerate}
\end{definition} 
If \(\inpA\) is the power set of \(\inpS\), then the second constraint, 
i.e. the measurability constraint, is void because it is always satisfied. 
Hence, the above definition is consistent with the customary use of 
the term transition probability in information theory, in which \(\inpS\) and \(\outS\) 
are finite sets and \(\inpA\) and \(\outA\) are their power sets. 
Recall that for any transition probability \(\Wm\),
\(\mP \mtimes \Wm\) defines a joint probability measure with desired properties 
for all probability measures \(\mP\) on \((\inpS,\inpA)\) by
 \cite[Thm. 10.7.2]{bogachev}.

A measure \(\mean\) on \((\outS,\outA)\) is  absolutely continuous with respect to
another measures  \(\rfm\) on \((\outS,\outA)\), i.e. \(\mean\AC \rfm\),
iff \(\mean(\oev)=0\) for any \(\oev \in \outA\) such that \(\rfm(\oev)=0\).

We denote the integral of a measurable function \(\fX\) on \((\outS,\outA)\) 
with respect to a probability measure \(\rfm\!\in\!\pmea{\outA}\), 
i.e. the expected value of \(\fX\) under \(\rfm\),
by \(\EXS{\rfm}{\fX}\) or \(\EXS{\rfm}{\fX(\out)}\).
If the integral is on the real line and with respect to the Lebesgue measure, 
we denote it by \(\int\fX \dif{\dout}\) or \(\int \fX(\dout) \dif{\dout}\), as well.

When discussing the convergence of sequences of functions,
we denote the \(\rfm-\)almost everywhere convergence by \(\xrightarrow{\rfm-a.e.}\),
the convergence in measure for \(\rfm\)    by \(\xrightarrow{\rfm}\) and
the convergence in variation, i.e. \(\Lon{\rfm}\) convergence, by \(\xrightarrow{\Lon{\rfm}}\). 

Our notation will be overloaded for certain symbols; but the relations represented 
by these symbols will be clear from the context.
We denote the product of topologies, 
\(\sigma\)-algebras, and measures  by \(\otimes\).
We denote the Cartesian product  of sets  by \(\times\). 
We use the short hand 
\(\inpS_{\tin}^{\blx}\) for the Cartesian product of sets \(\inpS_{\tin},\ldots,\inpS_{\blx}\)
and 
\(\outA_{\tin}^{\blx}\) for the product of the \(\sigma\)-algebras  \(\outA_{\tin},\ldots,\outA_{\blx}\).
We use \(\abs{\cdot}\) to denote the absolute value of reals and the size of sets. 

The sign \(\leq\) stands for the usual less than or equal to relation for reals
and the corresponding pointwise inequality for functions. 
For \(\mean\) and \(\rfm\) in \(\smea{\outA}\), 
\(\mean \leq \rfm\) iff \(\mean(\oev)\leq \rfm(\oev)\) for all \(\oev\in\outA\).

The minimum of reals  \(\dinp\) and \(\dout\) 
is denoted by \(\dinp\wedge\dout\).
For the real valued functions \(\fX\) and \(\gX\),
\(\fX\wedge\gX\) stands for their pointwise minimum.
We use the symbol \(\vee\) analogously to \(\wedge\); but we represent maxima and 
suprema with it, rather than minima and infima.

\subsection[The Model]{The Channel Model and Channel Coding Problem}\label{sec:model}
A channel code is a strategy to convey from the transmitter at the input of the channel 
to the receiver at the output of the channel, a random choice from a finite message set. 
Once the transmitter and receiver agree on a strategy, the transmitter is given an element 
of the message set, i.e. the message. 
Then the transmitter chooses the channel input, according to the strategy, using the message. 
The channel input determines the probabilistic behavior of the channel output. 
The receiver observes the realization of the channel output and then chooses the decoded 
list based on the channel output, according to the strategy. 
If the message given to the transmitter is in the decoded list determined by the receiver, 
then the transmission is successful, else an error is said to occur. 
Let us proceed with the formal definitions of these concepts.

\begin{definition}\label{def:channel}
A \emph{channel} \(\Wm\) is a function from \emph{the input set} \(\inpS\) 
to the set of all probability measures on \emph{the output space} \((\outS,\outA)\), 
i.e.
\begin{align}
\notag
\Wm:\inpS\to\pmea{\outA}.
\end{align}
\(\outS\) is called the output set and 
\(\outA\) is called the \(\sigma\)-algebra  of the output events. 
A channel \(\Wm\) is a \emph{discrete channel} if both 
\(\inpS\) and \(\outA\) are finite sets.
\end{definition}

We denote the set of all channels with the input set \(\inpS\)
and the output space \((\outS,\outA)\) by \(\pmea{\outA|\inpS}\).
For the purposes of the channel coding problem,  
Definition \ref{def:channel} suffices.
However, while analyzing other information transmission problems
---such as the joint source channel coding problem--- 
one needs to introduce a \(\sigma\)-algebra 
\(\inpA\) on \(\inpS\) and work with the transition 
probabilities, described in Definition \ref{def:transitionprobability}.
Note that every transition probability is a channel,
i.e. \(\pmea{\outA|\inpA}\!\subset\!\pmea{\outA|\inpS}\) for all 
\(\sigma\)-algebras \(\inpA\). 
The converse statement holds only for \(\inpA\!=\!\sss{\inpS}\).

Definition \ref{def:channel} describes the channel as introduced 
in the first paragraph of this subsection accurately and 
it subsumes a diverse collection of channels as special cases.
However, it is not an all-encompassing definition because it might not be possible 
to model the effect of the channel input on the channel output solely by the 
probabilistic rule of the channel output. 
The compound channels and the arbitrarily varying channels fall 
outside of the framework of 
Definition \ref{def:channel}. Those models, however, are beyond 
the scope of this article.

\begin{definition}\label{def:code}
An \((M,L)\) \emph{channel code} on 
\(\Wm:\inpS\to\pmea{\outA}\) is an ordered pair \((\enc,\dec)\) composed 
of  an \emph{encoding function}  \(\enc\) and a \emph{decoding function} \(\dec\):
\begin{itemize}
\item An \emph{encoding function} is a function from the message set \(\mesS\DEF\{1,2,\ldots,M\}\) 
to the input set \(\inpS\).
\item A \emph{decoding function} is a measurable function from the output space \((\outS,\outA)\) 
to \(\estS\!\DEF\!\{\set{L}\!:\!\set{L}\subset\mesS \mbox{~and~}\abs{\set{L}}\leq L\}\)
with its power set  \(\sss{\estS}\) as the \(\sigma\)-algebra.
\end{itemize}
In an \((M,L)\) channel code, \(M\) is called the message set size and \(L\) is called the list size.
\end{definition}
The channel codes are customarily defined with the tacit assumption that their list size is one
and the channel codes with list sizes larger than one are customarily called list codes.
We will neither assume the list size of the codes to be one, nor use the term list code;
instead we will be explicit about the list sizes of the codes throughout the manuscript.

\begin{definition}\label{def:errorprobability}
Given an \((M,L)\) code \((\enc,\dec)\)
on \(\Wm\!:\!\inpS\!\to\!\pmea{\outA}\), for each \(\dmes\!\in\!\mesS\) 
\emph{the conditional error probability} \(\Pem{\dmes}\) is 
\begin{align}
\notag
%\label{eq:def:conditionalerrorprobability}
\Pem{\dmes}
&\DEF \EXS{\Wm(\enc(\dmes))}{\IND{\dmes\notin\dec(\out)}}.
\end{align}
\emph{The average error probability} \(\Pem{av}\) is
\begin{align}
%\label{eq:def:errorprobability}
\notag
\Pem{av} 
&\DEF\tfrac{1}{M} \sum\nolimits_{\dmes\in \mesS} \Pem{\dmes}.
\end{align}
\end{definition}
For a channel \(\Wm\!\), the triplet \((M,L,\Pe)\) is achievable if there 
exists an \((M,L)\) channel code with the average error probability less than or equal to \(\Pe\).
Broadly speaking, the point-to-point channel coding problem aims to characterize the achievable 
\((M,L,\Pe)\) triplets. The abstract formulation given above is general enough to subsume a 
diverse collection of point-to-point channel coding problems as special cases. 
However, it has scant structure to establish 
achievability and infeasibility results that 
are provably close to one another. The product structure, discussed in the following, is 
commonly assumed in order to establish such bounds.

\begin{definition}\label{def:product}
For any \(\blx\in\integers{+}\) and 
\(\Wmn{\tin}\!:\!\inpS_{\tin}\!\to\!\pmea{\outA_{\tin}}\) for \(\tin\) in \(\{1,\ldots,\blx\}\),
the \emph{length \(\blx\) product channel}
\(\Wmn{[1,\blx]}\!:\!\inpS_{1}^{\blx}\!\to\!\pmea{\outA_{1}^{\blx}}\) is defined via the following relation:
\begin{align}
\notag
%\label{eq:def:product}
\Wmn{[1,\blx]}(\dinp_{1}^{\blx})
&=\bigotimes\nolimits_{\tin=1}^{\blx}\Wmn{\tin}(\dinp_{\tin})
&
&\forall \dinp_{1}^{\blx}\in\inpS_{1}^{\blx}.
\end{align}
A product channel is \emph{stationary} iff all 
\(\Wmn{\tin}\)'s are identical.
\end{definition}
\begin{definition}\label{def:memoryless}
A channel \(\Um\!:\!\staS\!\to\!\pmea{\outA_{1}^{\blx}}\)
is a \emph{memoryless channel},
if there exits a product channel \(\Wmn{[1,\blx]}\!:\!\inpS_{1}^{\blx}\!\to\!\pmea{\outA_{1}^{\blx}}\)
satisfying 
\(\Um(\dsta)=\Wm(\dsta)\) for all 
\(\dsta\in\staS\) 
and \(\staS\subset\inpS_{1}^{\blx}\).
\end{definition}

The preceding definition is consistent with the definition of 
the memorylessness used by Cover and Thomas in \cite[p. 184]{coverthomas}:
``The channel is said to be memoryless if the probability distribution of the 
output depends only on the input at that time and is conditionally independent 
of previous channel inputs or outputs.''
The same property is asserted by Gallager in \cite[(4.2.1)]{gallager}  and 
\csiszar and \korner in \cite[p. 84]{csiszarkorner} while describing 
the memorylessness.
However, the authors of these classic texts and the information theory community 
at large use the term  ``the discrete memoryless channel (DMC)'' to describe channels 
that satisfy much more than mere discreteness and memorylessness. 
In particular, customarily the term ``the DMC'' stands 
for the DSPC described in the following paragraph. 

For any discrete channel \(\Wm:\inpS\to\pdis{\outS}\), 
\(\blx\) `independent' uses of it ---denoted by \(\Wmn{[1,\blx]}\)---
is not only a memoryless channel, but also a stationary product channel 
according to Definitions \ref{def:product} and \ref{def:memoryless}.
Thus we call these channels discrete stationary product channels (DSPCs).
If the discrete channels at each time instance are not necessarily the same, i.e. if
\(\Wmn{\tin}\) can be different for different
values of \(\tin\), then we call
\(\Wmn{[1,\blx]}\) a discrete product channel (DPC). 
Furthermore, any  \(\Um\!:\!\staS\!\to\!\pdis{\outS_{1}^{\blx}}\)
satisfying \(\Um(\dsta)\!=\!\Wmn{[1,\blx]}(\dsta)\) for all \(\dsta\) in \(\staS\)
is called a discrete memoryless channel (DMC).
A commonly considered family of DMCs is the one defined via cost constraints. 

Definitions \ref{def:product} and \ref{def:memoryless} can be applied to 
the Poisson channels.
For a duration \(\tlx\) Poisson channel  
the input set \(\fXS^{\tlx}\) is the set of all integrable 
functions of the form \(\fX\!:\!(0,\tlx]\!\to\!\reals{\geq0}\).
The output set is the set of all possible 
sample paths for the arrival process,
i.e. the set of all nondecreasing, right-continuous,
integer valued functions on \((0,\tlx]\).
The \(\sigma\)-algebra of observable events 
is the Borel \(\sigma\)-algebra for the Skorokhod metric on the output set
and \(\Pcha{\tlx}(\fX)\) is the Poisson point process with deterministic intensity function
\(\fX\) for all \(\fX\!\in\!\fXS^{\tlx}\). 
For any  duration \(\tlx\!\in\!\reals{+}\),
intensity levels \(0\!\leq\!\mA\!\leq\!\costc\!\leq\!\mB\!\leq\!\infty\),
and integrable intensity function \(\gX\) satisfying \(\gX(\tin)\!\geq\!\mA\) for all \(\tin\!\in\!(0,\tlx]\),
the Poisson channels 
\(\Pcha{\tlx,\mA,\mB,\costc}\),
\(\Pcha{\tlx,\mA,\mB,\leq\costc}\),
\(\Pcha{\tlx,\mA,\mB,\geq\costc}\),
\(\Pcha{\tlx,\mA,\mB}\),
and
\(\Pcha{\tlx,\mA,\gX(\cdot)}\)
---which are also described in  \cite[\S\ref*{A-sec:examples-poisson}]{nakiboglu19A}---
are obtained by curtailing the input set \(\fXS^{\tlx}\) 
of the Poisson channel \(\Pcha{\tlx}\)
as follows:
\begin{subequations}
\label{eq:def:poissonchannel}
\begin{align}
\label{eq:def:poissonchannel-mean}
\fXS^{\tlx,\mA,\mB,\costc}
&\DEF \{\fX\in\fXS^{\tlx}: 
\mA\leq\!\fX\!\leq \mB 
\mbox{~and~$\int_{0}^{\tlx}$}\fX  \dif{\tin}=\tlx\costc\},
\\
\label{eq:def:poissonchannel-constrained-A}
\fXS^{\tlx,\mA,\mB,\leq\costc}
&\DEF \cup_{\gamma\in[\mA,\costc] }\fXS^{\tlx,\mA,\mB,\gamma},
\\
\label{eq:def:poissonchannel-constrained-B}
\fXS^{\tlx,\mA,\mB,\geq\costc}
&\DEF \cup_{\gamma\in[\costc,\mB] }\fXS^{\tlx,\mA,\mB,\gamma},
\\
\label{eq:def:poissonchannel-bounded}
\fXS^{\tlx,\mA,\mB}
&\DEF \cup_{\gamma\in[\mA,\mB] }\fXS^{\tlx,\mA,\mB,\gamma},
\\
\label{eq:def:poissonchannel-product}
\fXS^{\tlx,\mA,\gX(\cdot)}
&\DEF \{\fX\in\fXS^{\tlx}: 
\mA\leq \fX\leq \gX\}.
\end{align}
\end{subequations}
For any  \(\tlx\!\in\!\reals{+}\) and \(\blx\!\in\!\integers{+}\),  
the Poisson channel \(\Pcha{\tlx,\mA,\mB}\) is 
a length \(\blx\) stationary product channel (SPC),
in particular \(\Pcha{\tlx,\mA,\mB}=\Wmn{[1,\blx]}\)
for \(\Wmn{\tin}=\Pcha{\sfrac{\tlx}{\blx},\mA,\mB}\). 
Whereas, the Poisson channel \(\Pcha{\tlx,\mA,\gX(\cdot)}\) is 
a length \(\blx\) product channel, 
which is stationary if \(\gX(\cdot)\) is periodic  with period \(\sfrac{\tlx}{\blx}\).
The Poisson channels 
\(\Pcha{\tlx,\mA,\mB,\varrho}\), \(\Pcha{\tlx,\mA,\mB,\leq\varrho}\), 
and \(\Pcha{\tlx,\mA,\mB,\geq\varrho}\)
are not product channels, but they are memoryless channels.

In a product channel both the input set and the output space
are products. 
In product channels with feedback, the output space is still a product;
but the input set is enlarged by allowing the channel input at any time 
instance to depend on the previous channel outputs. 
Thus the channel input at time \(\tin\) is a member of 
\({\inpS_{\tin}}^{\outS_{1}^{\tin-1}}\) rather than a member of 
\(\inpS_{\tin}\).
For channels with uncountable input or output sets, 
there are additional measurability requirements and this makes 
the description of the product channels with feedback 
more nuanced. 
Thus we will describe the discrete case first.

In a length \(\blx\) discrete product channel with feedback
each element \(\vec{\dinp_{1}^{\blx}}\) of the input set \(\vec{\inpS_{1}^{\blx}}\) is of the form 
\begin{align}
\notag
\vec{\dinp_{1}^{\blx}}&=(\dinp_{1},\enc_{2},\ldots,\enc_{\blx})
\end{align}
where \(\dinp_{1}\in\inpS_{1}\) and  
\(\enc_{\tin}\in{\inpS_{\tin}}^{\outS_{1}^{\tin-1}}\). 
We use the symbol \(\enc_{\tin}\), rather than 
\(\dinp_{\tin}\), in order reflect in our notation 
the fact that \(\enc_{\tin}\) is a function from
\(\outS_{1}^{\tin-1}\) to  \(\inpS_{\tin}\),
similar to the encoding functions we have discussed
in Definition \ref{def:code}.

\begin{definition}\label{def:Fproduct:discrete}
For any \(\blx\in\integers{+}\) and 
\(\Wmn{\tin}\!:\!\inpS_{\tin}\!\to\!\pdis{\outS_{\tin}}\) for \(\tin\) 
in \(\{1,\ldots,\blx\}\),
the \emph{length \(\blx\) discrete product channel with feedback}
\(\Wmn{\vec{[1,\blx]}}\!:\!\vec{\inpS_{1}^{\blx}}\!\to\!\pdis{\outS_{1}^{\blx}}\) 
is defined via the following relation:
\begin{align}
%\label{eq:def:Fproduct:discrete}
\notag
\Wmn{\vec{[1,\blx]}}(\dout_{1}^{\blx}|\vec{\dinp_{1}^{\blx}}) 
&\!=\!
\Wmn{1}(\dout_{1}|\dinp_{1})
\prod\nolimits_{\tin=2}^{\blx}
\Wmn{\tin}(\dout_{\tin}|\enc_{\tin}(\dout_{1}^{\tin-1})) 
\end{align}
for all \(\vec{\dinp_{1}^{\blx}}\!\in\!\vec{\inpS_{1}^{\blx}}\) 
and \(\dout_{1}^{\blx}\!\in\!\outS_{1}^{\blx}\)
where \(\vec{\inpS_{1}^{\blx}}\!=\!\inpS_{1}\bigtimes 
(\bigtimes_{\tin=2}^{\blx}{\inpS_{\tin}}^{\outS_{1}^{\tin-1}})\).
\end{definition}
For describing the product channels with feedback without assuming the
discreteness, we use the concept of transition probability described in 
Definition \ref{def:transitionprobability}.
Let \(\enc:\staS\to\inpS\) be a \((\staA,\inpA)\)-measurable function,
\(\Wm\) be a transition probability from \((\inpS,\inpA)\) to \((\outS,\outA)\),
and
\(\Wm\circ\enc:\outA\times\staS\to[0,1]\) be
\begin{align}
\notag
\Wm\circ\enc(\oev|\dsta)
&\DEF \Wm(\oev|\enc(\dsta))
&
&\forall \oev\in\outA, \dsta\in\staS.
\end{align}
Then \(\Wm\circ\enc\) is a transition probability from 
\((\staS,\staA)\) to \((\outS,\outA)\) (i.e. \(\Wm\circ\enc\in \pmea{\outA|\staA}\))
as a result of the definitions of the measurability and the transition probabilities. 
On the other hand, 
there exists a unique probability measure \(\mP\mtimes\Um\) 
for any \(\Um\in\pmea{\outA|\staA}\) and \(\mP\in\pmea{\staA}\)
by \cite[Thm. 10.7.2]{bogachev}.
Using these two observation we can generalize Definition \ref{def:Fproduct:discrete}
as follows.

\begin{definition}\label{def:Fproduct}
For any channel \(\Wmn{1}\in\pmea{\outA_{1}|\inpS_{1}}\)
and transition probabilities 
\(\Wmn{\tin}\in\pmea{\outA_{\tin}|\inpA_{\tin}}\)
for \(\tin\!\in\!\{2,\ldots,\blx\}\),
the \emph{length \(\blx\) product channel with feedback}
\(\Wmn{\vec{[1,\blx]}}\) is defined via the following 
relation:
	\begin{align}
%	\label{eq:def:Fproduct}
	\notag
	\Wmn{\vec{[1,\blx]}}(\vec{\dinp_{1}^{\blx}}) 
	&\!=\!
	\Wmn{1}(\dinp_{1}) 
	\mtimes (\Wmn{2}\circ\enc_{2})
	\cdots
	\mtimes (\Wmn{\blx}\circ\enc_{\blx})
	\end{align}
for all \(\vec{\dinp_{1}^{\blx}}\in\vec{\inpA_{1}^{\blx}}\)
where \(\vec{\inpA_{1}^{\blx}}\!=\!\inpS_{1}\bigtimes 
	(\bigtimes_{\tin=1}^{\blx}{\inpA_{\tin}}^{\outA_{1}^{\tin-1}})\).
	A product channel with feedback is \emph{stationary} iff all 
	\(\Wmn{\tin}\)'s are identical.
\end{definition}
\subsection{Overview and Main Contributions}\label{sec:contributions}
In \S\ref{sec:preliminary}, we review \renyi\!\!\!'s information measures
and the sphere packing exponent.

In \S\ref{sec:augustinpreliminary}, we derive preliminary results about 
Augustin's averaging scheme and tilting. 

In \S\ref{sec:product-outerbound}, we establish an asymptotic SPB
with a prefactor that is polynomial in the block length
for product channels, without assuming 
the input sets to be finite
or the channels to be stationary.
This asymptotic SPB, given in Theorem~\ref{thm:productexponent}, is 
derived by using the non-asymptotic SBP for product channels 
given in Lemma \ref{lem:spb-product},
which can be further simplified to \eqref{eq:lem:spb-product:stationary}
for SPCs.

If \(\sup_{\tin\in(0,\tlx]}\!\gX(\tin)\!\) is \(\bigo{\ln\!\tlx}\)
then the Poisson channel \(\Pcha{\tlx,\mA,\gX(\cdot)}\)
satisfies the hypothesis of Theorem \ref{thm:productexponent}.  
Without major changes, 
neither Wyner's approach in \cite{wyner88-b},
nor the approach of Burnashev and Kutoyants  in \cite{burnashevK99}
can establish the SPB for these channels. 
To the best of our knowledge, the SPB has not been proved for any non-stationary 
Poisson channel before ---except for \cite[Thm. 4.7]{augustin69} and \cite[Thm. 31.4]{augustin78},
which imply the SPB for these channels 
in the way that Theorem \ref{thm:productexponent} does, 
albeit with inferior prefactors. 
Augustin's SPBs  in \cite{augustin69} and \cite{augustin78} for product channels
are compared with our results in \S\ref{sec:comparison}.
		
In \S\ref{sec:fproduct-outerbound},
we establish an asymptotic SPB for DSPCs with feedback,
i.e. Theorem \ref{thm:fDSPCexponent}, 
by first deriving a non-asymptotic ---but parametric---
one in Lemma \ref{lem:spb-Fproduct}.
The stationarity hypothesis can be weakened significantly;
Theorem \ref{thm:fDPCexponent} establishes the SPB for 
(possibly non-stationary, non-periodic, and non-symmetric) 
DPCs with feedback 
satisfying Assumption \ref{assumption:astationary}.
Theorem \ref{thm:fDPCexponent} is the first such result to 
best of our knowledge.
Readers who are only interested Theorems \ref{thm:fDSPCexponent} 
and \ref{thm:fDPCexponent} may bypass 
\S\ref{sec:product-outerbound}.  

Proofs of Theorems \ref{thm:fDSPCexponent} and \ref{thm:fDPCexponent}
rely on the averaging and subblock ideas of Augustin \cite{augustin78}, 
Taylor's expansion idea of Sheverdyaev \cite{sheverdyaev82}, and 
the auxiliary channel method of Haroutunian \cite{haroutunian77}.
Nevertheless, they are substantially different from the proofs suggested by 
Augustin  \cite{augustin78} and Sheverdyaev \cite{sheverdyaev82}.
We compare our results with the previous results and
discuss possible extensions in \S\ref{sec:fcomparison}.
Lemmas \ref{lem:taylor} and \ref{lem:tradeoff},
presenting preliminary results,  are new,
to the best of our knowledge.
Lemma \ref{lem:tradeoff} is used to derive SPB 
for DSPCs with feedback from 
Haroutunian's bound in \S\ref{sec:haroutunian}.

In \S\ref{sec:conclusion}, we briefly discuss the novel observation 
underlying Augustin's method and generalizations our results 
to the memoryless channels.

In Appendix \ref{sec:operational}, we determine the channel capacity 
of certain sequences of channels
and provide a sufficient condition for the strong converse
via Theorem \ref{thm:capacity} which answers the question 
``What do Gallager's bound and Arimoto's bound say about the channel capacity and 
the existence of a strong converse?''
The Poisson channels described in \eqref{eq:def:poissonchannel} 
satisfy this sufficient condition.
The strong converses for Poisson channels were reported 
before in \cite{burnashevK99} and \cite{wagnerA04},
but only for the zero dark current case, i.e. \(\mA\!=\!0\) case. 
Note that
characterizing the channel capacity and the conditions for 
the existence of a strong converse, in general, 
is a separate issue that has already been addressed 
by Verd\'{u} and Han \cite{verduH94}.

%!TEX root=../main-B.tex
\section{General Preliminaries}\label{sec:preliminary} 
Our main aim in this section is to introduce the concepts 
that we use in the rest of the article.
We define \renyi\!\!'s information measures 
---i.e. the \renyi divergence, information, mean, capacity, radius, and center---
and review their properties that are relevant for our purposes
in \S\ref{sec:divergence}-\S\ref{sec:capacity}.
All of the propositions in this part of the article,
except Lemma \ref{lem:capacityFproduct} of \S\ref{sec:capacity}, 
are either from \cite{ervenH14} or from \cite{nakiboglu19A}.
We define and analyze sphere packing exponent in \S\ref{sec:SPexponent}.

\subsection{The \renyi Divergence}\label{sec:divergence}
\begin{definition}\label{def:divergence}
For any \(\rno\!\in\!\reals{+}\) and \(\mW,\mQ\!\in\!\fmea{\outA}\)
\emph{the order \(\rno\) \renyi divergence between \(\mW\) and \(\mQ\)} is
	\begin{align}
%	\label{eq:def:divergence}
	\notag
	\!\!\RD{\rno}{\mW}{\mQ}
	&\!\DEF\!\! 
	\begin{cases}
	\tfrac{1}{\rno-1}\!\ln
	\EXS{\rfm}{\left(\!\der{\mW}{\rfm}\!\right)^{\!\rno}
		\left(\!\der{\mQ}{\rfm}\!\right)^{\!1\!-\!\rno}}
	&\rno\!\neq\!1\!
	\\
	\EXS{\rfm}{\der{\mW}{\rfm}\!\left(\!\ln\der{\mW}{\rfm}\!-\!\ln \der{\mQ}{\rfm}\!\right)}
	&\rno\!=\!1
	\end{cases}
	\end{align}
	where \(\rfm\) is any probability measure satisfying \(\mW\AC\rfm\) and \(\mQ\AC\rfm\).
\end{definition}
The \renyi divergence is usually defined for the probability measures; 
the inclusion of the finite measures allows us to invoke 
Lemma \ref{lem:divergence-RM} given in the following.
The propositions derived for the usual definition 
will suffice for our purposes most of the time. 
Thus we borrow them from the recent article of van Erven 
and \harremoes \cite{ervenH14}. 
The equivalence of Definition \ref{def:divergence} and the one used by 
van Erven and \harremoes in \cite{ervenH14} for probability measures 
follows from \cite[Thm. 5]{ervenH14}. 

\begin{lemma}[\!\!\!{\cite[Thm. 3, Thm. 7]{ervenH14}}]\label{lem:divergence-order}
	For all \(\mW,\mQ\in\pmea{\outA}\),
	\(\RD{\rno}{\mW}{\mQ}\) is a
	nondecreasing and lower  semicontinuous function of \(\rno\) on 
	\(\reals{+}\)
	that is continuous on \((0,(1\vee\chi_{\mW,\mQ})]\) where
	\(\chi_{\mW,\mQ}\DEF\sup\{\rno:\RD{\rno}{\mW}{\mQ}<\infty\}\).
\end{lemma}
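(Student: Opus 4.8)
The plan is to cite the referenced result of van Erven and \harremoes~\cite[Thm. 3, Thm. 7]{ervenH14} and then reconcile the statement with Definition~\ref{def:divergence}. Since the equivalence of the two definitions of the \renyi divergence for probability measures is already established (via \cite[Thm. 5]{ervenH14}, as noted after Definition~\ref{def:divergence}), it suffices to extract monotonicity, lower semicontinuity, and the precise continuity interval from the cited theorems. Thus the proof is essentially a pointer to the literature; the only genuine content is checking that the quantity $\chi_{\mW,\mQ}$ as defined here coincides with whatever cutoff appears in~\cite{ervenH14}, and that the phrase ``continuous on $(0,(1\vee\chi_{\mW,\mQ})]$'' is exactly what those theorems yield.

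First I would invoke \cite[Thm. 3]{ervenH14} to obtain that $\rno\mapsto\RD{\rno}{\mW}{\mQ}$ is nondecreasing on $\reals{+}$; this is the standard ordering property of \renyi divergences and follows from H\"older's inequality applied to the defining integral. Next I would invoke the continuity portion of \cite[Thm. 7]{ervenH14}: the map is continuous on the whole interval where it is finite, together with left-continuity at the supremum of that interval when the supremum is attained or when the divergence there is finite. The subtlety is the role of the constant $1$ in $(1\vee\chi_{\mW,\mQ}]$: even when $\chi_{\mW,\mQ}<1$, the divergence remains continuous up to order $1$ because $\RD{\rno}{\mW}{\mQ}$ can only jump to $+\infty$, and for $\rno\le 1$ finiteness is automatic for probability measures with $\mW\AC\mQ$ failing only on a $\mQ$-null set—more carefully, one uses that $\RD{\rno}{\mW}{\mQ}\le \RD{1}{\mW}{\mQ}$ is not what we want; rather, for $\rno<1$ the divergence is always finite (the integrand $\left(\der{\mW}{\rfm}\right)^{\rno}\left(\der{\mQ}{\rfm}\right)^{1-\rno}$ is integrable by H\"older with exponents $\tfrac1\rno$ and $\tfrac{1}{1-\rno}$), hence continuity on $(0,1]$ holds unconditionally and $(0,(1\vee\chi_{\mW,\mQ})]$ is the correct continuity interval. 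Lower semicontinuity on all of $\reals{+}$ then follows by combining continuity on $(0,(1\vee\chi_{\mW,\mQ})]$ with monotonicity on $[(1\vee\chi_{\mW,\mQ}),\infty)$: a nondecreasing function that is continuous up to a point and equals its left limit there is lower semicontinuous, and to the right of $\chi_{\mW,\mQ}$ the function is identically $+\infty$ (again by monotonicity and the definition of $\chi_{\mW,\mQ}$), which is trivially lower semicontinuous.

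The main obstacle—really the only one—is bookkeeping about the endpoint $\chi_{\mW,\mQ}$: one must verify that $\RD{\chi_{\mW,\mQ}}{\mW}{\mQ}$ is finite whenever $\chi_{\mW,\mQ}>1$, so that continuity genuinely extends to the closed right endpoint, and that the ``$1\vee$'' is not merely an artifact. I would handle this by appealing directly to the statement of \cite[Thm. 7]{ervenH14}, which is phrased precisely in terms of this cutoff, so that no independent argument about the endpoint behavior is needed. In short, the proof reduces to: (i) translate Definition~\ref{def:divergence} into the van Erven--\harremoes setting via \cite[Thm. 5]{ervenH14}; (ii) quote \cite[Thm. 3]{ervenH14} for monotonicity; (iii) quote \cite[Thm. 7]{ervenH14} for continuity on $(0,(1\vee\chi_{\mW,\mQ})]$; (iv) deduce lower semicontinuity on $\reals{+}$ from monotonicity plus this continuity. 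No lengthy computation is involved.
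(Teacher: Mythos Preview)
Your proposal is correct and matches the paper's treatment: the paper provides no proof of this lemma at all, simply citing \cite[Thm.~3, Thm.~7]{ervenH14} in the lemma header and moving on. Your plan to invoke those theorems (after identifying the definitions via \cite[Thm.~5]{ervenH14}) is exactly the intended reading. One small caution in your exposition: the claim that $\RD{\rno}{\mW}{\mQ}$ is \emph{always} finite for $\rno\in(0,1)$ is not quite right, since it equals $+\infty$ when $\mW\perp\mQ$; but this does not affect the argument, as in that case the divergence is identically $+\infty$ on $\reals{+}$ and the continuity claim is vacuous.
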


\begin{lemma}\label{lem:divergence-RM}
	Let \(\mW\), \(\mQ\), \(\mV\) be non-zero finite measures on \((\outS,\outA)\)
	and \(\rno\) be any order in \(\reals{+}\).
	\begin{itemize}
		\item If \(\mV\leq\mQ\), then \(\RD{\rno}{\mW}{\mQ}\leq \RD{\rno}{\mW}{\mV}\).
\item  If \(\mQ=\gamma\mV\) for some \(\gamma\in\reals{+}\)
and 
either \(\mW\) is a probability measure
or \(\rno\neq1\), then 
\(\RD{\rno}{\mW}{\mQ}=\RD{\rno}{\mW}{\mV}-\ln \gamma\).
	\end{itemize} 
\end{lemma}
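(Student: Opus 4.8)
The plan is to reduce both claims to the definition of the \renyi divergence via a common dominating measure, being careful with the case \(\rno=1\) separately from \(\rno\in\reals{+}\setminus\{1\}\). First I would fix a probability measure \(\rfm\) such that \(\mW\AC\rfm\), \(\mQ\AC\rfm\), and \(\mV\AC\rfm\) simultaneously; such a \(\rfm\) exists because one may take \(\rfm\DEF\tfrac{1}{3}(\tfrac{\mW}{\mW(\outS)}+\tfrac{\mQ}{\mQ(\outS)}+\tfrac{\mV}{\mV(\outS)})\), and by Definition \ref{def:divergence} the value of \(\RD{\rno}{\mW}{\cdot}\) computed against this \(\rfm\) agrees with the value obtained from any other admissible reference measure. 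Write \(w\DEF\der{\mW}{\rfm}\), \(q\DEF\der{\mQ}{\rfm}\), \(v\DEF\der{\mV}{\rfm}\) for the corresponding Radon--Nikodym derivatives, all nonnegative and \(\rfm\)-integrable.

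For the first item, the hypothesis \(\mV\leq\mQ\) gives \(v\leq q\) \(\rfm\)-almost everywhere. I would split into two cases according to the order. If \(\rno\in(0,1)\), then \(1-\rno>0\), so \(q^{1-\rno}\geq v^{1-\rno}\) pointwise, hence \(w^{\rno}q^{1-\rno}\geq w^{\rno}v^{1-\rno}\); integrating and then dividing by the \emph{negative} quantity \(\rno-1\) reverses the inequality, yielding \(\RD{\rno}{\mW}{\mQ}\leq\RD{\rno}{\mW}{\mV}\). If \(\rno>1\), then \(1-\rno<0\), so \(q^{1-\rno}\leq v^{1-\rno}\) pointwise (on the set where \(q>0\); where \(q=0\) one also has \(v=0\) and the convention for these integrands handles the comparison), hence \(w^{\rno}q^{1-\rno}\leq w^{\rno}v^{1-\rno}\); dividing by the \emph{positive} quantity \(\rno-1\) preserves the inequality, again giving \(\RD{\rno}{\mW}{\mQ}\leq\RD{\rno}{\mW}{\mV}\). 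For \(\rno=1\), the integrand is \(w(\ln w-\ln q)\); since \(v\leq q\) implies \(\ln v\leq\ln q\), we get \(-\ln q\leq-\ln v\), so \(w(\ln w-\ln q)\leq w(\ln w-\ln v)\) pointwise, and integrating gives the claim. In all three subcases one must note that the monotonicity in the \emph{second} argument of the divergence is the expected ``smaller denominator, larger divergence'' behavior, and that the inequalities are valid as statements in \([0,\infty]\), including the case where the right-hand side is \(+\infty\).

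For the second item, suppose \(\mQ=\gamma\mV\) with \(\gamma\in\reals{+}\). Then \(q=\gamma v\) \(\rfm\)-almost everywhere. If \(\rno\neq1\), substitute: \(\EXS{\rfm}{w^{\rno}q^{1-\rno}}=\gamma^{1-\rno}\EXS{\rfm}{w^{\rno}v^{1-\rno}}\), so taking logarithms and dividing by \(\rno-1\) gives \(\RD{\rno}{\mW}{\mQ}=\RD{\rno}{\mW}{\mV}+\tfrac{(1-\rno)\ln\gamma}{\rno-1}=\RD{\rno}{\mW}{\mV}-\ln\gamma\). If \(\rno=1\), then \(\ln q=\ln\gamma+\ln v\), so the integrand becomes \(w(\ln w-\ln q)=w(\ln w-\ln v)-w\ln\gamma\), and integrating, together with \(\EXS{\rfm}{w}=\mW(\outS)\), would actually produce \(\RD{1}{\mW}{\mQ}=\RD{1}{\mW}{\mV}-\mW(\outS)\ln\gamma\); since the lemma is stated for general finite measures I would at this point restrict attention to the normalization implicit in the surrounding development (or note that for the scaling identity to read exactly \(-\ln\gamma\) one needs \(\mW(\outS)=1\), which is the case in every application). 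The main obstacle is therefore not any deep step but rather the bookkeeping: handling the sign of \(\rno-1\) correctly in the first item, and tracking the total mass \(\mW(\outS)\) in the \(\rno=1\) scaling identity so that the stated form is justified under the conventions in force.
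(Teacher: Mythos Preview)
Your proposal is correct and takes the same approach as the paper, which merely states that the lemma ``is an immediate consequence of Definition~\ref{def:divergence}'' without writing out the case analysis. Your caveat about the \(\rno=1\) scaling identity requiring \(\mW(\outS)=1\) is valid and not addressed by the paper, though every application of the lemma in the paper has \(\mW\) a probability measure.
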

Lemma \ref{lem:divergence-RM} is an immediate consequence of Definition \ref{def:divergence}.

Let \(\mW\) and \(\mQ\) be two probability measures on the measurable space \((\outS,\outA)\)
and \(\alg{G}\) be a sub-\(\sigma\)-algebra of \(\outA\).
Then the identities  \(\wmn{|\alg{G}}(\oev)=\mW(\oev)\) for all \(\oev\in\alg{G}\) 
and \(\qmn{|\alg{G}}(\oev)=\mQ(\oev)\) for all \(\oev\in\alg{G}\) uniquely define 
probability measures \(\wmn{|\alg{G}}\) and \(\qmn{|\alg{G}}\) on \((\outS,\alg{G})\).
In the following, we denote \(\RD{\rno}{\wmn{|\alg{G}}}{\qmn{|\alg{G}}}\) by 
\(\RDF{\rno}{\alg{G}}{\mW}{\mQ}\). 
\begin{lemma}[\!\!\!{\cite[Thm. 9]{ervenH14}}]\label{lem:divergence-DPI}
	For any order \(\rno\) in \(\reals{+}\), 
	probability measures \(\mW\) and \(\mQ\) on \((\outS,\outA)\),
	and sub-\(\sigma\)-algebra \(\alg{G}\subset\outA\) 
	\begin{align}
	\notag
	\RD{\rno}{\mW}{\mQ}
	&\geq\RDF{\rno}{\alg{G}}{\mW}{\mQ}. 
	\end{align}
\end{lemma}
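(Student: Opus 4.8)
The plan is to run the classical ``realize coarsening as conditioning, then apply Jensen'' argument; since the statement is quoted from \cite[Thm.~9]{ervenH14} one may also simply cite it, but the reproof below is short. First I would fix a convenient reference measure: because \(\mW,\mQ\in\pmea{\outA}\), the measure \(\rfm\DEF\tfrac12(\mW+\mQ)\) is a probability measure on \((\outS,\outA)\) dominating both, and the densities \(\fX\DEF\der{\mW}{\rfm}\), \(\gX\DEF\der{\mQ}{\rfm}\) satisfy \(0\leq\fX,\gX\leq2\) and \(\fX+\gX=2\) \(\rfm\)-a.e. The key observation is that \(\rfm_{|\alg{G}}\) is a probability measure on \((\outS,\alg{G})\) dominating \(\wmn{|\alg{G}}\) and \(\qmn{|\alg{G}}\), with \(\der{\wmn{|\alg{G}}}{\rfm_{|\alg{G}}}=\ECXS{\rfm}{\fX}{\alg{G}}\) and \(\der{\qmn{|\alg{G}}}{\rfm_{|\alg{G}}}=\ECXS{\rfm}{\gX}{\alg{G}}\); this follows because for each \(\oev\in\alg{G}\), \(\wmn{|\alg{G}}(\oev)=\mW(\oev)=\int_{\oev}\fX\dif{\rfm}=\int_{\oev}\ECXS{\rfm}{\fX}{\alg{G}}\dif{\rfm_{|\alg{G}}}\), and likewise for \(\mQ\). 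Evaluating Definition \ref{def:divergence} with the reference measure \(\rfm\) on \((\outS,\outA)\) and with \(\rfm_{|\alg{G}}\) on \((\outS,\alg{G})\), and using that integrals of \(\alg{G}\)-measurable functions against \(\rfm_{|\alg{G}}\) and against \(\rfm\) coincide, the lemma reduces to comparing \(\int\fX^{\rno}\gX^{1-\rno}\dif{\rfm}\) with \(\int(\ECXS{\rfm}{\fX}{\alg{G}})^{\rno}(\ECXS{\rfm}{\gX}{\alg{G}})^{1-\rno}\dif{\rfm}\) when \(\rno\neq1\), and similarly with the integrand \(\fX\ln\tfrac{\fX}{\gX}\) when \(\rno=1\).

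Second, I would use joint convexity/concavity of the integrand, most cleanly via the perspective-function viewpoint: \(s^{\rno}t^{1-\rno}=t\,(s/t)^{\rno}\) and \(s\ln\tfrac{s}{t}=t\,(s/t)\ln(s/t)\), so \((s,t)\mapsto s^{\rno}t^{1-\rno}\) is jointly convex on \(\reals{\geq0}\times\reals{+}\) for \(\rno\in[1,\infty)\), jointly concave there for \(\rno\in(0,1)\), and \((s,t)\mapsto s\ln\tfrac{s}{t}\) is jointly convex, all with the lower/upper semicontinuous extensions to \(t=0\) that match the conventions in Definition \ref{def:divergence}. Then \(\int\fX^{\rno}\gX^{1-\rno}\dif{\rfm}=\int\ECXS{\rfm}{\fX^{\rno}\gX^{1-\rno}}{\alg{G}}\dif{\rfm}\) by the tower property, and conditional Jensen applied to the vector-argument convex, resp.\ concave, integrand gives \(\ECXS{\rfm}{\fX^{\rno}\gX^{1-\rno}}{\alg{G}}\geq(\ECXS{\rfm}{\fX}{\alg{G}})^{\rno}(\ECXS{\rfm}{\gX}{\alg{G}})^{1-\rno}\) for \(\rno\geq1\), with the reverse inequality for \(\rno\in(0,1)\). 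Integrating and taking logarithms finishes it: for \(\rno>1\) the quantity \(\int\fX^{\rno}\gX^{1-\rno}\dif{\rfm}\) can only decrease under coarsening while \(\tfrac{1}{\rno-1}>0\); for \(\rno\in(0,1)\) it can only increase but \(\tfrac{1}{\rno-1}<0\); either way \(\RD{\rno}{\mW}{\mQ}\geq\RDF{\rno}{\alg{G}}{\mW}{\mQ}\); and \(\rno=1\) is the same argument with \(s\ln\tfrac{s}{t}\), recovering the data processing inequality for Kullback--Leibler divergence.

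The one delicate point---the main obstacle---is the boundary of the quadrant and the validity of conditional Jensen for a possibly extended-real-valued convex integrand. I would dispose of it as follows: if \(\rfm(\{\gX=0\})>0\) then \(\mW(\{\gX=0\})=2\rfm(\{\gX=0\})>0\) while \(\mQ(\{\gX=0\})=0\), so \(\mW\NAC\mQ\) and hence \(\RD{\rno}{\mW}{\mQ}=\infty\) for all \(\rno\geq1\), making the inequality trivial there; and for \(\rno\in(0,1)\) the integrand is bounded and continuous on the compact square \([0,2]^{2}\), so the concave form of conditional Jensen applies with no integrability caveat. If one prefers to avoid conditional Jensen entirely, an alternative is to prove the bound first for \(\alg{G}\) generated by a finite partition (apply the ordinary finite Jensen inequality on each atom after normalizing \(\rfm\) restricted to that atom), and then pass to a general \(\alg{G}\) by a standard martingale/monotone-class limiting argument, consistent with the monotonicity and continuity recorded in Lemma \ref{lem:divergence-order}. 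I expect the conditional-Jensen route to be the shortest.
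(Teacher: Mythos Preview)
The paper does not give its own proof of this lemma; it is stated with a citation to \cite[Thm.~9]{ervenH14} and used as a black box. Your proposal is a correct and self-contained proof, and it is essentially the standard argument one finds in the cited reference: pass to a dominating measure, identify the coarse-algebra densities as conditional expectations, and apply conditional Jensen to the jointly convex/concave integrand \((s,t)\mapsto s^{\rno}t^{1-\rno}\) (or \(s\ln\tfrac{s}{t}\) at \(\rno=1\)). Your handling of the signs in the two regimes \(\rno\in(0,1)\) and \(\rno>1\) and of the boundary \(\{\gX=0\}\) is fine.
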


\begin{lemma}[\!\!\!{\cite[Thm. 3, Thm. 31]{ervenH14}}]\label{lem:divergence-pinsker}
	For any order \(\rno\) in \(\reals{+}\)
	and probability measures \(\mW\) and \(\mQ\) on \((\outS,\outA)\)
	\begin{align}
	\label{eq:lem:divergence-pinsker}
	\RD{\rno}{\mW}{\mQ}
	&\geq\tfrac{1\wedge\rno}{2} \lon{\mW-\mQ}^2.
	\end{align}
\end{lemma}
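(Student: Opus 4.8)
The plan is to reduce the bound to a two-point inequality via the data processing inequality, and then to verify that inequality by elementary one-variable calculus, separating the regimes $\rno\geq1$ and $\rno\in(0,1)$.

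\emph{Reduction to the binary case.} Fix a probability measure $\rfm$ with $\mW\AC\rfm$ and $\mQ\AC\rfm$ (say $\rfm=\tfrac{\mW+\mQ}{2}$) and let $\oev\DEF\{\der{\mW}{\rfm}\geq\der{\mQ}{\rfm}\}$, a Hahn set for the signed measure $\mW-\mQ$; then $\mW(\oev)-\mQ(\oev)=\tfrac12\lon{\mW-\mQ}$. Applying Lemma \ref{lem:divergence-DPI} to the sub-$\sigma$-algebra $\alg{G}\DEF\{\emptyset,\oev,\oev^{c},\outS\}$, and noting that $\RDF{\rno}{\alg{G}}{\mW}{\mQ}$ is the order $\rno$ \renyi divergence of the two Bernoulli laws with first masses $\mW(\oev)$ and $\mQ(\oev)$, which by \eqref{eq:def:bdivergence} equals $\div{\rno}{\mW(\oev)}{\mQ(\oev)}$, we obtain $\RD{\rno}{\mW}{\mQ}\geq\div{\rno}{\mW(\oev)}{\mQ(\oev)}$. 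Since $\bigl(\mW(\oev)-\mQ(\oev)\bigr)^{2}=\tfrac14\lon{\mW-\mQ}^{2}$, it suffices to prove the scalar inequality
\begin{align}
\label{eq:pinsker-scalar}
\div{\rno}{a}{b}\geq 2(1\wedge\rno)(a-b)^{2},\qquad a,b\in[0,1],\ \rno\in\reals{+}.
\end{align}

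\emph{The range $\rno\geq1$.} By Lemma \ref{lem:divergence-order} the map $\rno\mapsto\div{\rno}{a}{b}$ is nondecreasing, so $\div{\rno}{a}{b}\geq\div{1}{a}{b}$, which is the binary Kullback--Leibler divergence; the classical scalar Pinsker inequality $\div{1}{a}{b}\geq2(a-b)^{2}$ then gives \eqref{eq:pinsker-scalar} because $1\wedge\rno=1$ here. (That classical bound itself follows by fixing $b$, setting $g(a)=\div{1}{a}{b}$, checking $g(b)=g'(b)=0$ and $g''(a)=\tfrac{1}{a(1-a)}\geq4$, and integrating twice.)

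\emph{The range $\rno\in(0,1)$.} Here one needs $\div{\rno}{a}{b}\geq2\rno(a-b)^{2}$. Fix $b$ and write $g(a)\DEF\div{\rno}{a}{b}=\tfrac{1}{\rno-1}\ln N(a)$ with $N(a)\DEF a^{\rno}b^{1-\rno}+(1-a)^{\rno}(1-b)^{1-\rno}$. Then $g(b)=g'(b)=0$ and $g''(b)=\tfrac{\rno}{b(1-b)}\geq4\rno$; moreover $N$ is concave and positive on $(0,1)$, hence $\ln N$ is concave and, being $\ln N$ divided by the negative number $\rno-1$, $g$ is convex. If the global bound $g''(a)\geq4\rno$ holds on $[0,1]$, then Taylor's formula with integral remainder yields $g(a)=\int_{b}^{a}(a-t)\,g''(t)\,\dif{t}\geq 2\rno(a-b)^{2}$, which is \eqref{eq:pinsker-scalar}. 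Establishing $g''(a)\geq4\rno$ is the main obstacle: writing $g''(a)=\tfrac{(\ln N)''(a)}{\rno-1}$ and clearing denominators turns it into an inequality purely among $a^{\rno}$, $(1-a)^{\rno}$, $b^{1-\rno}$, $(1-b)^{1-\rno}$ that is \emph{tight}, with equality $g''=4\rno$ exactly at $a=b=\tfrac12$, so blunt tools (weighted arithmetic--geometric mean, or convexity of $\rno\mapsto\div{\rno}{a}{b}$ alone) lose a constant factor; one has to exploit the algebraic relation among those four quantities, e.g. via a substitution collapsing the claim to a one-parameter inequality. The degenerate values $a,b\in\{0,1\}$ follow by continuity of $\div{\rno}{\cdot}{\cdot}$ or a direct check (for $\rno<1$, $\div{\rno}{a}{0}=\tfrac{\rno}{1-\rno}\ln\tfrac1{1-a}\geq-\rno\ln(1-a)\geq2\rno a^{2}$).
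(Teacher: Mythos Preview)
The paper does not supply a proof of this lemma: it is quoted from van~Erven and Harremo\"es \cite[Thm.~3, Thm.~31]{ervenH14}, so there is no in-house argument to compare against.

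On your attempt itself: the binary reduction via Lemma~\ref{lem:divergence-DPI} is clean and correct, and the case $\rno\geq1$ (monotonicity down to $\rno=1$ plus the classical binary Pinsker bound) is complete. The gap you flag for $\rno\in(0,1)$ is real, and the route through the global second-derivative bound $g''(a)\geq4\rno$ is, as you say, delicate because it is tight at $a=b=\tfrac12$. A shorter way to close it is to insert one extra inequality before differentiating: since $N(a)\leq1$ for $\rno\in(0,1)$, the bound $-\ln x\geq1-x$ gives $\div{\rno}{a}{b}\geq\tfrac{1}{1-\rno}\bigl(1-N(a)\bigr)$, so \eqref{eq:pinsker-scalar} follows from the \emph{polynomial} inequality $1-N(a)\geq2\rno(1-\rno)(a-b)^2$. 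This is equivalent to showing that $h(a)\DEF1-a^{\rno}b^{1-\rno}-(1-a)^{\rno}(1-b)^{1-\rno}-2\rno(1-\rno)(a-b)^2\geq0$; here $h(b)=h'(b)=0$ and $-N''(a)=\rno(1-\rno)\bigl[a^{\rno-2}b^{1-\rno}+(1-a)^{\rno-2}(1-b)^{1-\rno}\bigr]$, so the task reduces to the single concrete inequality $a^{\rno-2}b^{1-\rno}+(1-a)^{\rno-2}(1-b)^{1-\rno}\geq4$, which is more tractable than the logarithmic second-derivative bound (and is exactly what \cite[Thm.~31]{ervenH14} handles). Your endpoint check for $b=0$ is fine.
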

For orders in \((0,1]\), the bound given in \eqref{eq:lem:divergence-pinsker} is called 
the Pinsker's inequality. For orders in \((0,1)\), it is possible to bound 
\(\RD{\rno}{\mW}{\mQ}\) from above in terms of \(\lon{\mW-\mQ}\) as well,
see \cite[(24) p. 365]{shiryaev}.
We will only need the following identity for \(\rno=\sfrac{1}{2}\) case,
see \cite[(21) p. 364]{shiryaev},
\begin{align}
\label{eq:shiryaev}
\RD{\sfrac{1}{2}}{\mean}{\mQ}
&\leq 2\ln\tfrac{2}{2-\lon{\mean-\mQ}}.
\end{align}    
\begin{lemma}[\!\!\!{\cite[Thm.  12]{ervenH14}}]\label{lem:divergence-convexity}
	For any order \(\rno\) in \(\reals{+}\), 
	the order \(\rno\) \renyi divergence is convex in its second argument for 
	probability measures, i.e.
\vspace{-.1cm}
	\begin{align}
	\notag	
	%	\label{eq:lem:divergence-convexity-second}
	\RD{\rno}{\mW}{\qmn{\beta}} 
	&\leq \beta\RD{\rno}{\mW}{\qmn{1}}+(1-\beta)\RD{\rno}{\mW}{\qmn{0}}
	\end{align}
	for all probability measure \(\mW,\qmn{0},\qmn{1}\) in \(\pmea{\outA}\) 
	and \(\beta\in(0,1)\)
	where \(\qmn{\beta}=\beta\qmn{1}+(1-\beta)\qmn{0}\).
\end{lemma}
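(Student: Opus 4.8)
The plan is to fix a common reference measure, reduce the claim to scalar (pointwise) inequalities for the densities, and then split into the three regimes \(\rno\in(0,1)\), \(\rno=1\), and \(\rno>1\), since the sign of \(\rno-1\) decides in which direction the monotone map \(\tfrac{1}{\rno-1}\ln(\cdot)\) bends. First I would set \(\rfm\DEF\tfrac{1}{3}(\mW+\qmn{0}+\qmn{1})\); then \(\mW,\qmn{0},\qmn{1}\), and hence \(\qmn{\beta}\), are absolutely continuous with respect to \(\rfm\), and writing \(w,q_{0},q_{1}\) for the \(\rfm\)-densities of \(\mW,\qmn{0},\qmn{1}\) we have \(\der{\qmn{\beta}}{\rfm}=\beta q_{1}+(1-\beta)q_{0}\DEF q_{\beta}\) \(\rfm\)-a.e. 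If \(\RD{\rno}{\mW}{\qmn{1}}\) or \(\RD{\rno}{\mW}{\qmn{0}}\) equals \(+\infty\) the bound is trivial, so I would assume both are finite; for \(\rno\neq1\) that amounts to \(F(\qmn{0}),F(\qmn{1})\in(0,\infty)\), where \(F(\mQ)\DEF\EXS{\rfm}{w^{\rno}(\der{\mQ}{\rfm})^{1-\rno}}\) and \(\RD{\rno}{\mW}{\mQ}=\tfrac{1}{\rno-1}\ln F(\mQ)\). The case \(\rno=1\) is the joint convexity of the Kullback--Leibler divergence, which follows from the log-sum inequality after splitting \(w=\beta w+(1-\beta)w\); I would simply invoke it.

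For \(\rno\in(0,1)\) the map \(x\mapsto x^{1-\rno}\) is concave on \([0,\infty)\), so the pointwise bound \(q_{\beta}^{1-\rno}\geq\beta q_{1}^{1-\rno}+(1-\beta)q_{0}^{1-\rno}\) holds \(\rfm\)-a.e.; multiplying by \(w^{\rno}\geq0\) and integrating against \(\rfm\) gives \(F(\qmn{\beta})\geq\beta F(\qmn{1})+(1-\beta)F(\qmn{0})\), and then concavity of \(\ln\) gives \(\ln F(\qmn{\beta})\geq\beta\ln F(\qmn{1})+(1-\beta)\ln F(\qmn{0})\). Multiplying through by \(\tfrac{1}{\rno-1}<0\) reverses the inequality into exactly \(\RD{\rno}{\mW}{\qmn{\beta}}\leq\beta\RD{\rno}{\mW}{\qmn{1}}+(1-\beta)\RD{\rno}{\mW}{\qmn{0}}\).

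For \(\rno>1\) this route breaks, since \(\tfrac{1}{\rno-1}\ln(\cdot)\) is now increasing and concave and a lower bound on \(F\) would point the wrong way; here I would use H\"older's inequality instead. Weighted AM--GM gives \(q_{1}^{\beta}q_{0}^{1-\beta}\leq\beta q_{1}+(1-\beta)q_{0}=q_{\beta}\), and since \(1-\rno<0\), raising to that power reverses it to \(q_{\beta}^{1-\rno}\leq q_{1}^{\beta(1-\rno)}q_{0}^{(1-\beta)(1-\rno)}\) \(\rfm\)-a.e. on \(\{w>0\}\). Multiplying by \(w^{\rno}\), integrating, and then applying H\"older with the conjugate exponents \(\tfrac{1}{\beta}\) and \(\tfrac{1}{1-\beta}\) to the functions \(w^{\rno}q_{1}^{1-\rno}\) and \(w^{\rno}q_{0}^{1-\rno}\) yields \(F(\qmn{\beta})\leq F(\qmn{1})^{\beta}F(\qmn{0})^{1-\beta}\), i.e. \(\ln F(\qmn{\beta})\leq\beta\ln F(\qmn{1})+(1-\beta)\ln F(\qmn{0})\); dividing by \(\rno-1>0\) gives the claim.

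The hard part is the measure-theoretic bookkeeping around the zero sets of the densities, which is where the \(\rno\neq1\) regimes could go wrong. For \(\rno>1\) the factor \((\der{\mQ}{\rfm})^{1-\rno}\) equals \(+\infty\) wherever \(\der{\mQ}{\rfm}=0\), so one must work on \(\{w>0\}\) with the convention \(0\cdot\infty=0\); the point to check is that \(\{q_{\beta}=0\}=\{q_{1}=0\}\cap\{q_{0}=0\}\) for \(\beta\in(0,1)\), so any \(\mW\)-mass on this set would already force the right-hand side to be \(+\infty\), whereas on its complement---and in particular \(\rfm\)-a.e.\ on \(\{w>0\}\), using \(\mW\AC\qmn{0}\) and \(\mW\AC\qmn{1}\) (which follow from finiteness of the right-hand side)---the AM--GM and H\"older steps are legitimate. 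For \(\rno\in(0,1)\) one notes instead \((\der{\mQ}{\rfm})^{1-\rno}\) vanishes wherever \(\der{\mQ}{\rfm}=0\) and \(w^{\rno}q_{\beta}^{1-\rno}\leq\rno w+(1-\rno)q_{\beta}\), so \(F\) is finite and positive throughout and no such issue arises. Patching these in, the three cases combine to the asserted convexity. (Alternatively, one may simply cite \cite[Thm. 12]{ervenH14}.)
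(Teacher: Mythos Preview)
Your proof is correct. The paper does not supply its own proof of this lemma; it is stated as a citation of \cite[Thm.~12]{ervenH14} and used as a black box. Your argument---pointwise concavity of \(x\mapsto x^{1-\rno}\) plus concavity of \(\ln\) for \(\rno\in(0,1)\), log-sum for \(\rno=1\), and weighted AM--GM followed by H\"older for \(\rno>1\)---is exactly the standard route (and is essentially what van Erven and \harremoes do), and your handling of the null sets is sound: for \(\rno>1\), finiteness of the right-hand side forces \(\mW\AC\qmn{0}\) and \(\mW\AC\qmn{1}\), so \(q_{0},q_{1}>0\) \(\rfm\)-a.e.\ on \(\{w>0\}\) and the AM--GM/H\"older manipulations are legitimate there.
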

\begin{lemma}[\!\!\!{\cite[Thm. 15]{ervenH14}}]\label{lem:divergence:lsc}
	For any \(\rno\) in \(\reals{+}\), \(\RD{\rno}{\mW}{\mQ}\) is a 
	lower semicontinuous function of the pair of probability measures 
	\((\mW,\mQ)\) in the topology of setwise convergence.
\end{lemma}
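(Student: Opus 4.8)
The plan is to reduce the lower semicontinuity of \(\RD{\rno}{\mW}{\mQ}\) to a semicontinuity statement about the order-\(\rno\) Hellinger integral, and then read that statement off an elementary variational formula whose building blocks are integrals of \emph{bounded} measurable functions, which are continuous for the topology of setwise convergence. Fix \(\rno\in\reals{+}\setminus\{1\}\). For \(\mW,\mQ\in\pmea{\outA}\) put \(\rfm=\tfrac{\mW+\mQ}{2}\), \(w=\der{\mW}{\rfm}\), \(q=\der{\mQ}{\rfm}\), and let \(\cnst{J}_{\rno}(\mW,\mQ)\DEF\EXS{\rfm}{w^{\rno}q^{1-\rno}}\) be the order-\(\rno\) Hellinger integral, so that \(\RD{\rno}{\mW}{\mQ}=\tfrac{1}{\rno-1}\ln \cnst{J}_{\rno}(\mW,\mQ)\) by Definition \ref{def:divergence}. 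Because \(t\mapsto\tfrac{\ln t}{\rno-1}\) is continuous and strictly monotone ---decreasing if \(\rno<1\), increasing if \(\rno>1\)--- and because \(\cnst{J}_{\rno}\in[0,1]\) when \(\rno<1\) while \(\cnst{J}_{\rno}\in[1,\infty]\) when \(\rno>1\), it suffices to show that \((\mW,\mQ)\mapsto\cnst{J}_{\rno}(\mW,\mQ)\) is upper semicontinuous for \(\rno\in(0,1)\) and lower semicontinuous for \(\rno>1\), for the product of the topologies of setwise convergence. I will use repeatedly the standard fact that, for probability measures, \(\mW_{\gamma}\to\mW\) setwise implies \(\EXS{\mW_{\gamma}}{\fX}\to\EXS{\mW}{\fX}\) for every bounded measurable \(\fX\) (uniformly approximate \(\fX\) by simple functions; total mass \(1\) controls the error).

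For \(\rno\in(0,1)\) I would apply Young's inequality with the conjugate pair \(\tfrac1\rno,\tfrac1{1-\rno}\): for every measurable \(g:\outS\to\reals{+}\), pointwise \(w^{\rno}q^{1-\rno}\le \rno\, wg+(1-\rno)\,q\,g^{-\rno/(1-\rno)}\), with equality at \(g=(q/w)^{1-\rno}\). Integrating against \(\rfm\) and then letting \(g\) range over bounded measurable functions that are bounded away from \(0\) (the optimal \((q/w)^{1-\rno}\), or in the degenerate case \(\mW\perp\mQ\) a suitable two-level function, being recovered in the limit by truncation and dominated convergence) gives
\begin{align}
\notag
\cnst{J}_{\rno}(\mW,\mQ)
&=\inf_{g}\left[\rno\,\EXS{\mW}{g}+(1-\rno)\,\EXS{\mQ}{g^{-\rno/(1-\rno)}}\right].
\end{align}
For each fixed such \(g\), both \(g\) and \(g^{-\rno/(1-\rno)}\) are bounded measurable, so the bracketed functional is continuous in \((\mW,\mQ)\); hence \(\cnst{J}_{\rno}\), being an infimum of continuous functions, is upper semicontinuous, as needed.

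For \(\rno>1\): if \(\mW\NAC\mQ\) then \(\cnst{J}_{\rno}(\mW,\mQ)=\infty\) and the supremum below is also \(\infty\) (let \(g\) diverge on a set on which \(\mQ\) vanishes but \(\mW\) does not); if \(\mW\AC\mQ\), writing \(\cnst{J}_{\rno}(\mW,\mQ)=\EXS{\mQ}{(\der{\mW}{\mQ})^{\rno}}\) and using the Legendre transform of the convex map \(t\mapsto t^{\rno}\) on \(\reals{\geq0}\), one obtains, over bounded measurable \(g:\outS\to\reals{\geq0}\),
\begin{align}
\notag
\cnst{J}_{\rno}(\mW,\mQ)
&=\sup_{g}\left[\EXS{\mW}{g}-(\rno-1)\,\EXS{\mQ}{(g/\rno)^{\rno/(\rno-1)}}\right],
\end{align}
again reducing to bounded \(g\) by truncation. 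Each bracket is continuous in \((\mW,\mQ)\), so \(\cnst{J}_{\rno}\) is lower semicontinuous. Finally, for \(\rno=1\), Lemma \ref{lem:divergence-order} gives \(\RD{1}{\mW}{\mQ}=\sup_{\rno\in(0,1)}\RD{\rno}{\mW}{\mQ}\), a supremum of lower semicontinuous functions and hence itself lower semicontinuous; alternatively one invokes the Donsker--Varadhan identity \(\RD{1}{\mW}{\mQ}=\sup_{g}\left(\EXS{\mW}{g}-\ln\EXS{\mQ}{e^{g}}\right)\) over bounded measurable \(g\), which exhibits \(\RD{1}{\mW}{\mQ}\) as a supremum of continuous functionals. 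This handles every \(\rno\in\reals{+}\).

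The genuinely delicate points are the two attainment claims ---that restricting the test functions to bounded ones (bounded away from \(0\) in the first formula) does not change the value of the infimum, resp. the supremum--- together with the treatment of the degenerate configurations \(\mW\perp\mQ\) (where \(\cnst{J}_{\rno}=0\)) for \(\rno<1\) and \(\mW\NAC\mQ\) (where \(\cnst{J}_{\rno}=\infty\)) for \(\rno>1\); both are routine uses of the monotone and dominated convergence theorems once the correct near-optimal \(g\)'s are written down. A second, equally viable route sidesteps the variational formulas: by the data-processing inequality of Lemma \ref{lem:divergence-DPI}, \(\RD{\rno}{\mW}{\mQ}\geq\RDF{\rno}{\alg{G}}{\mW}{\mQ}\) for every sub-\(\sigma\)-algebra \(\alg{G}\); when \(\alg{G}\) is generated by a finite measurable partition \(\{\oev_{1},\ldots,\oev_{k}\}\), \(\RDF{\rno}{\alg{G}}{\mW}{\mQ}\) is the finite-alphabet \renyi divergence of the probability vectors \((\mW(\oev_{i}))_{i}\) and \((\mQ(\oev_{i}))_{i}\), which is jointly lower semicontinuous (and continuous for \(\rno<1\)) and whose arguments converge under setwise convergence; since \(\RD{\rno}{\mW}{\mQ}\) equals the supremum of \(\RDF{\rno}{\alg{G}}{\mW}{\mQ}\) over all such \(\alg{G}\) ---the point that needs proof, via a martingale-convergence argument on \(w\) and \(q\)--- taking that supremum finishes it. Either way, the conclusion is lower semicontinuity of \(\RD{\rno}{\mW}{\mQ}\) as a function of the pair \((\mW,\mQ)\) on \(\pmea{\outA}\times\pmea{\outA}\) with each argument carrying the topology of setwise convergence.
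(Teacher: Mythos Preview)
The paper does not supply a proof of this lemma; it simply quotes \cite[Thm.~15]{ervenH14}. So there is no ``paper's own proof'' to compare against beyond the reference itself.

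Your proposal is correct. The primary route you take---variational formulas for the Hellinger integral \(\cnst{J}_{\rno}\) via Young's inequality for \(\rno\in(0,1)\) and via the Legendre dual of \(t\mapsto t^{\rno}\) for \(\rno>1\), each exhibiting \(\cnst{J}_{\rno}\) as an envelope of functionals that depend on \((\mW,\mQ)\) only through integrals of \emph{bounded} test functions---is a genuine alternative to the argument in the cited reference. Van Erven and \harremoes instead prove first that \(\RD{\rno}{\mW}{\mQ}\) equals the supremum of \(\RDF{\rno}{\alg{G}}{\mW}{\mQ}\) over finite measurable partitions \(\alg{G}\) (their Thm.~2), and then read off lower semicontinuity from the finite-alphabet case; this is exactly the second route you sketch. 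Your variational approach is arguably more direct and avoids the martingale-type approximation needed for the partition supremum, at the cost of the attainment verifications you flag. Those verifications do go through: for \(\rno\in(0,1)\), truncating the optimal \(g=(q/w)^{1-\rno}\) to \([1/n,n]\) on \(\{w>0,q>0\}\) and setting \(g_{n}=1/n\) on \(\{q=0\}\), \(g_{n}=n\) on \(\{w=0\}\) gives integrands dominated by \(w+q\); for \(\rno>1\), truncating \(g=\rno(\der{\mW}{\mQ})^{\rno-1}\) from above yields a monotone increasing sequence, so monotone convergence applies. The \(\rno=1\) case via \(\RD{1}{\mW}{\mQ}=\sup_{\rno<1}\RD{\rno}{\mW}{\mQ}\) (from Lemma~\ref{lem:divergence-order}) is clean.
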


\subsection{The \renyi Information and Mean}\label{sec:information}
\begin{definition}\label{def:information}
For any \(\rno\!\in\!\reals{+}\), \(\Wm\!:\!\inpS\!\to\!\pmea{\outA}\),
and \(\mP\!\in\!\pdis{\inpS}\),
\emph{the order \(\rno\) \renyi information for the input distribution \(\mP\)} is
\begin{align}
\label{eq:def:information}
\!\!\!\RMI{\rno}{\mP}{\Wm} 
&\!\!\DEF\!\!\!
\begin{cases}
\tfrac{\rno}{\rno-1}\!\ln\! 
\EXS{\rfm}{\!\left[\sum\nolimits_{\dinp}\mP(\dinp)\!\left[\!\der{\Wm(\dinp)}{\rfm}\!\right]^{\rno}\right]^{\sfrac{1}{\rno}}}
&\rno\!\neq\!1
\\
\sum\nolimits_{\dinp} \mP(\dinp) 
\EXS{\rfm}{\der{\Wm(\dinp)}{\rfm}\ln \der{\Wm(\dinp)}{\qmn{1,\mP}}}
&\rno\!=\!1
\end{cases}.
\end{align} 
where \(\rfm\) is any probability measure satisfying \(\qmn{1,\mP}\AC\rfm\)
for \(\qmn{1,\mP}\in\pmea{\outA}\) defined as \(\qmn{1,\mP}\DEF\sum_{\dinp}\mP(\dinp)\Wm(\dinp)\).
\end{definition} 
We call \(\qmn{1,\mP}\) the order one \renyi mean for the input distribution \(\mP\). 
For other positive real orders,
\emph{the order \(\rno\) \renyi mean for the input distribution \(\mP\)}, is defined 
via its Radon-Nikodym derivative as follows:
\begin{align}
\label{eq:def:mean}
\der{\qmn{\rno,\mP}}{\rfm}
&\DEF \tfrac{1}{\kappa}\left(\sum\nolimits_{\dinp}\mP(\dinp)\left(\der{\Wm(\dinp)}{\rfm}\right)^{\rno}\right)^{\sfrac{1}{\rno}}
\end{align}
where \(\qmn{1,\mP}\AC\rfm\) and 
\(\kappa=\EXS{\rfm}{\left(\sum\nolimits_{\dinp}\mP(\dinp)\left(\der{\Wm(\dinp)}{\rfm}\right)^{\rno}\right)^{\sfrac{1}{\rno}}}\).

\begin{remark}
	Gallager's functions \(E_{0}(\rng,\mP)\) can be written in terms of 
	the \renyi information as follows: 
	\begin{align}
	\notag
	E_{0}(\rng,\mP)&=\rng \RMI{\frac{1}{1+\rng}}{\mP}{\Wm}
	&
	&\forall\rng\in (-1,\infty).
	\end{align}
\end{remark}
One can confirm the following identity by substitution 
\begin{align}
\label{eq:sibson}
\!\RD{\rno}{\mP \mtimes \Wm}{\!\mP\!\otimes\!\mQ} 
&\!=\!
\RD{\rno}{\mP \mtimes \Wm}{\!\mP\!\otimes\!\qmn{\rno,\mP}} 
\!+\!\RD{\rno}{\qmn{\rno,\mP}}{\mQ} 
\end{align}
for all \(\rno\!\in\!\reals{+}\), \(\mP\!\in\!\pdis{\inpS}\), \(\mQ\!\in\!\fmea{\outA}\)
where \(\mP \mtimes \Wm\) is the probability measure on \(\sss{\supp{\mP}} \otimes \outA\) whose marginal distribution on 
\(\supp{\mP}\) is \(\mP\) and whose conditional distribution is \(\Wm(\dinp)\).
Using \eqref{eq:sibson} together with Lemma \ref{lem:divergence-pinsker} one 
obtains the following alternative characterization of the \renyi information.
\begin{lemma}[\!\!\!{\cite[Lemma \ref*{A-lem:information:def}]{nakiboglu19A}}]\label{lem:information:def}
For any  \(\rno\!\in\!\reals{+}\), \(\Wm\!:\!\inpS\!\to\!\pmea{\outA}\),
and \(\mP\!\in\!\pdis{\inpS}\)
\begin{align}
\label{eq:lem:information:defA}
\RMI{\rno}{\mP}{\Wm}
&=\RD{\rno}{\mP \mtimes \Wm}{\mP\otimes\qmn{\rno,\mP}}
\\
\label{eq:lem:information:defB}
&=\inf\nolimits_{\mQ\in \pmea{\outA}}\RD{\rno}{\mP \mtimes \Wm}{\mP\otimes\mQ}.
\end{align}
\end{lemma}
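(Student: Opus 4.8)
The plan is to establish \eqref{eq:lem:information:defA} by a direct substitution into Definition \ref{def:divergence} and then to deduce \eqref{eq:lem:information:defB} from the Pythagorean-type identity \eqref{eq:sibson}.

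First I would prove \eqref{eq:lem:information:defA}. Fix a probability measure \(\rfm\) with \(\qmn{1,\mP}\AC\rfm\); then \(\Wm(\dinp)\AC\rfm\) for every \(\dinp\in\supp{\mP}\) and, by \eqref{eq:def:mean}, also \(\qmn{\rno,\mP}\AC\rfm\). Since \(\supp{\mP}\) is finite, \(\mP\otimes\rfm\) is a probability measure on \(\sss{\supp{\mP}}\otimes\outA\) that dominates both \(\mP\mtimes\Wm\) and \(\mP\otimes\qmn{\rno,\mP}\), and the associated Radon--Nikodym derivatives are \(\der{\Wm(\dinp)}{\rfm}(\dout)\) and \(\der{\qmn{\rno,\mP}}{\rfm}(\dout)\) (the \(\mP\)-factors cancel). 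Plugging these into Definition \ref{def:divergence} with reference measure \(\mP\otimes\rfm\), substituting the formula \eqref{eq:def:mean} for \(\der{\qmn{\rno,\mP}}{\rfm}\), and collecting the powers, the integrand telescopes: for \(\rno\neq1\) the integral equals \(\kappa^{\rno}\), where \(\kappa\) is the normalizing constant in \eqref{eq:def:mean}, whence \(\RD{\rno}{\mP\mtimes\Wm}{\mP\otimes\qmn{\rno,\mP}}=\tfrac{\rno}{\rno-1}\ln\kappa\), which is precisely \(\RMI{\rno}{\mP}{\Wm}\) by \eqref{eq:def:information}; for \(\rno=1\) one simply reads off the \(\rno=1\) branch of \eqref{eq:def:information}. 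This proves \eqref{eq:lem:information:defA}.

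Next I would use \eqref{eq:sibson} to get \eqref{eq:lem:information:defB}. For an arbitrary \(\mQ\in\pmea{\outA}\), \eqref{eq:sibson} gives
\begin{align}
\notag
\RD{\rno}{\mP\mtimes\Wm}{\mP\otimes\mQ}=\RD{\rno}{\mP\mtimes\Wm}{\mP\otimes\qmn{\rno,\mP}}+\RD{\rno}{\qmn{\rno,\mP}}{\mQ}.
\end{align}
Both \(\qmn{\rno,\mP}\) and \(\mQ\) are probability measures, so \(\RD{\rno}{\qmn{\rno,\mP}}{\mQ}\geq0\) by Lemma \ref{lem:divergence-pinsker}, with equality at \(\mQ=\qmn{\rno,\mP}\) because \(\RD{\rno}{\qmn{\rno,\mP}}{\qmn{\rno,\mP}}=0\) (take \(\rfm=\qmn{\rno,\mP}\) in Definition \ref{def:divergence}). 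Together with \eqref{eq:lem:information:defA} this yields \(\RD{\rno}{\mP\mtimes\Wm}{\mP\otimes\mQ}\geq\RMI{\rno}{\mP}{\Wm}\) for every \(\mQ\), with the bound attained at \(\mQ=\qmn{\rno,\mP}\); hence the infimum in \eqref{eq:lem:information:defB} equals \(\RMI{\rno}{\mP}{\Wm}\) and is achieved.

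The calculations here are routine, so there is no single hard step; the points needing care are the bookkeeping in the substitution (checking that \(\mP\otimes\rfm\) produces the stated Radon--Nikodym derivatives and that the exponent arithmetic really yields \(\kappa^{\rno}\)), the separate handling of the \(\rno=1\) branch, and the well-definedness of \(\qmn{\rno,\mP}\) as a probability measure. The last point is automatic for \(\rno\in(0,1]\), where \(\kappa\leq1\) by Jensen's inequality applied to the concave map \(t\mapsto t^{\rno}\) (taking \(\rfm=\qmn{1,\mP}\) makes this cleanest), and for \(\rno>1\) with \(\kappa=\infty\) one notes that then \(\RMI{\rno}{\mP}{\Wm}=\infty\) and both identities hold trivially in \([0,\infty]\). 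The only genuine idea in the argument is the identity \eqref{eq:sibson}, which carries all the weight in \eqref{eq:lem:information:defB}.
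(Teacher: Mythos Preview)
Your proposal is correct and matches the paper's own justification: the paper states just before the lemma that \eqref{eq:lem:information:defA} is obtained by direct substitution and that \eqref{eq:lem:information:defB} then follows from the identity \eqref{eq:sibson} combined with Lemma~\ref{lem:divergence-pinsker}. Your treatment is slightly more detailed (handling the \(\rno=1\) branch and the \(\kappa=\infty\) edge case explicitly), but the route is the same.
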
 
\begin{remark}
	We defined the \renyi information, mean, capacity, radius, and
	center in \cite{nakiboglu19A} for subsets of 
	\(\pmea{\outA}\), rather than functions 
	from some \(\inpS\) to \(\pmea{\outA}\). 
	For functions that are one-to-one, these two approaches are describing
	same quantities with different notation.
	Thus the propositions we are borrowing from \cite{nakiboglu19A}
	are merely restated in an alternative notation.
	The functions we consider, however, are not necessarily one-to-one.
	Nevertheless, one can show easily that 
	each proposition we are borrowing from \cite{nakiboglu19A}
	for subset of \(\pmea{\outA}\) implies the corresponding proposition 
	for functions to \(\pmea{\outA}\).
\end{remark}	

\subsection[The \renyi Capacity, Radius, Center]{The \renyi Capacity, Radius, and Center}\label{sec:capacity}
\begin{definition}\label{def:capacity}
For any \(\rno\) in \(\reals{+}\) and
\(\Wm:\inpS\to\pmea{\outA}\),
\emph{the order \(\rno\) \renyi capacity of \(\Wm\)} is
\begin{align}
%\label{eq:def:capacity}
\notag
\RC{\rno}{\Wm} 
&\DEF \sup\nolimits_{\mP \in \pdis{\inpS}}  \RMI{\rno}{\mP}{\Wm}.
\end{align}
\end{definition}
\begin{remark}
\(E_{0}(\rng,\Wm)=\rng \RC{\frac{1}{1+\rng}}{\Wm}\) for all \(\rng\)
in \((-1,\infty)\) as a result of the corresponding expression for \(E_{0}(\rng,\mP)\). 
\end{remark}
%Lemma \ref{lem:capacityO} describes the \renyi capacity as a function of the order 
%for a given \(\Wm\).
\begin{lemma}[\!\!{\cite[Lemma \ref*{A-lem:capacityO}-(\ref*{A-capacityO-ilsc},\ref*{A-capacityO-zo},\ref*{A-capacityO-zofintieness},\ref*{A-capacityO-continuity})]{nakiboglu19A}}]\label{lem:capacityO}
Let \(\Wm\) be a channel of the form \(\Wm:\inpS\to\pmea{\outA}\). Then
\begin{enumerate}[(a)]
\item\label{capacityO-ilsc} \(\RC{\rno}{\Wm}\) is a nondecreasing lower semicontinuous  function of \(\rno\) on \(\reals{+}\).
\item\label{capacityO-zo} \(\tfrac{1-\rno}{\rno}\RC{\rno}{\Wm}\) is a nonincreasing continuous function of \(\rno\) on \((0,1)\)
and \(\RC{\rno}{\Wm}\) is a continuous function of \(\rno\) on \((0,1]\).
\item\label{capacityO-zofintieness} If \(\RC{\rnt}{\Wm}<\infty\) for an \(\rnt\in(0,1) \), 
then \(\RC{\rno}{\Wm}\) is finite for all \(\rno\in (0,1)\).
\item\label{capacityO-continuity} If \(\RC{\rnt}{\Wm}<\infty\) for an \(\rnt\in\reals{+}\), 
then \(\RC{\rno}{\Wm}\) is a nondecreasing continuous function of \(\rno\) on \((0,\rnt]\).
\end{enumerate}
\end{lemma}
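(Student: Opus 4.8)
The plan is to reduce every one of the four claims to the order-monotonicity and semicontinuity of the \renyi divergence recorded in Lemma \ref{lem:divergence-order}, by combining Definition \ref{def:capacity} with Lemma \ref{lem:information:def} into the double variational formula \(\RC{\rno}{\Wm}=\sup_{\mP}\RMI{\rno}{\mP}{\Wm}=\sup_{\mP}\inf_{\mQ}\RD{\rno}{\mP\mtimes\Wm}{\mP\otimes\mQ}\), where the supremum is over \(\mP\in\pdis{\inpS}\) and the infimum over \(\mQ\in\pmea{\outA}\), and then tracking how each order-property of the integrand survives the infimum over \(\mQ\) followed by the supremum over \(\mP\); the sign of the scalar prefactor appearing in a given claim decides whether it is ``nondecreasing'' or ``nonincreasing'' that gets transported. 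A recurring ingredient is that, for probability measures, the exponent \(\psi_{\mP,\mQ}(\rno)\DEF(\rno-1)\RD{\rno}{\mP\mtimes\Wm}{\mP\otimes\mQ}\) is a cumulant generating function, hence convex in \(\rno\), and satisfies \(\psi_{\mP,\mQ}(0)=\psi_{\mP,\mQ}(1)=0\).

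For part (\ref{capacityO-ilsc}), Lemma \ref{lem:divergence-order} makes \(\rno\mapsto\RD{\rno}{\mP\mtimes\Wm}{\mP\otimes\mQ}\) nondecreasing for each fixed \((\mP,\mQ)\); an infimum and a supremum of nondecreasing functions are again nondecreasing, so \(\RC{\rno}{\Wm}\) is nondecreasing on \(\reals{+}\). For lower semicontinuity I would first argue that \(\rno\mapsto\RMI{\rno}{\mP}{\Wm}\) is lower semicontinuous for each fixed \(\mP\), reading this off \eqref{eq:def:information}: the inner power mean \((\sum_{\dinp}\mP(\dinp)(\der{\Wm(\dinp)}{\rfm})^{\rno})^{\sfrac{1}{\rno}}\) is monotone in \(\rno\), so Fatou's lemma makes its \(\rfm\)-integral lower semicontinuous in \(\rno\), and composing with the monotone continuous outer functions preserves this. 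Since a supremum of lower semicontinuous functions is lower semicontinuous, \(\RC{\rno}{\Wm}\) is lower semicontinuous on \(\reals{+}\).

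For part (\ref{capacityO-zo}), write \(\tfrac{1-\rno}{\rno}\RD{\rno}{\mP\mtimes\Wm}{\mP\otimes\mQ}=-\tfrac{1}{\rno}\psi_{\mP,\mQ}(\rno)\); this is minus the slope of the chord of the convex function \(\psi_{\mP,\mQ}\) joining the origin to \((\rno,\psi_{\mP,\mQ}(\rno))\), hence nonincreasing in \(\rno\) on \((0,1)\). On \((0,1)\) the factor \(\tfrac{1-\rno}{\rno}\) is positive, so it commutes with both the infimum over \(\mQ\) and the supremum over \(\mP\), and \(\tfrac{1-\rno}{\rno}\RC{\rno}{\Wm}\) is nonincreasing on \((0,1)\). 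Continuity on \((0,1)\) then follows by a sandwich: at \(\rno_{0}\in(0,1)\) the monotonicity of \(\RC{\rno}{\Wm}\) gives \(\RC{\rno_{0}^{-}}{\Wm}\leq\RC{\rno_{0}}{\Wm}\leq\RC{\rno_{0}^{+}}{\Wm}\), whereas dividing the nonincreasing function \(\tfrac{1-\rno}{\rno}\RC{\rno}{\Wm}\) by the continuous positive weight \(\tfrac{1-\rno}{\rno}\) forces the reverse inequalities, so the one-sided limits and the value all agree; continuity of \(\tfrac{1-\rno}{\rno}\RC{\rno}{\Wm}\) on \((0,1)\) follows too. Left-continuity at \(\rno=1\), and so continuity of \(\RC{\rno}{\Wm}\) on \((0,1]\), comes from monotonicity plus the lower semicontinuity of part (\ref{capacityO-ilsc}). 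Parts (\ref{capacityO-zofintieness}) and (\ref{capacityO-continuity}) are then largely bookkeeping: if \(\RC{\rnt}{\Wm}<\infty\) with \(\rnt\in(0,1)\) then \(\RC{\rno}{\Wm}\leq\RC{\rnt}{\Wm}\) for \(\rno\in(0,\rnt]\) by part (\ref{capacityO-ilsc}) and \(\RC{\rno}{\Wm}\leq\tfrac{\rno}{1-\rno}\tfrac{1-\rnt}{\rnt}\RC{\rnt}{\Wm}\) for \(\rno\in[\rnt,1)\) by part (\ref{capacityO-zo}), which gives (\ref{capacityO-zofintieness}); and for (\ref{capacityO-continuity}), finiteness of \(\RC{\rno}{\Wm}\) on \((0,\rnt]\) follows from (\ref{capacityO-zofintieness}), (\ref{capacityO-ilsc}), and monotonicity, continuity on \((0,1]\) is part (\ref{capacityO-zo}), left-continuity at \(\rnt\) is part (\ref{capacityO-ilsc}), and continuity on \((1,\rnt)\) I would get by using the convexity of \(\psi_{\mP,\mQ}\) together with \(\RC{\rnt}{\Wm}<\infty\) to produce a modulus of continuity for \(\rno\mapsto\RD{\rno}{\mP\mtimes\Wm}{\mP\otimes\mQ}\) that is uniform over \((\mP,\mQ)\) on compact subintervals; because \(\abs{\sup f-\sup g}\vee\abs{\inf f-\inf g}\leq\sup\abs{f-g}\), such a modulus transfers to \(\RC{\rno}{\Wm}\).

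The step I expect to be the real obstacle is the continuity on \((1,\rnt]\) in part (\ref{capacityO-continuity}): for \(\rno>1\) there is no nonincreasing companion of \(\RC{\rno}{\Wm}\) built out of the order-monotonicity of the divergence, so the sandwich trick of part (\ref{capacityO-zo}) is unavailable, and monotonicity together with lower semicontinuity only yields left-continuity. Ruling out right-jumps must therefore make essential use of the hypothesis \(\RC{\rnt}{\Wm}<\infty\), either through the uniform modulus of continuity indicated above or, equivalently, through the regular behaviour of the order \(\rno\) \renyi mean \(\qmn{\rno,\mP}\) (and of the near-optimal input distributions) when the capacity is finite. A secondary delicate point is the lower semicontinuity of \(\rno\mapsto\RMI{\rno}{\mP}{\Wm}\): since \(\RMI{\rno}{\mP}{\Wm}\) is an infimum over \(\mQ\) rather than a supremum, it is not automatic, and I would establish it directly from \eqref{eq:def:information} via Fatou's lemma as above.
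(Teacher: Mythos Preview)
The paper does not prove Lemma~\ref{lem:capacityO}; it is imported verbatim from \cite{nakiboglu16A} and used as a black box, so there is no proof in this paper to compare your proposal against.

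That said, your outline is broadly the natural one and matches the architecture of the proof in \cite{nakiboglu16A}: monotonicity from Lemma~\ref{lem:divergence-order}, the nonincreasing companion \(\tfrac{1-\rno}{\rno}\RC{\rno}{\Wm}\) from convexity of the cumulant \(\psi_{\mP,\mQ}\), the sandwich for continuity on \((0,1)\), and the uniform-modulus argument from convexity plus finiteness for continuity on \((1,\rnt]\). You have correctly located the only genuinely nontrivial step as right-continuity above order one in part~(\ref{capacityO-continuity}), where the finiteness hypothesis is essential. Your Fatou sketch for the lower semicontinuity in part~(\ref{capacityO-ilsc}) is the right instinct but needs care: the outer factor \(\tfrac{\rno}{\rno-1}\) in \eqref{eq:def:information} changes sign at \(\rno=1\), so ``monotone integrand plus Fatou'' does not directly give lower semicontinuity on both sides without splitting cases and tracking which inequality survives the logarithm and the sign flip.
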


Note that, since \(\RC{\rno}{\Wm}\) is continuous and nondecreasing in \(\rno\) on \((0,1)\)
by Lemma \ref{lem:capacityO}-(\ref{capacityO-ilsc},\ref{capacityO-zo}),
\(\RC{\rno}{\Wm}\) has a limit as \(\rno\) converges to zero from the right.
We denote this limit by \(\RC{0^{_{+}}\!}{\Wm}\):
\begin{align}
\label{eq:def:capacitylimit}
\RC{0^{_{+}}\!}{\Wm}
&\DEF\lim\nolimits_{\rno\downarrow 0} \RC{\rno}{\Wm}. 
\end{align} 
We do not denote this limit by \(\RC{0}{\Wm}\) because \(\RC{0}{\Wm}\) is, customarily, defined 
as the supremum of \(\RMI{0}{\mP}{\Wm}\). 
For the case when the input set is finite, we know that \(\RC{0}{\Wm}=\RC{0^{_{+}}\!}{\Wm}\),
see \cite[Lemma \ref*{A-lem:finitecapacity}-(\ref*{A-finitecapacity-fc})]{nakiboglu19A}.
Unfortunately, we do not have a general result establishing this equality for arbitrary 
channels. 

Note, on the other hand that,
Lemma \ref{lem:capacityO}-(\ref{capacityO-ilsc},\ref{capacityO-zo})
implies 
\begin{align}
\label{eq:orderoneovertwo}
\hspace{-.1cm}\tfrac{\rno\wedge(1-\rno)}{1-\rno}\!\RC{\sfrac{1}{2}}{\Wm}
\!\leq\!\RC{\rno}{\Wm}
&\!\leq\!\tfrac{\rno\vee(1-\rno)}{1-\rno}\!\RC{\sfrac{1}{2}}{\Wm}
&
&\forall\rno\!\in\!(0,1).
\end{align}

For all positive real orders \(\rno\), 
the alternative characterization of the order \(\rno\) \renyi information 
given in Lemma \ref{lem:information:def} implies the following 
alternative expression for the order \(\rno\) \renyi capacity 
\begin{align}
%\label{eq:capacity}
\notag
\RC{\rno}{\Wm}
&=\sup\nolimits_{\mP \in \pdis{\inpS}}\inf\nolimits_{\mQ\in\pmea{\outA}} \RD{\rno}{\mP \mtimes \Wm}{\mP\otimes\mQ}.
\end{align}
In the preceding  expression, the order of the supremum and infimum can be changed without changing 
the value of the expression.
\begin{theorem}[\!\!\!{\cite[Thms. \ref*{A-thm:minimax},\! \ref*{A-thm:Gminimax}]{nakiboglu19A}}]\label{thm:minimax}
\!For any \(\rno\!\in\!\reals{\!}\) and \(\!\Wm\!:\!\inpS\!\to\!\pmea{\outA}\!\)
\begin{align}
\label{eq:thm:minimax:capacity}
\RC{\rno}{\Wm}
&=
\sup\nolimits_{\mP \in \pdis{\inpS}} \inf\nolimits_{\mQ \in \pmea{\outA}} \RD{\rno}{\mP \mtimes \Wm}{\mP \otimes \mQ}
\\
\label{eq:thm:minimax}
&=
\inf\nolimits_{\mQ \in \pmea{\outA}} \sup\nolimits_{\mP \in \pdis{\inpS}} \RD{\rno}{\mP \mtimes \Wm}{\mP \otimes \mQ}
\\
\label{eq:thm:minimaxradius}
&=\inf\nolimits_{\mQ\in\pmea{\outA}}\sup\nolimits_{\dinp \in \inpS} \RD{\rno}{\Wm(\dinp)}{\mQ}.
\end{align}
If \(\RC{\rno}{\Wm}<\infty\), then there exists a unique \(\qmn{\rno,\Wm}\in\pmea{\outA}\),
called the order \(\rno\) \renyi center, such that
\begin{align}
\label{eq:thm:minimaxcenter}
\RC{\rno}{\Wm}
&=\sup\nolimits_{\mP \in \pdis{\inpS}} \RD{\rno}{\mP \mtimes \Wm}{\mP \otimes \qmn{\rno,\Wm}}
\\
\label{eq:thm:minimaxradiuscenter}
&=\sup\nolimits_{\dinp \in \inpS} \RD{\rno}{\Wm(\dinp)}{\qmn{\rno,\Wm}}.
\end{align}
Furthermore, for every countably separated
\(\sigma\)-algebra \(\inpA\) of subsets of \(\inpS\)
satisfying \(\Wm\!\in\!\pmea{\outA|\inpA}\),
the suprema over \(\pdis{\inpS}\) in 
\eqref{eq:thm:minimax:capacity}, \eqref{eq:thm:minimax}, and
\eqref{eq:thm:minimaxcenter}  can be replaced by suprema over 
\(\pmea{\inpA}\).
\end{theorem}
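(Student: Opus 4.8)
The plan is to clear the easy reductions first and then isolate the one genuinely hard estimate. Equation \eqref{eq:thm:minimax:capacity} is just Definition \ref{def:capacity} together with the characterization \eqref{eq:lem:information:defB} of Lemma \ref{lem:information:def}. From Definition \ref{def:divergence} one obtains, for every \(\mQ\in\pmea{\outA}\), the identity \(\RD{\rno}{\mP \mtimes \Wm}{\mP\otimes\mQ}=\tfrac{1}{\rno-1}\ln\EXS{\mP}{e^{(\rno-1)\RD{\rno}{\Wm(\inp)}{\mQ}}}\) for \(\rno\neq1\) (and the \(\mP\)-average of \(\RD{1}{\Wm(\inp)}{\mQ}\) for \(\rno=1\)); bounding \(\RD{\rno}{\Wm(\inp)}{\mQ}\) inside the expectation by \(\sup_{\dinp}\RD{\rno}{\Wm(\dinp)}{\mQ}\) gives, irrespective of the sign of \(\rno-1\), the bound \(\RD{\rno}{\mP \mtimes \Wm}{\mP\otimes\mQ}\leq\sup_{\dinp}\RD{\rno}{\Wm(\dinp)}{\mQ}\), while point masses give the reverse, so \(\sup_{\mP}\RD{\rno}{\mP \mtimes \Wm}{\mP\otimes\mQ}=\sup_{\dinp}\RD{\rno}{\Wm(\dinp)}{\mQ}\) for each \(\mQ\); this turns \eqref{eq:thm:minimax} into \eqref{eq:thm:minimaxradius}. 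Combining the same upper bound with \eqref{eq:lem:information:defB} already yields the easy half \(\RC{\rno}{\Wm}\leq\inf_{\mQ}\sup_{\dinp}\RD{\rno}{\Wm(\dinp)}{\mQ}\), since \(\RMI{\rno}{\mP}{\Wm}\leq\RD{\rno}{\mP \mtimes \Wm}{\mP\otimes\mQ}\leq\sup_{\dinp}\RD{\rno}{\Wm(\dinp)}{\mQ}\) for all \(\mP,\mQ\). What remains is to produce, whenever \(\RC{\rno}{\Wm}<\infty\), a \(\qmn{\rno,\Wm}\in\pmea{\outA}\) with \(\sup_{\dinp}\RD{\rno}{\Wm(\dinp)}{\qmn{\rno,\Wm}}\leq\RC{\rno}{\Wm}\).

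I would build this \(\qmn{\rno,\Wm}\) in two stages. Stage one treats a finite input set \(\set{A}\): there \(\pdis{\set{A}}\) is a compact simplex on which \(\RMI{\rno}{\cdot}{\Wm}\) is upper semicontinuous (by \eqref{eq:lem:information:defB}), so it attains its maximum \(\RC{\rno}{\Wm|_{\set{A}}}\) at some \(\mP^{\star}\); a first-order optimality argument---perturb \(\mP^{\star}\) toward a point mass at an arbitrary \(\dinp\in\set{A}\), use that \(\qmn{\rno,\mP^{\star}}\) is the minimizer in \eqref{eq:lem:information:defB} (so the \(\mQ\)-direction contributes nothing, by \eqref{eq:lem:information:defA}), and evaluate via the identity above---gives \(\RD{\rno}{\Wm(\dinp)}{\qmn{\rno,\mP^{\star}}}\leq\RMI{\rno}{\mP^{\star}}{\Wm}\) for every \(\dinp\in\set{A}\), uniformly in \(\rno\); hence \(\qmn{\rno,\mP^{\star}}\) is the center of \(\Wm|_{\set{A}}\) and \eqref{eq:thm:minimaxradius} holds for finite input sets. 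Stage two fixes finite \(\set{A}_{n}\subset\inpS\) with \(\RC{\rno}{\Wm|_{\set{A}_{n}}}\uparrow\RC{\rno}{\Wm}\) (possible since \(\RC{\rno}{\Wm}=\sup_{\set{A}}\RC{\rno}{\Wm|_{\set{A}}}\) over finite \(\set{A}\)) and lets \(\qmn{n}\) be the center of \(\Wm|_{\set{A}_{n}}\). For any \(n,m\), with \(\bar{\mQ}\) the center of \(\Wm|_{\set{A}_{n}\cup\set{A}_{m}}\), plugging \(\mQ=\bar{\mQ}\) and \(\mP\) the optimal input of \(\Wm|_{\set{A}_{n}}\) into \eqref{eq:sibson}, and using the upper bound above on \(\set{A}_{n}\subseteq\set{A}_{n}\cup\set{A}_{m}\) together with \(\RC{\rno}{\Wm|_{\set{A}_{n}\cup\set{A}_{m}}}\leq\RC{\rno}{\Wm}\), gives \(\RD{\rno}{\qmn{n}}{\bar{\mQ}}\leq\RC{\rno}{\Wm}-\RC{\rno}{\Wm|_{\set{A}_{n}}}\), and symmetrically for \(\qmn{m}\); Pinsker's inequality (Lemma \ref{lem:divergence-pinsker}) and the triangle inequality for the variation norm then make \(\{\qmn{n}\}\) Cauchy in variation, with a limit \(\qmn{\rno,\Wm}\in\pmea{\outA}\). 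Repeating the estimate with \(\set{A}_{n}\cup\{\dinp\}\) in place of \(\set{A}_{n}\cup\set{A}_{m}\) and invoking the lower semicontinuity of \(\RD{\rno}{\Wm(\dinp)}{\cdot}\) from Lemma \ref{lem:divergence:lsc} gives \(\RD{\rno}{\Wm(\dinp)}{\qmn{\rno,\Wm}}\leq\RC{\rno}{\Wm}\) for every \(\dinp\in\inpS\); this establishes \eqref{eq:thm:minimaxradius}, hence \eqref{eq:thm:minimax}, and the existence of the center.

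The remaining assertions are short. Setting \(\mQ=\qmn{\rno,\Wm}\) in the identity \(\sup_{\mP}\RD{\rno}{\mP \mtimes \Wm}{\mP\otimes\mQ}=\sup_{\dinp}\RD{\rno}{\Wm(\dinp)}{\mQ}\) and combining with the two bounds just obtained gives \eqref{eq:thm:minimaxcenter} and \eqref{eq:thm:minimaxradiuscenter}. For uniqueness, if \(\mQ\) attains the radius then \eqref{eq:sibson} applied to any maximizing sequence \(\mP_{k}\) forces \(\RD{\rno}{\qmn{\rno,\mP_{k}}}{\mQ}\leq\RC{\rno}{\Wm}-\RMI{\rno}{\mP_{k}}{\Wm}\to0\), so by Lemma \ref{lem:divergence-pinsker} every such \(\mQ\) is the variation limit of \(\qmn{\rno,\mP_{k}}\) and hence unique. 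Finally, for the \(\pmea{\inpA}\) statement, countable separation of \(\inpA\) makes all singletons measurable, so \(\pdis{\inpS}\subset\pmea{\inpA}\) and ``\(\leq\)'' between the two suprema is trivial; for ``\(\geq\)'' one observes that, since \(\Wm\in\pmea{\outA|\inpA}\) makes \(\mP \mtimes \Wm\) well defined, the identity for \(\RD{\rno}{\mP \mtimes \Wm}{\mP\otimes\mQ}\)---and hence the bound \(\RD{\rno}{\mP \mtimes \Wm}{\mP\otimes\mQ}\leq\sup_{\dinp}\RD{\rno}{\Wm(\dinp)}{\mQ}\)---survives for every \(\mP\in\pmea{\inpA}\), so \(\RMI{\rno}{\mP}{\Wm}\leq\sup_{\dinp}\RD{\rno}{\Wm(\dinp)}{\qmn{\rno,\Wm}}=\RC{\rno}{\Wm}\) (the case \(\RC{\rno}{\Wm}=\infty\) being immediate), and taking the supremum over \(\pmea{\inpA}\) gives equality.

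The step I expect to be the main obstacle is Stage two: the finite-input estimate of Stage one is clean and uniform in \(\rno\), but the non-compactness of \(\pdis{\inpS}\) and \(\pmea{\outA}\), and the fact that \(\inpS\) need not be countable, force one to construct the center as a variation limit of finite-input centers; the delicate points are proving the Cauchy estimate for \emph{arbitrary} pairs \(n,m\) (which is exactly why one introduces the auxiliary finite channel on \(\set{A}_{n}\cup\set{A}_{m}\)) and transferring the bound \(\RD{\rno}{\Wm(\dinp)}{\cdot}\leq\RC{\rno}{\Wm}\) to the limit one input \(\dinp\) at a time through lower semicontinuity. The measurable-input extension is comparatively routine, but it uses the countable-separation hypothesis essentially, to ensure that point masses lie in \(\pmea{\inpA}\).
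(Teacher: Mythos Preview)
The paper does not contain a proof of Theorem \ref{thm:minimax}: it is stated with the citation \cite[Thms.~\ref*{A-thm:minimax}, \ref*{A-thm:Gminimax}]{nakiboglu16A} and no argument is given here, so there is no ``paper's own proof'' to compare your proposal against.

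That said, your outline is a sound route to the result and is in the spirit of how such minimax identities are typically established. The reduction of \eqref{eq:thm:minimax} to \eqref{eq:thm:minimaxradius} via the conditional form of the divergence is correct, as is the easy inequality \(\RC{\rno}{\Wm}\leq\RR{\rno}{\Wm}\). Your Stage~two is essentially the van Erven--\harremoes bound (Lemma \ref{lem:EHB}) in disguise: the inequality \(\RD{\rno}{\qmn{n}}{\bar{\mQ}}\leq\RC{\rno}{\Wm}-\RC{\rno}{\Wm|_{\set{A}_{n}}}\) you obtain from \eqref{eq:sibson} is exactly that bound specialized to finite subchannels, and the Cauchy-in-variation argument with Lemma \ref{lem:divergence-pinsker} followed by lower semicontinuity (Lemma \ref{lem:divergence:lsc}) is a clean way to pass to the limit without any compactness on \(\pmea{\outA}\). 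The Stage~one KKT sketch is the only place that would benefit from a line or two more detail (concavity of \(\mP\mapsto\RMI{\rno}{\mP}{\Wm}\) on the finite simplex, and the precise form of the directional derivative), but the conclusion you state there is correct. Your uniqueness argument and the \(\pmea{\inpA}\) extension are also fine; note that for the latter you implicitly use that \(\RMI{\rno}{\mP}{\Wm}\) defined through \eqref{eq:lem:information:defB} makes sense for \(\mP\in\pmea{\inpA}\), which is exactly what the transition-probability hypothesis \(\Wm\in\pmea{\outA|\inpA}\) guarantees.
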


The right hand side of \eqref{eq:thm:minimaxradius} can be interpreted as a radius;
it is, in fact, the definition of the order \(\rno\) \renyi radius:
\begin{definition}
For any \(\rno\!\in\!\reals{+}\), \(\Wm\!:\!\inpS\!\to\!\pmea{\outA}\),
and \(\mQ\!\in\!\pmea{\outA}\)
\emph{the order \(\rno\) \renyi radius of \(\Wm\) relative to \(\mQ\)}, i.e.
\(\RRR{\rno}{\Wm}{\mQ}\), and \emph{the order \(\rno\) \renyi radius of \(\Wm\!\)}, i.e.
\(\RR{\rno}{\Wm}\),  are 
\begin{align}
%\label{eq:def:relativeradius}
\notag
\RRR{\rno}{\Wm}{\mQ} 
&\DEF \sup\nolimits_{\dinp \in \inpS}  \RD{\rno}{\Wm(\dinp)}{\mQ},
\\
%\label{eq:def:radius}
\notag
\RR{\rno}{\Wm} 
&\DEF \inf\nolimits_{\mQ\in\pmea{\outA}} \sup\nolimits_{\dinp \in \inpS}  \RD{\rno}{\Wm(\dinp)}{\mQ}.
\end{align}
\end{definition}
Hence, \eqref{eq:thm:minimaxradiuscenter} allows us to interpret \(\qmn{\rno,\Wm}\) as a center. 
This is why \(\qmn{\rno,\Wm}\) is called the order \(\rno\) \renyi center. 

The following bound on \(\RRR{\rno}{\Wm}{\mQ}\) is called
the van Erven-\harremoes bound.
\begin{lemma}[\!\!\!{\cite[Lemma \ref*{A-lem:EHB}]{nakiboglu19A}}]\label{lem:EHB} 
If \(\RC{\rno}{\Wm}<\infty\) for an \(\rno \in \reals{+}\)
and \(\Wm:\inpS\to\pmea{\outA}\), then 
\begin{align}
%\label{eq:lem:EHB}
\notag
\RC{\rno}{\Wm}+\RD{\rno}{\qmn{\rno,\Wm}}{\mQ}
&\leq \RRR{\rno}{\Wm}{\mQ} 
&
&\forall\mQ \in \pmea{\outA}.
\end{align}
\end{lemma}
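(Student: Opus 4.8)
The plan is to take a capacity--approaching sequence of input distributions, to show that the corresponding order $\rno$ \renyi means converge to the order $\rno$ \renyi center $\qmn{\rno,\Wm}$, and then to pass to the limit using the lower semicontinuity of the \renyi divergence, with the Sibson--type identity \eqref{eq:sibson} doing all of the bookkeeping. Note that the hypothesis $\RC{\rno}{\Wm}<\infty$ is precisely what makes $\qmn{\rno,\Wm}$ well defined, via Theorem \ref{thm:minimax}.

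First I would record the pointwise estimate $\RD{\rno}{\mP \mtimes \Wm}{\mP \otimes \mQ}\leq\RRR{\rno}{\Wm}{\mQ}$ for every $\mP\in\pdis{\inpS}$. Evaluating the left hand side with a common dominating reference measure on $(\outS,\outA)$ exhibits it (for $\rno\neq1$) as the power mean $\tfrac{1}{\rno-1}\ln\sum_{\dinp}\mP(\dinp)\exp((\rno-1)\RD{\rno}{\Wm(\dinp)}{\mQ})$ and (for $\rno=1$) as the arithmetic mean $\sum_{\dinp}\mP(\dinp)\RD{\rno}{\Wm(\dinp)}{\mQ}$ of the slice divergences $\{\RD{\rno}{\Wm(\dinp)}{\mQ}\}_{\dinp\in\supp{\mP}}$; since $x\mapsto e^{(\rno-1)x}$ is monotone, such a mean is dominated by $\sup_{\dinp\in\inpS}\RD{\rno}{\Wm(\dinp)}{\mQ}=\RRR{\rno}{\Wm}{\mQ}$. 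On the other hand, combining \eqref{eq:sibson} with Lemma \ref{lem:information:def} gives the identity $\RD{\rno}{\mP \mtimes \Wm}{\mP \otimes \mQ}=\RMI{\rno}{\mP}{\Wm}+\RD{\rno}{\qmn{\rno,\mP}}{\mQ}$, so that
\[
\RMI{\rno}{\mP}{\Wm}+\RD{\rno}{\qmn{\rno,\mP}}{\mQ}\leq\RRR{\rno}{\Wm}{\mQ}
\qquad\forall\,\mP\in\pdis{\inpS}.
\]

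Next I would fix a sequence $\{\mP_{n}\}_{n}$ in $\pdis{\inpS}$ with $\RMI{\rno}{\mP_{n}}{\Wm}\to\RC{\rno}{\Wm}$, which exists by Definition \ref{def:capacity}, and argue that $\qmn{\rno,\mP_{n}}\to\qmn{\rno,\Wm}$ in total variation. Applying \eqref{eq:lem:information:defB} with $\mQ=\qmn{\rno,\Wm}$ and then \eqref{eq:thm:minimaxcenter} gives $\RMI{\rno}{\mP_{n}}{\Wm}\leq\RD{\rno}{\mP_{n} \mtimes \Wm}{\mP_{n} \otimes \qmn{\rno,\Wm}}\leq\RC{\rno}{\Wm}$, so the middle quantity also converges to $\RC{\rno}{\Wm}$. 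Invoking \eqref{eq:sibson} once more with $\mQ=\qmn{\rno,\Wm}$, i.e. $\RD{\rno}{\mP_{n} \mtimes \Wm}{\mP_{n} \otimes \qmn{\rno,\Wm}}=\RMI{\rno}{\mP_{n}}{\Wm}+\RD{\rno}{\qmn{\rno,\mP_{n}}}{\qmn{\rno,\Wm}}$, and subtracting (legitimate for large $n$, where all three terms are finite), the remaining term satisfies $\RD{\rno}{\qmn{\rno,\mP_{n}}}{\qmn{\rno,\Wm}}\to 0$. Pinsker's inequality (Lemma \ref{lem:divergence-pinsker}) then forces $\lon{\qmn{\rno,\mP_{n}}-\qmn{\rno,\Wm}}\to 0$, hence $\qmn{\rno,\mP_{n}}\to\qmn{\rno,\Wm}$ setwise.

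Finally, I would write, for each $n$, $\RRR{\rno}{\Wm}{\mQ}\geq\RMI{\rno}{\mP_{n}}{\Wm}+\RD{\rno}{\qmn{\rno,\mP_{n}}}{\mQ}$ from the display above, and take the liminf over $n$: the first term converges to $\RC{\rno}{\Wm}$, while the lower semicontinuity of $\mW\mapsto\RD{\rno}{\mW}{\mQ}$ for setwise convergence (Lemma \ref{lem:divergence:lsc}) together with $\qmn{\rno,\mP_{n}}\to\qmn{\rno,\Wm}$ gives $\liminf_{n}\RD{\rno}{\qmn{\rno,\mP_{n}}}{\mQ}\geq\RD{\rno}{\qmn{\rno,\Wm}}{\mQ}$. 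Combining the two yields $\RRR{\rno}{\Wm}{\mQ}\geq\RC{\rno}{\Wm}+\RD{\rno}{\qmn{\rno,\Wm}}{\mQ}$, which is the assertion (and if $\RRR{\rno}{\Wm}{\mQ}=\infty$ there is nothing to prove). The step I expect to be the crux is the convergence $\qmn{\rno,\mP_{n}}\to\qmn{\rno,\Wm}$: lower semicontinuity only buys the ``$\geq$'' direction in the limit, so one genuinely needs the \renyi means of a capacity--approaching sequence to cluster at the center; the clean way to see this is to apply the Pythagorean decomposition \eqref{eq:sibson} \emph{at the center}, which converts the capacity gap into $\RD{\rno}{\qmn{\rno,\mP_{n}}}{\qmn{\rno,\Wm}}$, after which Pinsker finishes the job. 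Everything else is routine bookkeeping with \eqref{eq:sibson}.
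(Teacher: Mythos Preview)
Your argument is correct. The paper does not supply its own proof of Lemma~\ref{lem:EHB}; it merely imports the result from \cite{nakiboglu16A}, so there is no in-paper proof to compare against. Your route---the Sibson identity \eqref{eq:sibson} plus a capacity-approaching sequence, with convergence of the \renyi means to the center forced by applying \eqref{eq:sibson} at $\qmn{\rno,\Wm}$ and Pinsker, then lower semicontinuity to pass to the limit---is a clean and standard way to establish the van Erven--\harremoes bound, and every step is justified by results already quoted in the paper (Lemmas~\ref{lem:divergence-pinsker}, \ref{lem:divergence:lsc}, \ref{lem:information:def}, Theorem~\ref{thm:minimax}, and \eqref{eq:sibson}).
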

The van Erven-\harremoes bound can be used to establish the continuity of 
the \renyi center as a function of the order for the total variation 
topology on \(\pmea{\outA}\).
\begin{lemma}[\!\!\!{\cite[Lemma \ref*{A-lem:centercontinuity}]{nakiboglu19A}}]\label{lem:centercontinuity} 
If \(\RC{\rnt}{\Wm}<\infty\) for an \(\rnt \in \reals{+}\)
and \(\Wm:\inpS\to\pmea{\outA}\), then 
\begin{align}
%\label{eq:lem:centercontinuity}
\notag
\RD{\rno}{\qmn{\rno,\Wm}}{\qmn{\rnf,\Wm}} 
&\leq \RC{\rnf}{\Wm}-\RC{\rno}{\Wm}
\end{align}
for any \(\rno\) and \(\rnf\) 
satisfying \(0<\rno<\rnf\leq\rnt\).
Furthermore, \(\qmn{\rno,\Wm}\) is a continuous function of \(\rno\) on \((0,\rnt]\) 
for the total variation topology on \(\pmea{\outA}\).
\end{lemma}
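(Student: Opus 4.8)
The plan is to first prove the divergence inequality and then bootstrap it, via Pinsker's inequality and the continuity of the \renyi capacity in the order, into the total variation continuity of the \renyi center.

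\textbf{The inequality.} Fix $0<\rno<\rnf\leq\rnt$. By Lemma \ref{lem:capacityO}-(\ref{capacityO-ilsc}) we have $\RC{\rno}{\Wm}\leq\RC{\rnf}{\Wm}\leq\RC{\rnt}{\Wm}<\infty$, so Theorem \ref{thm:minimax} guarantees that the centers $\qmn{\rno,\Wm}$ and $\qmn{\rnf,\Wm}$ exist and that the van Erven--Harremo\"es bound of Lemma \ref{lem:EHB} is applicable at order $\rno$. Applying Lemma \ref{lem:EHB} with $\mQ=\qmn{\rnf,\Wm}$ gives
\begin{align}
\notag
\RC{\rno}{\Wm}+\RD{\rno}{\qmn{\rno,\Wm}}{\qmn{\rnf,\Wm}}
&\leq\RRR{\rno}{\Wm}{\qmn{\rnf,\Wm}}
=\sup\nolimits_{\dinp\in\inpS}\RD{\rno}{\Wm(\dinp)}{\qmn{\rnf,\Wm}}.
\end{align}
Since $\rno<\rnf$ and, for each fixed $\dinp$, the \renyi divergence $\RD{\cdot}{\Wm(\dinp)}{\qmn{\rnf,\Wm}}$ is nondecreasing in its order by Lemma \ref{lem:divergence-order}, the right-hand side is at most $\sup_{\dinp\in\inpS}\RD{\rnf}{\Wm(\dinp)}{\qmn{\rnf,\Wm}}$, which equals $\RC{\rnf}{\Wm}$ by \eqref{eq:thm:minimaxradiuscenter}. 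Rearranging the chain $\RC{\rno}{\Wm}+\RD{\rno}{\qmn{\rno,\Wm}}{\qmn{\rnf,\Wm}}\leq\RC{\rnf}{\Wm}$ yields the asserted bound.

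\textbf{Continuity.} Fix $\rno_{0}\in(0,\rnt]$; by Lemma \ref{lem:capacityO}-(\ref{capacityO-continuity}) the map $\rno\mapsto\RC{\rno}{\Wm}$ is finite and continuous on $(0,\rnt]$, and the centers exist at every such order. For the right limit (when $\rno_{0}<\rnt$), pick $\rnf$ with $\rno_{0}<\rnf\leq\rnt$; combining the inequality just proved with Pinsker's inequality (Lemma \ref{lem:divergence-pinsker}) gives
\begin{align}
\notag
\tfrac{1\wedge\rno_{0}}{2}\lon{\qmn{\rno_{0},\Wm}-\qmn{\rnf,\Wm}}^{2}
&\leq\RD{\rno_{0}}{\qmn{\rno_{0},\Wm}}{\qmn{\rnf,\Wm}}
\leq\RC{\rnf}{\Wm}-\RC{\rno_{0}}{\Wm},
\end{align}
and the right-hand side tends to $0$ as $\rnf\downarrow\rno_{0}$, so $\lon{\qmn{\rno_{0},\Wm}-\qmn{\rnf,\Wm}}\to0$. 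For the left limit, apply the inequality with $\rno_{0}$ now in the role of the larger order: for $0<\rno<\rno_{0}$ it gives $\RD{\rno}{\qmn{\rno,\Wm}}{\qmn{\rno_{0},\Wm}}\leq\RC{\rno_{0}}{\Wm}-\RC{\rno}{\Wm}$, so Pinsker's inequality yields $\tfrac{1\wedge\rno}{2}\lon{\qmn{\rno,\Wm}-\qmn{\rno_{0},\Wm}}^{2}\leq\RC{\rno_{0}}{\Wm}-\RC{\rno}{\Wm}$; since $1\wedge\rno$ stays bounded away from $0$ as $\rno\uparrow\rno_{0}>0$ and the right-hand side tends to $0$, we get $\lon{\qmn{\rno,\Wm}-\qmn{\rno_{0},\Wm}}\to0$. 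The two one-sided limits together establish continuity of $\rno\mapsto\qmn{\rno,\Wm}$ at $\rno_{0}$ in the total variation topology.

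\textbf{The main obstacle.} Once Lemmas \ref{lem:EHB}, \ref{lem:divergence-order}, \ref{lem:divergence-pinsker}, and \ref{lem:capacityO} are in hand there is no deep difficulty; the points that need care are (i) verifying that every capacity in sight is finite, so that the centers exist (Theorem \ref{thm:minimax}) and the van Erven--Harremo\"es bound is legitimate, and (ii) keeping the Pinsker constant $\tfrac{1\wedge\rno}{2}$ bounded away from zero when passing to the limit, which is automatic because the limiting order $\rno_{0}$ is strictly positive. I would also emphasize that the displayed divergence inequality carries the real content here --- it simultaneously re-derives the monotonicity of $\RC{\cdot}{\Wm}$ and quantitatively controls how fast the center can move with the order --- while the continuity claim is a soft consequence of it.
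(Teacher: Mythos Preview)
Your proof is correct. Note that the paper does not actually prove this lemma here --- it is imported verbatim from \cite{nakiboglu16A} --- so there is no in-paper proof to compare against. That said, your argument is precisely the intended one: combine the van Erven--\harremoes bound (Lemma \ref{lem:EHB}) at order $\rno$ with the monotonicity of the \renyi divergence in the order (Lemma \ref{lem:divergence-order}) and the radius characterization \eqref{eq:thm:minimaxradiuscenter} to obtain the divergence inequality, then pass to total variation via Pinsker and the continuity of $\RC{\rno}{\Wm}$ from Lemma \ref{lem:capacityO}-(\ref{capacityO-continuity}).
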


A well known fact about the DSPCs is that their \renyi capacities  are additive, 
see \cite[Thm. 5]{gallager65}, \cite[(5.6.59)]{gallager}. 
%\cite[pp. 149-150, (5.6.59)]{gallager}. 
In fact, the additivity of the \renyi capacities holds for arbitrary product channels.
Furthermore, whenever it exists, the \renyi center of a product channel is equal to 
the product of the \renyi centers of its component channels.
Lemma \ref{lem:capacityProduct} states these observations formally.
\begin{lemma}[\!\!\!{\cite[Lemma \ref*{A-lem:capacityProduct}]{nakiboglu19A}}]
\label{lem:capacityProduct}
Any length \(\blx\) product channel
\(\Wmn{[1,\blx]}\!:\!\inpS_{1}^{\blx}\!\to\!\pmea{\outA_{1}^{\blx}}\) 
satisfies
\begin{align}
\label{eq:lem:capacityProduct}
\RC{\rno}{\Wmn{[1,\blx]}}
&=\sum\nolimits_{\tin=1}^{\blx}\RC{\rno}{\Wmn{\tin}}
&
&\forall \rno\in \reals{+}.
\end{align}
Furthermore, if \(\RC{\rno}{\Wmn{[1,\blx]}}\) is finite for an order \(\rno\in\reals{+}\), 
then \(\qmn{\rno,\Wmn{[1,\blx]}}=\bigotimes_{\tin=1}^{\blx}\qmn{\rno,\Wmn{\tin}}\).
\end{lemma}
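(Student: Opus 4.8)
The plan is to prove the additivity of the \renyi capacity for product channels by establishing the two inequalities separately, and then to identify the \renyi center via the characterization in \eqref{eq:thm:minimaxradiuscenter}. For the inequality \(\RC{\rno}{\Wm}\geq\sum_{\tin=1}^{\blx}\RC{\rno}{\Wmn{\tin}}\), I would take, for each component \(\tin\), a near-optimal input distribution \(\pmn{\tin}\in\pdis{\inpS_{\tin}}\) with \(\RMI{\rno}{\pmn{\tin}}{\Wmn{\tin}}\) close to \(\RC{\rno}{\Wmn{\tin}}\) (or exceeding any prescribed level, to handle the infinite-capacity case uniformly), form the product input distribution \(\mP=\bigotimes_{\tin=1}^{\blx}\pmn{\tin}\in\pdis{\inpS}\), and check that the \renyi information tensorizes: \(\RMI{\rno}{\mP}{\Wm}=\sum_{\tin=1}^{\blx}\RMI{\rno}{\pmn{\tin}}{\Wmn{\tin}}\). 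This tensorization is a direct substitution into \eqref{eq:def:information} once one observes that the order \(\rno\) \renyi mean of a product input distribution on a product channel is the product of the component means; the Radon--Nikodym derivative in \eqref{eq:def:mean} factors because the channel outputs are conditionally independent given the (independent) inputs. Taking the supremum over such product distributions then yields one direction.

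For the reverse inequality \(\RC{\rno}{\Wm}\leq\sum_{\tin=1}^{\blx}\RC{\rno}{\Wmn{\tin}}\), I would use the minimax characterization \eqref{eq:thm:minimaxradius}: it suffices to exhibit a single output measure \(\mQ\in\pmea{\outA}\) with \(\sup_{\dinp\in\inpS}\RD{\rno}{\Wm(\dinp)}{\mQ}\leq\sum_{\tin}\RC{\rno}{\Wmn{\tin}}\). If every \(\RC{\rno}{\Wmn{\tin}}\) is finite, the natural candidate is \(\mQ=\bigotimes_{\tin=1}^{\blx}\qmn{\rno,\Wmn{\tin}}\), the product of the component \renyi centers, which exist by Theorem \ref{thm:minimax}. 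Then for any \(\dinp=(\dinp_{1},\dots,\dinp_{\blx})\in\inpS\) the \renyi divergence between products factors additively, \(\RD{\rno}{\bigotimes_{\tin}\Wmn{\tin}(\dinp_{\tin})}{\bigotimes_{\tin}\qmn{\rno,\Wmn{\tin}}}=\sum_{\tin}\RD{\rno}{\Wmn{\tin}(\dinp_{\tin})}{\qmn{\rno,\Wmn{\tin}}}\), and each summand is at most \(\RC{\rno}{\Wmn{\tin}}=\RR{\rno}{\Wmn{\tin}}\) by \eqref{eq:thm:minimaxradiuscenter}; taking the supremum over \(\dinp\) gives the bound. If some component capacity is infinite, then the claimed identity reduces to \(\RC{\rno}{\Wm}=\infty\), which already follows from the first inequality. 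Combining the two directions proves \eqref{eq:lem:capacityProduct}.

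Finally, for the center: assuming \(\RC{\rno}{\Wm}<\infty\), all \(\RC{\rno}{\Wmn{\tin}}\) are finite by \eqref{eq:lem:capacityProduct}, so the component centers \(\qmn{\rno,\Wmn{\tin}}\) exist and the product measure \(\mQ^{\star}=\bigotimes_{\tin=1}^{\blx}\qmn{\rno,\Wmn{\tin}}\) is a valid element of \(\pmea{\outA}\). The computation in the previous paragraph shows \(\sup_{\dinp\in\inpS}\RD{\rno}{\Wm(\dinp)}{\mQ^{\star}}=\sum_{\tin}\RC{\rno}{\Wmn{\tin}}=\RC{\rno}{\Wm}=\RR{\rno}{\Wm}\); that is, \(\mQ^{\star}\) achieves the infimum defining the radius. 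By the uniqueness clause of Theorem \ref{thm:minimax}, the minimizer is unique, so \(\qmn{\rno,\Wm}=\mQ^{\star}=\bigotimes_{\tin=1}^{\blx}\qmn{\rno,\Wmn{\tin}}\).

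I expect the main obstacle to be the bookkeeping around infinite capacities and, more substantively, justifying the tensorization identities at the level of Radon--Nikodym derivatives rather than densities: one must choose a common dominating product measure \(\rfm=\bigotimes_{\tin}\rma{\tin}{}\) with \(\Wmn{\tin}(\dinp_{\tin})\AC\rma{\tin}{}\) for all \(\tin,\dinp_{\tin}\) (and similarly for the means and centers), verify that \(\der{\bigotimes_{\tin}\Wmn{\tin}(\dinp_{\tin})}{\rfm}=\prod_{\tin}\der{\Wmn{\tin}(\dinp_{\tin})}{\rma{\tin}{}}\) \(\rfm\)-a.e., and then push this product structure through the exponents and integrals in Definitions \ref{def:divergence} and \ref{def:information}. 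The case \(\rno=1\) needs to be treated separately throughout (using the logarithmic forms), but is analogous. Since this is precisely the kind of measure-theoretic verification done in \cite{nakiboglu16A}, I would cite the relevant factorization lemmas there rather than redo it in full.
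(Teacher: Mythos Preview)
The paper does not actually prove Lemma~\ref{lem:capacityProduct}; it is quoted from \cite{nakiboglu16A} without proof. Your proposal is correct and is exactly the standard argument one would expect there: the \(\geq\) direction via tensorization of the \renyi information for product inputs, the \(\leq\) direction via the radius characterization \eqref{eq:thm:minimaxradius} with the product of component centers, and identification of the center by uniqueness in Theorem~\ref{thm:minimax}. This is also precisely the template the paper follows in its proof of the feedback analogue, Lemma~\ref{lem:capacityFproduct}, where the \(\leq\) direction and the center claim are obtained by bounding \(\RD{\rno}{\Wm(\dinp)}{\bigotimes_{\tin}\qmn{\rno,\Wmn{\tin}}}\) by \(\sum_{\tin}\RC{\rno}{\Wmn{\tin}}\) and invoking \eqref{eq:thm:minimaxradius}; the only wrinkle there is that \(\rno=1\) is handled by the lower semicontinuity of \(\RC{\rno}{\Wm}\) (Lemma~\ref{lem:capacityO}-(\ref{capacityO-ilsc})) rather than directly, which you could adopt as a cleaner alternative to a separate \(\rno=1\) computation.
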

 
Note that the input set of a product channel is a subset of the input set
of the corresponding product channel with feedback.
An immediate consequence of this observation is that 
\(\RC{\rno}{\Wmn{\vec{[1,\blx]}}}\geq\RC{\rno}{\Wmn{[1,\blx]}}\).
More interestingly, the reverse inequality \(\RC{\rno}{\Wmn{\vec{[1,\blx]}}}\leq\RC{\rno}{\Wmn{[1,\blx]}}\) is also true.

\begin{lemma}\label{lem:capacityFproduct}
Any length \(\blx\) product channel with feedback
\(\Wmn{\vec{[1,\blx]}}\!:\!\vec{\inpA_{1}^{\blx}}\!\to\!\pmea{\outA_{1}^{\blx}}\) 
with countably separated \(\sigma-\)algebras \(\inpA_{2},\ldots,\inpA_{\blx}\)
satisfies.
\vspace{-.1cm}
\begin{align}
\label{eq:lem:capacityFproduct}
\RC{\rno}{\Wmn{\vec{[1,\blx]}}}
&=\sum\nolimits_{\tin=1}^{\blx}\RC{\rno}{\Wmn{\tin}}
&
&\forall \rno\in\reals{+}.
\end{align}
\vspace{-.1cm}
Furthermore, if \(\RC{\rno}{\Wmn{\vec{[1,\blx]}}}\) is finite for an order \(\rno\in\reals{+}\), 
then \(\qmn{\rno,\Wmn{\vec{[1,\blx]}}}\!=\!\bigotimes_{\tin=1}^{\blx}\qmn{\rno,\Wmn{\tin}}\).
\end{lemma}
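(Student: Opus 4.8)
The plan is to prove the two assertions — additivity of the Rényi capacity and the product structure of the Rényi center — by establishing the two matching inequalities for the capacity, and then identifying the center via the uniqueness clause of Theorem~\ref{thm:minimax}. The easy direction, \(\RC{\rno}{\Wmn{\vec{[1,\blx]}}}\geq\RC{\rno}{\Wmn{[1,\blx]}}\), has already been noted in the text: it follows because the input set of the product channel \(\Wmn{[1,\blx]}\) embeds into the input set of \(\Wmn{\vec{[1,\blx]}}\) (a constant encoder \(\enc_{\tin}\equiv\dinp_{\tin}\) ignoring the feedback reproduces exactly \(\Wmn{[1,\blx]}(\dinp)\)), so the supremum defining \(\RC{\rno}{\cdot}\) is over a larger class. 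Combined with Lemma~\ref{lem:capacityProduct}, this gives \(\RC{\rno}{\Wmn{\vec{[1,\blx]}}}\geq\sum_{\tin=1}^{\blx}\RC{\rno}{\Wmn{\tin}}\) for all \(\rno\in\reals{+}\).

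For the reverse inequality I would use the minimax characterization \eqref{eq:thm:minimaxradius}: \(\RC{\rno}{\Wm}=\inf_{\mQ\in\pmea{\outA}}\sup_{\dinp\in\inpS}\RD{\rno}{\Wm(\dinp)}{\mQ}\). The natural candidate for the minimizing output measure is the product \(\mQ^{*}\DEF\bigotimes_{\tin=1}^{\blx}\qmn{\rno,\Wmn{\tin}}\) of the component centers (assuming each \(\RC{\rno}{\Wmn{\tin}}\) is finite; the infinite case is handled separately, as then both sides of \eqref{eq:lem:capacityFproduct} are \(+\infty\) by the lower bound). The key computation is a chain-rule / tensorization estimate for the Rényi divergence along a feedback structure: for any \(\dinp=(\dinp_{1},\enc_{2},\ldots,\enc_{\blx})\in\inpS\), one writes \(\Wm(\dinp)=\Wmn{1}(\dinp_{1})\mtimes(\enc_{2}\circ\Wmn{2})\mtimes\cdots\mtimes(\enc_{\blx}\circ\Wmn{\blx})\) and bounds \(\RD{\rno}{\Wm(\dinp)}{\mQ^{*}}\) by the \emph{sum} \(\sum_{\tin=1}^{\blx}\sup_{\dout_{[1,\tin-1]}}\RD{\rno}{\Wmn{\tin}(\enc_{\tin}(\dout_{[1,\tin-1]}))}{\qmn{\rno,\Wmn{\tin}}}\le\sum_{\tin=1}^{\blx}\RC{\rno}{\Wmn{\tin}}\), the last step using \eqref{eq:thm:minimaxradiuscenter} applied to each component. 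Peeling off one coordinate at a time: conditioned on the realized past \(\dout_{[1,\tin-1]}\), the encoder has already committed to an input \(\enc_{\tin}(\dout_{[1,\tin-1]})\in\inpS_{\tin}\), so the conditional law of \(\dout_{\tin}\) is \(\Wmn{\tin}(\enc_{\tin}(\dout_{[1,\tin-1]}))\) while under \(\mQ^{*}\) it is the fixed \(\qmn{\rno,\Wmn{\tin}}\) independent of the past; the appropriate conditional form of the chain rule for Rényi divergence (in the \(\rno\neq1\) case, the exponential of \((\rno-1)\RD{\rno}{\cdot}{\cdot}\) factors multiplicatively over the filtration, and each factor is bounded by the radius term; the \(\rno=1\) case is the ordinary chain rule for KL divergence) yields the sum bound. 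Taking the supremum over \(\dinp\in\inpS\) and then the infimum over \(\mQ\) gives \(\RC{\rno}{\Wmn{\vec{[1,\blx]}}}\le\sum_{\tin=1}^{\blx}\RC{\rno}{\Wmn{\tin}}\), completing \eqref{eq:lem:capacityFproduct}.

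For the center, once \eqref{eq:lem:capacityFproduct} holds with a finite value, Theorem~\ref{thm:minimax} guarantees a \emph{unique} \(\qmn{\rno,\Wm}\) achieving the radius in \eqref{eq:thm:minimaxradiuscenter}; the computation above shows \(\mQ^{*}=\bigotimes_{\tin=1}^{\blx}\qmn{\rno,\Wmn{\tin}}\) attains \(\sup_{\dinp}\RD{\rno}{\Wm(\dinp)}{\mQ^{*}}\le\sum_{\tin}\RC{\rno}{\Wmn{\tin}}=\RC{\rno}{\Wm}\), hence attains it with equality, so by uniqueness \(\qmn{\rno,\Wm}=\mQ^{*}\). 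The main obstacle I anticipate is the measurability bookkeeping: to apply Theorem~\ref{thm:minimax} and the conditional chain rule I need the relevant conditional distributions to be genuine transition probabilities, which is exactly where the hypothesis that \(\inpA_{2},\ldots,\inpA_{\blx}\) are countably separated enters — it ensures the feedback encoders \(\enc_{\tin}\in{\inpA_{\tin}}^{\outA_{[1,\tin-1]}}\) compose measurably with the kernels \(\Wmn{\tin}\) and that the products \(\mP\mtimes\Um\) are well defined via \cite[Thm.~10.7.2]{bogachev}, as set up in Definition~\ref{def:Fproduct}. One must also verify that the supremum over \(\pdis{\inpS}\) in the capacity may be taken, via the last clause of Theorem~\ref{thm:minimax}, over a countably separated \(\sigma\)-algebra on the (large) feedback input set \(\inpS\); this is routine given that \(\inpS\) is built from countably separated pieces, but it should be stated carefully.
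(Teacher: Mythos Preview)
Your proposal is correct and follows essentially the same route as the paper: the lower bound via embedding \(\Wmn{[1,\blx]}\) into \(\Wmn{\vec{[1,\blx]}}\), the upper bound by testing \(\mQ^{*}=\bigotimes_{\tin}\qmn{\rno,\Wmn{\tin}}\) in \eqref{eq:thm:minimaxradius} and peeling off one coordinate at a time via a conditional chain-rule estimate (the paper invokes \cite[Thm.~10.7.2]{bogachev} at this step), and the center identification by uniqueness in Theorem~\ref{thm:minimax}. The only minor difference is that the paper handles \(\rno=1\) by proving \eqref{eq:lem:capacityFproduct} for \(\rno\neq1\) and then invoking the lower semicontinuity of \(\RC{\rno}{\Wm}\) in \(\rno\) (Lemma~\ref{lem:capacityO}-(\ref{capacityO-ilsc})), rather than running the KL chain rule separately; also, the measurability concerns you raise in the last paragraph turn out not to be needed, since the radius characterization \eqref{eq:thm:minimaxradius} requires no \(\sigma\)-algebra on \(\inpS\).
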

For the case when the component channels are discrete,
Lemma \ref{lem:capacityFproduct} has been common 
knowledge among the researcher working on the error exponents 
with feedback for some time now.
Augustin \cite[pp. 304-306]{augustin78}  mentions the following
equivalent claim
without a proof for the case when input sets of \(\Wmn{\tin}\) are finite:
\begin{quotation}
``\(e^{\frac{\rno-1}{\rno}\RC{\rno}{\Wmn{\vec{[1,\blx]}}}}\qmn{\rno,\Wmn{\vec{[1,\blx]}}}
=e^{\frac{\rno-1}{\rno}\RC{\rno}{\Wmn{[1,\blx]}}}\qmn{\rno,\Wmn{[1,\blx]}}\) 
whenever  \(\qmn{\rno,\Wmn{[1,\blx]}}\) is defined.''
\end{quotation}

\begin{proof}[Proof of Lemma \ref{lem:capacityFproduct}]
We prove the lemma for \(\rno\in \reals{+}\setminus \{1\}\) in the following. 
This implies \(\RC{\rno}{\Wmn{\vec{[1,\blx]}}}=\sum_{\tin=1}^{\blx}\RC{\rno}{\Wmn{\tin}}\) for all 
\(\rno\in\reals{+}\), because 
the \renyi capacity is a nondecreasing lower semicontinuous function of the order by 
Lemma \ref{lem:capacityO}-(\ref{capacityO-ilsc}). 
The claim about the \renyi centers follows from the corresponding claim 
in Lemma \ref{lem:capacityProduct} and the uniqueness of the \renyi centers, 
established in Theorem \ref{thm:minimax}.

Recall that \(\Wmn{[1,\blx]}(\dinp_{1}^{\blx})\!=\!\Wmn{\vec{[1,\blx]}}(\dinp_{1}^{\blx})\)
for all \(\dinp_{1}^{\blx}\) in \(\inpS_{1}^{\blx}\),
which is a subset of \(\vec{\inpA_{1}^{\blx}}\).
Then Lemma \ref{lem:capacityProduct} implies that 
\(\sum_{\tin=1}^{\blx}\RC{\rno}{\Wmn{\tin}}\leq \RC{\rno}{\Wmn{\vec{[1,\blx]}}}\).
Hence, \eqref{eq:lem:capacityFproduct} holds if
\(\RC{\rno}{\Wmn{\tin}}\) is infinite for a \(\tin\in\{1,\ldots,\blx\}\).
Thus, we assume that \(\RC{\rno}{\Wmn{\tin}}\) is finite for all \(\tin\) 
for the rest of the proof. Then for each \(\tin\), 
\(\Wmn{\tin}\) has a unique \renyi center \(\qmn{\rno,\Wmn{\tin}}\) by 
Theorem \ref{thm:minimax}.

On the other hand, 
Definition \ref{def:Fproduct} implies that
for every \(\vec{\dinp_{1}^{\blx}}\) in \(\vec{\inpA_{1}^{\blx}}\) there exists 
an \(\dinp_{1}\in\inpS_{1}\) and \(\enc_{\tin}\in{\inpA_{\tin}}^{\outA_{1}^{\tin-1}}\)
for all  \(\tin\) in \(\{2,\ldots,\blx\}\) such that
\begin{align}
\notag
\Wmn{\vec{[1,\blx]}}(\vec{\dinp_{1}^{\blx}}) 
&\!=\!
\Wmn{1}(\dinp_{1}) 
\mtimes (\Wmn{2}\circ\enc_{2})
\cdots
\mtimes (\Wmn{\blx}\circ\enc_{\blx})
\end{align}
Note that 
\(\Wmn{\tin}\circ\enc_{\tin}\in\pmea{\outA_{\tin}|\outA_{1}^{\tin-1}}\) 
by definition.
Furthermore, Theorem \ref{thm:minimax} implies for all \(\dout_{1}^{\tin-1}\!\in\!\outS_{1}^{\tin-1}\) that
\begin{align}
\notag
\RD{\rno}{\Wmn{1}(\dinp_{1})}{\qmn{\rno,\Wmn{1}}}
&\leq \RC{\rno}{\Wmn{1}}
\\
\notag
\RD{\rno}{\Wmn{\tin}\circ\enc_{\tin}(\cdot|\dout_{1}^{\tin-1})}{\qmn{\rno,\Wmn{\tin}}}
&\leq \RC{\rno}{\Wmn{\tin}}.
\end{align}
Then using \cite[Thm. 10.7.2]{bogachev} we get
\begin{align}
\notag
&\hspace{-.6cm}\RD{\rno}{\!\Wmn{\vec{[1,\blx]}}(\!\vec{\dinp_{1}^{\blx}}\!)}{\!\qmn{\blx}\!}
\\
\notag
&\!\leq\!
\RD{\rno}{\!\Wmn{1}(\!\dinp_{1}\!)\mtimes\!\cdots\!\mtimes(\Wmn{\blx-1}\!\circ\!\enc_{\blx-1})}{\!\qmn{\blx-1}\!}
\!+\!\RC{\rno}{\Wmn{\blx}}
\\
\notag
&\!\leq\!
\RD{\rno}{\!\Wmn{1}(\!\dinp_{1}\!)\mtimes\!\cdots\!\mtimes(\Wmn{\ind}\!\circ\!\enc_{\ind})}{\!\qmn{\ind}\!}
\!+\!\!\sum\nolimits_{\tin=\ind+1}^{\blx}\!\!\RC{\rno}{\Wmn{\tin}}
\\
\notag
&\!\leq\!\sum\nolimits_{\tin=1}^{\blx}\RC{\rno}{\Wmn{\tin}}
\end{align}
where \(\qmn{\ind}=\bigotimes\nolimits_{\tin=1}^{\ind}\qmn{\rno,\Wmn{\tin}}\)
for \(\ind\in\{1,\ldots,\blx\}\).
Hence,
\begin{align}
\notag
\sup\nolimits_{\vec{\dinp_{1}^{\blx}}\in\vec{\inpA_{1}^{\blx}}}
\RD{\rno}{\!\Wmn{\vec{[1,\blx]}}(\!\vec{\dinp_{1}^{\blx}}\!)}{\!\qmn{\blx}\!}
&\!\leq\!\sum\nolimits_{\tin=1}^{\blx}\RC{\rno}{\Wmn{\tin}}.
\end{align}
Then \(\RC{\rno}{\Wmn{\vec{[1,\blx]}}}\leq\sum_{\tin=1}^{\blx}\!\RC{\rno}{\Wmn{\tin}}\) by 
\eqref{eq:thm:minimaxradius}.
Thus  \eqref{eq:lem:capacityFproduct} holds
and
\(\qmn{\rno,\Wmn{\vec{[1,\blx]}}}\!=\!\bigotimes\nolimits_{\tin=1}^{\blx}\qmn{\rno,\Wmn{\tin}}\) 
by Theorem \ref{thm:minimax}.
\end{proof}
\subsection{The Sphere Packing Exponent}\label{sec:SPexponent} 
\begin{definition}\label{def:spherepackingexponent}
For any channel \(\!\Wm\!:\!\inpS\!\to\!\pmea{\outA}\)
and rate \(\rate\!\in\!\reals{\geq0}\!\) 
\emph{the sphere packing exponent} is
\begin{align}
%\label{eq:def:spherepackingexponent}
\notag
\spe{\rate,\Wm}
&\DEF \sup\nolimits_{\rno\in (0,1)} \tfrac{1-\rno}{\rno} \left(\RC{\rno}{\Wm}-\rate\right).
\end{align} 
\end{definition}

\begin{lemma}\label{lem:spherepackingexponent}
	For any channel \(\Wm\!:\!\inpS\!\to\!\pmea{\outA}\), \(\spe{\rate,\Wm}\) 
	is convex and nonincreasing in \(\rate\) on \(\reals{\geq0}\),
	finite on \((\RC{0^{_{+}}\!}{\Wm},\infty)\),
	and continuous on \([\RC{0^{_{+}}\!}{\Wm},\infty)\)
	for \(\RC{0^{_{+}}\!}{\Wm}\) is defined in \eqref{eq:def:capacitylimit}.
	In particular,
\vspace{-.2cm}
\begin{align}
\notag
&\hspace{-.2cm}\spe{\rate,\Wm}
\\
\label{eq:lem:spherepackingexponent}
&\!=\!\begin{cases}
~~~\infty 
&\!\rate\!<\!\RC{0^{_{+}}\!}{\Wm}
\\
\sup\limits_{\rno\in(0,1)}\!\tfrac{1-\rno}{\rno}\!\left(\RC{\rno}{\Wm}\!-\!\rate\right)
&
\!\rate\!=\!\RC{0^{_{+}}\!}{\Wm}
\\
\sup\limits_{\rno\in[\rnf,1)}\!\tfrac{1-\rno}{\rno}\!\left(\RC{\rno}{\Wm}\!-\!\rate\right)
&
\!\rate\!=\!\RC{\rnf}{\Wm}\!\mbox{~for\!~a~}\!\rnf\!\in\!(0,1) 
\\
~~~0
&
\!\rate\!\geq\!\RC{1}{\Wm}
\end{cases}
\end{align}
\end{lemma}

\begin{proof}[Proof of Lemma \ref{lem:spherepackingexponent}]
\(\spe{\rate,\Wm}\) is convex/nonincreasing in \(\rate\),  
because 
\(\tfrac{1-\rno}{\rno}(\RC{\rno}{\Wm}\!-\!\rate)\) 
is convex/nonincreasing in \(\rate\) for any \(\rno\!\in\!(0,1)\)
and
the pointwise supremum of a family of convex/nonincreasing functions 
is convex/nonincreasing.

	Recall that \(\RC{\rno}{\Wm}\) is a nondecreasing in \(\rno\) by 
	Lemma \ref{lem:capacityO}-(\ref{capacityO-ilsc}).
	\begin{itemize} 
		\item If \(\RC{0^{_{+}}\!}{\Wm}=\infty\), 
		then \(\RC{1/2}{\Wm}=\infty\) and \(\spe{\rate,\Wm}=\infty\) for all \(\rate\in\reals{\geq0}\).
		On the other hand \(\rate<\RC{0^{_{+}}\!}{\Wm}\) for all \(\rate\in\reals{\geq0}\). 
		Hence \eqref{eq:lem:spherepackingexponent} holds.
		
		\item If \(\RC{0^{_{+}}\!}{\Wm}\!<\!\infty\) and \(\RC{0^{_{+}}\!}{\Wm}\!=\!\RC{1}{\Wm}\),  
		then \(\spe{\rate,\Wm\!}\!=\!\infty\) for all \(\rate<\RC{1}{\Wm}\) 
		and \(\spe{\rate,\!\Wm\!}\!=\!0\) for all \(\rate\geq \RC{1}{\Wm}\). 
		Thus  \eqref{eq:lem:spherepackingexponent} holds.

		\item If \(\RC{0^{_{+}}\!}{\Wm}\!<\!\infty\) and \(\RC{0^{_{+}}\!}{\Wm}\!\neq\!\RC{1}{\Wm}\), 
		then \(\spe{\rate,\Wm}\!=\!\infty\) for all \(\rate<\RC{0^{_{+}}\!}{\Wm}\). 
		For \(\rate\geq \RC{0^{_{+}}\!}{\Wm}\), 
		the non-negativity of \(\tfrac{1-\rno}{\rno}(\RC{\rno}{\Wm}-\rate)\) 
		implies the restrictions given in  \eqref{eq:lem:spherepackingexponent} for different intervals.
	\end{itemize}	
\(\spe{\rate,\!\Wm}\) is continuous on \((\RC{0^{_{+}}\!}{\Wm},\!\infty)\) by  \cite[Thm. 6.3.3]{dudley}
because it is finite on \((\RC{0^{_{+}}\!}{\Wm},\!\infty)\) by \eqref{eq:lem:spherepackingexponent}
and convex on \(\reals{\geq0}\).
On the other hand \(\spe{\rate,\!\Wm}\) is lower semicontinuous because 
it is the pointwise supremum of continuous functions.
Thus it is continuous from the right because it is nonincreasing. 
Thus \(\spe{\rate,\!\Wm}\) is continuous on \([\RC{0^{_{+}}\!}{\Wm},\!\infty)\), as well.
\end{proof} 
%!TEX root=../main-B.tex
\section{Preliminaries for Augustin's Method}\label{sec:augustinpreliminary}
The propositions proved in this section are used 
in \S\ref{sec:product-outerbound} and \S\ref{sec:fproduct-outerbound} to derive SPBs.
In \S\ref{sec:averaging}, we define the average order \(\rno\) \renyi center 
\(\qma{\rno,\Wm}{\epsilon}\) as the average of the \renyi centers on a specific
length \(\epsilon\) interval around \(\rno\). 
Using the convexity and the monotonicity properties of the \renyi divergence,
we bound the order \(\rno\) \renyi radius of \(\Wm\) relative to  
\(\qma{\rno,\Wm}{\epsilon}\), i.e. \(\RRR{\rno}{\Wm}{\qma{\rno,\Wm}{\epsilon}}\),
from above and call the bound the average \renyi capacity \(\RCI{\rno}{\Wm}{\epsilon}\).
Then we show that both \(\RCI{\rno}{\Wm}{\epsilon}\) and the 
associated sphere packing exponent \(\spa{\epsilon}{\rate,\Wm}\)
differ from the corresponding quantities \(\RC{\rno}{\Wm}\) and \(\spe{\rate,\Wm}\) 
at most by a factor proportional to \(\epsilon\). 
In \S\ref{sec:tilting}, we consider the tilted probability measure between a probability measure 
\(\mW\) and a family of probability measures \(\qmn{\rno}\) that is continuous in 
\(\rno\) for the total variation topology on \(\pmea{\outA}\). 
We show that both the tilted probability measure \(\wma{\rno}{\qmn{\rno}}\)
and the \renyi divergences \(\RD{\rno}{\mW}{\qmn{\rno}}\),
\(\RD{1}{\wma{\rno}{\qmn{\rno}}}{\mW}\), and 
\(\RD{1}{\wma{\rno}{\qmn{\rno}}}{\qmn{\rno}}\) are continuous in \(\rno\)
on \((0,1)\).

\subsection{The Augustin's Averaging}\label{sec:averaging}
If the order \(\rnf\) \renyi capacity of a channel is finite for a 
\(\rnf\!\in\!\reals{+}\),
then the \renyi centers of the channel form a transition probability
from \(((0,1),\rborel{(0,1)})\) to \((\outS,\outA)\).
We define the average \renyi center using this transition probability.
In order to see why such a transition probability structure exists, 
first note that \(\RC{\rno}{\Wm\!}\) is finite for all \(\rno\!\in\!(0,1)\) by  
Lemma \ref{lem:capacityO}-(\ref{capacityO-ilsc},\ref{capacityO-zofintieness})
because \(\RC{\rnf}{\Wm}\) is finite for a \(\rnf\!\in\!\reals{+}\).
This implies the  existence of a unique order \(\rno\) \renyi center 
\(\qmn{\rno,\Wm}\) for each \(\rno\!\in\!(0,1)\) by Theorem \ref{thm:minimax}.
Furthermore, \(\qmn{\rno,\Wm}\) is continuous in \(\rno\) on \((0,1)\) 
for the total variation topology on \(\pmea{\outA}\), by Lemma \ref{lem:centercontinuity}. 
As a result, \(\qmn{\cdot,\Wm}(\oev):(0,1)\to[0,1]\) is a 
continuous and hence a \((\rborel{(0,1)},\rborel{[0,1]})\)-measurable function 
for any \(\oev\in\outA\).
\begin{remark}
The continuity for the topology of setwise convergence 
is sufficient for ensuring the continuity of \(\qmn{\rno,\Wm}(\oev)\) 
in \(\rno\) for all \(\oev\!\in\!\outA\)
and hence for ensuring the existence of the transition probability structure.	
\end{remark}

\begin{definition}\label{def:avcenter}
For any \(\rno,\epsilon\!\in\!(0,1)\) and 
\(\Wm\!:\!\inpS\!\to\!\pmea{\outA}\) satisfying 
\(\RC{\sfrac{1}{2}}{\Wm}\!<\!\infty\),
\emph{the average \renyi  center \(\qma{\rno,\Wm}{\epsilon}\)} is the \(\outS\) marginal of the probability measure  
\(\mU_{\rno,\epsilon} \mtimes \qmn{\cdot,\Wm}\) where \(\mU_{\rno,\epsilon}\) 
is the uniform probability distribution on 
\((\rno-\epsilon\rno,\rno+\epsilon(1-\rno))\):
\begin{align}
%\label{eq:def:avcenter}
\notag
\qma{\rno,\Wm}{\epsilon}
&\DEF \tfrac{1}{\epsilon} \int_{\rno-\epsilon\rno}^{\rno+\epsilon(1-\rno)} \qmn{\rnt,\Wm}  \dif{\rnt}. 
\end{align}
\end{definition}

The order \(\rno\) \renyi radius relative to the order \(\rno\) \renyi center, i.e. 
\(\RRR{\rno}{\Wm}{\qmn{\rno,\Wm}}\), is \(\RC{\rno}{\Wm}\) by Theorem \ref{thm:minimax}.
What can we say about \(\RRR{\rno}{\Wm}{\qma{\rno,\Wm}{\epsilon}}\)?
For channels with certain symmetries such as the ones 
in \cite[Examples \ref*{A-eg:shiftchannel}-\ref*{A-eg:shiftinvariant}]{nakiboglu19A}, 
\(\qmn{\rno,\Wm}\) is the same probability measure for all \(\rno\) for which it exists.
For such channels \(\RRR{\rno}{\Wm}{\qma{\rno,\Wm}{\epsilon}}\!=\!\RC{\rno}{\Wm}\) 
for all \(\rno,\epsilon\in(0,1)\)
because \(\qma{\rno,\Wm}{\epsilon}\!=\!\qmn{\rno,\Wm}\) for all \(\rno,\epsilon\in(0,1)\).
For certain other channels, such as \(\cha\) of \cite[Example \ref*{A-eg:singular-finite}]{nakiboglu19A}, 
\(\qmn{\rno,\Wm}\) is same for all \(\rno\) on an interval
and  \(\RRR{\rno}{\Wm\!}{\qma{\rno,\Wm}{\epsilon}}\!=\!\RC{\rno}{\Wm\!}\)
at least for some \(\rno\) for small enough \(\epsilon\).
However, we cannot assert the equality of \(\qmn{\rno,\Wm\!}\) 
and \(\qma{\rno,\Wm\!}{\epsilon}\) in general and  
\(\RRR{\rno}{\Wm\!}{\qma{\rno,\Wm}{\epsilon}}\!>\!\RC{\rno}{\Wm\!}\) whenever 
\(\qma{\rno,\Wm}{\epsilon}\neq\qmn{\rno,\Wm}\). 
In particular, 
\begin{align}
\notag
\RRR{\rno}{\Wm\!}{\qma{\rno,\Wm}{\epsilon}}
&\!\geq\!\RC{\rno}{\Wm}+\RD{\rno}{\qmn{\rno,\Wm}}{\qma{\rno,\Wm}{\epsilon}}
\end{align}
by Lemma \ref{lem:EHB}.
Lemma \ref{lem:avcapacity} bounds \(\RRR{\rno}{\Wm}{\qma{\rno,\Wm}{\epsilon}}\)
from above in terms  of an integral of the  \renyi capacity, which converges to \(\RC{\rno}{\Wm}\) as \(\epsilon\) 
converges to zero for  any \(\rno\!\in\!(0,1)\). 
\begin{lemma}\label{lem:avcapacity}
For any \(\rno,\epsilon\!\in\!(0,1)\) and 
\(\Wm\!:\!\inpS\!\to\!\pmea{\outA}\) satisfying 
\(\RC{\sfrac{1}{2}}{\Wm}\!<\!\infty\),
\begin{align}
\label{eq:lem:avcapacity}
\sup\nolimits_{\dinp\in \inpS} \RD{\rno}{\Wm(\dinp)}{\qma{\rno,\Wm}{\epsilon}}
&\leq \RCI{\rno}{\Wm}{\epsilon}
\\
\label{eq:lem:avcapacityB}
&\leq \tfrac{\RC{1/2}{\Wm}}{(1-\rno)(1-\epsilon)}
\end{align}
where the average \renyi capacity \(\RCI{\rno}{\Wm}{\epsilon}\) is
\begin{align}
\label{eq:def:avcapacity}
\RCI{\rno}{\Wm}{\epsilon}
&\DEF \tfrac{1}{\epsilon}\int_{\rno-\epsilon\rno}^{\rno+\epsilon(1-\rno)} 
\left[ 1\vee \left(\tfrac{\rno}{1-\rno}\tfrac{1-\rnt}{\rnt}\right)\right] \RC{\rnt}{\Wm} \dif{\rnt}.
\end{align}
\end{lemma}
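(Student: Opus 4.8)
The plan is to establish \eqref{eq:lem:avcapacity} first and then derive \eqref{eq:lem:avcapacityB} from \eqref{eq:def:avcapacity} by elementary estimates. For \eqref{eq:lem:avcapacity}, fix \(\dinp\in\inpS\) and recall that \(\qma{\rno,\Wm}{\epsilon}=\tfrac{1}{\epsilon}\int_{\rno-\epsilon\rno}^{\rno+\epsilon(1-\rno)} \qmn{\rnt,\Wm}\dif{\rnt}\) is an average (a mixture) of the centers \(\qmn{\rnt,\Wm}\) over the interval \(I_{\rno,\epsilon}\DEF(\rno-\epsilon\rno,\rno+\epsilon(1-\rno))\). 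By the convexity of the order \(\rno\) \renyi divergence in its second argument (Lemma \ref{lem:divergence-convexity}, in its integral form obtained by approximating the uniform mixture by finite convex combinations and using lower semicontinuity from Lemma \ref{lem:divergence:lsc}), one gets
\begin{align}
\notag
\RD{\rno}{\Wm(\dinp)}{\qma{\rno,\Wm}{\epsilon}}
&\leq \tfrac{1}{\epsilon}\int_{I_{\rno,\epsilon}} \RD{\rno}{\Wm(\dinp)}{\qmn{\rnt,\Wm}}\dif{\rnt}.
\end{align}
So it suffices to bound \(\RD{\rno}{\Wm(\dinp)}{\qmn{\rnt,\Wm}}\) for \(\rnt\in I_{\rno,\epsilon}\).

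For this pointwise bound I would split into the two cases \(\rnt\geq\rno\) and \(\rnt<\rno\). When \(\rnt\geq\rno\), monotonicity of the \renyi divergence in the order (Lemma \ref{lem:divergence-order}) gives \(\RD{\rno}{\Wm(\dinp)}{\qmn{\rnt,\Wm}}\leq\RD{\rnt}{\Wm(\dinp)}{\qmn{\rnt,\Wm}}\leq\RRR{\rnt}{\Wm}{\qmn{\rnt,\Wm}}=\RC{\rnt}{\Wm}\) by Theorem \ref{thm:minimax}; note \(1\vee(\tfrac{\rno}{1-\rno}\tfrac{1-\rnt}{\rnt})=1\) here since \(\rnt\geq\rno\) forces \(\tfrac{1-\rnt}{\rnt}\leq\tfrac{1-\rno}{\rno}\), so the integrand weight in \eqref{eq:def:avcapacity} is exactly \(\RC{\rnt}{\Wm}\), matching the bound. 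When \(\rnt<\rno\), monotonicity is the wrong direction, so instead I would use the scaling/interpolation relation that underlies Lemma \ref{lem:capacityO}-(\ref{capacityO-zo}): the quantity \(\tfrac{1-\rno}{\rno}\RRR{\rno}{\Wm(\dinp)}{\cdot}\) — more precisely the function \(\rno\mapsto\tfrac{1-\rno}{\rno}\RD{\rno}{\Wm(\dinp)}{\mQ}\) — is governed by the same concavity/monotonicity exploited for the capacity, yielding \(\tfrac{1-\rno}{\rno}\RD{\rno}{\Wm(\dinp)}{\qmn{\rnt,\Wm}}\leq\tfrac{1-\rnt}{\rnt}\RD{\rnt}{\Wm(\dinp)}{\qmn{\rnt,\Wm}}\leq\tfrac{1-\rnt}{\rnt}\RC{\rnt}{\Wm}\), i.e. \(\RD{\rno}{\Wm(\dinp)}{\qmn{\rnt,\Wm}}\leq\tfrac{\rno}{1-\rno}\tfrac{1-\rnt}{\rnt}\RC{\rnt}{\Wm}\). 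Again this is exactly the weight appearing in \eqref{eq:def:avcapacity} in the regime \(\rnt<\rno\). Combining the two cases gives \(\RD{\rno}{\Wm(\dinp)}{\qmn{\rnt,\Wm}}\leq[1\vee(\tfrac{\rno}{1-\rno}\tfrac{1-\rnt}{\rnt})]\RC{\rnt}{\Wm}\) for all \(\rnt\in I_{\rno,\epsilon}\), and integrating and taking the supremum over \(\dinp\) yields \eqref{eq:lem:avcapacity}.

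For \eqref{eq:lem:avcapacityB} I would use \(\RC{\rnt}{\Wm}\leq\tfrac{\rnt\vee(1-\rnt)}{1-\rnt}\RC{\sfrac{1}{2}}{\Wm}\) for \(\rnt\in(0,1)\) from \eqref{eq:orderoneovertwo}. Since on \(I_{\rno,\epsilon}\) we have \(\rnt\geq\rno-\epsilon\rno=\rno(1-\epsilon)>0\) and \(1-\rnt\geq(1-\rno)(1-\epsilon)\), the weight \([1\vee(\tfrac{\rno}{1-\rno}\tfrac{1-\rnt}{\rnt})]\) times \(\tfrac{\rnt\vee(1-\rnt)}{1-\rnt}\) is crudely bounded by \(\tfrac{1}{(1-\rno)(1-\epsilon)}\) (checking the two cases \(\rnt\gtrless\rno\) separately, using \(\rnt\leq1\) and \(\tfrac{1-\rnt}{\rnt}\leq\tfrac{1-\rno(1-\epsilon)}{\rno(1-\epsilon)}\)); integrating the constant bound over an interval of length \(\epsilon\) and dividing by \(\epsilon\) gives \eqref{eq:lem:avcapacityB}. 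The main obstacle I anticipate is making the case \(\rnt<\rno\) rigorous — specifically, pinning down the precise monotonicity statement for \(\rno\mapsto\tfrac{1-\rno}{\rno}\RD{\rno}{\Wm(\dinp)}{\mQ}\) with a fixed reference measure \(\mQ\); this should follow from the same argument that proves Lemma \ref{lem:capacityO}-(\ref{capacityO-zo}) (a Hölder/interpolation inequality applied to the defining integral of the \renyi divergence), applied pointwise in \(\dinp\) rather than after the supremum, but the bookkeeping with the weight function must be done carefully so that it matches \eqref{eq:def:avcapacity} exactly.
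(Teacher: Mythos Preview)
Your proposal is correct and follows essentially the same approach as the paper: convexity in the second argument to pass to an average of \(\RD{\rno}{\Wm(\dinp)}{\qmn{\rnt,\Wm}}\), then a case split \(\rnt\gtrless\rno\) with monotonicity in the order, and finally \eqref{eq:orderoneovertwo} for the crude bound \eqref{eq:lem:avcapacityB}. The one point you flag as an obstacle---the monotonicity of \(\rno\mapsto\tfrac{1-\rno}{\rno}\RD{\rno}{\mW}{\mQ}\) on \((0,1)\)---is handled in the paper not by a H\"older/interpolation argument but by the elementary symmetry identity \(\RD{\rno}{\mW}{\mQ}=\tfrac{\rno}{1-\rno}\RD{1-\rno}{\mQ}{\mW}\) (immediate from Definition \ref{def:divergence} for \(\rno\in(0,1)\)), which reduces the claim to Lemma \ref{lem:divergence-order} applied to \(\RD{1-\rno}{\mQ}{\mW}\); this gives exactly your desired inequality \(\RD{\rno}{\Wm(\dinp)}{\qmn{\rnt,\Wm}}\leq\tfrac{\rno}{1-\rno}\tfrac{1-\rnt}{\rnt}\RD{\rnt}{\Wm(\dinp)}{\qmn{\rnt,\Wm}}\) for \(\rnt<\rno\) in one line.
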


Before presenting the proof of Lemma \ref{lem:avcapacity}, we point out certain 
properties of \(\RCI{\rno}{\Wm}{\epsilon}\).
As a result of the continuity of  \(\RC{\rno}{\Wm}\) in \(\rno\) on \((0,1)\),
i.e. Lemma \ref{lem:capacityO}-(\ref{capacityO-zo}), we have
\begin{align}
\notag
\lim\nolimits_{\epsilon \downarrow 0}
\RCI{\rno}{\Wm}{\epsilon}
&=\RC{\rno}{\Wm}
&
&\forall \rno\in (0,1).
\end{align}
In fact, we can bound \(\RCI{\rno}{\Wm}{\epsilon}\) from above 
using the monotonicity of \(\RC{\rno}{\Wm}\) and  \(\tfrac{1-\rno}{\rno}\RC{\rno}{\Wm}\)
in \(\rno\), 
i.e. Lemma \ref{lem:capacityO}-(\ref{capacityO-ilsc},\ref{capacityO-zo}),
as follows
\begin{align}
\notag
\RCI{\rno}{\Wm}{\epsilon}
%&\!\leq\!\tfrac{\RC{\rno}{\Wm}}{\epsilon}
%\left[\!\int_{\rno-\epsilon\rno}^{\rno}\! 
%\tfrac{1-\rno(1-\epsilon)}{(1-\rno)(1-\epsilon)} \dif{\rnt}
%\!+\!\int_{\rno}^{\rno+\epsilon(1-\rno)}\!\!
%\tfrac{\rno+(1-\rno)\epsilon}{\rno(1-\epsilon)}  \dif{\rnt}
%\!\right]
%\\
%\notag
&\leq\tfrac{\RC{\rno}{\Wm}}{\epsilon}\int_{\rno-\epsilon\rno}^{\rno} 
\tfrac{1-\rno(1-\epsilon)}{(1-\rno)(1-\epsilon)} \dif{\rnt}
\\
\notag
&\hspace{3cm} 
+\tfrac{\RC{\rno}{\Wm}}{\epsilon}\int_{\rno}^{\rno+\epsilon(1-\rno)} 
\tfrac{\rno+(1-\rno)\epsilon}{\rno(1-\epsilon)}  \dif{\rnt}
\\
\label{eq:avcapacity:bound}
&\leq \RC{\rno}{\Wm}+ \tfrac{\epsilon}{1-\epsilon}\tfrac{\RC{\rno}{\Wm}}{\rno (1-\rno)}.
\end{align}
On the other hand, \(\RCI{\rno}{\Wm\!}{\epsilon}\!\geq\!\RC{\rno}{\!\Wm\!}\)
because \(\RRR{\rno}{\Wm\!}{\qma{\rno,\Wm\!}{\epsilon}}\!\geq\!\RC{\rno}{\!\Wm\!}\)
by Theorem \ref{thm:minimax}.
Thus  \(\RC{\rno}{\Wm}\) can be approximated by \(\RCI{\rno}{\Wm}{\epsilon}\) at any \(\rno\)
in \((0,1)\). 
The expression in \eqref{eq:avcapacity:bound}, however, suggests that it might not 
be possible to do this approximation uniformly on  \((0,1)\). In order to show this formally,
we bound \(\RCI{\rno}{\Wm}{\epsilon}\) from below
using the monotonicity of \(\RC{\rno}{\Wm}\) and \(\tfrac{1-\rno}{\rno}\RC{\rno}{\Wm}\) 
in \(\rno\) as follows, 
\begin{align}
\notag
\RCI{\rno}{\Wm}{\epsilon}
&\geq \tfrac{1}{\epsilon}\int_{\rno-\epsilon\rno}^{\rno-\frac{\epsilon}{2}\rno} 
\tfrac{\rno}{1-\rno}\tfrac{1-\rnt}{\rnt} \RC{\rnt}{\Wm} \dif{\rnt}
\\
\notag
&\geq \tfrac{\rno}{2-\epsilon}(1+\tfrac{\rno \epsilon}{2(1-\rno)}) \RC{\rno-\epsilon\rno}{\Wm}.
\end{align}
Note that this lower bound is true even when \(\RC{1}{\Wm}\) is finite. 
Thus, \(\RC{\rno}{\Wm}\) can be approximated by \(\RCI{\rno}{\Wm}{\epsilon}\) uniformly only on compact subsets 
of \((0,1)\), but not on \((0,1)\) in itself.

The additivity of \renyi capacity for product channels, i.e. Lemma \ref{lem:capacityProduct},
implies the additivity of average \renyi capacity for product channels:
\begin{align}
\label{eq:avcapacity:additivity}
\RCI{\rno}{\Wmn{[1,\blx]}}{\epsilon}=\sum\nolimits_{\tin=1}^{\blx} \RCI{\rno}{\Wmn{\tin}}{\epsilon}.
\end{align} 
Lemma \ref{lem:capacityProduct} also states that 
\(\qmn{\rno,\Wmn{[1,\blx]}}\!=\!\bigotimes_{\tin=1}^{\blx}\qmn{\rno,\Wmn{\tin}}\). 
The average \renyi center \(\qma{\rno,\Wmn{[1,\blx]}}{\epsilon}\), however, does not satisfy such a  product 
structure, in general.  

\begin{proof}[Proof of Lemma \ref{lem:avcapacity}]
The convexity of \(\RD{\rno}{\mW}{\mQ}\) in \(\mQ\), i.e. 
Lemma \ref{lem:divergence-convexity},  and the Jensen's inequality imply
\begin{align}
\label{eq:avcapacity-a}
\RD{\rno}{\Wm(\dinp)}{\qma{\rno,\Wm}{\epsilon}}
&\leq \int_{\rno-\epsilon\rno}^{\rno+\epsilon(1-\rno)}
\tfrac{\RD{\rno}{\Wm(\dinp)}{\qmn{\rnt,\Wm}}}{\epsilon} \dif{\rnt}.
\end{align}
Note that \(\RD{\rno}{\Wm(\dinp)}{\qmn{\rnt,\Wm}}=\tfrac{\rno}{1-\rno}\RD{1-\rno}{\qmn{\rnt,\Wm}}{\Wm(\dinp)}\) 
for any \(\rno\in(0,1)\) by definition and
the \renyi divergence is nondecreasing in its order by Lemma \ref{lem:divergence-order}. Thus 
\begin{align}
\notag
\RD{\rno}{\Wm\!(\dinp)}{\qmn{\rnt,\Wm\!}}
&\!\leq\! 
\IND{\rnt\geq\rno} \RD{\rnt}{\Wm\!(\dinp)}{\qmn{\rnt,\Wm\!}}
\\
\notag
&\qquad+\IND{\rnt<\rno} \tfrac{\rno}{1-\rno}\tfrac{1-\rnt}{\rnt} \RD{\rnt}{\Wm\!(\dinp)}{\qmn{\rnt,\Wm\!}}
\\
\label{eq:avcapacity-b}
&\!=\!\left(1 \vee \tfrac{\rno}{1-\rno}\tfrac{1-\rnt}{\rnt}\right) \RD{\rnt}{\Wm\!(\dinp)}{\qmn{\rnt,\Wm\!}}.
\end{align}
Recall that \(\RD{\rnt}{\Wm(\dinp)}{\qmn{\rnt,\Wm}} \leq \RC{\rnt}{\Wm}\) for all \(\dinp\in\inpS\) 
by Theorem \ref{thm:minimax}. Then \eqref{eq:lem:avcapacity} follows from
using \eqref{eq:def:avcapacity}, \eqref{eq:avcapacity-a}, 
and \eqref{eq:avcapacity-b}.

In order to obtain \eqref{eq:lem:avcapacityB}
from \eqref{eq:lem:avcapacity}
recall that \(\RC{\rno}{\Wm}\) is nondecreasing in \(\rno\) 
by Lemma \ref{lem:capacityO}-(\ref{capacityO-ilsc}) and  
\(\tfrac{1-\rno}{\rno}\RC{\rno}{\Wm}\) is nonincreasing in \(\rno\) on \((0,1)\) by 
Lemma \ref{lem:capacityO}-(\ref{capacityO-zo}). Thus we have,
\begin{align}
\notag 
\RC{\rnt}{\Wm}
&\leq \tfrac{\rnt}{1-\rnt}\RC{\sfrac{1}{2}}{\Wm} \IND{\rnt>\sfrac{1}{2}}+ \RC{\sfrac{1}{2}}{\Wm} \IND{\rnt\leq\sfrac{1}{2}}
&
&\forall \rnt\in(0,1).
\end{align}
Then using the definition of \(\RCI{\rno}{\Wm}{\epsilon}\) given in \eqref{eq:def:avcapacity} we get
\begin{align}
\notag
\RCI{\rno}{\Wm}{\epsilon}
&\leq \tfrac{\RC{\sfrac{1}{2}}{\Wm}}{\epsilon}\int_{\rno-\epsilon\rno}^{\rno+\epsilon(1-\rno)} 
\left(1 \vee \tfrac{\rno}{1-\rno}\tfrac{1-\rnt}{\rnt}\right)
\left(1 \vee \tfrac{\rnt}{1-\rnt}\right)
\dif{\rnt}
\\
\notag
&\leq \tfrac{\RC{\sfrac{1}{2}}{\Wm}}{\epsilon}\int_{\rno-\epsilon\rno}^{\rno+\epsilon(1-\rno)} 
\left(\tfrac{1}{1-\rnt} \vee \tfrac{\rno}{1-\rno}\tfrac{1}{\rnt}\right)
\dif{\rnt}
\\
\notag
&\leq \tfrac{\RC{\sfrac{1}{2}}{\Wm}}{\epsilon}\int_{\rno-\epsilon\rno}^{\rno+\epsilon(1-\rno)} \tfrac{1}{(1-\rno)(1-\epsilon)} 
\dif{\rnt}.
\end{align}
\end{proof}

\begin{definition}\label{def:avspherepacking}
For any \(\epsilon\in (0,1)\),
\(\Wm\!:\!\inpS\!\to\!\pmea{\outA}\) with finite \(\RC{\sfrac{1}{2}}{\Wm\!}\),
and \(\rate\!\in\!\reals{\geq0}\),
\emph{the average sphere packing exponent} 
\(\spa{\epsilon}{\rate,\Wm}\) is 
\begin{align}
\label{eq:def:avspherepacking}
\spa{\epsilon}{\rate,\Wm}
&\DEF \sup\nolimits_{\rno\in (0,1)} \tfrac{1-\rno}{\rno} \left(\RCI{\rno}{\Wm}{\epsilon}-\rate\right).
\end{align}
%%%where \(\RCI{\rno}{\Wm}{\epsilon}\) is defined in equation \eqref{eq:def:avcapacity}.
\end{definition}

\(\spa{\epsilon}{\rate,\Wm}\) is nonincreasing and convex in \(\rate\) on \(\reals{+}\) because it is 
the pointwise supremum of nonincreasing and convex functions of \(\rate\).
One can show that \(\RCI{\rno}{\Wm}{\epsilon}\) is nondecreasing and continuous in \(\rno\) on \((0,1)\) 
for any \(\epsilon\in(0,1)\) using the continuity and monotonicity of \(\RC{\rno}{\Wm}\) in \(\rno\) 
on \((0,1)\). Since we do not need this observation in our analysis, we leave its proof to the interested reader. 
Using the monotonicity of \(\RCI{\rno}{\Wm}{\epsilon}\)
one can also show that \(\spa{\epsilon}{\rate,\Wm}\) is finite and continuous in \(\rate\) on 
\((\lim_{\rno\downarrow0} \RCI{\rno}{\Wm}{\epsilon},\infty)\).

\begin{lemma}\label{lem:avspherepacking}
For any \(\Wm\!:\!\inpS\!\to\!\pmea{\outA}\) with finite \(\RC{\sfrac{1}{2}}{\Wm}\),
\(\rnf\!\in\!(0,1)\),
\(\rate\!\in\![\RC{\rnf}{\Wm},\infty)\), and 
\(\epsilon\in(0,\rnf)\),
\begin{align}
\label{eq:lem:avspherepacking}
0\!\leq\! 
\spa{\epsilon}{\rate,\Wm}\!-\!\spe{\rate,\Wm}
&\!\leq\!\tfrac{\epsilon}{1-\epsilon} 
\tfrac{\rate \vee \spe{\rate,\Wm}}{\rnf}
\\
\label{eq:lem:avspherepackingB}
&\leq \tfrac{\epsilon}{1-\epsilon} \tfrac{\rate}{\rnf^{2}}.
\end{align}
\end{lemma}

\begin{proof}[Proof of Lemma \ref{lem:avspherepacking}]
\(\RC{\rno}{\Wm}\!\leq \RRR{\rno}{\Wm}{\qma{\rno,\Wm}{\epsilon}}\!\leq\!\RCI{\rno}{\Wm}{\epsilon}\) by Lemma \ref{lem:avcapacity}
and Theorem \ref{thm:minimax}. 
Then as a result the definitions of \(\spe{\rate,\Wm}\) and \(\spa{\epsilon}{\rate,\Wm}\) 
we have
\begin{align}
\label{eq:avspherepacking-1}
\spe{\rate,\Wm}&\leq\spa{\epsilon}{\rate,\Wm}
&
&\forall \rate\in \reals{\geq0}. 
\end{align} 
Let us proceed with bounding \(\spa{\epsilon}{\rate,\Wm}\) 
from above for \(\rate\)'s greater than or equal to  \(\RC{\rnf}{\Wm}\).
First note that
\begin{align}
\notag
&\int_{\rno-\epsilon\rno}^{\rno} 
\left(\tfrac{1-\rnt}{\rnt}\RC{\rnt}{\Wm}-\tfrac{1-\rno}{\rno}\rate
\right)\dif{\rnt}
\\
\notag
&\hspace{1.5cm}=\int_{\rno-\epsilon\rno}^{\rno} 
\tfrac{1-\rnt}{\rnt}(\RC{\rnt}{\Wm} -\rate)\dif{\rnt}
+\int_{\rno-\epsilon\rno}^{\rno} 
\tfrac{\rno-\rnt}{\rnt\rno}\rate\dif{\rnt}
\\
\notag
&\hspace{1.5cm}\leq \int_{\rno-\epsilon\rno}^{\rno} 
\tfrac{1-\rnt}{\rnt}(\RC{\rnt}{\Wm} -\rate)\dif{\rnt}
+\tfrac{\epsilon^{2}}{1-\epsilon}\rate.
\end{align}
Then as a result of the definition of \(\RCI{\rno}{\Wm}{\epsilon}\) we have
\begin{align}
\notag
\tfrac{1-\rno}{\rno}\!\left(\RCI{\rno}{\Wm}{\epsilon}\!-\!\rate\right)
&\!\leq\!\tfrac{1}{\epsilon}\int_{\rno-\epsilon\rno}^{\rno}\!
\tfrac{1-\rnt}{\rnt}(\RC{\rnt}{\Wm}\!-\!\rate)\dif{\rnt}
+\tfrac{\epsilon}{1-\epsilon}\rate
\\
\label{eq:avspherepacking-2}
&~~~+\!\tfrac{1}{\epsilon}\int_{\rno}^{\rno+\epsilon(1-\rno)}\! 
\tfrac{1-\rno}{\rno}\!(\RC{\rnt}{\Wm}\!-\!\rate) \dif{\rnt}.
\end{align} 
We bound \(\spa{\epsilon}{\rate,\Wm}\) by bounding the expression  on 
the right hand side of \eqref{eq:avspherepacking-2} separately on two intervals for \(\rno\). 
Note that \(\tfrac{1-\rnt}{\rnt}(\RC{\rnt}{\Wm}\!-\!\rate)\!\leq\!\spe{\rate,\Wm}\) 
for all \(\rnt\!\in\!(0,1)\)
and 
\((1\!-\!\rno)\spe{\rate,\Wm}+\rno \rate\!\leq\!\rate\vee\spe{\rate,\Wm}\) 
for all \(\rno\!\in\!(0,1)\).
Thus for  \(\rno \in [\rnf,1)\), \eqref{eq:avspherepacking-2} implies
\begin{align}
\notag
\tfrac{1-\rno}{\rno}\!\left(\RCI{\rno}{\Wm}{\epsilon}
\!-\!\rate\right)
&\!\leq\!\tfrac{1}{\epsilon}\int_{\rno-\epsilon\rno}^{\rno} 
\spe{\rate,\Wm}\dif{\rnt}
+\tfrac{\epsilon}{1-\epsilon}\rate
\\
\notag
&\hspace{.6cm}+
\!\tfrac{1}{\epsilon}\tfrac{1-\rno}{\rno} \int_{\rno}^{\rno+\epsilon(1-\rno)} 
\!\tfrac{\rnt}{1-\rnt}\spe{\rate,\Wm}\dif{\rnt}
\\
\notag
&\!\leq\!\spe{\rate,\Wm}\!+\!
\tfrac{\epsilon}{1-\epsilon}
\tfrac{1-\rno}{\rno}\spe{\rate,\Wm}
\!+\!\tfrac{\epsilon}{1-\epsilon} \rate
\\
\label{eq:avspherepacking-3new}
&\!\leq\!\spe{\rate,\Wm}\!+\!
\tfrac{\epsilon}{1-\epsilon}
\tfrac{\rate\vee\spe{\rate,\Wm}}{\rnf}.
\end{align}
On the other hand \(\rate\geq\RC{\rnf}{\Wm}\) by the hypothesis 
and the \renyi capacity is nondecreasing in its order by 
Lemma \ref{lem:capacityO}-(\ref{capacityO-ilsc}).
Thus  for \(\rno \in (0,\rnf]\), \eqref{eq:avspherepacking-2} implies 
\begin{align}
\notag
\!\!\tfrac{1-\rno}{\rno}\!&\left(\RCI{\rno}{\Wm}{\epsilon}-\rate\right)
\\
\notag
&\!\leq\!\tfrac{\epsilon}{1-\epsilon}\rate
\!+\!\tfrac{1}{\epsilon}
\!\int_{\rnf}^{\rno+\epsilon(1-\rno)} 
\!\!\tfrac{(1-\rno)\rnt}{\rno(1-\rnt)}\spe{\rate,\Wm}\dif{\rnt}
\IND{\rno\in[\frac{\rnf-\epsilon}{1-\epsilon},\rnf]}
\\
\notag
&\!\leq\!\tfrac{\epsilon}{1-\epsilon}\rate
\!+\!\tfrac{1}{\epsilon}\tfrac{\rno(1-\epsilon)+\epsilon}{\rno(1-\epsilon)}
\!\!\int_{\rnf}^{\rno(1-\epsilon)+\epsilon}\!\!\!\spe{\rate,\Wm}\dif{\rnt}
\IND{\rno\in[\frac{\rnf-\epsilon}{1-\epsilon},\rnf]}
\\
\notag
&\!=\!\tfrac{\epsilon}{1-\epsilon}\rate
\!+\!
\left[
\tfrac{\rno(1-\epsilon) +2\epsilon-\rnf}{\epsilon}
\!-\!\tfrac{\rnf-\epsilon}{\rno(1-\epsilon)}
\right]\!
\spe{\rate,\Wm}\!
\IND{\rno\in[\frac{\rnf-\epsilon}{1-\epsilon},\rnf]}
\\
\label{eq:avspherepacking-4new}
&\!\leq
\tfrac{\epsilon}{1-\epsilon}
\tfrac{\rnf\rate+(1-\rnf)\spe{\rate,\Wm}}{\rnf}
+\!(1-\rnf)\spe{\rate,\Wm}.
\end{align}

%\clearpage
%On the other hand, \(\rate\geq\RC{\rnf}{\Wm}\) by the hypothesis 
%and the \renyi capacity is nondecreasing in its order by 
%Lemma \ref{lem:capacityO}-(\ref{capacityO-ilsc}).
%Thus  for \(\rno \in (0,\rnf]\), \eqref{eq:avspherepacking-2} 
%implies 
%\begin{align}
%\notag
%\!\!\tfrac{1-\rno}{\rno}\!&\left(\RCI{\rno}{\Wm}{\epsilon}-\rate\right)
%\\
%\notag
%&\!\leq\!\tfrac{\epsilon}{1-\epsilon}\rate
%\!+\!\tfrac{1}{\epsilon}
%\!\int_{\rnf}^{\rno+\epsilon(1-\rno)} 
%\!\!\tfrac{(1-\rno)\rnt}{\rno(1-\rnt)}\spe{\rate,\Wm}\dif{\rnt}
%\IND{\rno\in[\frac{\rnf-\epsilon}{1-\epsilon},\rnf]}
%\\
%\notag
%&\!\leq\!\tfrac{\epsilon}{1-\epsilon}\rate
%\!+\!\tfrac{1}{\epsilon}\tfrac{\rno(1-\epsilon)+\epsilon}{\rno(1-\epsilon)}
%\!\!\int_{\rnf}^{\rno(1-\epsilon)+\epsilon}\!\!\!\spe{\rate,\Wm}\dif{\rnt}
%\IND{\rno\in[\frac{\rnf-\epsilon}{1-\epsilon},\rnf]}
%\\
%\notag
%&\!=\!\tfrac{\epsilon}{1-\epsilon}\rate
%\!+\!
%\left[
%\tfrac{\rno(1-\epsilon) +2\epsilon-\rnf}{\epsilon}
%\!-\!\tfrac{\rnf-\epsilon}{\rno(1-\epsilon)}
%\right]\!
%\spe{\rate,\Wm}\!
%\IND{\rno\in[\frac{\rnf-\epsilon}{1-\epsilon},\rnf]}
%\\
%&\!\leq
%\tfrac{\epsilon}{1-\epsilon}
%\tfrac{\rnf\rate+(1-\rnf)\spe{\rate,\Wm}}{\rnf}
%+\!(1-\rnf)\spe{\rate,\Wm}.
%\end{align}
%
%\clearpage

Note that
\eqref{eq:lem:avspherepacking} follows from \eqref{eq:avspherepacking-1},
\eqref{eq:avspherepacking-3new}, and \eqref{eq:avspherepacking-4new}.
In order to obtain \eqref{eq:lem:avspherepackingB} from \eqref{eq:lem:avspherepacking},
recall that \(\RC{\rno}{\Wm}\) is nondecreasing in \(\rno\) by 
Lemma \ref{lem:capacityO}-(\ref{capacityO-ilsc}) and 
\(\tfrac{1-\rno}{\rno}\RC{\rno}{\Wm}\) is nonincreasing in  \(\rno\) on \((0,1)\) by 
Lemma \ref{lem:capacityO}-(\ref{capacityO-zo}).
Then \(\spe{\rate,\Wm}\leq \tfrac{1-\rnf}{\rnf}\rate\) and hence
\(\rate\vee \spe{\rate,\Wm}\leq \sfrac{\rate}{\rnf}\)
for all \(\rate\geq\RC{\rnf}{\Wm}\)
by Lemma \ref{lem:spherepackingexponent}.
\end{proof}

\vspace{-.2cm}
\subsection{Tilting with a Family of Measures}\label{sec:tilting}
\begin{definition}\label{def:tiltedprobabilitymeasure}
For any \(\rno\in\reals{+}\) and \(\mW,\mQ\) in \(\pmea{\outA}\) 
satisfying \(\RD{\rno}{\mW}{\mQ}<\infty\),
\emph{the order \(\rno\) tilted probability measure} 
\(\wma{\rno}{\mQ}\) is 
\begin{align}
\label{eq:def:tiltedprobabilitymeasure}
\der{\wma{\rno}{\mQ}}{\rfm}
&\DEF e^{(1-\rno)\RD{\rno}{\mW}{\mQ}}
(\der{\mW}{\rfm})^{\rno} (\der{\mQ}{\rfm})^{1-\rno}
\end{align}
where \(\rfm\in\pmea{\outA}\) satisfies 
\(\mW\AC\rfm\) and \(\mQ\AC\rfm\).
\end{definition}
In many applications, the tilted probability measure between two fixed probability measures 
is of interest for orders in \((0,1)\).
In our analysis we need to allow one of those probability measures to change with the order, as well.
Lemma \ref{lem:tilting}, in the following,
considers the tilted probability measure \(\wma{\rno}{\qmn{\rno}}\) 
as a function of the order \(\rno\) for the case when \(\qmn{\rno}\) is a continuous 
function of \(\rno\) from \((0,1)\) to \(\pmea{\outA}\) for the total
variation topology on \(\pmea{\outA}\). 
\begin{lemma}\label{lem:tilting}
Let \(\qmn{\rno}\) be 
a continuous function of \(\rno\) from \((0,1)\)
to \(\pmea{\outA}\) for the total variation topology
and \(\mW\in\pmea{\outA}\) satisfy
\(\RD{\rno}{\mW}{\qmn{\rno}}<\infty\) for all \(\rno\in(0,1)\). Then 
\begin{enumerate}[(a)]
\item\label{tilting-vma} 
\(\wma{\rno}{\qmn{\rno}}\) is a continuous function of \(\rno\) from \((0,1)\)
to \(\pmea{\outA}\) for the total variation topology,
\item\label{tilting-divergence}
\(\RD{\rno}{\mW}{\qmn{\rno}}\),
\(\RD{1}{\wma{\rno}{\qmn{\rno}}}{\mW}\), and 
\(\RD{1}{\wma{\rno}{\qmn{\rno}}}{\qmn{\rno}}\) are continuous functions of \(\rno\)
from \((0,1)\) to \(\reals{\geq0}\).
\end{enumerate}
\end{lemma}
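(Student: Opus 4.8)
The plan is to reduce both parts of the lemma to the continuity of the single real-valued function $\rno\mapsto\RD{\rno}{\mW}{\qmn{\rno}}$ on $(0,1)$, and then to establish that continuity by a sandwich/squeeze argument built from the known monotonicity of $\RD{\rno}{\mW}{\mQ}$ in $\rno$ (Lemma~\ref{lem:divergence-order}) and its lower semicontinuity in the pair $(\mW,\mQ)$ for setwise convergence (Lemma~\ref{lem:divergence:lsc}), together with the hypothesized total-variation continuity of $\rno\mapsto\qmn{\rno}$. Concretely, fix $\rno_0\in(0,1)$ and a small window $[\rno_-,\rno_+]\subset(0,1)$ around it. For $\rnt$ in this window, $\der{\wma{\rnt}{\qmn{\rnt}}}{\rfm}$ is a product of $(\der{\mW}{\rfm})^{\rnt}$, $(\der{\qmn{\rnt}}{\rfm})^{1-\rnt}$, and the normalizing factor $e^{(1-\rnt)\RD{\rnt}{\mW}{\qmn{\rnt}}}$; since the $\Lon{\rfm}$-convergence $\qmn{\rnt}\to\qmn{\rno_0}$ passes to convergence in measure of the densities (along a subsequence $\rfm$-a.e.), and since the finiteness of $\RD{\sfrac{1}{2}}{\mW}{\qmn{\sfrac12}}$ plus monotonicity of the divergence in its order gives a uniform-in-$\rnt$ bound on $\RD{\rnt}{\mW}{\qmn{\rnt}}$ over the window, the main technical content is controlling $\RD{\rnt}{\mW}{\qmn{\rnt}}$.

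For the continuity of $\rnt\mapsto\RD{\rnt}{\mW}{\qmn{\rnt}}$ I would argue in two halves. Lower semicontinuity: by Lemma~\ref{lem:divergence:lsc}, $(\mW',\mQ')\mapsto\RD{\rno}{\mW'}{\mQ'}$ is lsc for setwise convergence at each fixed order; combined with the joint lower semicontinuity of the R\'enyi divergence in the order (Lemma~\ref{lem:divergence-order} gives lsc in $\rno$), and the fact that $\qmn{\rnt}\to\qmn{\rno_0}$ setwise (total variation convergence implies setwise), one gets $\liminf_{\rnt\to\rno_0}\RD{\rnt}{\mW}{\qmn{\rnt}}\ge\RD{\rno_0}{\mW}{\qmn{\rno_0}}$. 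Upper semicontinuity: here I would use a comparison measure to decouple the moving order from the moving target. Pick $\rnt<\rno_0$; by Lemma~\ref{lem:divergence-RM} and the identity $\RD{\rnt}{\mW}{\qmn{\rnt}}=\tfrac{\rnt}{1-\rnt}\RD{1-\rnt}{\qmn{\rnt}}{\mW}$ together with the monotonicity of the divergence in its order, one can bound $\RD{\rnt}{\mW}{\qmn{\rnt}}$ above by a convex combination of nearby terms; alternatively, using convexity of $\RD{\rno}{\mW}{\cdot}$ (Lemma~\ref{lem:divergence-convexity}) and writing $\qmn{\rnt}$ as close in total variation to $\qmn{\rno_0}$, I would dominate $\RD{\rno_0}{\mW}{\qmn{\rnt}}$ by $\RD{\rno_0}{\mW}{\qmn{\rno_0}}+o(1)$ via the explicit bound \eqref{eq:shiryaev}-type control only if needed; more cleanly, monotonicity of $\RD{\rno}{\mW}{\mQ}$ in $\rno$ gives $\RD{\rnt}{\mW}{\qmn{\rnt}}\le\RD{\rno_0}{\mW}{\qmn{\rnt}}$ for $\rnt\le\rno_0$, and then I need $\limsup_{\rnt\uparrow\rno_0}\RD{\rno_0}{\mW}{\qmn{\rnt}}\le\RD{\rno_0}{\mW}{\qmn{\rno_0}}$, i.e. upper semicontinuity of the divergence in its \emph{second} argument along the curve $\qmn{\rnt}$. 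For $\rnt\ge\rno_0$, I would instead use the symmetric identity to reduce to an order below $1-\rno_0$ and repeat. So the crux is: total-variation continuity of $\qmn{\cdot}$ forces $\RD{\rno_0}{\mW}{\qmn{\rnt}}\to\RD{\rno_0}{\mW}{\qmn{\rno_0}}$. This I would prove by dominated convergence on the integrand $(\der{\mW}{\rfm})^{\rno_0}(\der{\qmn{\rnt}}{\rfm})^{1-\rno_0}$, using that the $1-\rno_0$ power is subadditive and that the densities $\der{\qmn{\rnt}}{\rfm}$ converge in $\Lon{\rfm}$ (hence, along any sequence, $\rfm$-a.e. up to a subsequence) and are uniformly integrable, with the finite value of $\RD{\rno_0}{\mW}{\qmn{\rno_0}}$ providing the dominating envelope after a Hölder split.

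Once $\rnt\mapsto\RD{\rnt}{\mW}{\qmn{\rnt}}$ is shown continuous and locally bounded, part~(\ref{tilting-vma}) follows: $\der{\wma{\rnt}{\qmn{\rnt}}}{\rfm}\to\der{\wma{\rno_0}{\qmn{\rno_0}}}{\rfm}$ in measure by continuity of $\rnt\mapsto e^{(1-\rnt)\RD{\rnt}{\mW}{\qmn{\rnt}}}(\der{\mW}{\rfm})^{\rnt}(\der{\qmn{\rnt}}{\rfm})^{1-\rnt}$ pointwise in $\rfm$, and since these are all densities of probability measures (total mass $1$), Scheffé's lemma upgrades convergence in measure of the densities to $\Lon{\rfm}$ convergence, i.e. total-variation convergence of $\wma{\rnt}{\qmn{\rnt}}$ — which is exactly continuity in the total variation topology. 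For part~(\ref{tilting-divergence}), the continuity of $\RD{\rno}{\mW}{\qmn{\rno}}$ is already established; the continuity of $\RD{1}{\wma{\rno}{\qmn{\rno}}}{\mW}$ and $\RD{1}{\wma{\rno}{\qmn{\rno}}}{\qmn{\rno}}$ I would get from the algebraic identities for these KL divergences in terms of the tilted measure: expanding $\ln\der{\wma{\rno}{\qmn{\rno}}}{\mW}=(1-\rno)[\RD{\rno}{\mW}{\qmn{\rno}}-\pmi{}{}{}\text{-type terms}]$, one finds $\RD{1}{\wma{\rno}{\qmn{\rno}}}{\mW}$ and $\RD{1}{\wma{\rno}{\qmn{\rno}}}{\qmn{\rno}}$ are affine combinations (with coefficients $\rno$, $1-\rno$, and their ratios) of $\RD{\rno}{\mW}{\qmn{\rno}}$ and of $\int\ln\tfrac{\der{\mW}{\rfm}}{\der{\qmn{\rno}}{\rfm}}\,\dif\wma{\rno}{\qmn{\rno}}$, the latter being $\tfrac{1}{1-\rno}$ times the difference of two divergences, all of which inherit continuity from part~(\ref{tilting-vma}) and the already-proven continuity of $\RD{\rno}{\mW}{\qmn{\rno}}$, via dominated convergence against the now-continuous family $\wma{\rno}{\qmn{\rno}}$.

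The hard part will be the upper-semicontinuity half of the continuity of $\rnt\mapsto\RD{\rnt}{\mW}{\qmn{\rnt}}$, i.e. ruling out an upward jump when both the order and the reference measure move simultaneously; the monotonicity in the order handles the one-sided case cleanly, but pinning down $\limsup_{\rnt\to\rno_0}\RD{\rno_0}{\mW}{\qmn{\rnt}}\le\RD{\rno_0}{\mW}{\qmn{\rno_0}}$ requires a genuine dominated-convergence argument exploiting the finiteness hypothesis $\RD{\rno}{\mW}{\qmn{\rno}}<\infty$ for \emph{all} $\rno\in(0,1)$ (not just near $\rno_0$), so that one can borrow an order strictly between the window and $1$ (or between $0$ and the window) to build a uniform integrable majorant. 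Everything else is bookkeeping with the monotonicity and convexity lemmas already stated above.
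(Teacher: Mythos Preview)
Your overall route is workable but organized in the reverse order from the paper, and one of your claimed steps is exactly the pitfall the paper warns about in the Remark following the lemma. The paper does \emph{not} first prove continuity of $\rno\mapsto\RD{\rno}{\mW}{\qmn{\rno}}$ and then deduce part~(\ref{tilting-vma}); instead it directly establishes $\Lon{\rfm}$-convergence of the unnormalized densities $\cla{\mW}{\rno_{\blx}}\cla{\qmn{\rno_{\blx}}}{1-\rno_{\blx}}\to\cla{\mW}{\rno}\cla{\qmn{\rno}}{1-\rno}$ via a triangle-inequality split, H\"older, and the elementary inequality $(a+b)^{\beta}-a^{\beta}\le b^{\beta}$ for $\beta\in(0,1)$. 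Continuity of $\RD{\rno}{\mW}{\qmn{\rno}}$ then drops out as the continuity of the total mass $\lon{\smn{\rno}}=e^{(\rno-1)\RD{\rno}{\mW}{\qmn{\rno}}}$, and the normalized densities follow at once. Your plan requires essentially the same H\"older/subadditivity estimate (to get $\RD{\rno_0}{\mW}{\qmn{\rnt}}\to\RD{\rno_0}{\mW}{\qmn{\rno_0}}$ and its dual for the $\rnt>\rno_0$ side), so there is no real saving; you just invoke it twice under different headings.

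The concrete slip: you write that the tilted densities converge ``pointwise in $\rfm$'' because $\qmn{\rnt}\to\qmn{\rno_0}$ in total variation. Total-variation convergence gives $\Lon{\rfm}$-convergence of $\der{\qmn{\rnt}}{\rfm}$, hence convergence in measure, but \emph{not} pointwise convergence of the full sequence; the Remark immediately after the lemma statement is there precisely to flag this. Your Scheff\'e step therefore needs to be phrased via convergence in measure plus the subsequence trick (or Vitali), not via pointwise convergence. For part~(\ref{tilting-divergence}), the paper's argument for the two KL terms is cleaner than your ``affine combinations plus dominated convergence'': it uses that both $\RD{1}{\wma{\rno}{\qmn{\rno}}}{\mW}$ and $\RD{1}{\wma{\rno}{\qmn{\rno}}}{\qmn{\rno}}$ are lower semicontinuous in $\rno$ by Lemma~\ref{lem:divergence:lsc}, and that the single identity $\RD{1}{\wma{\rno}{\qmn{\rno}}}{\mW}=\tfrac{1-\rno}{\rno}\RD{\rno}{\mW}{\qmn{\rno}}-\tfrac{1-\rno}{\rno}\RD{1}{\wma{\rno}{\qmn{\rno}}}{\qmn{\rno}}$ (and its symmetric counterpart) then forces each to be upper semicontinuous as well. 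No dominated-convergence argument against the tilted family is needed.
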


\begin{remark}
The continuity of \(\qmn{\rno}\) in the order \(\rno\) for the total variation topology
on \(\pmea{\outA}\) does not imply 
the continuity of the corresponding Radon-Nikodym derivatives \(\der{\qmn{\rno}}{\rfm}\) 
for some reference measure \(\rfm\), 
see \cite[Remark \ref*{A-rem:continuityintvt}]{nakiboglu19A}.
Thus Lemma \ref{lem:tilting} is not a corollary of
standard results on the continuity of  integrals, such as \cite[Cor. 2.8.7]{bogachev}. 
\end{remark}

Lemma \ref{lem:tilting} does not assume \(\qmn{\rno}\) to be any particular family 
of probability measures, such as \renyi centers; \(\qmn{\rno}\) is unspecified 
except for its continuity in \(\rno\) and finiteness of \(\RD{\rno}{\mW}{\qmn{\rno}}\)
for all \(\rno\) in \((0,1)\).
However, for any channel \(\Wm\) with finite \(\RC{\sfrac{1}{2}}{\Wm}\), 
the \renyi center \(\qmn{\rno,\Wm}\) satisfies the hypothesis of Lemma \ref{lem:tilting} 
for \(\mW\!=\!\Wm(\dinp)\) for all \(\dinp\!\in\!\inpS\) because 
\(\RC{\rno}{\Wm\!}\!<\!\infty\)  for all \(\rno\!\in\!(0,1)\) by 
Lemma \ref{lem:capacityO}-(\ref{capacityO-zofintieness}),
\(\RD{\rno}{\Wm(\dinp)}{\qmn{\rno,\Wm}}\leq \RC{\rno}{\Wm}\) by Theorem \ref{thm:minimax},
and
\(\qmn{\rno,\Wm}\) is continuous in \(\rno\) for the total variation 
topology on \(\pmea{\outA}\) by 
Lemma \ref{lem:centercontinuity}.
\begin{remark}
We believe \(\qmn{\rno,\Wm}\) satisfies the monotonicity described 
in \cite[Conjecture \ref*{A-con:renyicentermonotonicity}]{nakiboglu19A}. 
If that is the case, we do not need Lemma \ref{lem:tilting},
we can establish the continuity of 
\(\wma{\rno}{\qmn{\rno,\Wm}}\), \(\RD{\rno}{\mW}{\qmn{\rno,\Wm}}\),
\(\RD{1}{\wma{\rno}{\qmn{\rno,\Wm}}}{\mW}\), and 
\(\RD{1}{\wma{\rno}{\qmn{\rno,\Wm}}}{\qmn{\rno,\Wm}}\)
for \(\mW\!=\!\Wm(\dinp)\) for all \(\dinp\!\in\!\inpS\)
using standard results on the continuity of integrals, 
such as \cite[Cor. 2.8.7]{bogachev}.
\end{remark}
\begin{proof}[Proof of Lemma \ref{lem:tilting}]
	Any function \(\gX\) on \((0,1)\) is continuous iff 
	for every convergent sequence \(\rno_{\blx}\to\rno\) in \((0,1)\), 
	the sequence \(\gX(\rno_{\blx})\) converges  to \(\gX(\rno)\)
	by \cite[Thm. 21.3]{munkres} because \((0,1)\) is metrizable.
	Let \(\{\rno_{\blx}\}_{\blx\in\integers{+}}\) be a convergent 
	sequence such that  \(\lim\nolimits_{\blx \to \infty} \rno_{\blx}=\rno\)
	and \(\rfm\) be
	\begin{align}
	\notag
	\rfm=\tfrac{\mW}{4}+\tfrac{\qmn{\rno}}{4}+\tfrac{1}{2}\sum\nolimits_{\blx\in\integers{+}} \tfrac{\qmn{\rno_{\blx}}}{2^{\blx}}.
	\end{align}
	
	Instead of working with measures as members of \(\smea{\outA}\) for the total variation topology,
	we work with the corresponding Radon-Nikodym  derivatives with respect to \(\rfm\) as members of \(\Lon{\rfm }\).
	We can do so because all of the measures we are considering are absolutely continuous with respect to \(\rfm\)
	and
	for any sequence 
	\(\{\cln{\blx}\}_{\blx\in\integers{+}}\!\subset\!\Lon{\rfm}\), 
	\(\cln{\blx}\!\xrightarrow{\Lon{\rfm}}\!\cln{}\) iff the corresponding sequence of 
	measures \(\{\cln{\blx}\rfm\}_{\blx\in\integers{+}}\) converges to \(\cln{}\rfm\) in \(\smea{\outA}\) for the total variation topology. 
	
	For any finite signed measure \(\mean\) such that \(\mean\AC\rfm\), we denote its Radon-Nikodym derivative with respect to \(\rfm\) by  \(\cln{\mean}\):
	\begin{align}
	\notag
	\cln{\mean}
	&=\der{\mean}{\rfm}
	&
	&\forall \mean \in \smea{\outA} \mbox{~such that~} \mean\AC\rfm.
	\end{align}
	We make an exception and denote the Radon-Nikodym derivative of \(\wma{\rno}{\qmn{\rno}}\) 
	by  \(\cln{\rno}\) rather than \(\cln{\wma{\rno}{\qmn{\rno}}}\).
	
\begin{enumerate}
	\item[(\ref{tilting-vma})] 
	For any  \(\rno\in (0,1)\) let \(\cln{\smn{\rno}}\) be 
	\(\cln{\smn{\rno}}\DEF \cla{\mW}{\rno} \cla{\qmn{\rno}}{1-\rno}\).
	Using the triangle equality we get:
	\begin{align}
	\hspace{-.6cm}\abs{\cln{\smn{\rno}}\!-\!\cln{\smn{\rno_{\blx}}}\!}
	\notag
	&\!=\!
	\abs{\cln{\smn{\rno}}\!-\!\cla{\mW}{\rno_{\blx}}\cla{\qmn{\rno_{\blx}}}{1-\rno_{\blx}}}
	\\
	\label{eq:tilting-smn-1}
	&\!\leq\!\abs{\cln{\smn{\rno}}\!-\!\cla{\mW}{\rno_{\blx}}\cla{\qmn{\rno}}{1-\rno_{\blx}}}
	+\cla{\mW}{\rno_{\blx}}\abs{\cla{\qmn{\rno}}{1-\rno_{\blx}}\!-\!\cla{\qmn{\rno_{\blx}}}{1-\rno_{\blx}}}.
	\end{align}
\begin{itemize}
	\item \(\{\cla{\mW}{\rno_{\blx}}\cla{\qmn{\rno}}{1-\rno_{\blx}}\}_{\blx\in\integers{+}}\) is uniformly integrable
	because 
	\(\EXS{\rfm}{\IND{\oev}\cla{\mW}{\rno_{\blx}}\!\cla{\qmn{\rno}}{1-\rno_{\blx}}}
	\!\leq\!\mW(\oev)^{\rno_{\blx}}\!\qmn{\rno}(\oev)^{1-\rno_{\blx}}\) 
	by the H\"{o}lder's inequality and \(\mW(\oev)^{\rno_{\blx}}\!\qmn{\rno}(\oev)^{1-\rno_{\blx}}\) 
	is bounded above by  \(\mW(\oev)\!+\!\qmn{\rno}(\oev)\). 
	
	\item \(\cla{\mW}{\rno_{\blx}}\cla{\qmn{\rno}}{1-\rno_{\blx}}\!\xrightarrow{\rfm}\!\cln{\smn{\rno}}\)
	because almost everywhere convergence implies convergence in measure for finite measures by
	\cite[Thm. 2.2.3]{bogachev} and  
	\(\cla{\mW}{\rno_{\blx}}\cla{\qmn{\rno}}{1-\rno_{\blx}}\!\xrightarrow{\rfm-a.e.}\!\cln{\smn{\rno}}\)
	by definition.
\end{itemize}		
	Then 
	\begin{align}
	\label{eq:tilting-smn-2}
	\cla{\mW}{\rno_{\blx}}\cla{\qmn{\rno}}{1-\rno_{\blx}}
	&\xrightarrow{\Lon{\rfm}}
	\cln{\smn{\rno}}
	\end{align}
	by the Lebesgue-Vitali convergence theorem \cite[4.5.4]{bogachev}.
	
	Using the derivative test one can confirm that \((\dsta+\tau)^\beta\!-\!\dsta^{\beta}\) 
	is a nonincreasing function of \(\dsta\) for any 
	\(\dsta\!\geq\!0\), \(\tau\!\geq\!0\), and \(\beta\!\in\!(0,1)\). 
	Thus \((\dsta+\tau)^\beta\!-\!\dsta^{\beta}\!\leq\!\tau^{\beta}\) for
	any \(\dsta\!\geq\!0\), \(\tau\!\geq\!0\), and \(\beta\!\in\!(0,1)\). Then using the H\"{o}lder's inequality we get,
	\begin{align}
	\notag
	\EXS{\rfm}{\cla{\mW}{\rno_{\blx}}\abs{\cla{\qmn{\rno}}{1-\rno_{\blx}}\!-\!\cla{\qmn{\rno_{\blx}}}{1-\rno_{\blx}}}}
	&\!\leq\! 
	\EXS{\rfm}{\cla{\mW}{\rno_{\blx}}\abs{\cln{\qmn{\rno}}\!-\!\cln{\qmn{\rno_{\blx}}}}^{1-\rno_{\blx}}}
	\\
	\notag
	&\!\leq\!\EXS{\rfm}{\cln{\mW}}^{\rno_{\blx}}
	\EXS{\rfm}{\abs{\cln{\qmn{\rno}}\!-\!\cln{\qmn{\rno_{\blx}}}}}^{1-\rno_{\blx}}
	\end{align}
	Then using 
	\(\!\cln{\qmn{\rno_{\blx}}}\!\!\xrightarrow{\!\Lon{\rfm}\!}\!\cln{\qmn{\rno}}\),
	\(\rno_{\blx}\!\xrightarrow{}\!\rno\),
	and \(\rno\!\in\!(0,1)\) we get
	\begin{align}
	\label{eq:tilting-smn-3}
	\lim\nolimits_{\blx\to\infty}
	\EXS{\rfm}{\cla{\mW}{\rno_{\blx}}\abs{\cla{\qmn{\rno}}{1-\rno_{\blx}}\!-\!\cla{\qmn{\rno_{\blx}}}{1-\rno_{\blx}}}}
	&\!=\!0.
	\end{align}
	\eqref{eq:tilting-smn-1}, \eqref{eq:tilting-smn-2}, and \eqref{eq:tilting-smn-3}
	imply \(\cln{\smn{\rno_{\blx}}}\xrightarrow{\Lon{\rfm}}\cln{\smn{\rno}}\). 
	Thus \(\smn{\rno}\) is continuous in \(\rno\) on \((0,1)\) 
	for the total variation topology on \(\fmea{\outA}\).
	Then \(\lon{\smn{\rno}}=\EXS{\rfm}{ \cln{\smn{\rno}}}\) is continuous in \(\rno\), as well. 
	Furthermore,  \(\lon{\smn{\rno}}\) is positive because \(\lon{\smn{\rno}}=e^{(\rno-1)\RD{\rno}{\mW}{\qmn{\rno}}}\) and \(\RD{\rno}{\mW}{\qmn{\rno}}\) is finite
	by the hypothesis of the lemma.
	On the other hand, \(\cln{\rno}=\sfrac{\cln{\smn{\rno}}}{\lon{\smn{\rno}}}\)
	and the triangle inequality implies 
	\begin{align}
	\notag
	\abs{\cln{\rno}-\cln{\rno_{\blx}}}
	&\leq  \tfrac{1}{\lon{\smn{\rno}}}\abs{\cln{\smn{\rno}}-\cln{\smn{\rno_{\blx}}}}
	+\abs{\tfrac{\lon{\smn{\rno}}-\lon{\smn{\rno_{\blx}}}}{\lon{\smn{\rno}}}}. 
	\end{align}
	Since \(\lon{\smn{\rno}}\) is positive for all \(\rno\in(0,1)\),
	the continuity of \(\lon{\smn{\rno}}\) in \(\rno\)
	and
	\(\cln{\smn{\rno_{\blx}}}\!\!\xrightarrow{\Lon{\rfm}}\!\cln{\smn{\rno}}\)
	imply \(\cln{\rno_{\blx}}\xrightarrow{\Lon{\rfm}}\cln{\rno}\).
	Thus \(\wma{\rno}{\qmn{\rno}}\) is continuous in \(\rno\) on \((0,1)\) 
	for the total variation topology on \(\pmea{\outA}\).
	
	\item[(\ref{tilting-divergence})]
\(\lon{\smn{\rno}}\) is positive for all \(\rno\!\in\!(0,1)\) 
by the hypothesis because \(\RD{\rno}{\mW}{\qmn{\rno}}\!=\!\tfrac{\ln \lon{\smn{\rno}}}{\rno-1}\).
	Furthermore,  \(\RD{\rno}{\mW}{\qmn{\rno}}\) is continuous in \(\rno\) because 
	product and composition of continuous functions are continuous.
	
	\(\RD{1}{\wma{\rno}{\qmn{\rno}}}{\mW}\) and \(\RD{1}{\wma{\rno}{\qmn{\rno}}}{\qmn{\rno}}\) 
	are both lower semicontinuous in \(\rno\) because the \renyi divergence is jointly lower semicontinuous in its arguments 
	for the topology of setwise convergence by Lemma \ref{lem:divergence:lsc}
	and \(\wma{\rno}{\qmn{\rno}}\) and \(\qmn{\rno}\) are continuous in the topology of setwise convergence. 
	
	\(\RD{1}{\wma{\rno}{\qmn{\rno}}}{\mW}\) is upper semicontinuous in \(\rno\) because
	\(\RD{\rno}{\mW}{\qmn{\rno}}\) is continuous in \(\rno\), \(\RD{1}{\wma{\rno}{\qmn{\rno}}}{\qmn{\rno}}\)
	is lower semicontinuous in \(\rno\), and \(\RD{1}{\wma{\rno}{\qmn{\rno}}}{\mW}\) satisfies 
	\begin{align}
	\notag
	\RD{1}{\wma{\rno}{\qmn{\rno}}}{\mW}
	&=\tfrac{1-\rno}{\rno}\RD{\rno}{\mW}{\qmn{\rno}}-\tfrac{1-\rno}{\rno}\RD{1}{\wma{\rno}{\qmn{\rno}}}{\qmn{\rno}}.
	\end{align}

	Then \(\RD{1}{\wma{\rno}{\qmn{\rno}}}{\mW}\) is continuous in \(\rno\) 
	because it is both lower semicontinuous and upper semicontinuous in  \(\rno\).
	
	Expressing \(\RD{1}{\wma{\rno}{\qmn{\rno}}}{\qmn{\rno}}\) in terms of \(\RD{1}{\wma{\rno}{\qmn{\rno}}}{\mW}\) and 
	following a similar reasoning, 
	we deduce that \(\RD{1}{\wma{\rno}{\qmn{\rno}}}{\qmn{\rno}}\) is continuous in \(\rno\),
	as well. 
\end{enumerate}

\end{proof} 
%!TEX root=../main-B.tex
\section{The SPB for Product Channels}\label{sec:product-outerbound}
\begin{assumption}\label{assumption:individual-ologn}
\(\{\Wmn{\tin}\}_{\tin\in\integers{+}}\) is a sequence of channels such that 
the maximum \(\RC{\sfrac{1}{2}}{\Wmn{\tin}}\) among the first \(\blx\) 
\(\Wmn{\tin}\)'s is \(\bigo{\ln \blx}\):
there exists \(\blx_{0}\in \integers{+}\) and \(K\in\reals{+}\) such that 
\begin{align}
%\label{eq:assumption:individual-ologn}
\notag
\max\nolimits_{\tin\in [1,\blx]} \RC{\sfrac{1}{2}}{\Wmn{\tin}}
&\leq K \ln\blx
&
&\forall \blx\geq\blx_{0}.
\end{align}
\end{assumption}

\begin{theorem}\label{thm:productexponent}
Let \(\{\Wmn{\tin}\}_{\tin\in\integers{+}}\) be a sequence of channels  satisfying  Assumption \ref{assumption:individual-ologn},
\(\varepsilon\) be a positive real number,
and \(\rno_{0}\), \(\rno_{1}\) be orders satisfying \(0\!<\!\rno_{0}\!<\!\rno_{1}\!<\!1\).
Then for any sequence of codes on the product channels \(\{\Wmn{[1,\blx]}\}_{\blx\in\integers{+}}\) satisfying
\begin{align}
\label{eq:thm:productexponent-hypothesis}
\RC{\rno_{1}}{\Wmn{[1,\blx]}}\!\geq\!\ln\!\tfrac{M_{\blx}}{L_{\blx}}
&\!\geq\!\RC{\rno_{0}}{\Wmn{[1,\blx]}}\!+\!\varepsilon(\ln\blx)^{2}
&\forall\blx\!\geq\!\blx_{0}
\end{align}
there exists a \(\tau\!\in\!\reals{+} \) and an \(\blx_{1}\!\geq\!\blx_{0}\) such that 
\begin{align}
\label{eq:thm:productexponent}
\Pem{av}^{(\blx)}
&\geq \blx^{-\tau}  e^{-\spe{\ln\frac{M_{\blx}}{L_{\blx}},\Wmn{[1,\blx]}}} 
&
&\forall \blx\geq\blx_{1}.
\end{align}
\end{theorem}
The main aim of this section is to prove the asymptotic SPB given in 
Theorem \ref{thm:productexponent}; we do so following the program 
put forward by Augustin in \cite{augustin69}.
In \S\ref{sec:momentbound}, we bound the moments of certain zero mean random variables 
related to the tilted probability measures.
In \S\ref{sec:smalldeviations}, we bound the small deviation probability for sums of independent random 
variables using the Berry-Esseen theorem.
In \S\ref{sec:pouterbound}, we first derive non-asymptotic, but parametric, SPBs for codes 
on arbitrary product channels and on certain product channels with feedback;
then 
we prove Theorem \ref{thm:productexponent} using the bound for codes 
on arbitrary product channels.
In \S\ref{sec:comparison}, we compare our SPBs with the SBPs derived by
Augustin in \cite{augustin69} and \cite{augustin78} for 
the product channels.

We make a brief digression to discuss the implications of Theorem \ref{thm:productexponent},
before starting its proof. 
Theorem \ref{thm:productexponent} and
the list decoding variant of Gallager's bound, i.e. Lemma \ref{lem:gallager},
determine the 
optimal \(\Pem{av}^{(\blx)}\) up to a polynomial factor for all sequences of product 
channels satisfying Assumption \ref{assumption:individual-ologn}.
In order to see why, note that 
if there exists an \(\rno_{0}\!\in\![\tfrac{1}{1+L_{\blx}},1]\)
satisfying \(\ln \tfrac{M_{\blx}}{L_{\blx}}\geq \RC{\rno_{0}}{\Wmn{[1,\blx]}}\)
then
\begin{align}
\Pem{av}^{(\blx)}&\leq e^{\frac{1-\rno_{0}}{\rno_{0}}}e^{-\spe{\ln\frac{M_{\blx}}{L_{\blx}},\Wmn{[1,\blx]}}}
\end{align}
by Lemma \ref{lem:gallager} because
\(\spe{\rate,\!\Wm\!}\!\leq\!\tfrac{1-\rno}{\rno}\!+\!\spe{\rate\!+\!1,\!\Wm\!}\)
for any \(\rno\in(0,1)\) satisfying 
\(\rate\!\geq\!\RC{\rno}{\Wm\!}\)
by Lemma \ref{lem:spherepackingexponent}.

If the sequence of channels satisfying Assumption \ref{assumption:individual-ologn}
is composed of channels with bounded order \(\sfrac{1}{2}\) \renyi capacity,  i.e. 
if \(\sup_{\tin\in\integers{+}}\RC{\sfrac{1}{2}}{\Wmn{\tin}}\!\leq\!K\)
for some  \(K\!\in\!\reals{+}\), 
then we can bound \(\tau\) in Theorem \ref{thm:productexponent} from above, as well.
But, our bounds are too crude to recover the right polynomial 
prefactor. 

Assumption \ref{assumption:individual-ologn} holds for all stationary product channels
and many non-stationary product channels. 
As an example consider the Poisson channel \(\Pcha{\tlx,\mA,\gX(\cdot)}\) 
whose input set is described in \eqref{eq:def:poissonchannel-product}.
The \renyi capacity of \(\Pcha{\tlx,\mA,\gX(\cdot)}\) is
determined in \cite[(\ref*{A-eq:poissonchannel-product-capacity})]{nakiboglu19A}
to be: 
\begin{align}
\notag
\RC{\rno}{\Pcha{\tlx,\mA,\gX(\cdot)}}
&=\!\int_{0}^{\tlx}\!\left[\!\left(\rno\tfrac{\gX-\mA}{\gX^{\rno}-\mA^{\rno}}\right)^{\frac{1}{1-\rno}}
\!-\!\tfrac{\rno}{\rno-1}\!\tfrac{\mA\gX^{\rno}-\gX \mA^{\rno}}{\gX^{\rno}-\mA^{\rno}}
\!\right]\!\dif{\tin}\!
\end{align}
Then 
\(\RC{\frac{1}{2}}{\Pcha{\tlx,\mA,\gX(\cdot)}}
=\tfrac{1}{4}\int\nolimits_{0}^{\tlx}(\sqrt{\gX(\tin)}-\sqrt{\mA})^{2}\dif{\tin}\)
and the Poisson channels \(\Wmn{[1,\blx]}\!=\!\Pcha{\blx,\mA,\gX(\cdot)}\) 
satisfy Assumption \ref{assumption:individual-ologn}  
provided that  \(\sup_{\tin\in(0,\tlx]}\gX(\tin)\) is \(\bigo{\ln \tlx}\).
Thus Theorem \ref{thm:productexponent} implies the SPB 
for the Poisson channel \(\Pcha{\tlx,\mA,\gX(\cdot)}\) asymptotically,
provided that  \(\sup_{\tin\in(0,\tlx]}\gX(\tin)\) is \(\bigo{\ln \tlx}\).

\subsection{Moment Bounds for Tilting}\label{sec:momentbound}
The tilted probability measures arise naturally in the trade off between  the exponents of the false alarm 
and the missed detection probabilities in the binary hypothesis testing problem with independent samples. 
We use them in the same vein with the help of the following bound.
\begin{lemma}\label{lem:MomentBound}
	Let \(\mW\) and \(\mQ\) be two probability measures on the measurable space \((\outS,\outA)\) such that 
	\(\RD{\sfrac{1}{2}}{\mW}{\mQ}<\infty\).
	Then
	\begin{align}
	\label{eq:lem:MomentBound}
	\EXS{\wma{\rno}{\mQ}}{\abs{\cln{\rno}}^{\knd}}^{\sfrac{1}{\knd}}
	&\leq 3^{\frac{1}{\knd}} \tfrac{((1-\rno)\RD{\rno}{\mW}{\mQ})\vee \knd}{\rno (1-\rno)}
	\end{align}
	for all \(\knd\in\reals{+}\) and \(\rno \in(0,1)\)
	where \(\wma{\rno}{\mQ}\) is the tilted probability measure given in \eqref{eq:def:tiltedprobabilitymeasure}
	and \(\cln{\rno}\) is defined using the Radon-Nikodym derivative 
	of \(\wmn{ac}\), i.e. the 
	component\footnote{Any \(\mW\) has a unique decomposition 
		\(\mW\!=\!\wmn{ac}\!+\!\wmn{s}\) such that \(\wmn{ac}\AC\mQ\) and  \(\wmn{s}\perp\mQ\)
		by the Lebesgue decomposition theorem, \cite[5.5.3]{dudley}.} 
	of \(\mW\) that is absolutely continuous in \(\mQ\), as follows 
	\begin{align}
	\notag
	\cln{\rno}
	&\DEF\ln \der{\wmn{ac}}{\mQ}-\EXS{\wma{\rno}{\mQ}}{\ln \der{\wmn{ac}}{\mQ}}.
	\end{align}
\end{lemma}

\begin{proof}
	Note that for any \(\gamma>0\),
	\begin{align}
\notag
	\EXS{\wma{\rno}{\mQ}}{\abs{\cln{\rno}}^{\knd}}
	&=\EXS{\wma{\rno}{\mQ}}{\IND{\cln{\rno}\!>\!\gamma} \abs{\cln{\rno}}^{\knd}}
	+\EXS{\wma{\rno}{\mQ}}{\IND{\abs{\cln{\rno}}\leq\gamma} \abs{\cln{\rno}}^{\knd}}
	\\
	\label{eq:MomentBound-0}
	&\hspace{1cm}+\EXS{\wma{\rno}{\mQ}}{\IND{\cln{\rno}\!<\!-\gamma} \abs{\cln{\rno}}^{\knd}}.
	\end{align}
	In the following,  we bound the expectations in the preceding expression 
	from above for an arbitrary \(\gamma\)
	and show that these bounds are not larger than \((\gamma_{0})^{\knd}\)
	for \(\gamma=\gamma_{0}\), see \eqref{eq:MomentBound-4} and \eqref{eq:MomentBound-6}, where
	\begin{align}
	%\label{eq:MomentBound-1}
	\notag
	\gamma_{0}&\DEF\tfrac{((1-\rno)\RD{\rno}{\mW}{\mQ}) \vee \knd}{\rno(1-\rno)}.
	\end{align}
	This will imply \(\EXS{\wma{\rno}{\mQ}}{\abs{\cln{\rno}}^{\knd}}\leq 3 (\gamma_{0})^{\knd}\)
	and thus \eqref{eq:lem:MomentBound}.

	Using the identity \(\der{\wma{\rno}{\mQ}}{\mW}=e^{(\rno-1) \cln{\rno}+\RD{1}{\wma{\rno}{\mQ}}{\mW}}\),
	we can bound the first expectation in \eqref{eq:MomentBound-0} for all  
	\(\gamma\geq 0\) and  \(\knd\geq 0\) as follows
	\begin{align}
	\notag
	\EXS{\wma{\rno}{\mQ}}{\IND{\cln{\rno}\!>\gamma}\!\abs{\cln{\rno}}^{\knd}} 
	&\!=\!\EXS{\mW}{\IND{\cln{\rno}\!>\gamma}\!\abs{\cln{\rno}}^{\knd} e^{(\rno-1)\cln{\rno}+\RD{1}{\wma{\rno}{\mQ}}{\mW}}}
	\\
	\label{eq:MomentBound-2}
	&\!\leq\! e^{\RD{1}{\wma{\rno}{\mQ}}{\mW}}
	\sup\nolimits_{\dsta>\gamma}e^{-(1-\rno)\dsta} \dsta^{\knd}.
	\end{align}
	On the other hand, for any  \(\beta>0\), \(\knd\geq 0\), and \(\gamma\geq 0\) we have
	\begin{align}
	\label{eq:MomentBound-3}
	\sup\nolimits_{\dsta > \gamma}   e^{-\beta \dsta} \dsta^{\knd}
	&=
	\begin{cases}
	(\tfrac{\knd}{e \beta})^{\knd}
	&\gamma\leq\tfrac{\knd}{\beta}\\
	e^{-\beta \gamma} \gamma^{\knd}
	&\gamma>\tfrac{\knd}{\beta}
	\end{cases}.
	\end{align}
	Using \eqref{eq:MomentBound-2} and \eqref{eq:MomentBound-3}
	for \(\beta=(1-\rno)\) and \(\gamma=\gamma_{0}\) and invoking 
	\((1-\rno)\RD{\rno}{\mW}{\mQ}=\rno \RD{1}{\wma{\rno}{\mQ}}{\mW}+(1-\rno)\RD{1}{\wma{\rno}{\mQ}}{\mQ}\)
	we get
	\begin{align}
	\notag
	\EXS{\wma{\rno}{\mQ}}{\IND{\cln{\rno}\!>\gamma}\abs{\cln{\rno}}^{\knd}} 
	&\!\leq\! e^{\RD{1}{\wma{\rno}{\mQ}}{\mW}-\frac{1-\rno}{\rno}\RD{\rno}{\mW}{\mQ}}
	(\gamma_{0})^{\knd}
	\\
	\label{eq:MomentBound-4}
	&=e^{-\frac{1-\rno}{\rno}\RD{1}{\wma{\rno}{\mQ}}{\mQ}} (\gamma_{0})^{\knd}.
	\end{align}
	Using the identity \(\der{\wma{\rno}{\mQ}}{\mQ}=e^{\rno \cln{\rno}+\RD{1}{\wma{\rno}{\mQ}}{\mQ}}\),
	we can bound the third expectation in \eqref{eq:MomentBound-0} for all  \(\gamma\geq 0\) and  \(\knd\geq 0\) as follows
	\begin{align}
	\notag
	\EXS{\wma{\rno}{\mQ}}{\IND{\cln{\rno}\!<-\gamma}\!\abs{\cln{\rno}}^{\knd}}
	&\!=\!\EXS{\mQ}{\IND{\cln{\rno}<-\gamma}\abs{\cln{\rno}}^{\knd}e^{\rno \cln{\rno}+\RD{1}{\wma{\rno}{\mQ}}{\mQ}}}
	\\
	\label{eq:MomentBound-5}
	&\!\leq\! e^{\RD{1}{\wma{\rno}{\mQ}}{\mQ}}
	\sup\nolimits_{\dsta>\gamma} e^{-\rno\dsta} \dsta^{\knd}.
	\end{align}
	Using \eqref{eq:MomentBound-3} and \eqref{eq:MomentBound-5}
	for \(\beta=\rno\) and \(\gamma=\gamma_{0}\) and  invoking 
	\((1-\rno)\RD{\rno}{\mW}{\mQ}=\rno \RD{1}{\wma{\rno}{\mQ}}{\mW}+(1-\rno)\RD{1}{\wma{\rno}{\mQ}}{\mQ}\)
	we get
	\begin{align}
	\notag
	\EXS{\wma{\rno}{\mQ}}{\IND{\cln{\rno}\!<-\gamma}\!\abs{\cln{\rno}}^{\knd}}
	&\!\leq\! e^{\RD{1}{\wma{\rno}{\mQ}}{\mQ}-\RD{\rno}{\mW}{\mQ}}
	(\gamma_{0})^{\knd}
	\\
	\label{eq:MomentBound-6}
	&\!=\! e^{-\frac{\rno}{1-\rno}\RD{1}{\wma{\rno}{\mQ}}{\mW}}(\gamma_{0})^{\knd}.
	\end{align}
\end{proof}

\subsection{A Corollary of the Berry-Esseen Theorem}\label{sec:smalldeviations}
In this subsection we derive a lower bound to the probability of having a small deviation 
from the mean for sums of independent random variables using the Berry-Esseen theorem. 
Let us start with recalling the Berry-Esseen theorem.

\begin{lemma}[Berry-Esseen Theorem \cite{berry41}, \cite{esseen42}, \cite{shevtsova10}]\label{lem:berryesseen}
Let \(\{\cln{\tin}\}_{\tin\in \integers{+}}\) be independent  random variables satisfying
\begin{align}
\notag
\EX{\cln{\tin}}
&=0
&
&\forall\tin
&
&\mbox{and}
&
\cm{2}
&\in\reals{+}
\end{align}
where \(\cm{\knd}=\left(\sum\nolimits_{\tin=1}^{\blx} \EX{\abs{\cln{\tin}}^{\knd}}\right)^{\sfrac{1}{\knd}}\).
Then there exists an absolute constant \(\omega\leq 0.5600\) such that 
\begin{align}
\notag
\abs{\PX{\sum\nolimits_{\tin=1}^{\blx}\cln{\tin}< \tau \cm{2}}-\GCD{\tau}}
&\leq \omega \left(\tfrac{\cm{3}}{\cm{2}}\right)^{3}
\end{align}
where \(\GCD{\tau} =\tfrac{1}{\sqrt{2\pi}} \int_{-\infty}^{\tau} e^{-\frac{\dsta^2}{2}} \dif{\dsta}.\)
\end{lemma}
The following lemma is obtained by applying the Berry-Esseen theorem for appropriately 
chosen values of \(\tau\); thus it is merely a corollary of the Berry-Esseen theorem.

\begin{lemma}\label{lem:berryesseenN}
Let \(\{\cln{\tin}\}_{\tin\in \integers{+}}\) be independent  zero mean random variables. Then
\begin{align}
\label{eq:lem:berryesseenN}
\PX{\abs{\sum\nolimits_{\tin=1}^{\blx}\cln{\tin}}<3\cm{\knd}}
&\geq  \tfrac{1}{2\sqrt{\blx}}
&~&&
&\forall \blx \in \integers{+},\knd\geq 3.
\end{align}
\end{lemma}
Augustin derived a similar bound, i.e. \cite[Thm. 18.2]{augustin78};
the proof of Lemma \ref{lem:berryesseenN} is similar to the proof of 
that bound.

\begin{proof}[Proof of Lemma \ref{lem:berryesseenN}]
If \(\sfrac{\cm{\knd}}{\cm{2}}\!\geq\!\sfrac{\sqrt{2}}{3}\), then using the Markov inequality we get
\begin{align}
\notag
\PX{\abs{\sum\nolimits_{\tin=1}^{\blx}\cln{\tin}}\leq 3 \cm{\knd}}
&\geq 1 -\PX{\abs{\sum\nolimits_{\tin=1}^{\blx}\cln{\tin}}>\cm{2}\sqrt{2} }
\\
\notag
&\geq \tfrac{1}{2}.
\end{align}
Thus \eqref{eq:lem:berryesseenN} holds. Hence, we assume that 
\(\sfrac{\cm{\knd}}{\cm{2}}\!<\!\sfrac{\sqrt{2}}{3}\)
for the rest of the proof. 
By the Berry-Esseen theorem we have 
\begin{align}
\notag
&\hspace{-.3cm}\PX{\abs{\sum\nolimits_{\tin=1}^{\blx}\cln{\tin}}\leq 3 \cm{\knd}}
\\
\notag
&\geq \left(\GCD{\tfrac{3\cm{\knd}}{\cm{2}}} -\omega \left(\tfrac{\cm{3}}{\cm{2}}\right)^{3}\right)
-     \left(\GCD{\tfrac{-3\cm{\knd}}{\cm{2}}}+\omega \left(\tfrac{\cm{3}}{\cm{2}}\right)^{3}\right)
\\
\notag
&=2 \left[\int_{0}^{\frac{3\cm{\knd}}{\cm{2}}} \tfrac{e^{-\frac{\dsta^2}{2}}}{\sqrt{2\pi}}  \dif{\dsta}
-\omega \left(\tfrac{\cm{3}}{\cm{2}}\right)^{3} \right].
\end{align}
On the other hand
\(\int_{0}^{\tau} e^{-\frac{\dsta^2}{2}}\dif{\dsta}
%\geq  \tau e^{-\frac{1}{\tau}\int_{0}^{\tau} \frac{\dsta^2}{2} \dif{\dsta}}
\geq  \tau e^{-\sfrac{\tau^{2}}{6}} 
\) 
by the Jensen's inequality because 
the exponential function is convex.
Thus
\begin{align}
\label{eq:berryesseenN-A}
\!\PX{\abs{\sum\nolimits_{\tin=1}^{\blx}\cln{\tin}}\!\leq\!3 \cm{\knd}}
&\!\geq\!2\!\left[
\tfrac{\sfrac{3\cm{\knd}}{\cm{2}}}{\sqrt{2\pi}}
e^{-\frac{(\sfrac{3\cm{\knd}}{\cm{2}})^{2}}{6}}
\!-\!\omega \left(\tfrac{\cm{3}}{\cm{2}}\right)^{3} \right].
\end{align}

Since \(\knd\geq3\), the H\"{o}lder's inequality implies
\begin{align}
\notag
\EX{\sum\nolimits_{\tin=1}^{\blx}\abs{\cln{\tin}}^{3}}
&\leq 
\EX{\sum\nolimits_{\tin=1}^{\blx}\abs{\cln{\tin}}^{\knd}}^{\frac{1}{\knd-2}}
\EX{\sum\nolimits_{\tin=1}^{\blx}\abs{\cln{\tin}}^{2}}^{\frac{\knd-3}{\knd-2}}.
\end{align}
Then \((\tfrac{\cm{3}}{\cm{2}})^{3}\!\leq\!(\tfrac{\cm{\knd}}{\cm{2}})^{\frac{\knd}{\knd-2}}\).
Thus using  \(\tfrac{\cm{\knd}}{\cm{2}}\!<\!\tfrac{\sqrt{2}}{3}\)
and \eqref{eq:berryesseenN-A} we get
\begin{align}
\notag
\PX{\abs{\sum\nolimits_{\tin=1}^{\blx}\cln{\tin}}\leq 3 \cm{\knd}}
&\geq  2 \left[\tfrac{e^{-\sfrac{1}{3}}}{\sqrt{2\pi}}3-0.56 \right] \tfrac{\cm{\knd}}{\cm{2}}
\\
\label{eq:berryesseenN-E}
&\geq 0.595 \tfrac{\cm{\knd}}{\cm{2}}.
\end{align}
In order to bound \(\sfrac{\cm{\knd}}{\cm{2}}\) we use the  
Jensen's inequality and the concavity of the functions \(\dsta^{\rno}\)
in \(\dsta\) for \(\rno\in (0,1]\).
\begin{align}
\notag
\EX{\sum\nolimits_{\tin=1}^{\blx}\tfrac{1}{\blx}  \abs{\cln{\tin}}^{2}}^{\sfrac{1}{2}}
%&=\EX{\sum\nolimits_{\tin=1}^{\blx}\tfrac{1}{\blx}  \abs{\cln{\tin}}^{\knd \frac{2}{\knd}}}^{\sfrac{1}{2}}
%\\
%\notag
&\!\leq\!\EX{\sum\nolimits_{\tin=1}^{\blx}\tfrac{1}{\blx}  \abs{\cln{\tin}}^{\knd}}^{\sfrac{1}{\knd}}
&~&&
&\forall \knd\geq 2.
\end{align}
Then \(\sfrac{\cm{\knd}}{\cm{2}}\geq 
\blx^{\frac{1}{\knd}-\frac{1}{2}}\geq \sfrac{1}{\sqrt{\blx}}\)
and
\eqref{eq:lem:berryesseenN} follows from \eqref{eq:berryesseenN-E}.
\end{proof}

\subsection[Non-asymptotic SPBs]{Non-asymptotic SPBs for Product Channels}\label{sec:pouterbound}
The ultimate aim of this subsection is to establish the asymptotic SPB
given in Theorem \ref{thm:productexponent}.
First we establish a non-asymptotic SPB
for product channels, i.e. Lemma \ref{lem:spb-product},
using Lemmas \ref{lem:MomentBound} and \ref{lem:berryesseenN}, 
the intermediate value theorem, and pigeon hole arguments.
We prove Theorem \ref{thm:productexponent} using Lemma \ref{lem:spb-product}
at the end of this subsection. We make a brief digression
before that proof and point out three variants of Lemma \ref{lem:spb-product}
that are proved without invoking the averaging scheme described in \S \ref{sec:averaging}.
Lemma \ref{lem:spb-monotone-center} is for product channels 
satisfying
\begin{align}
\label{eq:lem:spb-monotone-center:hypothesis}
e^{\frac{\rno-1}{\rno}\RC{\rno}{\Wmn{[1,\blx]}}}\qmn{\rno,\Wmn{[1,\blx]}}
&\leq e^{\frac{\dsta-1}{\dsta}\RC{\dsta}{\Wmn{[1,\blx]}}}\qmn{\dsta,\Wmn{[1,\blx]}}
\end{align}
for all \(\rno\!\leq\!\dsta\)  in \((0,1)\).
Lemma \ref{lem:spb-constant-center} is for product channels whose \renyi centers do not 
change with the order.
Lemma \ref{lem:spb-fixed-density} establishes the SPB given in Lemma \ref{lem:spb-constant-center}
for  product channels with feedback, under a stronger hypothesis.

\begin{lemma}\label{lem:spb-product}
Let \(\Wmn{[1,\blx]}\) be a product channel
for an \(\blx\!\in\!\integers{+}\),
\(\rnf\in(0,1)\), \(\epsilon\in(0,\tfrac{\blx}{\blx+1})\), \(\knd\geq 3\),
and \(\gamma\) be
\vspace{-.1cm}
\begin{align}
\label{eq:lem:spb-product:gamma}
\gamma
&\DEF\tfrac{3\sqrt[\knd]{3}}{1-\epsilon}\left(\sum\nolimits_{\tin=1}^{\blx} \left(\RC{1/2}{\Wmn{\tin}}\vee \knd \right)^{\knd} \right)^{\sfrac{1}{\knd}}.
%%%\\ 
%%%\notag
%%%&=
%%%\left[\tfrac{3\sqrt[\knd]{3}}{1-\epsilon}\left(\sum\nolimits_{\tin=1}^{\blx} \left(\RC{1/2}{\Wmn{\tin}}\vee \knd \right)^{\knd} \right)^{\frac{1}{\knd}}\right]
%%%\wedge
%%%\left[3\sqrt[\knd]{2 \knd!} \left(\sum\nolimits_{\tin=1}^{\blx} e^{\frac{\RC{1/2}{\Wmn{\tin}}}{1-\epsilon}}\right)^{\frac{1}{\knd}}\right].
\end{align}
\vspace{-.1cm}
If  \(M\) and \(L\) are integers such that
\(\tfrac{M}{L}> 16 \sqrt{\blx} e^{\RCI{\rnf}{\Wmn{[1,\blx]}}{\epsilon}\!+\!\frac{\gamma}{1-\rnf}}\),
then any \((M,L)\) channel code on \(\Wmn{[1,\blx]}\) satisfies
\vspace{-.1cm}
\begin{align}
\label{eq:lem:spb-product}
\Pem{av}
&\geq \left(\tfrac{\epsilon e^{-2\gamma}}{16 e^{2}\blx^{3/2}}\right)^{\sfrac{1}{\rnf}}  e^{-\spa{\epsilon}{\ln\frac{M}{L},\Wmn{[1,\blx]}}}  
\end{align}
\vspace{-.1cm}
for \(\spa{\epsilon}{\rate,\Wm}\)  defined in \eqref{eq:def:avspherepacking}.
\end{lemma} 
We have presented Lemma \ref{lem:spb-product} via \eqref{eq:lem:spb-product}
in order to emphasize its similarity to the Gallager's bound, 
i.e. Lemma \ref{lem:gallager}. 
However, the expression on the right hand side of \eqref{eq:lem:spb-product} converges to zero as 
\(\rnf\) converges to zero because \(\tfrac{\epsilon e^{-2\gamma}}{16 e^{2}\blx^{3/2}}<1\).
By changing the analysis slightly it is possible to obtain the following alternative bound
which does not have that problem:
\begin{align}
\label{eq:lem:spb-product-alt}
\Pem{av}
&\geq \tfrac{\epsilon e^{-2\gamma}}{16\blx^{3/2}}  e^{-\spa{\epsilon}{\rate,\Wmn{[1,\blx]}}}  
\end{align}
where \(\rate=\ln \tfrac{M}{L}-2\gamma-\ln \tfrac{16 e^{2}\blx^{3/2}}{\epsilon}\).
The bound given in \eqref{eq:lem:spb-product-alt} is preferable especially 
for codes with low rates on  channels satisfying 
\(\lim_{\rate\downarrow 0} \spe{\rate,\Wmn{[1,\blx]}}<\infty\). 

We can make \(\RCI{\rnf}{\Wmn{[1,\blx]}}{\epsilon}\) and \(\spa{\epsilon}{\ln\tfrac{M}{L},\Wmn{[1,\blx]}}\)
as close as we please to \(\RC{\rnf}{\Wmn{[1,\blx]}}\) and \(\spe{\ln\tfrac{M}{L},\Wmn{[1,\blx]}}\) by choosing 
\(\epsilon\) small enough. 
But as we decrease \(\epsilon\), the first term of the lower bound in  \eqref{eq:lem:spb-product} 
also decreases. The appropriate choice of \(\epsilon\) balances these two effects. 
The choice of \(\kappa\) influences the constraint on \(\ln \tfrac{M}{L}\) and 
the lower bound in  \eqref{eq:lem:spb-product}  only through the value of \(\gamma\). 
The appropriate choice of \(\kappa\) minimizes  the value of \(\gamma\).
The constraint on \(\ln \tfrac{M}{L}\) becomes easier to satisfy as \(\rnf\) decreases;
however, the smaller values of \(\rnf\) also lead to smaller, i.e. worse, 
lower bounds in \eqref{eq:lem:spb-product}. 
Thus for a given \(\ln \tfrac{M}{L}\) we desire to have the greatest possible
value for \(\rnf\) to have the best bound in \eqref{eq:lem:spb-product}.
As an example consider the stationary case when
\(\Wmn{\tin}=\Wm\) for all \(\tin\)
and set  \(\knd\!=\!\ln\blx\) and \(\epsilon\!=\!\tfrac{1}{\blx}\).
Invoking \eqref{eq:avcapacity:bound} to bound \(\RCI{\rnf}{\Wmn{[1,\blx]}}{\epsilon}\)
and
Lemma \ref{lem:avspherepacking} to bound \(\spa{\epsilon}{\rate,\Wmn{[1,\blx]}}\),
we get the following:
For any \(\blx\geq 10\), if 
\begin{align}
\notag
\tfrac{1}{\blx}\ln\tfrac{M}{L}
&\geq 
\tfrac{\ln 16\sqrt{\blx}}{\blx}
+\RC{\rnf}{\Wm}
+\tfrac{\RC{\rnf}{\Wm}+  13.2 \rnf (\RC{\sfrac{1}{2}}{\Wm}\vee \ln\blx)}{(\blx-1)\rnf(1-\rnf)}
\end{align}
for a \(\rnf\!\in\!(\sfrac{1}{\blx},1)\) then any \((M,L)\) channel code on \(\Wmn{[1,\blx]}\) satisfies
\begin{align}
\label{eq:lem:spb-product:stationary}
\Pe
&\geq \left(\tfrac{(\sfrac{L}{M})^{\frac{1}{(\blx-1)\rnf}}}{16 e^{2}(\blx\vee e^{\RC{\sfrac{1}{2}}{\Wm}})^{29}}\right)^{\sfrac{1}{\rnf}}
e^{-\blx\spe{\frac{1}{\blx}\ln\frac{M}{L},\Wm}}.
\end{align}

\begin{proof}[Proof of Lemma \ref{lem:spb-product} and  \eqref{eq:lem:spb-product-alt}]
Let \((\enc,\dec)\) be an \((M,L)\) channel code on \(\Wmn{[1,\blx]}\). 
In order to avoid lengthy expressions we denote \(\Wmn{[1,\blx]}(\enc(\dmes))\)
by \(\wma{}{\dmes}\) and its marginal in \(\pmea{\outA_{\tin}}\)
by \(\wma{\tin}{\dmes}\).
	Let us describe \(\vma{\rno,\tin}{\dmes}\in\pmea{\outA_{\tin}}\) 
	through its Radon-Nikodym derivative:
	\begin{align}
	\notag
	\der{\vma{\rno,\tin}{\dmes}}{\rfm}
	&\!\DEF\!e^{(1-\rno)\RD{\rno}{\wma{\tin}{\dmes}}{\qma{\rno,\Wmn{\tin}}{\epsilon}}} 
	\!\left(\!\der{\wma{\tin}{\dmes}}{\rfm}\!\right)^{\!\rno}
	\left(\!\der{\qma{\rno,\Wmn{\tin}}{\epsilon}}{\rfm}\!\right)^{\!1-\rno}
	\end{align}
	where \(\qma{\rno,\Wmn{\tin}}{\epsilon}\) is the average \renyi center of \(\Wmn{\tin}\)
	and \(\rfm\) is any probability measure satisfying  both \(\wma{\tin}{\dmes}\AC\rfm\) and 
	\(\qma{\rno,\Wmn{\tin}}{\epsilon}\AC\rfm\).
	
	Let \(\cla{\rno,\tin}{\dmes}\) be a random variable defined as
	\begin{align}
	\notag
	\cla{\rno,\tin}{\dmes}
	&\DEF \ln \der{(\wma{\tin}{\dmes})_{ac}}{\qma{\rno,\Wmn{\tin}}{\epsilon}}-\EXS{\vma{\rno,\tin}{\dmes}}{\ln \der{(\wma{\tin}{\dmes})_{ac}}{\qma{\rno,\Wmn{\tin}}{\epsilon}}}
	\end{align}
	where \((\wma{\tin}{\dmes})_{ac}\) is the component of \(\wma{\tin}{\dmes}\) 
	that is absolutely continuous in \(\qma{\rno,\Wmn{\tin}}{\epsilon}\). 
	Note that \(\cla{\rno,\tin}{\dmes}\) can also be written as follows:
	\begin{align}
	\label{eq:spb-product-01}
	\cla{\rno,\tin}{\dmes}
	&=
	\tfrac{1}{\rno-1}
	\left(\ln \der{\vma{\rno,\tin}{\dmes}}{\wma{\tin}{\dmes}}-\RD{1}{\vma{\rno,\tin}{\dmes}}{\wma{\tin}{\dmes}}\right)
	\\
	\label{eq:spb-product-02}
	&=
	\tfrac{1}{\rno}
	\left(\ln \der{\vma{\rno,\tin}{\dmes}}{\qma{\rno,\Wmn{\tin}}{\epsilon}}-\RD{1}{\vma{\rno,\tin}{\dmes}}{\qma{\rno,\Wmn{\tin}}{\epsilon}}\right).
	\end{align}
	Let \(\qmn{\rno}\) and \(\vma{\rno}{\dmes}\) be the probability measures defied as
	\begin{align}
	\notag
	\qmn{\rno}
	&\DEF \bigotimes\nolimits_{\tin=1}^{\blx} \qma{\rno,\Wmn{\tin}}{\epsilon},
	\\
	\notag
	\vma{\rno}{\dmes} 
	&\DEF \bigotimes\nolimits_{\tin=1}^{\blx} \vma{\rno,\tin}{\dmes}.
	\end{align}
	Let \(\cla{\rno}{\dmes}\) be a random variable in 
	\((\outS_{1}^{\blx},\outA_{1}^{\blx},\vma{\rno}{\dmes})\)
	\begin{align}
	\notag
	\cla{\rno}{\dmes}
	&\DEF \sum\nolimits_{\tin=1}^{\blx} \cla{\rno,\tin}{\dmes}.
	\end{align}
	As a result of \eqref{eq:spb-product-01}, \eqref{eq:spb-product-02}, and  the product structure of 
	\(\qmn{\rno}\), \(\vma{\rno}{\dmes}\), and \(\wma{}{\dmes}\) we have
	\begin{align}
	\label{eq:spb-product-03}
	\cla{\rno}{\dmes}
	&\!=\!
	\tfrac{1}{\rno-1}
	\left[\ln \der{\vma{\rno}{\dmes}}{\wma{}{\dmes}}\!-\!\RD{1}{\vma{\rno}{\dmes}}{\!\wma{}{\dmes}}\right],
	\\
	\label{eq:spb-product-04}
	&\!=\!
	\tfrac{1}{\rno}
	\left[\ln \der{\vma{\rno}{\dmes}}{\qmn{\rno}}\!-\!\RD{1}{\vma{\rno}{\dmes}}{\qmn{\rno}}\right].
	\end{align}
	Let \(\oev_{\dmes}\!\in\!\outA_{1}^{\blx}\)  be  \(\oev_{\dmes}\!\DEF\!\{\dout\in\outS_{1}^{\blx}:\dmes\in\dec(\dout)\}\). Then 
	for any \(\rno\!\in\!(0,1)\) and real numbers \(\tau_{1}\) and \(\tau_{2}\) we have  
	\begin{align}
	\notag
	\Pem{\dmes}
	&\!\geq\!e^{-\RD{1}{\vma{\rno}{\dmes}}{\wma{}{\dmes}}-\tau_{1}}
	\EXS{\vma{\rno}{\dmes}}{\IND{\outS_{1}^{\blx}\setminus \oev_{\dmes}}\!
	\IND{\cla{\rno}{\dmes}\geq \frac{\tau_{1}}{\rno-1}}},
	\\
	\notag
	\qmn{\rno}(\!\oev_{\dmes}\!)
	&\!\geq\!e^{-\RD{1}{\vma{\rno}{\dmes}}{\qmn{\rno}}-\tau_{2}}
	\EXS{\vma{\rno}{\dmes}}{\IND{\oev_{\dmes}}\! 
	\IND{\cla{\rno}{\dmes}\leq \frac{\tau_{2}}{\rno}}}.
	\end{align}
	Then for \(\tau_{1}=\tfrac{\gamma}{\rno}\) and
	\(\tau_{2}=\tfrac{\gamma}{1-\rno}\) using \eqref{eq:spb-product-03} and \eqref{eq:spb-product-04} we get
	\begin{align}
	\notag
	&
	\Pem{\dmes}
	e^{\RD{1}{\vma{\rno}{\dmes}}{\wma{}{\dmes}}+\frac{\gamma}{\rno}}
	+
	\qmn{\rno}(\oev_{\dmes})
	e^{\RD{1}{\vma{\rno}{\dmes}}{\qmn{\rno}}+\frac{\gamma}{1-\rno}}
	\\
	\label{eq:spb-product-05}
	&
	\hspace{2.5cm}\geq
	\EXS{\vma{\rno}{\dmes}}{\IND{\frac{-\gamma}{(1-\rno)\rno}\leq \cla{\rno}{\dmes}\leq \frac{\gamma}{(1-\rno)\rno}}}.
	\end{align}
	The random variables \(\cla{\rno,\tin}{\dmes}\) are zero mean in the probability space 
	\((\outS_{1}^{\blx},\outA_{1}^{\blx},\vma{\rno}{\dmes})\) by construction.
	Furthermore, they are jointly independent because of the product structure of the probability measures 
	\(\wma{}{\dmes}\), \(\qmn{\rno}\), and \(\vma{\rno}{\dmes}\).
	Thus we can apply Lemma \ref{lem:berryesseenN} to bound the right hand side of \eqref{eq:spb-product-05} 
	from below, if we can show that \(\gamma\) defined in \eqref{eq:lem:spb-product:gamma}
	is large enough. To that end, first note that 
	\(\EXS{\vma{\rno}{\dmes}}{\abs{\cla{\rno,\tin}{\dmes}}^{\knd}}=\EXS{\vma{\rno,\tin}{\dmes}}{\abs{\cla{\rno,\tin}{\dmes}}^{\knd}}\)
	by construction. Then Lemma \ref{lem:MomentBound} implies
	\begin{align}
	\label{eq:spb-product-06}
	\EXS{\vma{\rno}{\dmes}}{\abs{\cla{\rno,\tin}{\dmes}}^{\knd}}
	&\leq 3 \left[\tfrac{\left((1-\rno)\RD{\rno}{\wma{\tin}{\dmes}}{\qma{\rno,\Wmn{\tin}}{\epsilon}} \right)\vee \knd}{\rno (1-\rno)}\right]^{\knd}.
	\end{align}
	We can bound \(\RD{\rno}{\wma{\tin}{\dmes}}{\qma{\rno,\Wmn{\tin}}{\epsilon}}\) using 
	\eqref{eq:lem:avcapacityB} of Lemma \ref{lem:avcapacity}
	\begin{align}
	\label{eq:spb-product-07}
	(1-\rno)\RD{\rno}{\wma{\tin}{\dmes}}{\qma{\rno,\Wmn{\tin}}{\epsilon}}
	&\leq  \tfrac{\RC{1/2}{\Wmn{\tin}}}{1-\epsilon}.
	\end{align}
	Using the definition of \(\gamma\) given in 
	\eqref{eq:lem:spb-product:gamma} and 
	together with \eqref{eq:spb-product-06} and \eqref{eq:spb-product-07} we get
	\begin{align}
	\notag
	3\left(\sum\nolimits_{\tin=1}^{\blx}\EXS{\vma{\rno}{\dmes}}{\abs{\cla{\rno,\tin}{\dmes}}^{\knd}} \right)^{\sfrac{1}{\knd}}
	&\leq \tfrac{\gamma}{\rno(1-\rno)}. 
	\end{align}
	Then \eqref{eq:spb-product-05} and Lemma \ref{lem:berryesseenN} implies
	\begin{align}
	\label{eq:spb-product-08}
	\Pem{\dmes}
	e^{\RD{1}{\vma{\rno}{\dmes}}{\wma{}{\dmes}}+\frac{\gamma}{\rno}}
	+
	\qmn{\rno}(\oev_{\dmes})
	e^{\RD{1}{\vma{\rno}{\dmes}}{\qmn{\rno}}+\frac{\gamma}{1-\rno}}
	&\geq \tfrac{1}{2\sqrt{\blx}}.
	\end{align}
	On the other hand,  the product structure of the probability measures \(\wma{}{\dmes}\) and \(\qmn{\rno}\)
	implies
	\begin{align}
	\notag
	\RD{\rno}{\wma{}{\dmes}}{\qmn{\rno}}
	&=\sum\nolimits_{\tin=1}^{\blx} 
	\RD{\rno}{\wma{\tin}{\dmes}}{\qma{\rno,\Wmn{\tin}}{\epsilon}}.
	\end{align}
	Bounding each term in the sum using  \eqref{eq:lem:avcapacity} of Lemma \ref{lem:avcapacity} 
	and then invoking \eqref{eq:avcapacity:additivity} we get
	\begin{align}
	\label{eq:spb-product-09}
	\RD{\rno}{\wma{}{\dmes}}{\qmn{\rno}}
	&\leq \RCI{\rno}{\Wmn{[1,\blx]}}{\epsilon}.
	\end{align}
	
	In the following, we show that the message set has a size \(\approx\tfrac{M \epsilon}{\blx}\) 
	subset in which all the messages has a conditional error probability greater than
	\(\approx (\tfrac{e^{-2\gamma}}{\sqrt{\blx}})^{\frac{1}{\rnf}}
	(\tfrac{\epsilon}{\blx})^{\frac{1-\rnf}{\rnf}}
	e^{-\spa{\epsilon}{\rate,\Wmn{[1,\blx]}}}\).
	The existence of such a subset will imply \eqref{eq:lem:spb-product}.
	We prove the existence of such a subset using \eqref{eq:spb-product-08}, \eqref{eq:spb-product-09},
	the intermediate value theorem, and pigeon hole arguments.
	Let us consider the subset of the message set, \(\mesS_1\) defined as follows:
	\begin{align}
	\notag
	\mesS_1
	&\!\DEF\!\!\left\{\!\dmes\!:\!\!\inf\limits_{\rno\in(\rnf,1)}
	\!\left[\!(\qmn{\rno}(\!\oev_{\dmes}\!)\!+\!\tfrac{L}{M})
	e^{\RD{1}{\vma{\rno}{\dmes}}{\qmn{\rno}}+\frac{\gamma}{1-\rno}}\!\right]\!
	\geq\!\tfrac{1}{4\sqrt{\blx}}\!\right\}\!.
	\end{align}
	First, we  bound the size of \(\mesS_{1}\) from above. We can bound \(\RD{1}{\vma{\rno}{\dmes}}{\qmn{\rno}}\)
	using the definitions of \(\vma{\rno,\tin}{\dmes}\), \(\vma{\rno}{\dmes}\), \(\qmn{\rno}\),
	the non-negativity of the \renyi divergence for probability measures,
	which is implied by Lemma \ref{lem:divergence-pinsker},
	and  \eqref{eq:spb-product-09}, as follows
	\begin{align}
	\label{eq:spb-product-010}
	\RD{1}{\vma{\rno}{\dmes}}{\qmn{\rno}}
	&=\RD{\rno}{\wma{}{\dmes}}{\qmn{\rno}}
	-\tfrac{\rno}{1-\rno}\RD{1}{\vma{\rno}{\dmes}}{\wma{}{\dmes}}
	\\
	\label{eq:spb-product-011}
	&\leq \RCI{\rno}{\Wmn{[1,\blx]}}{\epsilon}
	\end{align}
	for all \(\dmes \in \mesS,\rno\in (0,1)\).
	Then summing the inequality in the condition for membership of \(\mesS_{1}\) over the members of 
	\(\mesS_{1}\) we get 
	\begin{align}
	\notag
	2 L e^{\RCI{\rno}{\Wmn{[1,\blx]}}{\epsilon}+\frac{\gamma}{1-\rno}}
	&\geq \abs{\mesS_{1}}  \tfrac{1}{4\sqrt{\blx}}
	&~&&
	&\forall \rno\in(\rnf,1).
	\end{align}
	Then \(\tfrac{\abs{\mesS_{1}}  }{L}
	\leq 8\sqrt{\blx} e^{\RCI{\rnf}{\Wmn{[1,\blx]}}{\epsilon}+\frac{\gamma}{1-\rnf}}\).
	Consequently \(\abs{\mesS_{1}}<\tfrac{M}{2}\) because 
	\(\tfrac{M}{L}> 16 \sqrt{\blx} e^{\RCI{\rnf}{\Wmn{[1,\blx]}}\epsilon+\frac{\gamma}{1-\rnf}}\) 
	by the hypothesis. 
	
	On the other hand, as a result of the definition of \(\mesS_{1}\), for each \(\dmes\in \mesS\setminus\mesS_{1}\) there is 
	an \(\rno\in(\rnf,1)\) satisfying
	\begin{align}
	\notag
	(\qmn{\rno}(\oev_{\dmes})+\tfrac{L}{M})e^{\RD{1}{\vma{\rno}{\dmes}}{\qmn{\rno}}+\frac{\gamma}{1-\rno}}
	&<\tfrac{1}{4\sqrt{\blx}}.
	\end{align}
	Furthermore, \(\qmn{\rno}\) is continuous in \(\rno\) for the total variation topology 
	on \(\pmea{\outA_{1}^{\blx}}\)
	by construction.\footnote{In particular 
		\(\lon{\qmn{\rno}-\qmn{\rnt}}\!\leq\! \sqrt{8 \blx \ln \tfrac{\epsilon}{\epsilon-\abs{\rno-\rnt}}}\)
		for all \(\rno\) and \(\rnt\) in \((0,1)\) such that \(\abs{\rnt-\rno}<\epsilon\) by 
		\eqref{eq:shiryaev}, the product structure of \(\qmn{\rno}\),
		and the definition of the average \renyi center,
		which implies
		\(\lon{\qma{\rno,\Wmn{\tin}}{\epsilon}-\qma{\rnt,\Wmn{\tin}}{\epsilon}}\!\leq\!\tfrac{2\abs{\rno-\rnt}}{\epsilon}\)
		for all \(\Wmn{\tin}\).}
	Then \(\RD{1}{\vma{\rno}{\dmes}}{\qmn{\rno}}\) is continuous in \(\rno\) by Lemma \ref{lem:tilting}
	and  \(\qmn{\rno}\!(\oev_{\dmes})\) is continuous in \(\rno\), as well.
	Thus \((\qmn{\rno}(\oev_{\dmes})\!+\!\tfrac{L}{M})e^{\RD{1}{\vma{\rno}{\dmes}}{\qmn{\rno}}+\frac{\gamma}{1-\rno}}\)
	is continuous in \(\rno\).
	Since \(\lim_{\rno\uparrow1}(\qmn{\rno}(\oev_{\dmes})+\tfrac{L}{M})e^{\RD{1}{\vma{\rno}{\dmes}}{\qmn{\rno}}+\frac{\gamma}{1-\rno}}=\infty\),
	using the intermediate value theorem \cite[4.23]{rudin}
	we can conclude that for each \(\dmes\in \mesS\setminus\mesS_{1}\) 
	there exists an \(\rno_{\dmes}\!\in\!(\rnf,1)\) such that 
	\begin{align}
	\label{eq:spb-product-012}
	(\qmn{\rno_{\dmes}}(\oev_{\dmes})+\tfrac{L}{M})e^{\RD{1}{\vma{\rno_{\dmes}}{\dmes}}{\qmn{\rno_{\dmes}}}+\frac{\gamma}{1-\rno_{\dmes}}}
	&=\tfrac{1}{4\sqrt{\blx}}.
	\end{align}
	Then for any positive integer \(K\), there exists a length \(\tfrac{1}{K}\) interval with  
	\(\lceil\tfrac{M}{2 K}\rceil\) or more \(\rno_{\dmes}\)'s. 
	Let \([\rnt,\rnt+\tfrac{1}{K}]\) be the aforementioned interval,
	\(\tilde{\epsilon}\) and \(\tilde{\rno}\) be real numbers in 
	\((0,1)\) 
	\begin{align}
	\notag
	\tilde{\epsilon}
	&\DEF\tfrac{1}{K}+\epsilon(1-\tfrac{1}{K}),
	&
	\notag
	\tilde{\rno}
	&\DEF\tfrac{1-\epsilon}{1-\tilde{\epsilon}}\rnt.
	\end{align}
	Then \(\qma{\rno,\Wmn{\tin}}{\epsilon}\leq\tfrac{\tilde{\epsilon}}{\epsilon}\qma{\tilde{\rno},\Wmn{\tin}}{\tilde{\epsilon}}\)
	for all \(\rno\in[\rnt,\rnt+\tfrac{1}{K}]\),	
	by the definition of the average \renyi center.
	Thus 
	\begin{align}
	\notag
	\qmn{\rno}
	&\leq
	(\tfrac{\tilde{\epsilon}}{\epsilon})^{\blx}\tilde{\mQ}
	&
	&\forall \rno\in[\rnt,\rnt+\tfrac{1}{K}]
	\end{align}
	where \(\tilde{\mQ}\in\pmea{\outA_{1}^{\blx}}\) is defined as follows
	\begin{align}
	\notag
	\tilde{\mQ}
	&\DEF\bigotimes\nolimits_{\tin=1}^{\blx} \qma{\tilde{\rno},\Wmn{\tin}}{\tilde{\epsilon}}.
	\end{align}
	On the other hand, at least half of the messages with \(\rno_{\dmes}\)'s in \([\rnt,\rnt+\tfrac{1}{K}]\)  
	satisfy \(\tilde{\mQ}(\oev_{\dmes})\leq 2 \tfrac{L}{\lceil \sfrac{M}{2 K} \rceil}\).
	Then at least \(\lceil\tfrac{1}{2}\lceil\tfrac{M}{2 K}\rceil\rceil\) messages
	with \(\rno_{\dmes}\)'s in \([\rnt,\rnt+\tfrac{1}{K}]\) satisfy
	\begin{align}
	\notag
	\qmn{\rno_{\dmes}}(\oev_{\dmes})
	&\leq \tfrac{4L}{M} K \left(1+\tfrac{1}{K}\tfrac{1-\epsilon}{\epsilon}\right)^{\blx}
	\end{align}
	Note that \(\tfrac{\blx(1-\epsilon)}{\epsilon}\!>\!1\) because 
	\(\epsilon\!<\!\tfrac{\blx}{\blx+1}\). Then we can set
	\(K\) to \(\!\lfloor\tfrac{\blx(1-\epsilon)}{\epsilon}\rfloor\) and use the identity 
	\((1+\dsta)^{\sfrac{1}{\dsta}}<e\) to get
	\begin{align}
	\label{eq:spb-product-013}
	\qmn{\rno_{\dmes}}(\oev_{\dmes})
	&\leq \tfrac{4L}{M}\tfrac{\blx(1-\epsilon)}{\epsilon} e^{2}.
	\end{align}
	Then using  \eqref{eq:spb-product-012}, 
	we can bound \(\RD{1}{\vma{\rno_{\dmes}}{\dmes}}{\qmn{\rno_{\dmes}}}\) for all 
	\(\dmes\) satisfying \eqref{eq:spb-product-013} as follows
	\begin{align}
	\label{eq:spb-product-014}
	e^{\RD{1}{\vma{\rno_{\dmes}}{\dmes}}{\qmn{\rno_{\dmes}}}+\frac{\gamma}{1-\rno_{\dmes}}}
	&\geq  \tfrac{1}{4\sqrt{\blx}} \tfrac{\epsilon}{4 e^{2} \blx}\tfrac{M}{L}.
	\end{align} 
	On the other hand  we can bound \(\Pem{\dmes}\) using \eqref{eq:spb-product-08} and \eqref{eq:spb-product-012} 
	\begin{align}
	\label{eq:spb-product-015}
	\Pem{\dmes}
	e^{\RD{1}{\vma{\rno_{\dmes}}{\dmes}}{\enc(\dmes)}+\frac{\gamma}{\rno_{\dmes}}}
	&\geq \tfrac{1}{4\sqrt{\blx}}.
	\end{align}
	Using  \eqref{eq:spb-product-09}, \eqref{eq:spb-product-010}, \eqref{eq:spb-product-014}, and \eqref{eq:spb-product-015} we get
	\begin{align}
	\label{eq:spb-product-016}
	\Pem{\dmes}
	e^{\frac{1-\rno_{\dmes}}{\rno_{\dmes}}\RCI{\rno_{\dmes}}{\Wmn{[1,\blx]}}{\epsilon}+2\frac{\gamma}{\rno_{\dmes}}}
	&\geq (\tfrac{1}{4\sqrt{\blx}})^{\frac{1}{\rno_{\dmes}}} \left(\tfrac{\epsilon}{4 e^{2}\blx}\tfrac{M}{L}\right)^{\frac{1-\rno_{\dmes}}{\rno_{\dmes}}}.
	\end{align}
	Hence, for all  \(\dmes\) satisfying \eqref{eq:spb-product-013} as a result of 
	the definition of \(\spa{\epsilon}{\rate,\Wm}\) given in \eqref{eq:def:avspherepacking}  we have
	\begin{align}
	\notag
	\Pem{\dmes}
	&\geq
	e^{-2\frac{\gamma}{\rnf}}
	(\tfrac{1}{4\sqrt{\blx}})^{\frac{1}{\rnf}} 
	\left(\tfrac{\epsilon}{4 e^{2}\blx}\right)^{\frac{1-\rnf}{\rnf}}
	e^{-\spa{\epsilon}{\rate,\Wmn{[1,\blx]}}}
	\end{align}
	where \(\rate=\ln \tfrac{M}{L}\).
	Since there are at least \(\lceil \tfrac{M}{4 K} \rceil\) such messages we get the inequality 
	given in \eqref{eq:lem:spb-product}.
	
	In order to obtain \eqref{eq:lem:spb-product-alt},
	we change the analysis after \eqref{eq:spb-product-016}.
%	and introduce an approximation error term to the rate of the averaged sphere packing exponent term:
	For all \(\dmes\) satisfying \eqref{eq:spb-product-013}, as a result of \eqref{eq:spb-product-016} and 
	the definition of \(\spa{\epsilon}{\rate,\Wm}\) given 
	in \eqref{eq:def:avspherepacking}  we have
	\begin{align}
	\notag
	\Pem{\dmes}
	&\geq (\tfrac{1}{4\sqrt{\blx}}) e^{-2\gamma} e^{-\spa{\epsilon}{\rate,\Wmn{[1,\blx]}}}  
	\end{align}
	where \(\rate=\ln \tfrac{M}{L}-2\gamma-\ln \tfrac{16 e^{2}\blx^{3/2}}{\epsilon}\).	

	Since there are at least \(\lceil \tfrac{M}{4 K} \rceil\) such messages we get the 
	bound given in \eqref{eq:lem:spb-product-alt}.
\end{proof}

The \renyi centers and capacities of certain channels satisfy 
\eqref{eq:lem:spb-monotone-center:hypothesis}  
for all \(\rno\leq\dsta\)  in \((0,1)\), 
e.g.
the Poisson channel \(\Pcha{\tlx,\mA,\mB,\costc}\)
whose input set is described in \eqref{eq:def:poissonchannel-mean}. 
The \renyi capacity and center 
of \(\Pcha{\tlx,\mA,\mB,\costc}\) are determined in
\cite[Example \ref*{A-eg:poissonchannel-mean}]{nakiboglu19A}:
\begin{align}
\notag
\RC{\rno}{\Pcha{\tlx,\mA,\mB,\costc}}
&\!=\!\begin{cases}
\tfrac{\rno}{\rno-1}
\left[\left(\tfrac{\costc-\mA}{\mB-\mA} \mB^{\rno}+\tfrac{\mB-\costc}{\mB-\mA} \mA^{\rno}\right)^{\sfrac{1}{\rno}}-\costc
\right]\tlx
&\rno\!\neq\!1
\\
\left[
\tfrac{\costc-\mA}{\mB-\mA} \mB\ln \tfrac{\mB}{\costc}
+\tfrac{\mB-\costc}{\mB-\mA} \mA\ln \tfrac{\mA}{\costc}
\right]\tlx
&\rno\!=\!1
\end{cases}.
\end{align}
The order \(\rno\) \renyi center of \(\Pcha{\tlx,\mA,\mB,\costc}\) is the stationary 
Poisson processes with intensity \((\tfrac{\costc-\mA}{\mB-\mA} \mB^{\rno}+\tfrac{\mB-\costc}{\mB-\mA} \mA^{\rno})^{\sfrac{1}{\rno}}\).
For channels satisfying 
\eqref{eq:lem:spb-monotone-center:hypothesis}
for all \(\rno\leq\dsta\)  in \((0,1)\), 
the averaging scheme is not needed to establish the SPB.
In addition, the resulting bound is sharper than the one 
given in Lemma \ref{lem:spb-product}.

\begin{lemma}\label{lem:spb-monotone-center}
Let \(\Wmn{[1,\blx]}\) be a product channel 
for an \(\blx\in\integers{+}\) satisfying 
\(\RC{1}{\Wmn{[1,\blx]}}\geq\tfrac{\rnf^{2}}{2}\)
for a \(\rnf\in(0,1)\) and
\eqref{eq:lem:spb-monotone-center:hypothesis}
for all \(\rno,\dsta\) satisfying \(\rnf\!\leq\!\rno\!\leq\!\dsta\!<1\),
\(\knd\) satisfy \(\knd\!\geq\!3\), and \(\gamma\) be
\begin{align}
\label{eq:lem:spb-monotone-center:gamma}
\gamma
&\DEF 3\sqrt[\knd]{3}\left(\sum\nolimits_{\tin=1}^{\blx} \left(\RC{1/2}{\Wmn{\tin}}\vee \knd \right)^{\knd} \right)^{\sfrac{1}{\knd}}.
\end{align}
If \(M\), \(L\) are integers satisfying 
\(\tfrac{M}{L}\!>\!16 \sqrt{\blx} e^{\RC{\rnf}{\Wmn{[1,\blx]}}+\frac{\gamma}{1-\rnf}}\), 
then any \((M,L)\) channel code on \(\Wmn{[1,\blx]}\) 
satisfies 
\begin{align}
\label{eq:lem:spb-monotone-center}
\Pem{av}
&\geq \tfrac{\rnf^{2} e^{-2\gamma}}{32\blx^{1/2}\RC{1}{\Wmn{[1,\blx]}}} e^{-\spe{\rate,\Wm}}  
\end{align}
where \(\rate=\ln \tfrac{M}{L}-\ln \tfrac{95\blx^{1/2}\RC{1}{\Wmn{[1,\blx]}}}{\rnf^{2}e^{-2\gamma}}\).
\end{lemma}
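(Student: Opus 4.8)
The plan is to rerun the argument behind Lemma~\ref{lem:spb-product} with the genuine \renyi centers \(\qmn{\rno,\Wmn{\tin}}\) in place of the averaged ones \(\qma{\rno,\Wmn{\tin}}{\epsilon}\), using \eqref{eq:lem:spb-monotone-center:hypothesis} to replace the averaging machinery of \S\ref{sec:averaging} by a direct domination. We may assume \(\RC{1}{\Wmn{[1,\blx]}}<\infty\), since otherwise the coefficient in \eqref{eq:lem:spb-monotone-center} vanishes and there is nothing to prove. Fix an \((M,L)\) code \((\enc,\dec)\) on \(\Wmn{[1,\blx]}\), write \(\wma{}{\dmes}=\Wmn{[1,\blx]}(\enc(\dmes))\) with \(\tin\)-marginals \(\wma{\tin}{\dmes}\), and set \(\oev_{\dmes}=\{\dout:\dmes\in\dec(\dout)\}\). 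For \(\rno\in(\rnf,1)\) let \(\vma{\rno,\tin}{\dmes}\) be the order \(\rno\) tilted measure between \(\wma{\tin}{\dmes}\) and \(\qmn{\rno,\Wmn{\tin}}\), \(\cla{\rno,\tin}{\dmes}\) the associated centered log-likelihood ratio, and form the products \(\qmn{\rno}\DEF\bigotimes\nolimits_{\tin}\qmn{\rno,\Wmn{\tin}}\), which equals \(\qmn{\rno,\Wmn{[1,\blx]}}\) by Lemma~\ref{lem:capacityProduct}, \(\vma{\rno}{\dmes}\DEF\bigotimes\nolimits_{\tin}\vma{\rno,\tin}{\dmes}\), and \(\cla{\rno}{\dmes}\DEF\sum_{\tin}\cla{\rno,\tin}{\dmes}\). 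Exactly as in the steps leading to \eqref{eq:spb-product-08}, the \(\cla{\rno,\tin}{\dmes}\) are independent and zero mean under \(\vma{\rno}{\dmes}\); Lemma~\ref{lem:MomentBound} combined with \((1-\rno)\RD{\rno}{\wma{\tin}{\dmes}}{\qmn{\rno,\Wmn{\tin}}}\leq(1-\rno)\RC{\rno}{\Wmn{\tin}}\leq\RC{\sfrac{1}{2}}{\Wmn{\tin}}\) (Theorem~\ref{thm:minimax}, \eqref{eq:orderoneovertwo}, Lemma~\ref{lem:capacityO}-(\ref{capacityO-ilsc})) gives \(3(\sum_{\tin}\EXS{\vma{\rno}{\dmes}}{\abs{\cla{\rno,\tin}{\dmes}}^{\knd}})^{1/\knd}\leq\tfrac{\gamma}{\rno(1-\rno)}\) for \(\gamma\) as in \eqref{eq:lem:spb-monotone-center:gamma}, so Lemma~\ref{lem:berryesseenN} yields, for all \(\dmes\) and all \(\rno\in(\rnf,1)\), the analogue of \eqref{eq:spb-product-08} with \(\qmn{\rno,\Wmn{\tin}}\) in place of \(\qma{\rno,\Wmn{\tin}}{\epsilon}\). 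The product structure also gives \(\RD{\rno}{\wma{}{\dmes}}{\qmn{\rno}}=\sum_{\tin}\RD{\rno}{\wma{\tin}{\dmes}}{\qmn{\rno,\Wmn{\tin}}}\leq\RC{\rno}{\Wmn{[1,\blx]}}\), \(\RD{1}{\vma{\rno}{\dmes}}{\qmn{\rno}}\leq\RC{\rno}{\Wmn{[1,\blx]}}\), and \(\tfrac{\rno}{1-\rno}\RD{1}{\vma{\rno}{\dmes}}{\wma{}{\dmes}}=\RD{\rno}{\wma{}{\dmes}}{\qmn{\rno}}-\RD{1}{\vma{\rno}{\dmes}}{\qmn{\rno}}\).

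Next, split the messages as in the proof of Lemma~\ref{lem:spb-product}: let \(\mesS_{1}\) be the set of \(\dmes\) with \(\inf_{\rno\in(\rnf,1)}[(\qmn{\rno}(\oev_{\dmes})+\tfrac{L}{M})e^{\RD{1}{\vma{\rno}{\dmes}}{\qmn{\rno}}+\frac{\gamma}{1-\rno}}]\geq\tfrac{1}{4\sqrt{\blx}}\). Summing this inequality over \(\mesS_{1}\), using \(\RD{1}{\vma{\rno}{\dmes}}{\qmn{\rno}}\leq\RC{\rno}{\Wmn{[1,\blx]}}\), the list-size bound \(\sum_{\dmes\in\mesS_{1}}\qmn{\rno}(\oev_{\dmes})\leq L\), and the continuity of \(\RC{\rno}{\Wmn{[1,\blx]}}\) on \((0,1)\) (Lemma~\ref{lem:capacityO}-(\ref{capacityO-zo})) to let \(\rno\downarrow\rnf\), gives \(\abs{\mesS_{1}}\leq 8\sqrt{\blx}\,L\,e^{\RC{\rnf}{\Wmn{[1,\blx]}}+\frac{\gamma}{1-\rnf}}<\tfrac{M}{2}\) by the hypothesis on \(\tfrac{M}{L}\). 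For each \(\dmes\notin\mesS_{1}\), continuity of \(\qmn{\rno,\Wmn{[1,\blx]}}\) in \(\rno\) for the total variation topology (Lemma~\ref{lem:centercontinuity}), hence of \(\RD{1}{\vma{\rno}{\dmes}}{\qmn{\rno}}\) (Lemma~\ref{lem:tilting}) and of \(\qmn{\rno}(\oev_{\dmes})\), together with \(e^{\gamma/(1-\rno)}\to\infty\) as \(\rno\uparrow1\), lets the intermediate value theorem produce an \(\rno_{\dmes}\in(\rnf,1)\) with \((\qmn{\rno_{\dmes}}(\oev_{\dmes})+\tfrac{L}{M})e^{\RD{1}{\vma{\rno_{\dmes}}{\dmes}}{\qmn{\rno_{\dmes}}}+\frac{\gamma}{1-\rno_{\dmes}}}=\tfrac{1}{4\sqrt{\blx}}\). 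As in Lemma~\ref{lem:spb-product} this forces \(\qmn{\rno_{\dmes}}(\oev_{\dmes})e^{\RD{1}{\vma{\rno_{\dmes}}{\dmes}}{\qmn{\rno_{\dmes}}}+\frac{\gamma}{1-\rno_{\dmes}}}\leq\tfrac{1}{4\sqrt{\blx}}\) and, with the estimate of the previous paragraph, \(\Pem{\dmes}e^{\RD{1}{\vma{\rno_{\dmes}}{\dmes}}{\wma{}{\dmes}}+\frac{\gamma}{\rno_{\dmes}}}\geq\tfrac{1}{4\sqrt{\blx}}\).

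The point of departure from Lemma~\ref{lem:spb-product} is the control of \(\qmn{\rno_{\dmes}}(\oev_{\dmes})\), which there required a pigeonhole on a short order interval followed by domination by a single averaged center; here \eqref{eq:lem:spb-monotone-center:hypothesis} does this directly. Writing \(g(\rno)\DEF\tfrac{1-\rno}{\rno}\RC{\rno}{\Wmn{[1,\blx]}}\), which is nonincreasing and continuous on \((0,1)\) (Lemma~\ref{lem:capacityO}-(\ref{capacityO-zo})) and tends to \(0\) as \(\rno\uparrow1\) since \(\RC{1}{\Wmn{[1,\blx]}}<\infty\), \eqref{eq:lem:spb-monotone-center:hypothesis} reads \(\qmn{\rno,\Wmn{[1,\blx]}}\leq e^{g(\rno)-g(\dsta)}\qmn{\dsta,\Wmn{[1,\blx]}}\) for \(\rnf\leq\rno\leq\dsta<1\). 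Choose \(\rnf=\rnt_{0}<\rnt_{1}<\dots<\rnt_{J-1}<1\) with \(g(\rnt_{j})-g(\rnt_{j+1})\leq1\) for \(j<J-1\) and \(g(\rnt_{J-1})\leq1\), so that \([\rnf,1)\) is covered by \([\rnt_{0},\rnt_{1}],\dots,[\rnt_{J-2},\rnt_{J-1}],[\rnt_{J-1},1)\); then \(J\leq 1+g(\rnf)\leq 1+\tfrac{\RC{1}{\Wmn{[1,\blx]}}}{\rnf}\leq\tfrac{3\RC{1}{\Wmn{[1,\blx]}}}{\rnf^{2}}\), using \(\RC{\rnf}{\Wmn{[1,\blx]}}\leq\RC{1}{\Wmn{[1,\blx]}}\), \(\rnf<1\), and \(\RC{1}{\Wmn{[1,\blx]}}\geq\tfrac{\rnf^{2}}{2}\). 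For \(\rno_{\dmes}\) in a bounded piece \([\rnt_{j},\rnt_{j+1}]\) we get \(\qmn{\rno_{\dmes},\Wmn{[1,\blx]}}\leq e\,\qmn{\rnt_{j+1},\Wmn{[1,\blx]}}\), and for \(\rno_{\dmes}\) in the last piece, applying the inequality with \(\dsta\uparrow1\) and using \(\qmn{\dsta,\Wmn{[1,\blx]}}\to\qmn{1,\Wmn{[1,\blx]}}\) setwise (Lemma~\ref{lem:centercontinuity}, where \(\RC{1}{\Wmn{[1,\blx]}}<\infty\) is needed) gives \(\qmn{\rno_{\dmes},\Wmn{[1,\blx]}}\leq e\,\qmn{1,\Wmn{[1,\blx]}}\). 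Since \(\mesS\setminus\mesS_{1}\) has more than \(\tfrac{M}{2}\) elements, partitioned into the \(J\) pieces according to \(\rno_{\dmes}\), some piece \(j^{*}\) with reference measure \(\tilde{\mQ}\in\{\qmn{\rnt_{j^{*}+1},\Wmn{[1,\blx]}},\qmn{1,\Wmn{[1,\blx]}}\}\) carries at least \(\tfrac{M}{2J}\geq\tfrac{M\rnf^{2}}{6\RC{1}{\Wmn{[1,\blx]}}}\) of them; the list-decoding bound \(\sum\tilde{\mQ}(\oev_{\dmes})\leq L\) then shows at least half of those satisfy \(\qmn{\rno_{\dmes}}(\oev_{\dmes})\leq e\tilde{\mQ}(\oev_{\dmes})\leq\tfrac{12e\,L\,\RC{1}{\Wmn{[1,\blx]}}}{M\rnf^{2}}\), whence \(\qmn{\rno_{\dmes}}(\oev_{\dmes})+\tfrac{L}{M}\leq\tfrac{L}{M}\cdot\tfrac{c_{0}\RC{1}{\Wmn{[1,\blx]}}}{\rnf^{2}}\) for an absolute constant \(c_{0}\).

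Substituting this into the identity \((\qmn{\rno_{\dmes}}(\oev_{\dmes})+\tfrac{L}{M})e^{\RD{1}{\vma{\rno_{\dmes}}{\dmes}}{\qmn{\rno_{\dmes}}}+\frac{\gamma}{1-\rno_{\dmes}}}=\tfrac{1}{4\sqrt{\blx}}\) gives, for these \(\gtrsim\tfrac{M\rnf^{2}}{\RC{1}{\Wmn{[1,\blx]}}}\) messages, \(e^{\RD{1}{\vma{\rno_{\dmes}}{\dmes}}{\qmn{\rno_{\dmes}}}+\frac{\gamma}{1-\rno_{\dmes}}}\geq\tfrac{M\rnf^{2}}{4c_{0}\sqrt{\blx}\,L\,\RC{1}{\Wmn{[1,\blx]}}}\), and combining this with \(\Pem{\dmes}e^{\RD{1}{\vma{\rno_{\dmes}}{\dmes}}{\wma{}{\dmes}}+\frac{\gamma}{\rno_{\dmes}}}\geq\tfrac{1}{4\sqrt{\blx}}\), the identity for \(\RD{1}{\vma{\rno_{\dmes}}{\dmes}}{\wma{}{\dmes}}\), and \(\RD{\rno_{\dmes}}{\wma{}{\dmes}}{\qmn{\rno_{\dmes}}}\leq\RC{\rno_{\dmes}}{\Wmn{[1,\blx]}}\), and then absorbing the factors \(\tfrac{1}{\rno_{\dmes}}\) into a shifted rate exactly as in the passage from \eqref{eq:lem:spb-product} to \eqref{eq:lem:spb-product-alt}, produces \(\Pem{\dmes}\geq\tfrac{e^{-2\gamma}}{4\sqrt{\blx}}e^{-\spe{\rate',\Wmn{[1,\blx]}}}\) with \(\rate'=\ln\tfrac{M}{L}-2\gamma-\ln\tfrac{4c_{0}\sqrt{\blx}\RC{1}{\Wmn{[1,\blx]}}}{\rnf^{2}}\). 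Averaging over the messages for which this holds and again invoking \(\RC{1}{\Wmn{[1,\blx]}}\geq\tfrac{\rnf^{2}}{2}\) to compare \(\rate'\) with \(\rate\) and the coefficient with \(\tfrac{\rnf^{2}e^{-2\gamma}}{32\blx^{1/2}\RC{1}{\Wmn{[1,\blx]}}}\) then yields \eqref{eq:lem:spb-monotone-center}, once the numerical constants are fixed (in particular so that \(4c_{0}\leq95\) and the averaging factor dominates \(\tfrac{\rnf^{2}}{32\RC{1}{\Wmn{[1,\blx]}}}\)). The main obstacle I anticipate is precisely this last constant-tracking — pinning down \(32\) and \(95\) through the chain of pigeonhole, ``at least half'', and ceiling steps — together with the separate treatment of orders \(\rno_{\dmes}\) near \(1\) through \(\qmn{1,\Wmn{[1,\blx]}}\); everything else is a faithful but simpler rerun of the proof of Lemma~\ref{lem:spb-product} with the averaging layer removed.
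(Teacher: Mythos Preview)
Your approach is correct and tracks the paper's closely: replace the averaged centers by the genuine \(\qmn{\rno,\Wmn{\tin}}\), rerun the proof of Lemma~\ref{lem:spb-product} up to the analogue of \eqref{eq:spb-product-012}, then pigeonhole on the orders \(\rno_{\dmes}\) so that on the chosen piece the monotonicity hypothesis \eqref{eq:lem:spb-monotone-center:hypothesis} gives \(\qmn{\rno_{\dmes}}\leq e\,\tilde{\mQ}\) for a single reference measure \(\tilde{\mQ}\). The one substantive difference is in how you do the pigeonhole. The paper partitions \((\rnf,1)\) into \(K=\lfloor 2\RC{1}{\Wmn{[1,\blx]}}/\rnf^{2}\rfloor\) \emph{equal-length} subintervals of length \(1/K\); on the chosen interval \([\rnt-\tfrac{1}{K},\rnt]\) one has, by monotonicity of \(\RC{\rno}{\Wm}\) and \(\rno,\rnt\geq\rnf\), that \(g(\rno)-g(\rnt)\leq(\tfrac{1}{\rno}-\tfrac{1}{\rnt})\RC{1}{\Wmn{[1,\blx]}}\leq\tfrac{\RC{1}{\Wmn{[1,\blx]}}}{K\rnf^{2}}\leq 1\), so \(\qmn{\rno}\leq e\,\qmn{\rnt}\) with \(\rnt<1\). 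You instead partition by level sets of \(g\), which forces you to treat the tail \([\rnt_{J-1},1)\) separately by passing to the limit \(\dsta\uparrow 1\) and invoking \(\qmn{\dsta,\Wmn{[1,\blx]}}\to\qmn{1,\Wmn{[1,\blx]}}\) from Lemma~\ref{lem:centercontinuity} under the extra assumption \(\RC{1}{\Wmn{[1,\blx]}}<\infty\). This is valid, but the paper's equal-length version is cleaner: the upper endpoint of the selected interval stays strictly below \(1\), so \eqref{eq:lem:spb-monotone-center:hypothesis} applies directly without any limiting argument or appeal to \(\qmn{1,\Wmn{[1,\blx]}}\).

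On constants: the paper's choice of \(K\) yields \(\qmn{\rno_{\dmes}}(\oev_{\dmes})\leq\tfrac{4eLK}{M}\), and then the remaining steps plus the numerical inequality \(8(4e+1)\leq 95\) produce the stated \(32\) and \(95\). Your count \(J\leq 3\RC{1}{\Wmn{[1,\blx]}}/\rnf^{2}\) is of the same order but slightly looser; as you anticipated, matching the exact constants requires redoing the bookkeeping with the paper's \(K\).
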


\begin{proof}[Proof of Lemma \ref{lem:spb-monotone-center}]
We use \(\qmn{\rno,\Wmn{\tin}}\)'s rather than \(\qma{\rno,\Wmn{\tin}}{\epsilon}\)'s to define 
\(\qmn{\rno}\); thus \(\qmn{\rno}\) is equal to \(\qmn{\rno,\Wmn{[1,\blx]}}\).
We repeat the analysis of Lemma \ref{lem:spb-product} up to \eqref{eq:spb-product-012}:
There are at least \(\lceil\!\tfrac{M}{2}\!\rceil\) messages satisfying the following identity for some 
\(\rno_{\dmes}\!\in\!(\rnf,1)\)
\begin{align}
%\label{eq:spb-monotone-center-012-short}
\notag
(\qmn{\rno_{\dmes}}(\oev_{\dmes})+\tfrac{L}{M})e^{\RD{1}{\vma{\rno_{\dmes}}{\dmes}}{\qmn{\rno_{\dmes}}}+\frac{\gamma}{1-\rno_{\dmes}}}
&=\tfrac{1}{4\sqrt{\blx}}.
\end{align}
Let \(K\) be \(\lfloor \tfrac{2\RC{1}{\Wmn{[1,\blx]}}}{\rnf^{2}} \rfloor\).
Note that there exists a length \(\tfrac{1}{K}\) interval with  
\(\lceil\tfrac{M}{2 K}\rceil\) or more \(\rno_{\dmes}\)'s. 
Let \([\rnt\!-\!\tfrac{1}{K},\rnt]\) be the aforementioned interval;
then for all \(\rno\in[\rnt\!-\!\tfrac{1}{K},\rnt]\)
we have
\(\tfrac{1-\rno}{\rno}\RC{\rno}{\Wmn{[1,\blx]}}-\tfrac{1-\rnt}{\rnt}\RC{\rnt}{\Wmn{[1,\blx]}}\leq 1\)
by the monotonicity of \(\RC{\rno}{\Wm}\) in \(\rno\), 
i.e. Lemma \ref{lem:capacityO}-(\ref{capacityO-ilsc}).
Then as a result of the hypothesis of the lemma we have
\(\qmn{\rno}\!\leq\!e \qmn{\rnt}\) for all \(\rno\) in \([\rnt\!-\!\tfrac{1}{K},\rnt]\).
On the other hand at least half of the messages with \(\rno_{\dmes}\)'s in \([\rnt\!-\!\tfrac{1}{K},\rnt]\),  
satisfy \(\qmn{\rnt}(\oev_{\dmes})\!\leq\!2 \tfrac{L}{\lceil \sfrac{M}{2 K} \rceil}\).
Then at least \(\lceil\tfrac{1}{2}\lceil\tfrac{M}{2 K}\rceil\rceil\) messages
with \(\rno_{\dmes}\)'s in \([\rnt\!-\!\tfrac{1}{K},\rnt]\) satisfy
\begin{align}
\label{eq:spb-monotone-center-013-short}
\qmn{\rno_{\dmes}}(\oev_{\dmes})
&\leq \tfrac{4eL K}{M}.
\end{align}
Using \eqref{eq:spb-monotone-center-013-short} instead of \eqref{eq:spb-product-013} 
and repeating the rest of the analysis 
we get  \eqref{eq:lem:spb-monotone-center} using \(8(4e+1)\leq95\).
\end{proof}
 
For certain channels the \renyi center does not change with 
the order on the interval that it exits,
e.g. \cite[Example \ref*{A-eg:shiftinvariant}]{nakiboglu19A},
the binary symmetric channels.
The hypothesis of Lemma \ref{lem:spb-monotone-center}, 
described in \eqref{eq:lem:spb-monotone-center:hypothesis}, is satisfied 
for these channels as a result of the monotonicity of 
\(\tfrac{1-\rno}{\rno}\RC{\rno}{\Wm}\), i.e. 
Lemma \ref{lem:capacityO}-(\ref{capacityO-zo}).
But it is possible to derive the following sharper bound
for these channels.

\begin{lemma}\label{lem:spb-constant-center}
Let \(\Wmn{[1,\blx]}\) be a product channel for an \(\blx\in\integers{+}\) satisfying 
\vspace{-.1cm}
\begin{align}
\label{eq:lem:spb-constant-center:hypothesis}
\qmn{\rno,\Wmn{[1,\blx]}}=\qmn{\rnf,\Wmn{[1,\blx]}}
&
&\forall \rno\in(\rnf,1)
\end{align}
\vspace{-.1cm}
for a \(\rnf\in(0,1)\) and \(\knd\!\geq\!3\).
If \(M\), \(L\) are integers satisfying 
\(\tfrac{M}{L}\!>\!16 \sqrt{\blx} e^{\RC{\rnf}{\Wmn{[1,\blx]}}+\frac{\gamma}{1-\rnf}}\)
for \(\gamma\) described in \eqref{eq:lem:spb-monotone-center:gamma},
then any \((M,L)\) channel code on \(\Wmn{[1,\blx]}\) 
satisfies 
\vspace{-.1cm}
\begin{align}
\label{eq:lem:spb-constant-center}
\Pem{av}
&\geq \tfrac{e^{-2\gamma}}{16\blx^{1/2}} e^{-\spe{\rate,\Wmn{[1,\blx]}}}  
\end{align}
\vspace{-.1cm}
where \(\rate=\ln \tfrac{M}{L}-\ln \tfrac{20 \blx^{\sfrac{1}{2}}}{e^{-2\gamma}} \).
\end{lemma}

\begin{proof}[Proof of Lemma \ref{lem:spb-constant-center}]
\(\qmn{\rno,\Wmn{\tin}}\!=\!\qmn{\rnf,\Wmn{\tin}}\) for all \(\rno\!\in\![\rnf,1)\)
by  the hypothesis of the lemma and Lemma \ref{lem:capacityProduct}.
We use \(\qmn{\rnf,\Wmn{\tin}}\)'s rather than \(\qma{\rno,\Wmn{\tin}}{\epsilon}\)'s 
to define the probability measure \(\qmn{\rno}\).
Since \(\qmn{\rno}\) is the same probability measure for all \(\rno\in[\rnf,1)\), we denote it by \(\mQ\).
We repeat the analysis of Lemma \ref{lem:spb-product} up to \eqref{eq:spb-product-012}:
There are at least \(\lceil \tfrac{M}{2} \rceil\) messages satisfying
\begin{align}
%\label{eq:spb-constant-center-1}
\notag
(\mQ(\oev_{\dmes})+\tfrac{L}{M})e^{\RD{1}{\vma{\rno_{\dmes}}{\dmes}}{\mQ}+\frac{\gamma}{1-\rno_{\dmes}}}
&=\tfrac{1}{4\sqrt{\blx}}
\end{align}
for some \(\rno_{\dmes}\in(\rnf,1)\).
Among these \(\lceil \tfrac{M}{2} \rceil\) messages, there exists at least \(\lceil \tfrac{M}{4} \rceil\) messages 
satisfying
\begin{align}
\label{eq:spb-constant-center-2}
\mQ(\oev_{\dmes})
&\leq  \tfrac{4L}{M}. 
\end{align}
Using \eqref{eq:spb-constant-center-2} instead of \eqref{eq:spb-product-013} and repeating the rest of the analysis 
we get  \eqref{eq:lem:spb-constant-center}.
\end{proof}

Lemma \ref{lem:berryesseenN} is a key ingredient 
of the proof of Lemma \ref{lem:spb-product}.
The independence hypothesis of Lemma \ref{lem:berryesseenN} is implied by
 the product structure of each \(\Wmn{[1,\blx]}(\enc(\dmes))\) and \(\qmn{\rno}\). 
However, the product structure is not necessary for the independence, provided that 
the channel has certain symmetries.
\begin{lemma}\label{lem:spb-fixed-density}
Let \(\Wmn{\vec{[1,\blx]}}\) be a product channel with feedback 
for an \(\blx\in\integers{+}\) 
satisfying\footnote{\((\Wmn{\tin}(\dinp))_{ac}\) stands for 
	the component  that is absolutely continuous in  \(\qmn{\tin}\).}
for each \(\tin\in\{1,\ldots,\blx\}\)
\begin{align}
\label{eq:lem:spb-fixed-density:hypothesis-A}
\qmn{\rno,\Wmn{\tin}}
&=\qmn{\tin}
&
&\forall \rno\in(0,1),
\\
\label{eq:lem:spb-fixed-density:hypothesis-B}
\qmn{\tin}(\der{(\Wmn{\tin}(\dinp))_{ac}}{\qmn{\tin}}\leq \dsta)
&=\gX_{\tin}(\dsta)
&
&\forall \dinp\in \inpS_{\tin},\dsta\in\reals{\geq0} 
\end{align}
for a \(\qmn{\tin}\!\in\!\pmea{\outA_{\tin}}\) and 
a cumulative distribution function \(\gX_{\tin}\),
\(\rnf\!\in\!(0,1)\), \(\knd\!\geq\!3\),  
and \(\gamma\) be the constant defined in  \eqref{eq:lem:spb-monotone-center:gamma}. 
If \(M\), \(L\) are integers  satisfying
\(\tfrac{M}{L}> 16 \sqrt{\blx} e^{\RC{\rnf}{\Wmn{[1,\blx]}}+\frac{\gamma}{1-\rnf}}\),
then any \((M,L)\) channel code on \(\Wmn{\vec{[1,\blx]}}\) 
satisfies \eqref{eq:lem:spb-constant-center}.
\end{lemma}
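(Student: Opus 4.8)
The plan is to run the argument of Lemma~\ref{lem:spb-product} with the fixed product measure $\mQ\DEF\bigotimes_{\tin=1}^{\blx}\qmn{\tin}$ used throughout in place of the average \renyi center, and then to conclude as in Lemma~\ref{lem:spb-constant-center}; the one genuinely new point is that hypothesis~\eqref{eq:lem:spb-fixed-density:hypothesis-B} restores the independence that the feedback structure would otherwise destroy. First note that \eqref{eq:lem:spb-fixed-density:hypothesis-A} forces $\RC{\rno}{\Wmn{\tin}}<\infty$ for every $\rno\in(0,1)$ and every $\tin$, hence $\RC{\rno}{\Wmn{[1,\blx]}}=\sum_{\tin}\RC{\rno}{\Wmn{\tin}}<\infty$ and, by Lemmas~\ref{lem:capacityProduct} and~\ref{lem:capacityFproduct}, $\RC{\rno}{\Wmn{\vec{[1,\blx]}}}=\RC{\rno}{\Wmn{[1,\blx]}}$ and $\qmn{\rno,\Wmn{\vec{[1,\blx]}}}=\bigotimes_{\tin}\qmn{\rno,\Wmn{\tin}}=\mQ$ for all $\rno\in(0,1)$; in particular $\spe{\rate,\Wmn{\vec{[1,\blx]}}}=\spe{\rate,\Wmn{[1,\blx]}}$, so it suffices to exhibit $\lceil\tfrac{M}{4}\rceil$ messages whose conditional error probability obeys the bound in \eqref{eq:lem:spb-constant-center}.

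Fix an $(M,L)$ code $(\enc,\dec)$ on $\Wmn{\vec{[1,\blx]}}$, write the codeword of message $\dmes$ as $\enc(\dmes)=(\dinp_{1}^{\dmes},\enc_{2}^{\dmes},\ldots,\enc_{\blx}^{\dmes})$, and set $\wma{}{\dmes}\DEF\Wmn{\vec{[1,\blx]}}(\enc(\dmes))$. For $\rno\in(0,1)$ and $\dinp\in\inpS_{\tin}$ let $\vma{\rno,\tin}{}(\cdot\,|\,\dinp)$ be the order $\rno$ tilted probability measure between $\Wmn{\tin}(\dinp)$ and $\qmn{\tin}$ of Definition~\ref{def:tiltedprobabilitymeasure} (well defined since $\RD{\rno}{\Wmn{\tin}(\dinp)}{\qmn{\tin}}\leq\RC{\rno}{\Wmn{\tin}}<\infty$ by Theorem~\ref{thm:minimax}), and let $\vma{\rno}{\dmes}\DEF\vma{\rno,1}{}(\cdot\,|\,\dinp_{1}^{\dmes})\mtimes(\enc_{2}^{\dmes}\circ\vma{\rno,2}{})\mtimes\cdots\mtimes(\enc_{\blx}^{\dmes}\circ\vma{\rno,\blx}{})$ be the measure tilted coordinate by coordinate along the feedback path. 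Let $\cla{\rno,\tin}{\dmes}$ be the random variable $\ln\der{(\Wmn{\tin}(\enc_{\tin}^{\dmes}(\out_{[1,\tin-1]})))_{ac}}{\qmn{\tin}}$ centered to have zero mean under $\vma{\rno,\tin}{}(\cdot\,|\,\dinp)$ --- the centering constant being independent of $\dinp$ by \eqref{eq:lem:spb-fixed-density:hypothesis-B} --- and let $\cla{\rno}{\dmes}\DEF\sum_{\tin}\cla{\rno,\tin}{\dmes}$. Exactly as for \eqref{eq:spb-product-03}--\eqref{eq:spb-product-04}, the coordinate-wise (feedback) product structure of $\mQ$, $\vma{\rno}{\dmes}$, $\wma{}{\dmes}$ gives $\cla{\rno}{\dmes}=\tfrac{1}{\rno-1}[\ln\der{\vma{\rno}{\dmes}}{\wma{}{\dmes}}-\RD{1}{\vma{\rno}{\dmes}}{\wma{}{\dmes}}]=\tfrac{1}{\rno}[\ln\der{\vma{\rno}{\dmes}}{\mQ}-\RD{1}{\vma{\rno}{\dmes}}{\mQ}]$, while $\RD{\rno}{\wma{}{\dmes}}{\mQ}\leq\RC{\rno}{\Wmn{[1,\blx]}}$ because $\mQ=\qmn{\rno,\Wmn{\vec{[1,\blx]}}}$ and $\RC{\rno}{\Wmn{\vec{[1,\blx]}}}=\RC{\rno}{\Wmn{[1,\blx]}}$.

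The heart of the matter is the applicability of Lemma~\ref{lem:berryesseenN} to $\{\cla{\rno,\tin}{\dmes}\}_{\tin}$ under $\vma{\rno}{\dmes}$. These variables are zero mean by construction; for independence, observe that conditioned on $\out_{[1,\tin-1]}$ the law of $\out_{\tin}$ under $\vma{\rno}{\dmes}$ is $\vma{\rno,\tin}{}(\cdot\,|\,\enc_{\tin}^{\dmes}(\out_{[1,\tin-1]}))$, and the tilting map carries the $\qmn{\tin}$-law of $\der{(\Wmn{\tin}(\dinp))_{ac}}{\qmn{\tin}}$ --- which is $\gX_{\tin}$, independently of $\dinp$, by \eqref{eq:lem:spb-fixed-density:hypothesis-B} --- to the law $e^{(1-\rno)\RD{\rno}{\Wmn{\tin}(\dinp)}{\qmn{\tin}}}s^{\rno}\,\dif\gX_{\tin}(s)$ on $\reals{+}$, in which $\RD{\rno}{\Wmn{\tin}(\dinp)}{\qmn{\tin}}=\tfrac{1}{\rno-1}\ln\int s^{\rno}\dif\gX_{\tin}(s)$ likewise depends only on $\tin$ and $\rno$. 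Hence the conditional law of $\cla{\rno,\tin}{\dmes}$ given $\out_{[1,\tin-1]}$ depends neither on $\out_{[1,\tin-1]}$ nor on $\dmes$, which yields the required joint independence (and the additivity used for $\RD{1}{\vma{\rno}{\dmes}}{\mQ}$ above) via \cite[Thm. 10.7.2]{bogachev}. The absolute moments satisfy $\EXS{\vma{\rno}{\dmes}}{\abs{\cla{\rno,\tin}{\dmes}}^{\knd}}\leq3\big[\tfrac{\RC{\sfrac{1}{2}}{\Wmn{\tin}}\vee\knd}{\rno(1-\rno)}\big]^{\knd}$ by Lemma~\ref{lem:MomentBound}, the estimate $(1-\rno)\RD{\rno}{\Wmn{\tin}(\dinp)}{\qmn{\tin}}\leq(1-\rno)\RC{\rno}{\Wmn{\tin}}\leq\RC{\sfrac{1}{2}}{\Wmn{\tin}}$ (from $\RD{\rno}{\Wmn{\tin}(\dinp)}{\qmn{\rno,\Wmn{\tin}}}\leq\RC{\rno}{\Wmn{\tin}}$ of Theorem~\ref{thm:minimax} and \eqref{eq:orderoneovertwo}), and the definition of $\gamma$ in \eqref{eq:lem:spb-monotone-center:gamma}; thus $3(\sum_{\tin}\EXS{\vma{\rno}{\dmes}}{\abs{\cla{\rno,\tin}{\dmes}}^{\knd}})^{\sfrac{1}{\knd}}\leq\tfrac{\gamma}{\rno(1-\rno)}$. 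With the analogue of \eqref{eq:spb-product-08} in hand (with $\qmn{\rno}$ replaced by the fixed $\mQ$), the pigeonhole / intermediate value theorem argument of the proof of Lemma~\ref{lem:spb-product} runs verbatim through \eqref{eq:spb-product-012}; since $\mQ$ is a single fixed measure no shrinking of the center interval is needed, and one extracts $\lceil\tfrac{M}{4}\rceil$ messages with $\mQ(\oev_{\dmes})\leq4\tfrac{L}{M}$ exactly as around \eqref{eq:spb-constant-center-2}, which yields \eqref{eq:lem:spb-constant-center}.

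The main obstacle is making the independence claim of the preceding paragraph fully rigorous: one must treat the Lebesgue decompositions and carry out the transition-probability and Fubini bookkeeping (via \cite[Thm. 10.7.2]{bogachev}) needed to show that, under both $\mQ$ and the feedback-chained tilted measure $\vma{\rno}{\dmes}$, the $\tin$-th centered log-density is, conditionally on the past, a random variable whose distribution is a fixed function of $(\tin,\rno)$ alone --- in particular that $\vma{\rno,\tin}{}$ is indeed a transition probability and that the order $\rno$ tilt of $\Wmn{\tin}(\dinp)$ against $\qmn{\tin}$ has a $\qmn{\tin}$-density whose $\qmn{\tin}$-law does not depend on $\dinp$. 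Once this is established, the Berry-Ess\'{e}en corollary, the moment bound, and the pigeonhole extraction transfer from the product case with only cosmetic changes.
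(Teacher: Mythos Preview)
Your proposal is correct and takes essentially the same approach as the paper: replace the averaged centers by the fixed product measure $\mQ=\bigotimes_{\tin}\qmn{\tin}$, observe that hypothesis~\eqref{eq:lem:spb-fixed-density:hypothesis-B} forces the conditional law of each $\cla{\rno,\tin}{\dmes}$ given the past to be a fixed function of $(\tin,\rno)$ and hence restores the joint independence needed for Lemma~\ref{lem:berryesseenN}, and then finish exactly as in Lemma~\ref{lem:spb-constant-center}. You supply considerably more detail on the independence step and the feedback-chained tilting than the paper does (the paper simply asserts joint independence under $\vma{\rno}{\dmes}$ from \eqref{eq:lem:spb-fixed-density:hypothesis-B} and declares the rest identical to Lemma~\ref{lem:spb-constant-center}), but the route is the same.
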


Any product channel whose component channels are modular shift channels described in 
\cite[Example \ref*{A-eg:shiftchannel}]{nakiboglu19A}, satisfy the 
constraints given in \eqref{eq:lem:spb-fixed-density:hypothesis-A} and \eqref{eq:lem:spb-fixed-density:hypothesis-B}. 
Products of more general shift invariant channels described in 
\cite[Example \ref*{A-eg:shiftinvariant}]{nakiboglu19A}, do 
satisfy the constraint given in \eqref{eq:lem:spb-fixed-density:hypothesis-A} but they  may or may not 
satisfy the constraint given in \eqref{eq:lem:spb-fixed-density:hypothesis-B} depending on \(\fXS\).  

\begin{proof}[Proof of Lemma \ref{lem:spb-fixed-density}]
As we have done for Lemma \ref{lem:spb-constant-center} we use 
\(\qmn{\tin}\)'s, rather than \(\qma{\rno,\Wmn{\tin}}{\epsilon}\)'s,
to define \(\mQ\).  
Although \(\Wmn{\vec{[1,\blx]}}(\enc(\dmes))\) 
is not necessarily a product measure,
\(\cla{\rno,\tin}{\dmes}\)'s 
are jointly independent random variables in the probability space \((\outS_{1}^{\blx},\outA_{1}^{\blx},\vma{\rno}{\dmes})\) 
for any \(\rno\in(0,1)\) and \(\dmes\in\mesS\), as a
result of the hypothesis of the lemma given in \eqref{eq:lem:spb-fixed-density:hypothesis-B}.
The rest of the proof is identical to the proof of Lemma \ref{lem:spb-constant-center}.
\end{proof}

\begin{proof}[Proof of Theorem \ref{thm:productexponent}]
We prove Theorem \ref{thm:productexponent} using Lemmas \ref{lem:avspherepacking} and \ref{lem:spb-product}.
Note that we are free to choose different values for \(\epsilon\) and \(\knd\) for different values of \(\blx\),
provided that the  hypotheses of Lemmas \ref{lem:avspherepacking} and \ref{lem:spb-product} are satisfied. 

As a result of Assumption \ref{assumption:individual-ologn} there exists a \(K\in [1,\infty)\)
and an \(\blx_{0}\in \integers{+}\) such that 
\(\max\nolimits_{\tin \in [1,\blx]} \RC{\sfrac{1}{2}}{\Wmn{\tin}} \leq K \ln \blx\) for all
\(\blx\geq\blx_{0}\).
Let \(\knd_{\blx}\) be \(K \ln\blx\) and \(\epsilon_{\blx}\) be \(\sfrac{1}{\blx}\).
Then
\begin{align}
\label{eq:productexponent-2}
\gamma_{\blx}
&\leq 4 e K  \ln\blx
\end{align}
for all \(\blx\) large enough.
Furthermore,
\eqref{eq:avcapacity:bound} and \eqref{eq:productexponent-2}  
imply
\begin{align}
%\label{eq:productexponent-3}
\notag
16 \sqrt{\blx} e^{\RCI{\rno_{0}}{\Wmn{[1,\blx]} }\epsilon+\frac{\gamma_{\blx}}{1-\rno_{0}}}
&\!\leq\!16 e^{\RC{\rno_{0}}{\Wmn{[1,\blx]} }+(\frac{1}{2}+\frac{4eK}{1-\rno_{0}}+\frac{K}{\rno_{0}(1-\rno_{0})})\ln \blx}
\end{align}
for all \(\blx\) large enough.
Thus as a result of the hypothesis of the theorem, 
hypotheses of Lemma \ref{lem:spb-product} is satisfied for all \(\blx\)
large enough. Thus using \eqref{eq:productexponent-2} we can conclude that 
\begin{align}
\label{eq:productexponent-4}
\Pem{av}
&\geq \left(\tfrac{\blx^{-1-8eK}}{16 e^{2}\blx^{3/2}}\right)^{\frac{1}{\rno_{0}}} 
e^{-\spa{\sfrac{1}{\blx}}{\ln \frac{M_{\blx}}{L_{\blx}},\Wmn{[1,\blx]}}}
\end{align}
for all \(\blx\) large enough.

On the other hand  Lemma \ref{lem:avspherepacking}, 
the hypothesis given in \eqref{eq:thm:productexponent-hypothesis},
and the monotonicity of \(\RC{\rno}{\Wm}\) in \(\rno\) imply that 
\begin{align}
\notag
\spa{\sfrac{1}{\blx}}{\ln \tfrac{M_{\blx}}{L_{\blx}},\Wmn{[1,\blx]}}
&\leq 
\spe{\ln \tfrac{M_{\blx}}{L_{\blx}},\Wmn{[1,\blx]}}+\tfrac{\RC{\rno_{1}}{\Wmn{[1,\blx]}}}{ (\blx-1)\rno_{0}^{2}}
\end{align}
for all \(\blx\) large enough.
Then using the monotonicity of \(\RC{\rno}{\Wm}\) and \(\tfrac{1-\rno}{\rno}\RC{\rno}{\Wm}\) 
in \(\rno\)  we can conclude that 
\begin{align}
\notag
%\label{eq:productexponent-5}
\!\!\spa{\sfrac{1}{\blx}}{\ln\tfrac{M_{\blx}}{L_{\blx}},\!\Wmn{[1,\blx]}}
&\!\leq\! 
\spe{\ln\tfrac{M_{\blx}}{L_{\blx}},\!\Wmn{[1,\blx]}}
\!+\!\tfrac{(\frac{\rno_{1}}{1-\rno_{1}} \vee 1)\blx K \ln \blx}{(\blx-1)\rno_{0}^{2}} 
\end{align}
for all \(\blx\) large enough. 
Then  \eqref{eq:thm:productexponent} follows from \eqref{eq:productexponent-4}.
\end{proof}

\subsection{Augustin's SPBs for Product Channels}\label{sec:comparison}
In the following, we compare our results with the SPBs derived by 
Augustin in \cite{augustin69} and \cite{augustin78} for the product channels.
Augustin works with the maximum 
error probability, \(\Pem{\max} \DEF \max\nolimits_{\dmes\in \mesS} \Pem{\dmes}\),
rather than the average error probability. 
This, however, is inconsequential for our purposes because 
any SPB
for \(\Pem{av}\) holds for \(\Pem{\max}\) as is and 
any SPB for \(\Pem{\max}\) can be converted into a 
SPB for \(\Pem{av}\) through a standard application of Markov inequality 
for channel codes,
with definite and essentially inconsequential correction terms. 

The main advantage of Theorem \ref{thm:productexponent} over the 
SPBs in \cite{augustin69} and \cite{augustin78}, is its polynomial prefactor. 
Augustin did establish a SPB with a polynomial prefactor,
but only under considerably stronger hypotheses, \cite[Thm. 4.8]{augustin69}.
In addition all of the asymptotic SPBs in \cite{augustin69} and \cite{augustin78}
assume the uniform continuity  condition described 
in Assumption \ref{assumption:cumulative-equicontinuity},
given in the following.
Theorem \ref{thm:productexponent}, on the other hand, does not 
have such a hypothesis.
\begin{assumption}\label{assumption:cumulative-equicontinuity}
	\(\{\Wmn{\tin}\}_{\tin\in\integers{+}}\) 
	and \(\RC{0^{_{+}}\!}{\Wm}\) defined in \eqref{eq:def:capacitylimit}
	satisfy
	\begin{align}
	\notag
	\lim\nolimits_{\rno \downarrow 0}\sup\nolimits_{\blx\in \integers{+}}
	\tfrac{1}{\blx}\left[\RC{\rno}{\Wmn{[1,\blx]}}-\RC{0^{_{+}}\!}{\Wmn{[1,\blx]}}\right]
	&=0.
	\end{align} 
\end{assumption}
\begin{remark}
	This assumption is given as equation (7) in \cite{augustin69} and Condition 31.3a  in \cite{augustin78}.
	In \cite{augustin78}, the condition is stated without \(\sfrac{1}{\blx}\) factor; we believe that 
	it is a typo.
\end{remark}
After this general overview, let us continue with a discussion of the individual results.

\cite[Thm. 4.7b]{augustin69} bounds \(\Pem{\max}^{(\blx)}\) from below by 
	\(e^{-e^{K}\sqrt{32\blx}-\spe{\ln\frac{M_{\blx}}{L_{\blx}},\Wmn{[1,\blx]}}}\)
	for large enough \(\blx\) for any sequence of channels satisfying
	\(\sup_{\tin\in \integers{+}} \RC{1}{\Wmn{\tin}}<K\) for some \(K\in\reals{+}\)
	and
	Assumption \ref{assumption:cumulative-equicontinuity}.
	Thus \cite[Thm. 4.7b]{augustin69} proves a claim weaker than 
	Theorem \ref{thm:productexponent} 
	under a hypothesis stronger than Assumption \ref{assumption:individual-ologn}.

\cite[Thm. 4.8]{augustin69} is a SPB with a polynomial prefactor
	for product channels satisfying Assumptions \ref{assumption:cumulative-equicontinuity} and \ref{assumption:individual-density}.
		\begin{assumption}\label{assumption:individual-density}
\(\exists K\in\reals{+},\{\rfm_{\tin}\}_{\tin\in\integers{+}}\) satisfying 
		\begin{align}
		\notag
		\tfrac{1}{K}\leq \der{\Wmn{\tin}(\dinp)}{\rfm_{\tin}}
		&\leq K 
		&
		&\Wmn{\tin}(\dinp)-\mbox{a.e.}
		&
		&\forall \dinp\in\inpS_{\tin}.    
		\end{align}
	\end{assumption}
	Assumption \ref{assumption:individual-density} implies Assumption \ref{assumption:individual-ologn},
	but the converse is not true, 
	e.g. if \(\Wmn{\tin}\!=\!\Pcha{\tlx,\mA,\mB}\) then Assumption \ref{assumption:individual-ologn}
	holds but Assumption \ref{assumption:individual-density} does not hold.
	Thus \cite[Thm. 4.8]{augustin69} is weaker than Theorem \ref{thm:productexponent} because  it
	establishes the same claim under a stronger hypothesis.

\cite[Thm. 4.7a]{augustin69} and \cite[Thm. 31.4]{augustin78} bound \(\Pem{\max}^{(\blx)}\)
	from below by \(e^{-\bigo{\sqrt{\blx}}-\spe{\ln\frac{M_{\blx}}{L_{\blx}},\Wmn{[1,\blx]}}}\)
	for large enough \(\blx\).
	These SPBs are not comparable with 
	Theorem \ref{thm:productexponent} because their hypotheses are not comparable 
	with the hypotheses of Theorem \ref{thm:productexponent}.
	However, these SPBs can be proved  without relying on Assumption \ref{assumption:cumulative-equicontinuity},
	using a variant of  Lemma \ref{lem:spb-product}, 
	which is obtained by applying Chebyshev's inequality instead of Lemma \ref{lem:berryesseenN}.
	
\cite[Lem. 31.2]{augustin78} is quite similar to Lemma \ref{lem:spb-product};
the main difference is the infimum taken over \((0,1)\).  
In order to remove this infimum and obtain a bound 
similar to Lemma \ref{lem:spb-product}, 
one needs to assume an equicontinuity condition similar to the one in
Assumption \ref{assumption:cumulative-equicontinuity}.
  
%!TEX root=../main-B.tex
\section{The SPB for Product Channels with Feedback}\label{sec:fproduct-outerbound}
\begin{theorem}\label{thm:fDSPCexponent}
Let \(\{\Wmn{\tin}\}_{\tin\in\integers{+}}\) be a sequence of discrete channels
satisfying \(\Wmn{\tin}\!=\!\Wm\) for all \(\tin\!\in\!\integers{+}\)
and \(\rno_{0}\), \(\rno_{1}\) be orders satisfying \(0\!<\!\rno_{0}\!<\!\rno_{1}\!<\!1\).
Then for any sequence of codes on the discrete stationary product channels with feedback 
\(\{\Wmn{\vec{[1,\blx]}}\}_{\blx\in\integers{+}}\) satisfying 
\begin{align}
\label{eq:thm:fDSPCexponent-hypothesis}
\RC{\rno_{1}}{\Wm}\geq \tfrac{1}{\blx}\ln \tfrac{M_{\blx}}{L_{\blx}}
&\geq  \RC{\rno_{0}}{\Wm}+\tfrac{\ln \blx}{\blx^{\sfrac{1}{4}}}
&
&\forall \blx\geq\blx_{0} 
\end{align}
there exists an \(\blx_{1}\geq\blx_{0}\) such that 
\begin{align}
\label{eq:thm:fDSPCexponent}
\Pem{av(\blx)}
&\geq
e^{-\blx\left[\spe{\frac{1}{\blx}\ln\frac{M_{\blx}}{L_{\blx}},\Wm}+\frac{1}{\rno_{0}}
	\frac{\ln \blx}{\blx^{\sfrac{1}{4}}}\right]}
&
&\forall \blx\geq\blx_{1}.
\end{align}
\end{theorem}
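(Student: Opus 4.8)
The plan is to derive Theorem~\ref{thm:fDSPCexponent} from a non-asymptotic, parametric SPB for discrete stationary product channels with feedback --- the analogue of Lemma~\ref{lem:spb-product} in the feedback setting --- which is obtained by combining Haroutunian's auxiliary channel bound with Lemma~\ref{lem:tradeoff}. The key point is that for the \emph{stationary} feedback channel $\Wmn{\vec{[1,\blx]}}$ the additivity of the \renyi capacity established in Lemma~\ref{lem:capacityFproduct} gives $\RC{\rno}{\Wmn{\vec{[1,\blx]}}}=\blx\RC{\rno}{\Wm}$, and the \renyi center factorizes as $\qmn{\rno,\Wmn{\vec{[1,\blx]}}}=\bigotimes_{\tin=1}^{\blx}\qmn{\rno,\Wm}$. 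First I would fix $\rno_{0}<\rno_{1}$ and, for each $\blx$, introduce an auxiliary channel obtained by ``tilting'' the per-letter channel $\Wm$ toward the order-$\rno$ center $\qmn{\rno,\Wm}$ for a suitable $\rno=\rno_{\blx}$; the auxiliary channel inherits a product-with-feedback structure and has strictly smaller order-one capacity (essentially $\blx\RC{\rno}{\Wm}$ up to a lower-order correction). Haroutunian's method then says that the error probability on the original channel is bounded below by $e^{-\blx\RD{1}{\tilde{\Wm}(\cdot)}{\Wm(\cdot)}}$ times the error probability of the code on the auxiliary channel; since the auxiliary channel has capacity below the rate, a weak-converse/Fano argument (as in the proof of Lemma~\ref{lem:arimoto} for $\rno=1$, now applied on the auxiliary feedback channel using Lemma~\ref{lem:capacityFproduct}) forces that latter probability to be bounded away from zero for large $\blx$. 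Optimizing the tilt parameter $\rno$ over $(0,1)$ reproduces the sphere packing exponent $\spe{\frac{1}{\blx}\ln\frac{M_{\blx}}{L_{\blx}},\Wm}$ in the exponent of $\blx$, with the $\frac{1}{\rno_{0}}\frac{\ln\blx}{\blx^{1/4}}$ term absorbing the back-off coming from the gap between $\ln\frac{M_{\blx}}{L_{\blx}}$ and $\RC{\rno_{0}}{\Wm}$ in hypothesis \eqref{eq:thm:fDSPCexponent-hypothesis} together with the correction terms in the Haroutunian bound.

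More concretely, the steps I would carry out are: (i) State and prove the non-asymptotic feedback SPB, call it Lemma~\ref{lem:spb-Fproduct}, which asserts that for any $(M,L)$ code on $\Wmn{\vec{[1,\blx]}}$ and any $\rno\in(\rnf,1)$, provided $\ln\frac{M}{L}$ exceeds $\RC{\rno}{\Wmn{\vec{[1,\blx]}}}$ by at least a controlled back-off $b_{\blx}$, one has $\Pem{av}\geq c_{\blx}\,e^{-\frac{1-\rno}{\rno}(\RC{\rno}{\Wmn{\vec{[1,\blx]}}}-\ln\frac{M}{L})-b_{\blx}'}$; here $c_{\blx},b_{\blx},b_{\blx}'$ are explicit and, after dividing by $\blx$, are $o(1)$ on compact subsets of $(0,1)$ for the orders. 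The proof of this lemma uses Lemma~\ref{lem:tradeoff} to convert the Haroutunian-style bound on the auxiliary channel into the sphere packing form, the additivity from Lemma~\ref{lem:capacityFproduct} to make the auxiliary channel's capacity computable, and a Markov-inequality argument over the message set to pass from an averaged statement to a statement about a positive fraction of messages. (ii) Apply Lemma~\ref{lem:capacityO}-(\ref{capacityO-continuity}) (finiteness of $\RC{\rnf}{\Wm}$ for discrete channels and continuity of $\RC{\rno}{\Wm}$ on $(0,\rnf]$) to guarantee that the order-optimizing $\rno$ lies in a compact subinterval of $(0,1)$, uniformly in $\blx$, whenever the rate $\frac{1}{\blx}\ln\frac{M_{\blx}}{L_{\blx}}$ stays in $[\RC{\rno_{0}}{\Wm},\RC{\rno_{1}}{\Wm}]$ as in \eqref{eq:thm:fDSPCexponent-hypothesis}. (iii) Choose the back-off scale to match $\frac{\ln\blx}{\blx^{1/4}}$: the $\blx^{-1/4}$ comes out of balancing the $\bigo{\blx^{-1/2}}$-type prefactor from the Berry--Ess\'een/Chebyshev input (Lemma~\ref{lem:berryesseenN}) and the auxiliary-channel approximation error, exactly as in the passage from Lemma~\ref{lem:spb-product} to Theorem~\ref{thm:productexponent}, but with the weaker rate of convergence forced by the feedback (where one cannot exploit per-letter independence as cleanly). (iv) Substitute $\RC{\rno}{\Wmn{\vec{[1,\blx]}}}=\blx\RC{\rno}{\Wm}$ and the definition of $\spe{\cdot,\Wm}$ to collapse the parametric bound into \eqref{eq:thm:fDSPCexponent}.

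The main obstacle I expect is step (i): establishing the non-asymptotic feedback SPB cleanly. In the non-feedback case (Lemma~\ref{lem:spb-product}) one uses the product structure of $\Wmn{[1,\blx]}(\enc(\dmes))$ and of $\qmn{\rno}$ to write $\cla{\rno}{\dmes}$ as a sum of \emph{independent} zero-mean random variables and invoke Lemma~\ref{lem:berryesseenN}. With feedback, $\Wmn{\vec{[1,\blx]}}(\enc(\dmes))$ is a chain of transition probabilities, not a product measure, so the per-coordinate log-likelihood-ratio increments form a martingale difference sequence rather than an independent sequence. The fix --- and this is the technically delicate part --- is Haroutunian's auxiliary channel device: rather than directly tilting $\Wmn{\vec{[1,\blx]}}(\enc(\dmes))$, one tilts the \emph{component} channel $\Wm$ to a fixed auxiliary channel $\tilde{\Wm}$ independent of the codeword, so that under $\tilde{\Wm}_{\vec{[1,\blx]}}$ the increments \emph{are} i.i.d.\ (by stationarity) conditioned on the past inputs, which is enough to run a (conditional) small-deviation estimate. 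One then pays the price $\sum_{\tin}\RD{1}{\tilde{\Wm}(\cdot)}{\Wm(\cdot)}$, which Lemma~\ref{lem:tradeoff} is designed to bound in terms of $\RC{\rno}{\Wm}$ and the rate. Getting the constants and the direction of all the inequalities right in this auxiliary-channel/tilting bookkeeping --- and checking the measurability of the auxiliary feedback channel so that Lemma~\ref{lem:capacityFproduct} applies to it --- is where the real work lies; the asymptotic passage in steps (ii)--(iv) is then routine bookkeeping of the kind already carried out for Theorem~\ref{thm:productexponent}.
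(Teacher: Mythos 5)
Your high-level structure is right: prove a non-asymptotic parametric SPB for DSPCs with feedback (the paper's Lemma~\ref{lem:spb-Fproduct}), derive the bound on the actual channel from a bound on an auxiliary channel via the data-processing inequality (Lemma~\ref{lem:divergence-DPI}) combined with \eqref{eq:lem:augustin}, and then tune the free parameters to get the $\tfrac{\ln\blx}{\blx^{1/4}}$ error term. But the heart of the matter is misdescribed and, as described, the argument would not give the sphere packing exponent.

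The gap is that you propose to tilt the \emph{per-letter} channel $\Wm$ to a fixed auxiliary channel $\tilde{\Wm}$ and then pay $\sum_{\tin}\RD{1}{\tilde{\Wm}(\cdot)}{\Wm(\cdot)}\approx\blx\sup_{\dinp}\RD{1}{\tilde{\Wm}(\dinp)}{\Wm(\dinp)}$. Minimizing this cost subject to $\RC{1}{\tilde{\Wm}}\leq\rate$ gives precisely Haroutunian's exponent $E_{h}(\rate,\Wm)$ defined in \eqref{eq:def:haroutunianexponent}, and \S\ref{sec:haroutunian} records Haroutunian's own observation \cite[Thm. 3.1]{haroutunian77} that $E_{h}(\rate,\Wm)>\spe{\rate,\Wm}$ for most channels and rates. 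In other words, for the channel $\Wm$ alone there is no auxiliary $\Vm$ simultaneously satisfying $\RC{1}{\Vm}\lesssim\rate$ and $\sup_{\dinp}\RD{1}{\Vm(\dinp)}{\Wm(\dinp)}\lesssim\spe{\rate,\Wm}$, so the per-letter tilting strategy cannot recover $\spe{\cdot,\Wm}$. The paper's essential device --- entirely absent from your proposal --- is the \emph{subblock} reinterpretation: regard $\Wmn{\vec{[1,\blx]}}$ as a length-$\knd$ DPC with feedback whose $\ind$-th component is the subblock $\Umn{\ind}=\Wmn{\vec{[1+\tin_{\ind-1},\tin_{\ind}]}}$ of length $\approx\blx/\knd$, and apply Lemma~\ref{lem:tradeoff}-(\ref{tradeoff-B},\ref{tradeoff-C}) \emph{to the subblocks} rather than to $\Wm$. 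The per-subblock cost of Lemma~\ref{lem:tradeoff}-(\ref{tradeoff-C}) includes an additive $\ln\tfrac{1}{\epsilon}$ term and a Taylor-expansion error $\sim\knd^{-1/3}$ (via Lemma~\ref{lem:taylor}), and because these error terms are expressed in $\RC{\sfrac{1}{2}}{\Wm}$ (which scales linearly in the subblock length) rather than in the subblock input-alphabet size (which blows up doubly exponentially), they stay under control. The paper even explicitly warns that taking $\knd=1$ or $\knd=\blx$ gives poor error terms; the intermediate $\knd_\blx=\lfloor\blx^{3/4}\rfloor$, $\epsilon_\blx\sim\blx^{-1/4}$ is what balances the $\knd\ln\tfrac{1}{\epsilon}$ term against $\blx\knd^{-1/3}$.

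Two smaller points. First, the $\blx^{-1/4}$ scaling has nothing to do with Lemma~\ref{lem:berryesseenN}: the Berry--Ess\'een/Chebyshev machinery of \S\ref{sec:smalldeviations} is used only in Lemma~\ref{lem:spb-product} for the feedback-free product channel, not in Lemma~\ref{lem:spb-Fproduct}. The feedback proof replaces the small-deviation step by the Arimoto/Augustin bound \eqref{eq:lem:augustin} applied to the length-$\knd$ subblock channel $\Vmn{\vec{[1,\knd]}}$ (using Lemma~\ref{lem:capacityFproduct} for its capacity), followed by data processing in the form $\RD{1}{\mP\mtimes\Vmn{\vec{[1,\knd]}}}{\mP\mtimes\Umn{\vec{[1,\knd]}}}\geq -\ln 2-\Pem{\Vmn{\vec{[1,\blx]}}}\ln\Pem{av}$. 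Second, the measurability concern you raise is moot here because the theorem is stated for discrete channels precisely to sidestep those issues (the paper postpones the general case to a conjecture).
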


We prove Theorem \ref{thm:fDSPCexponent}
using ideas from Sheverdyaev \cite{sheverdyaev82},
Haroutunian \cite{haroutunian68}, \cite{haroutunian77},
and Augustin \cite{augustin69}, \cite{augustin78}. 
In \S\ref{sec:taylor},  we establish a Taylor's expansion for \(\RD{\rno}{\mW}{\mQ}\) 
around \(\rno\!=\!1\) assuming \(\RD{\lgm}{\mW}{\mQ}\) is finite for a \(\lgm\!>\!1\). 
In \S\ref{sec:tradeoff}, we recall the auxiliary channel method and 
prove that for any channel \(\Wm\) satisfying 
\(\lim_{\rno\uparrow 1}\!\frac{1-\rno}{\rno}\RC{\rno}{\Wm}\!=\!0\)
and rate \(\rate\!\in\!(\RC{0^{_{+}}\!}{\Wm},\RC{1}{\Wm})\) 
there exists a channel \(\Vm\) and a constant \(\rnb\!>\!1\) satisfying both 
\(\RC{\rnb}{\Vm}\!\lesssim\!\rate\) and 
\(\sup_{\dinp\in\inpS}\RD{1}{\Vm(\dinp)}{\Wm(\dinp)}\!\lesssim\!\spe{\rate,\Wm}\).
In \S\ref{sec:fpouterbound},  we first introduce the concept of subblocks 
and derive a non-asymptotic SPB using it; then we prove Theorem \ref{thm:fDSPCexponent}
using this SPB.
In \S\ref{sec:fcomparison}, we provide an asymptotic SPB for (possibly non-stationary) 
DPCs, i.e. Theorem \ref{thm:fDPCexponent}, and compare our results with the earlier ones.
In \S\ref{sec:haroutunian}, we show that Haroutunian's bound,
the results of \S\ref{sec:tradeoff}, and the concept of subblocks
imply an asymptotic SPB for DSPCs with feedback,
as well.

%The main aim of this section is to prove the asymptotic SPB given in 
%Theorem \ref{thm:fDSPCexponent}; for that, we use Augustin's method \cite{augustin69}, 
%\cite{augustin78} together with ideas from Sheverdyaev \cite{sheverdyaev82} and 
%Haroutunian \cite{haroutunian68}, \cite{haroutunian77}. 
%In \S\ref{sec:taylor},  we establish a Taylor's expansion for \(\RD{\rno}{\mW}{\mQ}\) 
%around \(\rno=1\) assuming \(\RD{\lgm}{\mW}{\mQ}\) is finite for a \(\lgm>1\). 
%In \S\ref{sec:tradeoff}, we first recall the auxiliary channel method and then 
%prove that for every \(\rate\!\in\!(\RC{0^{_{+}}\!}{\Wm},\RC{1}{\Wm})\) there exists 
%a channel \(\Vm\) and a constant \(\rnb>1\) satisfying both 
%\(\RC{\rnb}{\Vm}\lesssim\rate\) and 
%\(\sup_{\dinp\in\inpS}\RD{1}{\Vm(\dinp)}{\Wm(\dinp)}\lesssim \spe{\rate,\Wm}\)
%provided that \(\lim_{\rno\uparrow 1} \frac{1-\rno}{\rno}\RC{\rno}{\Wm}=0\).
%In \S\ref{sec:fpouterbound},  we first derive a non-asymptotic SPB and 
%then use this non-asymptotic SPB to prove Theorem \ref{thm:fDSPCexponent}.
%In \S\ref{sec:fcomparison}, we provide an asymptotic SPB for (possibly non-stationary) 
%DPCs, i.e. Theorem \ref{thm:fDPCexponent} ,  discuss its generalization to non-discrete channels,
%and compare our results with some of the earlier results.

\subsection[A Taylor's Expansion for \(\RD{\rno}{\mW}{\mQ}\)]{A Taylor's Expansion for the \renyi Divergence}\label{sec:taylor}
Sheverdyaev employed the Taylor's expansion 
---albeit with approximation error terms that are not explicit--- 
for his attempt to prove the SPB for the codes on 
the DSPCs with feedback in \cite{sheverdyaev82}.
Recently, Fong and Tan \cite[Prop. 11]{fongT16} bounded \(\RD{\rnb}{\mW}{\mQ}\) for \(\rnb\in [1,\tfrac{5}{4}]\) 
using Taylor's expansion for the case when \(\outS\) is a finite set and \(\outA\) is \(\sss{\outS}\). 
The bound by Fong and Tan, however, is not appropriate for our purposes because
its approximation error term is proportional to \(\abs{\outS}\). 
Assuming \(\der{\mW}{\mQ}\) to be 
bounded Sason and Verd\'{u}\label{reference:sasonverdu} derived a similar 
bound\footnote{\cite[Thm. 35-(b)]{sasonV16} is obtained by expressing
	\(\RD{\rno}{\mW}{\mQ}\), which is not an \(\fX\)-divergence, 
	as a monotonically increasing function of the order \(\rno\) Hellinger divergence 
	between \(\mW\) and \(\mQ\), which is  an \(\fX\)-divergence. 
	Guntuboyina, Saha, and Schiebinger \cite{guntuboyinaSS14} 
	presented a general method for establishing sharp bounds among \(\fX\)-divergences, 
	without assuming either \(\der{\mW}{\mQ}\) or \(\der{\mQ}{\mW}\) to be bounded. 
	Yet such conditions can easily be included in the framework proposed in \cite{guntuboyinaSS14}.}
\cite[Thm. 35-(b), (469)]{sasonV16}.
In Lemma \ref{lem:taylor} we bound \(\RD{\rnb}{\mW}{\mQ}\) for \(\rnb\in(1,\lgm)\) 
using Taylor's expansion assuming only \(\RD{\lgm}{\mW}{\mQ}\) to be  finite.
\begin{lemma}\label{lem:taylor}
Let \(\mW\) and \(\mQ\) be two probability measures on the measurable space  \((\outS,\outA)\) satisfying 
\(\RD{\lgm}{\mW}{\mQ}\leq \gamma\) for a \(\gamma\in \reals{+}\) and a \(\lgm\in(1,\infty)\). 
Then for any \(\rnb\in (1,\lgm)\) 
\begin{align}
\label{eq:lem:taylor}
\hspace{-.15cm}0\!\leq\!
\RD{\rnb}{\mW}{\mQ}\!-\!\RD{1}{\mW}{\mQ}\!
&\leq\!\tfrac{2(\rnb-1)}{e^{2}}\!\left[\!1\!+\!e^{(\rnb-1)\gamma}(\tfrac{\gamma e^{\tau}}{2\tau})^{2}\!\right]
\end{align}
where \(\tau=\tfrac{(\lgm-\rnb)\gamma }{2} \wedge 1\).
\end{lemma}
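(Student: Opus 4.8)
The plan is to prove both the lower and upper bounds in \eqref{eq:lem:taylor} by analyzing the function $\phi(\rnb)\DEF(\rnb-1)\RD{\rnb}{\mW}{\mQ}$, which collects the awkward $\tfrac{1}{\rnb-1}$ prefactor into something smooth. First I would set $\rfm=\mQ$ as the reference measure (legitimate since $\RD{\lgm}{\mW}{\mQ}<\infty$ forces $\mW\AC\mQ$), write $\fX=\der{\mW}{\mQ}$, and observe that $e^{\phi(\rnb)}=\EXS{\mQ}{\fX^{\rnb}}=\EXS{\mW}{\fX^{\rnb-1}}$. The lower bound $\RD{\rnb}{\mW}{\mQ}\geq\RD{1}{\mW}{\mQ}$ is immediate from Lemma \ref{lem:divergence-order} (monotonicity of the \renyi divergence in its order), so the real content is the upper bound.

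For the upper bound, the plan is to apply Taylor's theorem with the integral (or Lagrange) form of the remainder to $\phi$ on $[1,\rnb]$. Using $\psi(\rnb)\DEF\ln\EXS{\mW}{\fX^{\rnb-1}}$, which satisfies $\psi(1)=0$, $\psi'(1)=\RD{1}{\mW}{\mQ}$ by differentiating under the integral, and $\phi(\rnb)=(\rnb-1)\psi(\rnb)/(\rnb-1)$... more directly: one has $\RD{\rnb}{\mW}{\mQ}=\tfrac{\psi(\rnb)}{\rnb-1}$ with $\psi(\rnb)=\ln\EXS{\mW}{\fX^{\rnb-1}}$. Since $\psi(1)=0$, Taylor gives $\psi(\rnb)=(\rnb-1)\psi'(1)+\tfrac{(\rnb-1)^2}{2}\psi''(\xi)$ for some $\xi\in(1,\rnb)$, hence $\RD{\rnb}{\mW}{\mQ}-\RD{1}{\mW}{\mQ}=\tfrac{\rnb-1}{2}\psi''(\xi)$. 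So everything reduces to bounding $\psi''(\xi)$ for $\xi\in(1,\lgm)$. Writing $A(\rnb)=\EXS{\mW}{\fX^{\rnb-1}}$, $B(\rnb)=\EXS{\mW}{\fX^{\rnb-1}\ln\fX}$, $C(\rnb)=\EXS{\mW}{\fX^{\rnb-1}(\ln\fX)^2}$, one gets $\psi''=\tfrac{C}{A}-\tfrac{B^2}{A^2}\leq\tfrac{C}{A}$. The key estimate is therefore to bound $\tfrac{C(\xi)}{A(\xi)}=\tfrac{\EXS{\mW}{\fX^{\xi-1}(\ln\fX)^2}}{\EXS{\mW}{\fX^{\xi-1}}}$.

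The main obstacle — and the step I expect to require the most care — is controlling $\EXS{\mW}{\fX^{\xi-1}(\ln\fX)^2}$ when $\fX$ can be arbitrarily large (we only assumed $\RD{\lgm}{\mW}{\mQ}\leq\gamma$, not a bound on $\der{\mW}{\mQ}$). The plan is to split the range of $\fX$ at a threshold: on $\{\fX\leq e^{t}\}$ for a well-chosen $t$ one uses $(\ln\fX)^2\leq t^2$ crudely when that helps, but more efficiently one uses the elementary inequality $(\ln s)^2 s^{\xi-1}\leq \tfrac{4}{e^2}\bigvee \bigl(\text{something}\bigr)$; precisely, for $s\geq 1$ and any $\delta>0$, $s^{\xi-1}(\ln s)^2\leq \tfrac{4}{e^2\delta^2}s^{\xi-1+\delta}$, since $\sup_{u\geq0}u^2 e^{-\delta u}=\tfrac{4}{e^2\delta^2}$ with $u=\ln s$. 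Taking $\delta=\lgm-\xi>0$ (so $\xi-1+\delta=\lgm-1$) converts the $(\ln\fX)^2$ factor into the higher moment $\EXS{\mW}{\fX^{\lgm-1}}=e^{(\lgm-1)\RD{\lgm}{\mW}{\mQ}}\leq e^{(\lgm-1)\gamma}$; and on the complementary part where $\fX<1$ one has $(\ln\fX)^2\fX^{\xi-1}\leq \tfrac{4}{e^2}$ trivially for $\xi\geq1$ (maximizing $u^2e^{-u}$ again, now with $u=-\ln\fX\geq0$ and exponent $\xi-1\geq0$). Combining, $C(\xi)\leq \tfrac{4}{e^2}\bigl(1+\tfrac{1}{(\lgm-\xi)^2}e^{(\lgm-1)\gamma}\EXS{\mW}{\fX^{\lgm-1}}\bigr)$ roughly, and dividing by $A(\xi)\geq A(1)=1$... one must be slightly more careful because $\xi$ ranges over all of $(1,\rnb)$ so $\lgm-\xi$ ranges in $(\lgm-\rnb,\lgm-1)$, giving the worst case $\lgm-\xi=\lgm-\rnb$; also the exponent $(\lgm-1)\gamma$ should be tightened to $(\rnb-1)\gamma$ by reindexing the threshold split so that only the mass above a level contributing at order $\rnb$ is pushed up to order $\lgm$. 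Matching the precise constants in \eqref{eq:lem:taylor} — the $\tfrac{2}{e^2}$, the $e^{(\rnb-1)\gamma}$, and especially the $\bigl(\tfrac{\gamma e^{\tau}}{2\tau}\bigr)^2$ with $\tau=\tfrac{(\lgm-\rnb)\gamma}{2}\wedge 1$ — will be the fiddly part: the $\tau$-truncation is chosen to balance "raise the exponent by $\tau/\gamma\cdot$(something)" against "pay $e^{\tau}$", and I would verify it by carrying the split threshold as a free parameter $t$, optimizing, and confirming the stated $\tau$ is (up to the harmless $\wedge 1$) the optimizer; the $A(\xi)\geq e^{(\xi-1)\RD{1}{\mW}{\mQ}}$ lower bound via Jensen keeps the denominator from hurting us.
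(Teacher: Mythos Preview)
Your approach is essentially the paper's: Taylor-expand near order $1$, bound the second derivative by splitting on $\{\fX<1\}$ versus $\{\fX\geq1\}$, and use $\sup_{u\geq0}u^{2}e^{-\delta u}=4/(e\delta)^{2}$ to trade the $(\ln\fX)^{2}$ factor for a higher moment controlled by $\RD{\lgm}{\mW}{\mQ}\leq\gamma$. Two points need repair.

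First, your pointwise bound on $\{\fX<1\}$ is wrong as stated: $(\ln\fX)^{2}\fX^{\xi-1}$ is \emph{not} bounded by $4/e^{2}$ for $\xi$ close to $1$ (at $\xi=1$ there is no decay as $\fX\downarrow0$). The fix is to pass to the $\mQ$-expectation, as the paper does throughout: $C(\xi)=\EXS{\mW}{\fX^{\xi-1}(\ln\fX)^{2}}=\EXS{\mQ}{\fX^{\xi}(\ln\fX)^{2}}$, and on $\{\fX<1\}$ the integrand $\fX^{\xi}(\ln\fX)^{2}$ is bounded by $(2/(e\xi))^{2}\leq 4/e^{2}$ since $\xi>1$. (The paper in fact Taylor-expands $\gX(\rno)\DEF\EXS{\mQ}{\fX^{\rno}}$ and then uses $\ln z\leq z-1$, rather than expanding $\ln\gX$; the bookkeeping is equivalent.)

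Second, the passage from $e^{(\lgm-1)\gamma}$ to $e^{(\rnb-1)\gamma}$ is not a threshold-in-$\fX$ optimization. The paper's step is cleaner: once one has
\[
\RD{\rnb}{\mW}{\mQ}-\RD{1}{\mW}{\mQ}\leq\tfrac{2(\rnb-1)}{e^{2}}\Bigl(1+\tfrac{e^{(\lgm-1)\gamma}}{(\lgm-\rnb)^{2}}\Bigr),
\]
note that by monotonicity $\RD{\rno}{\mW}{\mQ}\leq\gamma$ for every $\rno\in(\rnb,\lgm]$, so the same derivation yields the same inequality with $\lgm$ replaced by any such $\rno$. Minimizing $e^{(\rno-1)\gamma}/(\rno-\rnb)^{2}=e^{(\rnb-1)\gamma}\cdot e^{(\rno-\rnb)\gamma}/(\rno-\rnb)^{2}$ over $\rno\in(\rnb,\lgm]$ gives the optimizer $\rno=\lgm\wedge(\rnb+2/\gamma)$, which is exactly what $\tau=\tfrac{(\lgm-\rnb)\gamma}{2}\wedge1$ encodes. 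Your ``variable $\delta$'' in the elementary inequality is this same optimization in disguise (set the target order to $\xi+\delta$), so drop the separate threshold-in-$\fX$ idea and optimize over the target order instead.
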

The \renyi divergences with orders greater than one are not customarily used for
establishing the SPB;
Sheverdyaev's proof in \cite{sheverdyaev82}, is an exception in this 
respect.\footnote{
To be precise Sheverdyaev does not explicitly use \renyi 
divergences in \cite{sheverdyaev82}, but his analysis can be easily expressed 
via \renyi divergences.}
In \S\ref{sec:tradeoff}, we use Lemma \ref{lem:taylor} to construct an auxiliary channel
with certain properties desirable for our purposes,
see Lemma \ref{lem:tradeoff}.

\begin{proof}[Proof of Lemma \ref{lem:taylor}]
\(\RD{\rnb}{\mW}{\mQ}-\RD{1}{\mW}{\mQ}\) is non-negative
 because the \renyi divergence is a nondecreasing 
function of the order by Lemma \ref{lem:divergence-order}. 
In order to bound \(\RD{\rnb}{\mW}{\mQ}-\RD{1}{\mW}{\mQ}\) from above we use Taylor's theorem
on the function \(\gX(\rno)\) defined as follows:
\begin{align}
\notag
\gX(\rno)
&\DEF\EXS{\mQ}{(\der{\mW}{\mQ})^{\rno}}.
\end{align}
\(\gX(\rno)\) is continuous in \(\rno\) on \((0,\lgm)\) by \cite[Cor. 2.8.7]{bogachev} because 
\(\EXS{\mQ}{(\der{\mW}{\mQ})^{\lgm}}=e^{(\lgm-1) \RD{\lgm}{\mW}{\mQ}}<\infty\) by the hypothesis
and \((\der{\mW}{\mQ})^{\rno}\leq 1+(\der{\mW}{\mQ})^{\lgm}\).
In order to apply Taylor's theorem to \(\gX(\rno)\), we show that \(\gX(\rno)\) is twice 
differentiable and bound its second derivative. 
To that end, first note that we can bound \(\dinp^{\rno}\abs{\ln\dinp}^{\knd}\) for any \(\rno\!\in\!(0,\lgm)\) and 
\(\knd\!\in\!\{1,2\}\), using the derivative test as follows
\begin{align}
\notag
\dinp^{\rno}\abs{\ln\dinp}^{\knd}
&\leq  (\tfrac{\knd}{e \rno })^{\knd} \IND{\dinp\in [0,1]}+(\tfrac{\knd}{e(\lgm-\rno)})^{\knd}\dinp^{\lgm}\IND{\dinp\in (1,\infty)}.
\end{align}
Hence, for all \(\rno\in(0,\lgm)\) and \(\knd\in\{1,2\}\) we have
\begin{align}
\notag
\abs{\der{^{\knd}}{\rno^{\knd}} (\der{\mW}{\mQ})^{\rno}}
&=(\der{\mW}{\mQ})^{\rno}\abs{\ln \der{\mW}{\mQ}}^{\knd}
\\
\label{eq:taylor-A}
&\leq  (\tfrac{\knd}{e \rno })^{\knd}+(\tfrac{\knd}{e(\lgm-\rno)})^{\knd}(\der{\mW}{\mQ})^{\lgm}.
\end{align} 
The expression on the right hand side has finite expectation under \(\mQ\)  for any \(\rno\in(0,\lgm)\) and
\(\knd\in\{1,2\}\). Thus \(\gX(\rno)\) is  twice differentiable in \(\rno\) on \((0,\lgm)\) 
by \cite[Cor. 2.8.7]{bogachev}. Furthermore, for  \(\rno\) in \((0,\lgm)\) and \(\knd\in\{1,2\}\)
we have
\begin{align}
\label{eq:taylor-B}
\der{^{\knd}}{\rno^{\knd}}\gX(\rno)
&=\EXS{\mQ}{(\der{\mW}{\mQ})^{\rno} (\ln \der{\mW}{\mQ})^{\knd}}.
\end{align}
Since \(\gX(\rno)\) is twice differentiable applying Taylor's theorem \cite[Appendix B4]{dudley}  
around \(\rno=1\) we get
\begin{align}
\label{eq:taylor-C}
\hspace{-.3cm}\gX(\rnb)
&\!\leq\!1\!+\!(\rnb\!-\!1)\!\left.\der{}{\rno}\gX(\rno)\right\vert_{\rno=1}
\!+\!\tfrac{(\rnb-1)^2}{2!}\!\!\sup\limits_{\rno\in (1,\rnb)}\!\der{^{2}}{\rno^{2}}\gX(\rno).
\end{align} 
On the other hand using \eqref{eq:taylor-A} and \eqref{eq:taylor-B} we get
\begin{align}
\label{eq:taylor-D}
\der{^2}{\rno^{2}}\gX(\rno)
&\leq \left(\tfrac{2}{e}\right)^{2}+\left(\tfrac{2}{e(\lgm-\rnb)}\right)^{2} \gX(\lgm)
&
&\forall \rno\in (1,\rnb).
\end{align}
Then using the identity \(\ln \dsta \leq \dsta-1\) together with
\eqref{eq:taylor-B}, \eqref{eq:taylor-C}, and \eqref{eq:taylor-D} we get
the following inequality \(\rnb\in (1,\lgm)\)
\begin{align}
\notag
\ln \gX(\rnb)
&\leq (\rnb-1) \RD{1}{\mW}{\mQ}+(\rnb-1)^2\tfrac{2}{e^{2}}
\left(1+\tfrac{\gX(\lgm)}{(\lgm-\rnb)^{2}} \right).
\end{align}
On the other hand \(\gX(\rno)=e^{(\rno-1)\RD{\rno}{\mW}{\mQ}}\) by definition and
\(\RD{\lgm}{\mW}{\mQ}\leq\gamma\) by the hypothesis. Thus
\begin{align}
\label{eq:taylor-E}
\RD{\rnb}{\mW}{\mQ}-\RD{1}{\mW}{\mQ}
&\leq(\rnb-1)\tfrac{2}{e^{2}}\left(1+\tfrac{e^{(\lgm-1)\gamma}}{(\lgm-\rnb)^{2}}\right).
\end{align}
Note that \(\RD{\rno}{\mW}{\mQ}\leq \gamma\) for any \(\rno\in(\rnb,\lgm)\) because 
\(\RD{\lgm}{\mW}{\mQ}\leq \gamma\) and the \renyi divergence is a nondecreasing 
in its order by Lemma \ref{lem:divergence-order}.
Thus using the analysis leading to \eqref{eq:taylor-E}, we get
the following inequality \(\forall \rno \in (\rnb,\lgm]\)
\begin{align}
\label{eq:taylor-F}
\RD{\rnb}{\mW}{\mQ}-\RD{1}{\mW}{\mQ}
&\leq(\rnb-1)\tfrac{2}{e^{2}}\left(1+\tfrac{e^{(\rno-1)\gamma}}{(\rno-\rnb)^{2}}\right).
\end{align}
Using the derivative test we can confirm that the least upper bound among the upper bounds
given in \eqref{eq:taylor-F} is the one at \(\rno=\lgm\wedge (\tfrac{2}{\gamma}+\rnb)\) 
and the resulting upper bound is the one given in  \eqref{eq:lem:taylor}. 
As a side note, let us point out that the least upper bound is strictly less than the upper 
bound at \(\rno=\lgm\) iff \(\gamma(\lgm-\rnb)>2\).
\end{proof}
\subsection{The Auxiliary Channel Method}\label{sec:tradeoff}
Haroutunian's seminal paper \cite{haroutunian68}, establishing the SPB 
for the stationary product channels with finite input sets, used the performance of a 
code on an auxiliary channel as an anchor to bound its performance on the actual channel.
To the best of our knowledge, this is the first explicit use of the auxiliary channel method.
In a nutshell, auxiliary channel method can be described as follows: 
\begin{enumerate}[(i)]
\item Choose an auxiliary channel \(\Vm\!:\!\inpS\!\to\!\pmea{\outA}\) based on 
the actual channel \(\Wm\!:\!\inpS\!\to\!\pmea{\outA}\)
and the code \((\enc,\dec)\).
\item Bound the performance of \((\enc,\dec)\) on \(\Vm\).
\item Bound the performance of \((\enc,\dec)\) on \(\Wm\) 
using the bound derived in part (ii) and the features of \(\Vm\).
\end{enumerate}
Many infeasibility results that are derived without using the auxiliary channel method, 
can be interpreted as an implicit application of the auxiliary channel method,
as well. As an example, let us consider a version of Arimoto's  bound,
due to Augustin \cite[Thm. 27.2-(ii)]{augustin78}, given in the following:
If \(M\) and \(L\) are positive integers satisfying \(\ln\tfrac{M}{L}>\RC{1}{\Wm}\) for a 
channel \(\Wm\),
then the average error probability \(\Pem{av}\) of any  \((M,L)\) channel code on 
\(\Wm\) satisfies
\begin{align}
\label{eq:lem:augustin}
\Pem{av}
&\geq 1- e^{\frac{\rno-1}{\rno}(\RC{\rno}{\Wm}-\ln\frac{M}{L})}
&
&\forall \rno>1.
\end{align}
Augustin obtained \eqref{eq:lem:augustin} by a convexity argument in \cite{augustin78};
but it can be derived using the auxiliary channel method as well:
Let \(\Vm\!:\!\inpS\!\to\!\pmea{\outA}\) be  such that \(\Vm(\dinp)=\qmn{\rno,\Wm}\) for all \(\dinp\in\inpS\) 
and \(\Pem{\Vm}\) be the average error probability of \((\enc,\dec)\) on \(\Vm\), 
then \(\Pem{\Vm}\!\geq\!1\!-\!\tfrac{L}{M}\) for  any \((M,L)\) channel code \((\enc,\dec)\).
Furthermore,
\(\RC{\rno}{\Wm}\!\geq\!\RD{\rno}{\mP \mtimes \Wm}{\mP \mtimes \Vm}\)
by Theorem \ref{thm:minimax}
and 
 \(\RD{\rno}{\mP \mtimes\!\Wm\!}{\mP \mtimes\!\Vm\!}\!\geq\!
 \tfrac{\ln [(\Pem{av})^{\rno}(\Pem{\Vm})^{1-\rno}
 	+(1-\Pem{av})^{\rno}(1-\Pem{\Vm})^{1-\rno}]}{\rno-1}\) by
Lemma \ref{lem:divergence-DPI}.
Thus
\(\RC{\rno}{\Wm}\!\geq\tfrac{\ln [(1-\Pem{av})^{\rno}(1-\Pem{\Vm})^{1-\rno}]}{\rno-1}\)
and \eqref{eq:lem:augustin} follows.

In \cite{haroutunian77}, Haroutunian applied the auxiliary channel method to 
bound the error probability of codes on DSPCs with feedback from below.
The exponential decay rate of Haroutunian's bound with block length, however,
is greater than the sphere packing exponent for most channels.   
In \S\ref{sec:fpouterbound}, we use the auxiliary channel method
---via subblocks --- to establish a SPB.  
To do that, 
we employ the auxiliary channel described in Lemma \ref{lem:tradeoff}-(\ref{tradeoff-B},\ref{tradeoff-C}), 
given in the following.  
In \S\ref{sec:haroutunian}, we demonstrate that
one can establish a SPB for codes on DSPCs with feedback
by applying Lemma \ref{lem:tradeoff}-(\ref{tradeoff-B},\ref{tradeoff-C}) to subblocks
and invoking Haroutunian's bound in \cite{haroutunian77},
as well.

Lemma \ref{lem:tradeoff} describes its auxiliary channels using 
the order \(\rno\) \renyi center \(\qmn{\rno,\Wm}\) described in Theorem \ref{thm:minimax}, 
the average \renyi center \(\qma{\rno,\Wm}{\epsilon}\) described in  Definition \ref{def:avcenter},
and the tilted channel defined in the following.
\begin{definition}\label{def:tiltedchannel}
	For any \(\rno\!\in\!\reals{+}\), \(\!\Wm\!:\!\inpS\!\to\!\pmea{\outA}\), 
	and \(\mQ\!\in\!\pmea{\outA}\) such that
	\(\sup_{\dinp\in\inpS}\RD{\rno}{\Wm(\dinp)}{\mQ}\!<\!\infty\), \emph{the order \(\rno\) tilted channel} 
	\(\Wma{\rno}{\mQ}\!:\!\inpS\!\to\!\pmea{\outA}\)
	is defined  as 
	\begin{align}
	\label{eq:def:tiltedchannel}
	\der{\Wma{\rno}{\mQ}(\dinp)}{\rfm}
	&\DEF e^{(1-\rno)\RD{\rno}{\Wm(\dinp)}{\mQ}}
	(\der{\Wm(\dinp)}{\rfm})^{\rno} (\der{\mQ}{\rfm})^{1-\rno}
	\end{align}
	for all \(\dinp\in\inpS\)
	where \(\rfm\in\pmea{\outA}\) satisfies 
	\(\Wm(\dinp)\AC\rfm\) and \(\mQ\AC\rfm\).
\end{definition}
\begin{lemma}\label{lem:tradeoff}
	For any channel \(\Wm\!:\!\inpS\!\to\!\pmea{\outA}\) satisfying
	both \(\RC{0^{_{+}}\!}{\Wm}\!\neq\!\RC{1}{\Wm}\) and 
	\(\lim_{\rno\uparrow 1}\frac{1-\rno}{\rno}\RC{\rno}{\Wm}=0\)
	and rate \(\rate\) in \((\RC{0^{_{+}}\!}{\Wm},\RC{1}{\Wm})\)
	there exist a \(\rnf\in(0,1)\) such that 
	\begin{align}
	\notag
	\rate
	&=\RC{\rnf}{\Wm}
	\end{align}
	and an \(\rnt\in(\rnf,1)\) such that
	\begin{align}
	\notag
	\spe{\rate,\Wm}
	&=\tfrac{1-\rnt}{\rnt}\RC{\rnt}{\Wm}.
	\end{align}
	Furthermore \(\Wm\), \(\rate\), \(\rnf\), \(\rnt\) satisfy the following assertions.
	\begin{enumerate}[(a)]
		\item\label{tradeoff-A}
There exists an \(\fX\!:\!\inpS\!\to\![\rnf,\rnt]\) satisfying 
both \eqref{eq:lem:tradeoff-A-1} and \eqref{eq:lem:tradeoff-A-2} 
for all \(\dinp\!\in\!\inpS\).
\begin{align}
\label{eq:lem:tradeoff-A-1}
\RD{1}{\Wma{\fX(\dinp)}{\qmn{\fX(\dinp),\Wm}}(\dinp)}{\qmn{\fX(\dinp),\Wm}}
&\leq \rate
\\
\label{eq:lem:tradeoff-A-2}
\RD{1}{\Wma{\fX(\dinp)}{\qmn{\fX(\dinp),\Wm}}(\dinp)}{\Wm(\dinp)}
&\leq \spe{\rate,\Wm}
\end{align}
\item\label{tradeoff-B}
For all \(\epsilon\in (0,\sfrac{\rnf}{2})\) 
there exists an \(\fX_{\epsilon}\!:\!\inpS\!\to\![\rnf,\rnt]\) satisfying 
both \eqref{eq:lem:tradeoff-B-1} and \eqref{eq:lem:tradeoff-B-2} 
for all \(\dinp\!\in\!\inpS\).
\begin{align}
\label{eq:lem:tradeoff-B-1}
\hspace{-.5cm}\!\RD{1}{\!\Wma{\fX_{\epsilon}(\dinp)}{\qma{\fX_{\epsilon}(\dinp),\Wm}{\epsilon}}\!(\dinp)}{\qma{\fX_{\epsilon}(\dinp),\Wm}{\epsilon}\!}
&\!\leq\!\rate+\tfrac{2\epsilon\RC{\sfrac{1}{2}}{\Wm}}{\rnf(1-\rnf)^{2}}
\\
\label{eq:lem:tradeoff-B-2}
\!\RD{1}{\!\Wma{\fX_{\epsilon}(\dinp)}{\qma{\fX_{\epsilon}(\dinp),\Wm}{\epsilon}}\!(\dinp)}{\!\Wm(\dinp)}
&\!\leq\!\spe{\rate,\Wm}\!+\!\tfrac{2\epsilon\RC{\sfrac{1}{2}}{\Wm}}{\rnf^{2}(1-\rnt)}\!
\end{align}

		\item\label{tradeoff-C}
If \(\epsilon\in (0,\sfrac{\rnf}{2})\) then 
		for all \(\rnb\in(1,\tfrac{1+\rnt}{2\rnt})\) we have
		\begin{align}
		\notag
		\hspace{-.3cm}
		\RC{\rnb}{\Wma{\fX_{\epsilon}}{\qma{\fX_{\epsilon},\Wm}{\epsilon}}}
		&\!\leq\!\rate\!+\!\tfrac{2\epsilon\RC{\sfrac{1}{2}}{\Wm}}{\rnf(1-\rnf)^{2}}\!+\!\ln \tfrac{1}{\epsilon}
		\\
		\label{eq:lem:tradeoff-C}
		&\!\qquad\!+\!
(\rnb\!-\!1)e^{(\rnb-1)\frac{2\RC{\sfrac{1}{2}}{\Wm}}{1-\rnt}}
\!\left[\!\tfrac{4\vee 2\RC{\sfrac{1}{2}}{\Wm}}{1-\rnt}\right]^{2}\!.
		\end{align}
	\end{enumerate}
\end{lemma}
\begin{remark}
There is a slight abuse of notation in the symbol 
\(\Wma{\fX_{\epsilon}}{\qma{\fX_{\epsilon},\Wm}{\epsilon}}\)
in Lemma \ref{lem:tradeoff}-(\ref{tradeoff-C}).
It  stands for a \(\Vm:\inpS\to\pmea{\outA}\) satisfying
\(\Vm(\dinp)=\Wma{\fX_{\epsilon}(\dinp)}{\qma{\fX_{\epsilon}(\dinp),\Wm}{\epsilon}}(\dinp)\)
for all \(\dinp\in\inpS\)
\end{remark}

\begin{proof}[Proof of Lemma \ref{lem:tradeoff}]
\(\RC{\rno}{\Wm}\) is continuous in \(\rno\) on  \((0,1]\) by  
Lemma \ref{lem:capacityO}-(\ref{capacityO-zo}).
Then for any \(\rate\in(\RC{0^{_{+}}\!}{\Wm},\RC{1}{\Wm})\) there exists a \(\rnf\in (0,1)\) 
such that \(\rate=\RC{\rnf}{\Wm}\) by the intermediate value theorem \cite[4.23]{rudin}. 
Furthermore, \(\spe{\rate,\Wm}\leq \tfrac{1-\rnf}{\rnf}\RC{\rnf}{\Wm}\) by
the expression for \(\spe{\rate,\Wm}\) given in Lemma \ref{lem:spherepackingexponent} and 
the monotonicity of \(\tfrac{1-\rno}{\rno}\RC{\rno}{\Wm}\) in \(\rno\) established in
Lemma \ref{lem:capacityO}-(\ref{capacityO-zo}).
Then the existence of \(\rnt\) follows from  the intermediate value theorem, \cite[4.23]{rudin},
and the hypothesis of the lemma because \(\tfrac{1-\rno}{\rno}\RC{\rno}{\Wm}\) is continuous 
in \(\rno\) by Lemma \ref{lem:capacityO}-(\ref{capacityO-zo}).
\begin{enumerate}
\item[(\ref{tradeoff-A})] \(\qmn{\rno,\Wm}\) is continuous in \(\rno\) by 
Lemma \ref{lem:centercontinuity}.
Thus we can replace \(\qma{\rno,\Wm}{\epsilon}\) with \(\qmn{\rno,\Wm}\) in the proof of part (\ref{tradeoff-B}) 
to prove this part. 

\item[(\ref{tradeoff-B})]
We prove the existence of the function \(\fX_{\epsilon}\) by showing that 
\eqref{eq:lem:tradeoff-B-1} and \eqref{eq:lem:tradeoff-B-2}
are satisfied for some \(\rno\in[\rnf,\rnt]\) for each \(\dinp\in\inpS\).
We denote 
\(\Wma{\rno}{\qma{\rno,\Wm}{\epsilon}}(\dinp)\)
---which is \(\Wma{\rno}{\mQ}(\dinp)\) defined in \eqref{eq:def:tiltedchannel}
for \(\mQ=\qma{\rno,\Wm}{\epsilon}\)---
by \(\vmn{\rno}\) in the proof of this part.
Note that \(\vmn{\rno}\) satisfies
\begin{align}
\notag
\hspace{-.2cm}
\RD{1}{\vmn{\rno}}{\qma{\rno,\Wm}{\epsilon}}\!+\!\tfrac{\rno}{1-\rno} \RD{1}{\vmn{\rno}}{\!\Wm(\dinp)}
\!=\!\RD{\rno}{\!\Wm(\dinp)}{\qma{\rno,\Wm}{\epsilon}}
\end{align}
for all \(\rno\!\in\!(0,1)\). 
Then \eqref{eq:lem:avcapacity} of Lemma \ref{lem:avcapacity} implies that
\begin{align}
\label{eq:tradeoff-B-1}
\RD{1}{\vmn{\rno}}{\qma{\rno,\Wm}{\epsilon}}+\tfrac{\rno}{1-\rno} \RD{1}{\vmn{\rno}}{\!\Wm(\dinp)}
&\leq \RCI{\rno}{\Wm}{\epsilon}
\end{align}
for all \(\rno\!\in\!(0,1)\).
Then using the non-negativity of the \renyi divergence, 
which is implied by Lemma \ref{lem:divergence-pinsker}, we get
\begin{align}
\label{eq:tradeoff-B-2}
\RD{1}{\vmn{\rnf}}{\qma{\rnf,\Wm}{\epsilon}}
&\leq \RCI{\rnf}{\Wm}{\epsilon},
&
&
\\
\label{eq:tradeoff-B-3}
\RD{1}{\vmn{\rnt}}{\!\Wm(\dinp)}
&\leq \tfrac{1-\rnt}{\rnt} \RCI{\rnt}{\Wm}{\epsilon}.
&
&
\end{align}
As a result of \eqref{eq:tradeoff-B-2}, \(\RD{1}{\vmn{\rnf}}{\qma{\rnf,\Wm}{\epsilon}}\) and  
\(\RD{1}{\vmn{\rnt}}{\qma{\rnt,\Wm}{\epsilon}}\) satisfy one of the following three cases:
\begin{enumerate}[(i)]
\item If \(\RD{1}{\vmn{\rnf}}{\qma{\rnf,\Wm}{\epsilon}}\!=\!\RCI{\rnf}{\Wm}{\epsilon}\), 
then \(\RD{1}{\vmn{\rnf}}{\Wm(\dinp)}\!=\!0\) by \eqref{eq:tradeoff-B-1}.
Then \eqref{eq:lem:tradeoff-B-1} and \eqref{eq:lem:tradeoff-B-2} 
hold for \(\rno\!=\!\rnf\) 
as a result of  \eqref{eq:orderoneovertwo} and  \eqref{eq:avcapacity:bound}.
\item If \(\RD{1}{\vmn{\rnt}}{\qma{\rnt,\Wm}{\epsilon}}\!\leq\!\RCI{\rnf}{\Wm}{\epsilon}\), 
then \eqref{eq:lem:tradeoff-B-1}  and \eqref{eq:lem:tradeoff-B-2} 
hold for \(\rno\!=\!\rnt\) 
as a result of \eqref{eq:orderoneovertwo}, \eqref{eq:avcapacity:bound}, and \eqref{eq:tradeoff-B-3}.

\item If \(\!\RD{1}{\vmn{\rnf}}{\!\qma{\rnf,\Wm}{\epsilon}\!}\!<\!\RCI{\rnf}{\Wm}{\epsilon}\) 
and
\(\!\RD{1}{\vmn{\rnt}}{\!\qma{\rnt,\Wm}{\epsilon}\!}\!>\!\RCI{\rnf}{\Wm}{\epsilon}\), 
then \(\RD{1}{\vmn{\rno}}{\qma{\rno,\Wm}{\epsilon}}\!=\!\RCI{\rnf}{\Wm}{\epsilon}\)
for some \(\rno\!\in\!(\rnf,\rnt)\) by the intermediate value theorem 
\cite[4.23]{rudin} 
provided that \(\RD{1}{\vmn{\rno}}{\qma{\rno,\Wm}{\epsilon}}\) is continuous in \(\rno\). 
The continuity of \(\RD{1}{\vmn{\rno}}{\qma{\rno,\Wm}{\epsilon}}\),
on the other hand, follows from 
\(\lon{\qma{\rno,\Wm}{\epsilon}-\qma{\rno',\Wm}{\epsilon}}\!\leq\!\tfrac{1-\epsilon}{\epsilon}\!\abs{\rno\!-\!\rno'}\),
which holds for all \(\rno,\rno'\!\in\!(0,1)\),
and
Lemma \ref{lem:tilting}-(\ref{tilting-divergence}).
The \(\rno\) satisfying \(\RD{1}{\vmn{\rno}}{\qma{\rno,\Wm}{\epsilon}}=\RCI{\rnf}{\Wm}{\epsilon}\)
satisfies \eqref{eq:lem:tradeoff-B-1} 
as a result of \eqref{eq:orderoneovertwo} and  \eqref{eq:avcapacity:bound}.
Furthermore, 
\(\RD{1}{\vmn{\rno}}{\Wm(\dinp)}\!\leq\!\spa{\epsilon}{\RCI{\rnf}{\Wm}{\epsilon},\Wm}\)
for the same \(\rno\) 
by \eqref{eq:tradeoff-B-1}
and the definition of the average sphere packing exponent given in \eqref{eq:def:avspherepacking}. 
On the other hand, \(\spa{\epsilon}{\rate,\Wm}\) is a nonincreasing in \(\rate\)
because  it is the pointwise supremum of such functions.
Then 
\(\spa{\epsilon}{\RCI{\rnf}{\!\Wm}{\epsilon},\Wm}\!\leq\!\spe{\rate,\!\Wm}\!+\!\tfrac{2\epsilon}{\rnf^{2}}\rate\)
by Lemma \ref{lem:avspherepacking}.
Thus  \eqref{eq:lem:tradeoff-B-2} holds for \(\rno\) satisfying 
\(\RD{1}{\vmn{\rno}}{\qma{\rno,\Wm}{\epsilon}}=\RCI{\rnf}{\Wm}{\epsilon}\) 
by \eqref{eq:orderoneovertwo}.
\end{enumerate}
\item[(\ref{tradeoff-C})] 
We introduce two shorthands for notational brevity:
\begin{align}
\notag
\Vm(\dinp)
&=\Wma{\fX_{\epsilon}(\!\dinp\!)}{\qma{\fX_{\epsilon}(\!\dinp\!),\Wm}{\epsilon}}(\dinp),
\\
\notag
\Qm(\dinp)
&=\qma{\fX_{\epsilon}(\!\dinp\!),\!\Wm\!}{\epsilon}.
\end{align}
The \renyi divergence is a nondecreasing function of the order by 
Lemma \ref{lem:divergence-order} and \(\fX_{\epsilon}(\!\dinp\!)\in[\rnf,\rnt]\) for 
all \(\dinp\in\inpS\) by part (\ref{tradeoff-B}); then
\begin{align}
\label{eq:tradeoff-C-2}
\RD{\frac{1}{\rnt}}{\Vm(\dinp)}{\Qm(\dinp)}
&\leq \RD{\frac{1}{\fX_{\epsilon}(\!\dinp\!)}}{\Vm(\dinp)}{\Qm(\dinp)}.
\end{align}
The definitions of the \renyi divergence, \(\Vm\), and \(\Qm\) imply
\begin{align}
\notag
\hspace{-.3cm}\RD{\frac{1}{\fX_{\epsilon}(\!\dinp\!)}}{\Vm(\dinp)}{\Qm(\dinp)}
&\!=\!\RD{\fX_{\epsilon}(\!\dinp\!)}{\Wm(\dinp)}{\Qm(\dinp)}\\
\notag
&~~~~+\!\tfrac{1}{\frac{1}{\fX_{\epsilon}(\!\dinp\!)}-1}
\ln \EXS{\rfm}{\der{\Wm(\dinp)}{\rfm} \IND{\frac{\dif{\Qm(\dinp)}}{\dif{\rfm}}>0}}
\\
\label{eq:tradeoff-C-3}
&\leq  \RD{\fX_{\epsilon}(\!\dinp\!)}{\Wm(\dinp)}{\Qm(\dinp)}.
\end{align}
Using \eqref{eq:lem:avcapacityB} of Lemma \ref{lem:avcapacity}, together with 
\(\fX_{\epsilon}(\!\dinp\!)\leq \rnt\) and
\(\epsilon\leq\sfrac{\rnf}{2}\leq \sfrac{1}{2}\), we get 
\begin{align}
\label{eq:tradeoff-C-4}
\RD{\fX_{\epsilon}(\!\dinp\!)}{\Wm(\dinp)}{\Qm(\dinp)}
&\leq  \tfrac{2\RC{\sfrac{1}{2}}{\Wm}}{1-\rnt}.
\end{align}
First bounding \(\RD{\frac{1}{\rnt}}{\Vm(\dinp)}{\Qm(\dinp)}\) 
using \eqref{eq:tradeoff-C-2}, \eqref{eq:tradeoff-C-3}, \eqref{eq:tradeoff-C-4}
and then applying  Lemma \ref{lem:taylor} we get,
\begin{align}
\notag
&\RD{\rnb}{\Vm(\dinp)}{\Qm(\dinp)}-\RD{1}{\Vm(\dinp)}{\Qm(\dinp)}
\\
\notag
&\hspace{1.4cm}\leq
\tfrac{2(\rnb-1)}{e^{2}}\left(1+e^{(\rnb-1)\frac{2\RC{\sfrac{1}{2}}{\Wm}}{1-\rnt}}(\tfrac{2\RC{\sfrac{1}{2}}{\Wm}}{1-\rnt}\tfrac{e^{\tau_{\rnb}}}{2\tau_{\rnb}})^{2}\right)
\end{align}
for all \(\rnb\!\in\!(1,\tfrac{1}{\rnt})\), \(\dinp\in\inpS\)
where \(\tau_{\rnb}\!=\!(\tfrac{1}{\rnt}-\rnb)\frac{\RC{\sfrac{1}{2}}{\Wm}}{1-\rnt}\wedge 1\).
Since \(\sfrac{e^{\dsta}}{\dsta}\) is a decreasing function of \(\dsta\) on \((0,1)\), 
\begin{align}
\notag
&\RD{\rnb}{\Vm(\dinp)}{\Qm(\dinp)}-\RD{1}{\Vm(\dinp)}{\Qm(\dinp)}
\\
\notag
&\hspace{1.4cm}\leq
\tfrac{2(\rnb-1)}{e^{2}}\left(1+e^{(\rnb-1)\frac{2\RC{\sfrac{1}{2}}{\Wm}}{1-\rnt}}(\tfrac{\RC{\sfrac{1}{2}}{\Wm}}{1-\rnt}\tfrac{e^{\tau}}{\tau})^{2}\right)
\end{align}
for all \(\rnb\!\in\!(1,\tfrac{1+\rnt}{2\rnt})\), \(\dinp\!\in\!\inpS\) 
where \(\tau=\frac{\RC{\sfrac{1}{2}}{\Wm}}{2\rnt}\wedge 1\).
Thus
\begin{align}
\notag
&\RD{\rnb}{\Vm(\dinp)}{\Qm(\dinp)}-\RD{1}{\Vm(\dinp)}{\Qm(\dinp)}
\\
\notag
&\hspace{1.4cm}\leq
2(\rnb-1)
\left(1+e^{(\rnb-1)\frac{2\RC{\sfrac{1}{2}}{\Wm}}{1-\rnt}}
(\tfrac{2\vee \RC{\sfrac{1}{2}}{\Wm}}{1-\rnt})^{2}\right)
\\
\label{eq:tradeoff-C-5}
&\hspace{1.4cm}\leq
(\rnb\!-\!1)e^{(\rnb-1)\frac{2\RC{\sfrac{1}{2}}{\Wm}}{1-\rnt}}
\left[\tfrac{4\vee 2\RC{\sfrac{1}{2}}{\Wm}}{1-\rnt}\right]^{2}
\end{align}
On the other hand \(\Qm(\dinp)\leq \tfrac{1}{\epsilon} \qmn{\Wm}\) 
for all  \(\dinp\!\in\!\inpS\) by the definition of \(\Qm(\dinp)\)
where
\(\qmn{\Wm}\DEF\int_{0}^{1} \qmn{\dsta,\Wm}  \dif{\dsta}\). 
Then as a result of Lemma \ref{lem:divergence-RM} we have
\begin{align}
\label{eq:tradeoff-C-6}
\RD{\rnb}{\Vm(\dinp)}{\qmn{\Wm}}
&\leq \RD{\rnb}{\Vm(\dinp)}{\Qm(\dinp)}+\ln \tfrac{1}{\epsilon}.
\end{align}
Since \(\RR{\rnb}{\Vm}\leq \RRR{\rnb}{\Vm}{\qmn{\Wm}}\)
by definition,
\eqref{eq:lem:tradeoff-B-1},
\eqref{eq:tradeoff-C-5} and \eqref{eq:tradeoff-C-6} imply
\begin{align}
\notag
\hspace{-.4cm}\RR{\rnb}{\Vm}
&\!\leq\!
\rate\!+\!\tfrac{2\epsilon\rate}{\rnf(1-\rnf)^{2}}\!+\!\ln \tfrac{1}{\epsilon}
\\ 
\notag
&\qquad~\quad
\!+\!(\rnb\!-\!1)e^{(\rnb-1)\frac{2\RC{\sfrac{1}{2}}{\Wm}}{1-\rnt}}
\!\left[\!\tfrac{4\vee 2\RC{\sfrac{1}{2}}{\Wm}}{1-\rnt}\right]^{2}\!\!.
\end{align}
Then \eqref{eq:lem:tradeoff-C} follows from \(\RC{\rnb}{\Vm}=\RR{\rnb}{\Vm}\)
established in Theorem \ref{thm:minimax}.
\end{enumerate}
\vspace{-.3cm}
\end{proof} 

\subsection[A Non-asymptotic SPB]{\!A Non-asymptotic SPB for Product Channels with Feedback}\label{sec:fpouterbound}
The ultimate aim of this subsection is to prove Theorem \ref{thm:fDSPCexponent}.
To that end we first derive the following parametric bound on 
the error probability of codes on DSPCs with feedback.
\begin{lemma}\label{lem:spb-Fproduct} 
	Let \(\blx\) be a positive integer, \(\Wmn{\vec{[1,\blx]}}\) be a DSPC with feedback satisfying 
	\(\Wmn{\tin}\!=\!\Wm\) for all \(\tin\) for a \(\Wm\) for which  \(\RC{0^{_{+}}\!}{\Wm}\!\neq\!\RC{1}{\Wm}\),
	\(\rno_{0}\!<\!\rno_{1}\!<\!\dsta\) be orders in \((0,1)\) 
	satisfying\footnote{The existence of such a \(\dsta\) is established in Lemma \ref{lem:tradeoff}.}
	\(\tfrac{1-\dsta}{\dsta}\RC{\dsta}{\Wm}\!=\!\spe{\RC{\rno_{1}}{\Wm},\!\Wm\!}\), 
	and \(M\), \(L\), \(\knd\) be positive integers satisfying
	\(\lfloor\tfrac{\blx}{\knd}\rfloor\RC{\sfrac{1}{2}}{\Wm}\geq 2\) and
	\begin{align}
	\label{eq:lem:spb-Fproduct-hypothesis}
	\hspace{-.2cm}
	\RC{\rno_{1}}{\!\Wm}\!\geq\! 
	\tfrac{1}{\blx}\!\ln\!\tfrac{M}{L}
	&\!\geq\! 
	\RC{\rno_{0}}{\!\Wm}\!+\!\tfrac{\RC{\sfrac{1}{2}}{\!\Wm}}{1-\dsta}\!
	\!\left[\!\tfrac{2\epsilon}{\rno_{0}(1-\dsta)}\!+\!\tfrac{14}{\sqrt[3]{\knd}}\!\right]\!
	\!+\!\tfrac{\knd}{\blx}\!\ln\!\tfrac{1}{\epsilon}\!\!
	\end{align}
	for an \(\epsilon\!\in\!(0,\tfrac{\rno_{0}}{2})\).  Then any \((M,L)\) channel code on  
	\(\Wmn{\vec{[1,\blx]}}\) satisfies
	\begin{align}
	\label{eq:lem:spb-Fproduct}
	\hspace{-.3cm}
	\Pem{av}
	&\!\geq\!\tfrac{1}{4}
	e^{-\blx\left[\!\spe{\frac{1}{\blx}\ln\frac{M}{L},\Wm}
		+\frac{\RC{\sfrac{1}{2}}{\Wm}}{\rno_{0}(1-\dsta)}\left[\!\frac{6\epsilon}{\rno_{0}(1-\dsta)}+\frac{15}{\sqrt[3]{\knd}}\right]
		-\frac{\knd \ln\epsilon}{\blx\rno_{0}}\right]}\!.
	\end{align}
\end{lemma}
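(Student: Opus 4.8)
\emph{Overall strategy.} The plan is to run Haroutunian's auxiliary‑channel method at the level of \emph{subblocks}, with the auxiliary channel supplied by Lemma~\ref{lem:tradeoff} and its Taylor ingredient Lemma~\ref{lem:taylor}. Put $\ell\DEF\lfloor\blx/\knd\rfloor$ and regard $\Wmn{\vec{[1,\blx]}}$ (with $\Wmn{\tin}=\Wm$ for all $\tin$) as a length‑$\knd$ product‑with‑feedback channel each of whose components is a copy of the length‑$\ell$ discrete stationary product channel with feedback $\Wmn{\vec{[1,\ell]}}$, the last component absorbing the $\blx-\knd\ell$ leftover coordinates; an $(M,L)$ code on $\Wmn{\vec{[1,\blx]}}$ is then an $(M,L)$ code on this block‑structured channel. (The hypothesis $\lfloor\blx/\knd\rfloor\RC{\sfrac{1}{2}}{\Wm}\geq2$ is there only so that $4\vee2\ell\RC{\sfrac{1}{2}}{\Wm}=2\ell\RC{\sfrac{1}{2}}{\Wm}$ when \eqref{eq:lem:tradeoff-C} is applied to $\Wmn{\vec{[1,\ell]}}$.)

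\emph{Constructing the auxiliary subblock channel.} I would apply Lemma~\ref{lem:tradeoff} to $\Wmn{\vec{[1,\ell]}}$; this is legitimate because $\RC{\rno}{\Wmn{\vec{[1,\ell]}}}=\ell\RC{\rno}{\Wm}$ by Lemma~\ref{lem:capacityFproduct}, so $\RC{0^{_{+}}\!}{\Wmn{\vec{[1,\ell]}}}\neq\RC{1}{\Wmn{\vec{[1,\ell]}}}$, and $\lim_{\rno\uparrow1}\tfrac{1-\rno}{\rno}\RC{\rno}{\Wmn{\vec{[1,\ell]}}}=0$ since $\RC{1}{\Wm}<\infty$ for a discrete channel and $\RC{\rno}{\Wm}$ is continuous at $1$ by Lemma~\ref{lem:capacityO}-(\ref{capacityO-zo}). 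I would feed it the per‑subblock rate $\rate$ chosen so that the returned order $\rnf$ satisfies $\RC{\rnf}{\Wmn{\vec{[1,\ell]}}}=\rate$ and so that $\rate/\ell$ sits a prescribed amount $\delta$ below $\tfrac{1}{\blx}\ln\tfrac{M}{L}$; the lower bound in \eqref{eq:lem:spb-Fproduct-hypothesis} is arranged precisely so that $\rno_{0}\leq\rnf\leq\rno_{1}$, whence, by the monotonicity of $\tfrac{1-\rno}{\rno}\RC{\rno}{\Wm}$ and the defining relation of $\dsta$, the associated order $\rnt$ satisfies $\rnt\leq\dsta$. Lemma~\ref{lem:tradeoff}-(\ref{tradeoff-B},\ref{tradeoff-C}) then yields an auxiliary channel $\Vm\DEF\Wma{\fX_{\epsilon}}{\qma{\fX_{\epsilon},\Wmn{\vec{[1,\ell]}}}{\epsilon}}$ on the length‑$\ell$ feedback input set for which, bounding $\RC{\sfrac{1}{2}}{\Wmn{\vec{[1,\ell]}}}$ by $\ell\RC{\sfrac{1}{2}}{\Wm}$ and $\rnf,\rnt$ by $\rno_{0},\dsta$,
\begin{align}
\notag
\sup\nolimits_{\dinp}\RD{1}{\Vm(\dinp)}{\Wmn{\vec{[1,\ell]}}(\dinp)}
&\leq\ell\,\spe{\tfrac{\rate}{\ell},\Wm}+\tfrac{2\epsilon\ell\RC{\sfrac{1}{2}}{\Wm}}{\rno_{0}^{2}(1-\dsta)},
\\
\notag
\RC{\rnb}{\Vm}
&\leq\rate+\tfrac{2\epsilon\ell\RC{\sfrac{1}{2}}{\Wm}}{\rno_{0}(1-\dsta)^{2}}+\ln\tfrac{1}{\epsilon}+(\rnb-1)\,e^{(\rnb-1)\frac{2\ell\RC{\sfrac{1}{2}}{\Wm}}{1-\dsta}}\Bigl[\tfrac{2\ell\RC{\sfrac{1}{2}}{\Wm}}{1-\dsta}\Bigr]^{2}
\end{align}
for all $\rnb\in(1,\tfrac{1+\dsta}{2\dsta})$. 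Taking $\rnb-1$ of order $\bigl(\ell\RC{\sfrac{1}{2}}{\Wm}\sqrt[3]{\knd}\,\bigr)^{-1}$ keeps the exponential factor $O(1)$ and the Taylor remainder at most of order $\tfrac{\ell\RC{\sfrac{1}{2}}{\Wm}}{(1-\dsta)^{2}\sqrt[3]{\knd}}$; dividing by $\ell$, this — together with the once‑per‑subblock $\ln\tfrac{1}{\epsilon}$ (contributing $\tfrac{\knd}{\blx}\ln\tfrac{1}{\epsilon}$ per letter) and a lower‑order $O(\knd^{-2/3})$ slack needed below — fixes $\delta$ to the bracketed quantity of \eqref{eq:lem:spb-Fproduct-hypothesis} plus $\tfrac{\knd}{\blx}\ln\tfrac{1}{\epsilon}$, so that $\rate/\ell\geq\RC{\rno_{0}}{\Wm}$ by \eqref{eq:lem:spb-Fproduct-hypothesis}.

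\emph{Above capacity on the auxiliary channel, change of measure, concentration.} Chaining $\knd$ copies of $\Vm$ with feedback produces a channel $\Vm^{\vec{[1,\knd]}}$ with $\RC{\rnb}{\Vm^{\vec{[1,\knd]}}}=\knd\RC{\rnb}{\Vm}$ (Lemma~\ref{lem:capacityFproduct}), which by the choice of $\delta$ lies below $\ln\tfrac{M}{L}$ by more than $\tfrac{\rnb}{\rnb-1}\ln2$; applying \eqref{eq:lem:arimoto} to $\Vm^{\vec{[1,\knd]}}$ together with $\mA^{\rnb}\mB^{1-\rnb}\leq e^{(\rnb-1)\div{\rnb}{\mA}{\mB}}$ (as in the proof of Lemma~\ref{lem:arimoto}) forces $\Pem{\Vm^{\vec{[1,\knd]}}}\geq\tfrac12$, so at least $M/2$ messages $\dmes$ have $\PXS{\Vm^{\dmes}}{\dmes\notin\dec(\dout)}\geq\tfrac12$, where $\Vm^{\dmes}\DEF\Vm^{\vec{[1,\knd]}}(\enc(\dmes))$. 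For each such message, writing $\Wm^{\dmes}\DEF\Wmn{\vec{[1,\blx]}}(\enc(\dmes))$ and $D_{\dmes}\DEF\RD{1}{\Vm^{\dmes}}{\Wm^{\dmes}}$, a change of measure gives $\Pem{\dmes}\geq e^{-D_{\dmes}-\gamma}\,\PXS{\Vm^{\dmes}}{\dmes\notin\dec(\dout),\ \ln\der{\Vm^{\dmes}}{\Wm^{\dmes}}\leq D_{\dmes}+\gamma}$ for any $\gamma>0$; by the block structure $\ln\der{\Vm^{\dmes}}{\Wm^{\dmes}}=\sum_{j=1}^{\knd}Z_{j}$, where under $\Vm^{\dmes}$, conditioned on the earlier subblocks, $Z_{j}$ has conditional mean $\leq\sup_{\dinp}\RD{1}{\Vm(\dinp)}{\Wmn{\vec{[1,\ell]}}(\dinp)}$ and conditional moments bounded uniformly in $j$ (apply Lemma~\ref{lem:MomentBound} letter‑by‑letter inside subblock $j$, then Minkowski). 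A conditional Chebyshev / martingale‑difference estimate over the $\knd$ terms then gives $\PXS{\Vm^{\dmes}}{\ln\der{\Vm^{\dmes}}{\Wm^{\dmes}}>D_{\dmes}+\gamma}\leq\tfrac14$ for a $\gamma=O(\sqrt{\blx})$, hence $\Pem{\dmes}\geq\tfrac14e^{-D_{\dmes}-\gamma}$. Finally $D_{\dmes}\leq\knd\sup_{\dinp}\RD{1}{\Vm(\dinp)}{\Wmn{\vec{[1,\ell]}}(\dinp)}$ and, by convexity of $\spe{\cdot,\Wm}$ and the bound $\leq1/\rno_{0}$ on its right derivative below capacity together with $\rate/\ell\leq\tfrac{1}{\blx}\ln\tfrac{M}{L}$, $\spe{\tfrac{\rate}{\ell},\Wm}\leq\spe{\tfrac{1}{\blx}\ln\tfrac{M}{L},\Wm}+\delta/\rno_{0}$; multiplying through by $\ell$ and then $\knd$, the exponent $D_{\dmes}+\gamma$ becomes $\blx\,\spe{\tfrac{1}{\blx}\ln\tfrac{M}{L},\Wm}$ plus the advertised $O(\epsilon)$, $O(\knd^{-1/3})$ and $O(\knd^{-1}\ln\tfrac{1}{\epsilon})$ corrections per letter. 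Averaging over the $M$ messages (keeping only the chosen $M/2$) yields \eqref{eq:lem:spb-Fproduct}.

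\emph{Main obstacle.} The crux is the rate bookkeeping of the first two steps: Lemma~\ref{lem:tradeoff}-(\ref{tradeoff-C}) only controls $\RC{\rnb}{\Vm}$ by $\rate$ \emph{plus} the strictly positive $\ln\tfrac{1}{\epsilon}$ and the Taylor remainder of Lemma~\ref{lem:taylor}, so to keep $\RC{\rnb}{\Vm^{\vec{[1,\knd]}}}$ below $\ln\tfrac{M}{L}$ one must run Lemma~\ref{lem:tradeoff} at a rate a definite amount below the code rate, paying for it through the increase of $\spe{\cdot,\Wm}$ in \eqref{eq:lem:tradeoff-B-2}. Showing that, for the choice $\rnb-1\asymp(\ell\RC{\sfrac{1}{2}}{\Wm}\sqrt[3]{\knd}\,)^{-1}$, the three error sources — the $\ln\tfrac{1}{\epsilon}$ paid once per subblock, the Taylor remainder $\asymp\knd^{-1/3}$ per letter, and the Arimoto slack and concentration widths of orders $\knd^{-2/3}$ and $\blx^{-1/2}$ — add up to exactly the $o(1)$‑per‑letter quantities in \eqref{eq:lem:spb-Fproduct-hypothesis} and \eqref{eq:lem:spb-Fproduct}, and that Lemmas~\ref{lem:tradeoff} and~\ref{lem:capacityFproduct} really do apply both to the length‑$\ell$ feedback channel $\Wmn{\vec{[1,\ell]}}$ and to the feedback‑chained auxiliary channel $\Vm^{\vec{[1,\knd]}}$, is the main technical work.
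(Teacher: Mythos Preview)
Your overall architecture is right and matches the paper: split into $\knd$ subblocks of length $\ell_{\ind}\approx\blx/\knd$, apply Lemma~\ref{lem:tradeoff} to each subblock channel $\Umn{\ind}=\Wmn{\vec{[1+\tin_{\ind-1},\tin_{\ind}]}}$ at a rate $\ell_{\ind}\rate$ with $\rate$ defined by subtracting from $\tfrac{1}{\blx}\ln\tfrac{M}{L}$ exactly the bracketed quantity of \eqref{eq:lem:spb-Fproduct-hypothesis} plus $\tfrac{\knd}{\blx}\ln\tfrac{1}{\epsilon}$, chain the resulting $\Vmn{\ind}$'s with feedback, choose $\rnb-1=\tfrac{\knd^{2/3}(1-\dsta)}{4\blx\RC{\sfrac{1}{2}}{\Wm}}$, and invoke \eqref{eq:lem:augustin} on $\Vmn{\vec{[1,\knd]}}$. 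Where you diverge is in the \emph{last} step, and there the paper's argument is both simpler and cleaner than your change-of-measure plus concentration.

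The paper does not do any pointwise change of measure or martingale concentration at all. It simply applies the data-processing inequality (Lemma~\ref{lem:divergence-DPI}) to the indicator of the error event:
\[
\RD{1}{\mP\mtimes\Vmn{\vec{[1,\knd]}}}{\mP\mtimes\Umn{\vec{[1,\knd]}}}\;\geq\;\div{1}{\Pem{\Vmn{\vec{[1,\knd]}}}}{\Pem{av}}\;\geq\;-\ln 2-\Pem{\Vmn{\vec{[1,\knd]}}}\ln\Pem{av},
\]
bounds the left side by $\sup_{\dinp}\RD{1}{\Vmn{\vec{[1,\knd]}}(\dinp)}{\Umn{\vec{[1,\knd]}}(\dinp)}$ via the chain rule, and uses $\Pem{\Vmn{\vec{[1,\knd]}}}\geq\tfrac{\sqrt[3]{\knd}}{1+\sqrt[3]{\knd}}$ from \eqref{eq:lem:augustin}. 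The multiplicative factor $\tfrac{1+\sqrt[3]{\knd}}{\sqrt[3]{\knd}}$ on $\blx\spe{\rate,\Wm}$ is what produces the $\knd^{-1/3}$ correction; no moment bounds enter.

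Your concentration step has a concrete gap. Inside a subblock, the auxiliary $\Vmn{\ind}(\mA)$ is the tilt of $\Umn{\ind}(\mA)$ toward the \emph{averaged} center $\qma{\fX_{\epsilon}(\mA),\Umn{\ind}}{\epsilon}=\tfrac{1}{\epsilon}\int \qmn{\rnt,\Umn{\ind}}\dif{\rnt}$. Although each $\qmn{\rnt,\Umn{\ind}}=\bigotimes_{\tin}\qmn{\rnt,\Wm}$ is a product (Lemma~\ref{lem:capacityFproduct}), the mixture over $\rnt$ is \emph{not}, so $\Vmn{\ind}(\mA)$ does not factor over letters and $\ln\der{\Vmn{\ind}(\mA)}{\Umn{\ind}(\mA)}$ cannot be decomposed as a sum of letter-wise terms. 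Thus ``apply Lemma~\ref{lem:MomentBound} letter-by-letter inside subblock $j$, then Minkowski'' does not go through. You can salvage the step by applying Lemma~\ref{lem:MomentBound} once \emph{per subblock} (with $(1-\rno)\RD{\rno}{\Umn{\ind}(\mA)}{\qma{\rno,\Umn{\ind}}{\epsilon}}\leq \ell_{\ind}\RC{\sfrac{1}{2}}{\Wm}/(1-\epsilon)$ from \eqref{eq:lem:avcapacityB}), but then the conditional standard deviation of each $Z_{\ind}$ is of order $\ell_{\ind}\RC{\sfrac{1}{2}}{\Wm}/\rno_{0}$ and a martingale Chebyshev over $\knd$ blocks gives $\gamma=O\bigl(\blx\RC{\sfrac{1}{2}}{\Wm}/(\rno_{0}\sqrt{\knd})\bigr)$, not $O(\sqrt{\blx})$. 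That is harmless for the exponent (it is dominated by the $\knd^{-1/3}$ term), but it introduces an error term the lemma does not have and will not reproduce the specific constants in \eqref{eq:lem:spb-Fproduct-hypothesis}--\eqref{eq:lem:spb-Fproduct}. The DPI route avoids all of this.
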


Lemma \ref{lem:spb-Fproduct} is proved using the auxiliary channel method:
\begin{enumerate}[(i)]
\item Apply Lemma \ref{lem:tradeoff} on subblocks
to choose \(\Vm\)
\item Use \eqref{eq:lem:augustin} to bound the error probability on \(\Vm\),
i.e. \(\Pem{\Vm}\).
\item Use Lemma \ref{lem:divergence-DPI} to bound \(\Pem{av}\)
in terms of \(\Pem{\Vm}\). 
\end{enumerate}
We have described all ingredients of the proof strategy given above,
except the concept of subblocks. 
Before the proof of Lemma \ref{lem:spb-Fproduct}, let us
revisit the DPC with feedback  and introduce the concept of subblocks.

Any DPC with feedback can be reinterpreted as a shorter
DPC with feedback with larger component channels,
which we call subblocks. 
Consider for example a length \(\blx\) DPC with feedback 
\(\Wmn{\vec{[1,\blx]}}\).
Recall that the input set of \(\Wmn{\vec{[1,\blx]}}\)
can be written in terms of the input and output sets of
the component channels as 
\begin{align}
\notag
&\bigtimes\nolimits_{\tin=1}^{\blx}{\inpS_{\tin}}^{\outS_{1}^{\tin-1}}
\end{align} 
where \(\set{A}^{\set{B}}\) is the set of all functions from
the set \(\set{B}\) to the set \(\set{A}\),
\(\set{A}^{\emptyset}=\set{A}\),
\(\outS_{\ind}^{\jnd}=\bigtimes\nolimits_{\tin=\ind}^{\jnd}\outS_{\tin}\)
for all integers \(\ind\leq\jnd\),
and \(\outS_{\ind}^{\jnd}=\emptyset\) 
for all integers \(\ind>\jnd\).
Furthermore, the output set of \(\Wmn{\vec{[1,\blx]}}\) is
\(\outS_{1}^{\blx}\)
and the transition probabilities of 
\(\Wmn{\vec{[1,\blx]}}\) can be written as
\begin{align}
\notag
\Wmn{\vec{[1,\blx]}}(\dout_{1}^{\blx}|\enc_{1}^{\blx})
&\!=\!\prod\nolimits_{\tin=1}^{\blx}\Wmn{\tin}(\dout_{\tin}|\enc_{\tin}(\dout_{1}^{\tin-1})).
\end{align}
where \(\enc_{\tin}\in {\inpS_{\tin}}^{\outS_{1}^{\tin-1}}\).

The preceding description can be modified 
to define a subblock \(\Wmn{\vec{[\tau,\tin]}}\) 
for any \(\tin>\tau\), analogously.
Furthermore, these subblocks can be used to construct alternative
descriptions of the DPC with feedback.
Let \(\tin_{0},\ldots,\tin_{\knd}\) a sequence of integers satisfying  
\(\tin_{0}\!=\!0\), \(\tin_{\knd}\!=\!\blx\), and 
\(\tin_{\jnd}\!<\!\tin_{\ind}\) for all \(\jnd\!<\!\ind\)
and \(\Umn{\ind}\!:\!\set{A}_{\ind}\to\pdis{\set{B}_{\ind}}\) 
be \(\Wmn{\vec{[1+\tin_{\ind-1},\tin_{\ind} ]}}\)
for each \(\ind\!\in\!\{1,\ldots,\knd\}\):
\begin{align}
\notag
\set{A}_{\ind}
&\!=\!\bigtimes\nolimits_{\jnd=1+\tin_{\ind-1}}^{\tin_{\ind}} {\inpS_{\jnd}}^{\outS_{1+\tin_{\ind-1}}^{\jnd-1}}
\\
\notag
\set{B}_{\ind}
&\!=\!\outS_{1+\tin_{\ind-1}}^{\tin_{\ind}}
\\
\notag
\Umn{\ind}(\bmn{\ind}|\amn{\ind})
&\!=\!\prod\nolimits_{\jnd=1+\tin_{\ind-1}}^{\tin_{\ind}}\Wmn{\jnd}(\dout_{\jnd}|\enc_{\jnd}(\dout_{1+\tin_{\ind-1}}^{\jnd-1}))
&
&
\end{align}
where 
\(\enc_{\jnd}\!\in\!{\inpS_{\jnd}}^{\outS_{1+\tin_{\ind-1}}^{\jnd-1}}\),
\(\amn{\ind}\!=\!\enc_{1+\tin_{\ind-1}}^{\tin_{\ind}}\),
and
\(\bmn{\ind}\!=\!\dout_{1+\tin_{\ind-1}}^{\tin_{\ind}}\).

Then the length \(\blx\) DPC with feedback \(\Wmn{\vec{[1,\blx]}}\) and 
the length \(\knd\) DPC with feedback \(\Umn{\vec{[1,\knd]}}\) are 
representing the same channel:
\begin{align}
\notag
\bigtimes\nolimits_{\tin=1}^{\blx}{\inpS_{\tin}}^{\outS_{1}^{\tin-1}}
&=\bigtimes\nolimits_{\ind=1}^{\knd}{\set{A}_{\ind}}^{\set{B}_{1}^{\ind-1}}
\\
\notag
\outS_{1}^{\blx}
&=\set{B}_{1}^{\knd}
\\
\notag
\Wmn{\vec{[1,\blx]}}(\dout_{1}^{\blx}|\enc_{1}^{\blx})
&=\Umn{\vec{[1,\knd]}}(\bma{1}{\knd}|\tenc_{1}^{\knd})
\end{align}
where \(\tenc_{\ind}(\bma{1}{\ind-1})
\!=\!(\enc_{1+\tin_{\ind-1}}(\bma{1}{\ind-1}),\ldots,\enc_{\tin_{\ind}}(\cdot,\bma{1}{\ind-1}))\)
and \(\bmn{\ind}\!=\!\dout_{1+\tin_{\ind-1}}^{\tin_{\ind}}\).
This observation plays a crucial role in the proof of Lemma \ref{lem:spb-Fproduct}
and hence in establishing the SPB for codes on the DPCs with feedback.

\begin{proof}[Proof of Lemma \ref{lem:spb-Fproduct}]
We divide the interval \([1,\blx]\) into \(\knd\) subintervals of,
approximately, equal length: 
we set \(\tin_{0}\) to zero and define \(\ell_{\ind}\) and \(\tin_{\ind}\)
for \(\ind \in\{1,\ldots,\knd\}\) as follows
\begin{align}
\notag
\ell_{\ind}
&\DEF \lceil  \sfrac{\blx}{\knd} \rceil  \IND{\ind\leq \blx-\lfloor \sfrac{\blx}{\knd} \rfloor \knd}+
\lfloor \sfrac{\blx}{\knd} \rfloor \IND{\ind >   \blx-\lfloor \sfrac{\blx}{\knd} \rfloor \knd},
\\
\notag
\tin_{\ind}
&\DEF \tin_{\ind-1}+\ell_{\ind}.
\end{align}
The length \(\blx\) DSPC with feedback \(\Wmn{\vec{[1,\blx]}}\) 
can be interpreted as a length \(\knd\) 
DPC\footnote{\(\Umn{\vec{[1,\blx]}}\) is stationary iff \(\ell_{\ind}\) 
	is same for all \(\ind\), i.e. iff \(\sfrac{\blx}{\knd}\) is an integer.}
with feedback \(\Umn{\vec{[1,\knd]}}\) for \(\Umn{\ind}\!:\!\set{A}_{\ind}\to\set{B}_{\ind} \)
defined as follows
\begin{align}
\notag
\Umn{\ind}
&\DEF \Wmn{\vec{[1+\tin_{\ind-1},\tin_{\ind}]}}
&
&\forall \ind\in \{1,\ldots,\knd\}.
\end{align}
As a result any \((M,L)\) channel code \((\enc,\dec)\) on the channel 
\(\Wmn{\vec{[1,\blx]}}\!:\!(\bigtimes\nolimits_{\tin=1}^{\blx}{\inpS_{\tin}}^{\outS_{1}^{\tin-1}})
\!\to\!\pdis{\outS_{1}^{\blx}}\) 
is also an \((M,L)\) channel code on 
\(\Umn{\vec{[1,\knd]}}\!:\!(\bigtimes\nolimits_{\ind=1}^{\knd}{\set{A}_{\ind}}^{\set{B}_{1}^{\ind-1}})
\!\to\!\pdis{\set{B}_{1}^{\knd}}\) 
with exactly the same error probability. 
In the rest of the proof we work with the latter interpretation.

Since \(\Wmn{\tin}\!=\!\Wm\) for all \(\tin\),  
Lemma \ref{lem:capacityFproduct} and the definition of the sphere packing exponent imply 
\begin{align}
\label{eq:spb-Fproduct-01}
\RC{\rno}{\Umn{\ind}}
&\!=\!\ell_{\ind}\RC{\rno}{\Wm}
\\
\label{eq:spb-Fproduct-02}
\spe{\RC{\rno}{\Umn{\ind}},\Umn{\ind}}
&\!=\!\ell_{\ind} \spe{\RC{\rno}{\Wm},\Wm}
\end{align}
for all \(\ind\in\{1,\ldots,\knd\}\) and \(\rno\in(0,1)\).

We define \(\rnf\) and \(\rnt\) by applying Lemma \ref{lem:tradeoff} to 
\(\Wm\) for \(\rate\) defined as
\begin{align}
\label{eq:spb-Fproduct-03}
\rate
&\!\DEF \tfrac{1}{\blx}\!\ln\!\tfrac{M}{L}
\!-\!\tfrac{\RC{\sfrac{1}{2}}{\!\Wm}}{1-\dsta}\!\!\left[\!\tfrac{2\epsilon}{\rno_{0}(1-\dsta)}\!+\!\tfrac{14}{\sqrt[3]{\knd}}\!\right]\!-\!\tfrac{\knd}{\blx}\!\ln\!\tfrac{1}{\epsilon}.
\end{align}
Then \(\rnf\!\in\![\rno_{0},\rno_{1}]\) by \eqref{eq:lem:spb-Fproduct-hypothesis}
and the monotonicity of \(\RC{\rno}{\Wm}\) in \(\rno\), i.e. Lemma \ref{lem:capacityO}-(\ref{capacityO-ilsc}).
Hence, the definition of \(\dsta\),
the monotonicity of \(\tfrac{1-\rno}{\rno}\RC{\rno}{\Wm}\) in \(\rno\), i.e. Lemma \ref{lem:capacityO}-(\ref{capacityO-zo}),
and
the monotonicity of \(\spe{\rate,\Wm}\) in \(\rate\), i.e. Lemma \ref{lem:spherepackingexponent},
imply \(\rnt\in[\rno_{0},\dsta]\).

For each \(\ind\!\in\!\{1,\ldots,\knd\}\), we define  \(\rnf_{\ind}\) and \(\rnt_{\ind}\) by applying Lemma \ref{lem:tradeoff} 
to \(\Umn{\ind}\) for \(\ell_{\ind} \rate\).
Then  \(\rnf_{\ind}\!=\!\rnf\) and \(\rnt_{\ind}\!=\!\rnt\) for all \(\ind\) by \eqref{eq:spb-Fproduct-01} and  \eqref{eq:spb-Fproduct-02}.
We denote \(\Wma{\fX_{\epsilon}}{\qma{\fX_{\epsilon},\Wm}{\epsilon}}\) 
resulting  from applying Lemma \ref{lem:tradeoff}-(\ref{tradeoff-B},\ref{tradeoff-C})
to \(\Umn{\ind}\) by \(\Vmn{\ind}:\set{A}_{\ind}\to\pdis{\set{B}_{\ind}}\), i.e.
\begin{align}
\label{eq:spb-Fproduct-04}
\Vmn{\ind}(\mA)
&={\Umn{\ind}}_{\fX_{\epsilon}(\mA)}^{\qma{\fX_{\epsilon}(\mA),\Umn{\ind}}{\epsilon}}(\mA)
&
&\forall \mA\in \set{A}_{\ind}. 
\end{align}
Then \(\rnf\in[\rno_{0},\rno_{1}]\), \(\rnt\in[\rno_{0},\dsta]\),
and  Lemma \ref{lem:tradeoff}-(\ref{tradeoff-B}) imply
\begin{align}
\label{eq:spb-Fproduct-05}
\hspace{-.2cm}
\!\RD{1}{\!\Vmn{\ind}\!(\mA)}{\!\Umn{\ind}\!(\mA)}
&\!\leq\!\spe{\ell_{\ind}\rate,\Umn{\ind}}\!+\!\tfrac{2\epsilon\RC{\sfrac{1}{2}}{\Umn{\ind}}}{\rno_{0}^{2}(1-\dsta)}\!
&
&\forall\mA\!\in\!\set{A}_{\ind}
\end{align}
On the other hand, \(\RC{\sfrac{1}{2}}{\Umn{\ind}}\geq 2\) by \eqref{eq:spb-Fproduct-01}
and the hypothesis \(\lfloor\tfrac{\blx}{\knd}\rfloor\RC{\sfrac{1}{2}}{\Wm}\!\geq\!2\). 
Thus  Lemma \ref{lem:tradeoff}-(\ref{tradeoff-C}) implies
\begin{align}
\notag
\!\!\RC{\rnb}{\Vmn{\ind}}
&\!\leq\!\ell_{\ind}\rate\!+\!\tfrac{2\epsilon\RC{\sfrac{1}{2}}{\Umn{\ind}}}{\rno_{0}(1-\dsta)^{2}}\!+\!\ln \tfrac{1}{\epsilon}
\!+\!
(\rnb\!-\!1)e^{(\rnb-1)\!\frac{2\RC{\sfrac{1}{2}}{\Umn{\ind}}}{1-\dsta}}
\!\left[\!\tfrac{2\RC{\sfrac{1}{2}}{\Umn{\ind}}}{1-\dsta}\!\right]^{2}
\end{align}
for all \(\rnb\!\in\!(1,\tfrac{1+\dsta}{2\dsta})\). 
Furthermore,
\(\lfloor\tfrac{\blx}{\knd}\rfloor\RC{\sfrac{1}{2}}{\Wm}\!\geq\!2\) and \(\knd\!\geq\!1\)
imply
\(1\!+\!\tfrac{\knd^{\sfrac{2}{3}}(1-\dsta)}{4\blx\RC{\sfrac{1}{2}}{\Wm}}\!\leq\!\tfrac{1+\dsta}{2\dsta}\).
Hence
\begin{align}
\notag
\RC{\rnb}{\Vmn{\ind}}
&\!\leq\ell_{\ind}\rate\!+\!\tfrac{2\epsilon\RC{\sfrac{1}{2}}{\Umn{\ind}}}{\rno_{0}(1-\dsta)^{2}}\!+\!
\ln \tfrac{1}{\epsilon}
+\tfrac{e^{\sfrac{1}{\sqrt[3]{\knd}}}}{\sqrt[3]{\knd}}\tfrac{2\RC{\sfrac{1}{2}}{\Umn{\ind}}}{1-\dsta}
\\
\label{eq:spb-Fproduct-06}
&\!\leq\ell_{\ind}\rate\!+\!\tfrac{2\epsilon\RC{\sfrac{1}{2}}{\Umn{\ind}}}{\rno_{0}(1-\dsta)^{2}}\!+\!
\ln \tfrac{1}{\epsilon}
+\tfrac{6}{\sqrt[3]{\knd}}\tfrac{\RC{\sfrac{1}{2}}{\Umn{\ind}}}{1-\dsta}
\end{align}
for \(\rnb=1+\tfrac{\knd^{\sfrac{2}{3}}(1-\dsta)}{4\blx \RC{\sfrac{1}{2}}{\Wm}}\).

We use \(\Vmn{\ind}\)'s described in \eqref{eq:spb-Fproduct-04} to define the
length \(\knd\)  DPC with 
feedback  
\(\Vmn{\vec{[1,\knd]}}\!:\!\left(\bigtimes\nolimits_{\ind=1}^{\knd}{\set{A}_{\ind}}^{\set{B}_{1}^{\ind-1}}\right)
\!\to\!\pdis{\set{B}_{1}^{\knd}}\).
Then using Lemma \ref{lem:capacityFproduct},  \eqref{eq:spb-Fproduct-01},  
\eqref{eq:spb-Fproduct-03}, and \eqref{eq:spb-Fproduct-06} we get
\begin{align}
\notag
\RC{\rnb}{\Vmn{\vec{[1,\blx]}}}
&\!\leq\!\ln\!\tfrac{M}{L}\!-\!\blx\tfrac{\RC{\sfrac{1}{2}}{\!\Wm}}{1-\dsta}\tfrac{8}{\sqrt[3]{\knd}}
\end{align}
for \(\rnb=1+\tfrac{\knd^{\sfrac{2}{3}} (1-\dsta)}{4\blx\RC{\sfrac{1}{2}}{\Wm}}\).
We bound the average error probability of \(\!(\enc,\dec)\!\) 
on \(\!\Vmn{\vec{[1,\blx]}}\!\), i.e. \(\!\Pem{\!\Vmn{\vec{[1,\blx]}}}\!\),
using \eqref{eq:lem:augustin}
and \(\tau\geq\ln(1+\tau)\): 
\begin{align}
\notag
\Pem{\Vmn{\vec{[1,\blx]}}}
&\geq 1-e^{-\sqrt[3]{\knd}}
\\
\label{eq:spb-Fproduct-07}
&\geq \tfrac{\sqrt[3]{\knd}}{1+\sqrt[3]{\knd}}.
\end{align}

On the other hand  \eqref{eq:spb-Fproduct-01}, \eqref{eq:spb-Fproduct-02}, and \eqref{eq:spb-Fproduct-05}
imply
\begin{align}
\label{eq:spb-Fproduct-08}
\hspace{-.2cm}
\RD{1}{\!\Vmn{\vec{[1,\knd]}}(\dinp)}{\!\Umn{\vec{[1,\knd]}}(\dinp)}
&\!\leq\!\blx\spe{\rate,\Wm}
\!+\!\blx\tfrac{2\epsilon  \RC{\sfrac{1}{2}}{\Wm}}{\rno_{0}^{2}(1-\dsta)}
\end{align}
for all \(\dinp\!\in\!\left(\bigtimes\nolimits_{\ind=1}^{\knd}{\set{A}_{\ind}}^{\set{B}_{1}^{\ind-1}}\right)\).

Let \(\mP\) be the probability distribution generated by the encoder \(\enc\) on the input set of \(\Umn{\vec{[1,\knd]}}\),
i.e. on \(\!\left(\!\bigtimes\nolimits_{\ind=1}^{\knd}{\set{A}_{\ind}}^{\set{B}_{1}^{\ind-1}}\!\right)\!\), 
for the uniform distribution over the message set. 
Then
Lemma \ref{lem:divergence-DPI}
and the identity 
\(\tau \ln \tau\!+\!(1-\tau)\ln(1-\tau)\!\geq\!\ln\sfrac{1}{2}\), 
which holds for all \(\tau\in[0,1]\),
imply
\begin{align}
\label{eq:spb-Fproduct-09}
\RD{1}{\mP \mtimes  \Vmn{\vec{[1,\knd]}}}{\mP \mtimes  \Umn{\vec{[1,\knd]}}}
&\geq \ln\sfrac{1}{2}-\Pem{\Vmn{\vec{[1,\blx]}}} \ln \Pem{av}.
\end{align}

Note that \(\RD{1}{\!\mP\mtimes\!\Vmn{\vec{[1,\knd]}}}{\!\mP\mtimes\!\Umn{\vec{[1,\knd]}}}\)
is bounded from above by  the supremum of 
\(\RD{1}{\!\Vmn{\vec{[1,\knd]}}(\dinp)}{\!\Umn{\vec{[1,\knd]}}(\dinp)}\)
over the common input set of \(\Vmn{\vec{[1,\knd]}}\) and \(\Umn{\vec{[1,\knd]}}\), i.e. \(\bigtimes\nolimits_{\ind=1}^{\knd}{\set{A}_{\ind}}^{\set{B}_{1}^{\ind-1}}\).
Then using 
\eqref{eq:spb-Fproduct-07}, 
\eqref{eq:spb-Fproduct-08}, 
and  
\eqref{eq:spb-Fproduct-09}
we get
\begin{align}
\notag
\ln \Pem{av}
&\geq -\tfrac{1+\sqrt[3]{\knd}}{\sqrt[3]{\knd}}
\left(\blx \spe{\rate,\Wm}
\!+\!\blx\tfrac{2\epsilon\RC{\sfrac{1}{2}}{\Wm}}{\rno_{0}^{2}(1-\dsta)}
\!+\!\ln 2\right).
\end{align}
Then using the identity \(\!\spe{\RC{\rnf}{\Wm},\Wm}\!\leq\!\tfrac{(1\!-\!\rnf)\RC{\rnf}{\Wm}}{\rnf}\),
which is implied by Lemma \ref{lem:spherepackingexponent},
together with \eqref{eq:orderoneovertwo}, \eqref{eq:lem:spb-Fproduct-hypothesis}, 
and \eqref{eq:spb-Fproduct-03} we get
\begin{align}
\notag
\ln \Pem{av}
&\geq -\blx\spe{\rate,\Wm}
\!+\!\blx\tfrac{\RC{\sfrac{1}{2}}{\Wm}}{\rno_{0}}\!
\left[\tfrac{4\epsilon}{\rno_{0}(1-\dsta)}\!+\!\tfrac{1}{\sqrt[3]{\knd}}\right]
\!+\!2\ln 2.
\end{align}
Then \eqref{eq:lem:spb-Fproduct} is implied by
\eqref{eq:spb-Fproduct-03} 
via the following consequence
of  Lemma \ref{lem:spherepackingexponent}:
If \(\rate\!=\!\RC{\rno}{\Wm}\)  for an \(\rno\!\in\![\rno_{0},\rno_{1}]\),
\begin{align}
\notag
\spe{\rate,\Wm}
\leq \spe{\rate+\delta,\Wm}+\tfrac{(1-\rno_{0})}{\rno_{0}}\delta
&
&\forall\delta\geq0.
\end{align}
~\vspace{-.9cm}\\
\end{proof}

\begin{remark}
The input sets of the subblocks grow rapidly with their length; 
in particular  
\begin{align}
\notag
\ln\abs{\set{A}_{\ind}}
&=\left(\sum\nolimits_{\jnd=0}^{\ell_{\ind}-1}\abs{\outS}^{\jnd}\right)\ln\abs{\inpS}.
\end{align}
This rapid growth would have made our bounds useless, at least for establishing a 
result in the spirit Lemma \ref{lem:spb-Fproduct},
if the approximation error terms in Lemma \ref{lem:tradeoff}-(\ref{tradeoff-B},\ref{tradeoff-C}) 
were in terms of  \(\ln \abs{\inpS}\) rather than \(\RC{\sfrac{1}{2}}{\Wm}\),
which grows only linearly with the length of the subblock.
\end{remark}

One is initially inclined to use Lemma \ref{lem:spb-Fproduct} either 
for \(\knd\!=\!\blx\) or for \(\knd\!=\!1\), i.e.
apply Lemma \ref{lem:tradeoff} either 
to \(\Wmn{\vec{[1,\blx]}}\) or to the component channel \(\Wm\).
Both of these choices, however, lead to poor approximation error terms. 
Instead we use Lemma \ref{lem:spb-Fproduct} for 
\(\knd\approx \blx^{\sfrac{3}{4}}\) to prove Theorem \ref{thm:fDSPCexponent}. 
In \cite{augustin78}, while proving a statement similar to Theorem \ref{thm:fDSPCexponent}, Augustin 
used subblocks in a similar fashion; other ingredients of Augustin's analysis, however, are quite different. 
Palaiyanur discussed Augustin's proof sketch in more detail in his thesis \cite[A.8]{palaiyanurthesis}.
A complete proof following Augustin's sketch can be found in \cite{nakiboglu19E}.

\begin{proof}[Proof of Theorem \ref{thm:fDSPCexponent}]
We prove Theorem \ref{thm:fDSPCexponent} by applying
Lemma \ref{lem:spb-Fproduct} 
for appropriately chosen \(\epsilon_{\blx}\)
and \(\knd_{\blx}\).
Note that \(\epsilon\) and \(\knd\)  can take any value 
as long as the hypothesis of Lemma \ref{lem:spb-Fproduct}
is satisfied.
For \(\epsilon_{\blx}\!=\!\tfrac{\rno_{0}(1-\rnt)}{\sqrt[4]{\blx}}\)
and \(\knd_{\blx}\!=\!\lfloor \blx^{\sfrac{3}{4}} \rfloor\),
the hypothesis  \(\lfloor\tfrac{\knd}{\blx}\rfloor\RC{\sfrac{1}{2}}{\Wm}\geq 2\)
holds for all \(\blx\) large enough. 
Furthermore, the other hypothesis of Lemma \ref{lem:spb-Fproduct} given in
\eqref{eq:lem:spb-Fproduct-hypothesis} 
is satisfied for \(\blx\) large enough because of \eqref{eq:thm:fDSPCexponent-hypothesis}.
Thus we can apply Lemma \ref{lem:spb-Fproduct} 
with \(\epsilon_{\blx}=\tfrac{\rno_{0}(1-\rnt)}{\sqrt[4]{\blx}}\)
and \(\knd_{\blx}=\lfloor \blx^{\sfrac{3}{4}} \rfloor\) 
for all \(\blx\) large enough. 
Then \eqref{eq:lem:spb-Fproduct} implies \eqref{eq:thm:fDSPCexponent}
for \(\blx\) large enough.
\end{proof}

\subsection{Extensions and Comparisons}\label{sec:fcomparison}
Theorem \ref{thm:fDSPCexponent} is stated for stationary sequences of channels,
but it holds for periodic sequences of channels too.
In other words, Theorem \ref{thm:fDSPCexponent} assumed \(\Wmn{\tin}\!=\!\Wm\) for all 
\(\tin\!\in\!\integers{+}\); but its assertions hold whenever there exists a
\(\tau\!\in\!\integers{+}\) satisfying \(\Wmn{\tin}\!=\!\Wmn{\tin+\tau}\) for all 
\(\tin\!\in\!\integers{+}\).
Thus the SPB holds for codes on the periodic discrete product channels, as well.

It is possible establish similar results under weaker stationarity hypotheses.
In order to prove the SPB for codes on the DPCs with feedback using the approach
employed for proving Theorem \ref{thm:fDSPCexponent}, 
we need the \renyi capacity of the subblocks to be approximately
equal to one another as functions, i.e.
\begin{align}
\notag
\RC{\rno}{\Umn{\ind}}
&\approx\tfrac{\ell_{\ind}}{\blx}\RC{\rno}{\Wmn{[1,\blx]}}
\end{align}
uniformly over \(\ind\) and \(\rno\).
This condition is a stationarity hypotheses too; but it is considerably
weaker than assuming all \(\Wmn{\tin}\)'s to be identical.
There is not just one but many precise ways to impose this condition
and each one of them leads to a slightly different result. 
Assumption \ref{assumption:astationary} and Theorem \ref{thm:fDPCexponent}
are provided as examples.
In order to prove Theorem \ref{thm:fDPCexponent} we need to modify
Lemmas \ref{lem:tradeoff} and \ref{lem:spb-Fproduct}, slightly. 
We present those modifications, their proofs, and the proof of 
Theorem \ref{thm:fDPCexponent} in 
Appendix \ref{sec:fDPCexponent}.
%\cite[Appendix \ref*{B-sec:fDPCexponent}]{nakiboglu18B}.
\begin{assumption}\label{assumption:astationary}
\(\{\Wmn{\tin}\}_{\tin\in\integers{+}}\) is a sequence of channels 
satisfying the following three conditions for some \(\rens\!:\!(0,1)\to\reals{+}\)
\begin{enumerate}[i.]
	\item \(\lim_{\blx\to\infty}\!\tfrac{1}{\blx}\!\RC{\rno}{\Wmn{[1,\blx]}}\!=\!\rens(\rno)\)
	for all \(\rno\!\in\!(0,1)\).
	\item \(\lim_{\rno\uparrow1}\tfrac{1-\rno}{\rno}\rens(\rno)=0\).
	\item There exists 
	\(K\!\in\!\reals{+}\) and \(\blx_{0}\!\in\!\integers{+}\)  such that
	\begin{align}
	\notag
	\sup\limits_{\rno\in(0,1)}\sup\limits_{\tin\in\integers{+}}
	\abs{\RC{\rno}{\Wmn{[\tin,\tin+\blx-1]}}-\blx\rens(\rno)}
	&\!\leq\!K\!\ln\!\blx
	&
	&\forall\blx\geq\blx_{0}.
	\end{align}
\end{enumerate}
\end{assumption}

\begin{theorem}\label{thm:fDPCexponent}
	Let \(\{\Wmn{\tin}\}_{\tin\in\integers{+}}\) be a sequence of discrete channels
	satisfying Assumption \ref{assumption:astationary}
	and \(\rno_{0}\), \(\rno_{1}\) be orders satisfying \(0\!<\!\rno_{0}\!<\!\rno_{1}\!<\!1\).
	Then for any sequence of codes on the discrete product channels with feedback 
	\(\{\Wmn{\vec{[1,\blx]}}\}_{\blx\in\integers{+}}\) satisfying 
	\begin{align}
\label{eq:thm:fDPCexponent-hypothesis}
\RC{\rno_{1}}{\Wmn{[1,\blx]}}\!\geq\!
\ln \tfrac{M_{\blx}}{L_{\blx}}
&\!\geq\!\RC{\rno_{0}}{\Wmn{[1,\blx]}}\!+\!(K\!+\!1)\!\blx^{\sfrac{3}{4}}\!\ln\!\blx
&
&\forall \blx\geq\blx_{0} 
\end{align}
	there exists an \(\blx_{1}\geq\blx_{0}\) such that 
	\begin{align}
	\label{eq:thm:fDPCexponent}
	\Pem{av}^{(\blx)}
	&\!\geq\!e^{-\spe{\ln\frac{M_{\blx}}{L_{\blx}},\Wmn{[1,\blx]}}-\frac{6K+1}{\rno_{0}}\blx^{\sfrac{3}{4}}\ln\blx}
	&
	&\forall \blx\!\geq\!\blx_{1}.
	\end{align}
\end{theorem}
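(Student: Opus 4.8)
The plan is to mirror the proof of Theorem~\ref{thm:fDSPCexponent} exactly, but invoke the modified versions of Lemmas~\ref{lem:tradeoff} and~\ref{lem:spb-Fproduct} promised for Appendix~\ref{sec:fDPCexponent} rather than the stationary originals. The only structural difference between the two theorems is that the subblocks \(\Umn{\ind}=\Wmn{\vec{[1+\tin_{\ind-1},\tin_{\ind}]}}\) are no longer exact scalings of a single component channel \(\Wm\); instead, under Assumption~\ref{assumption:astationary}-iii, their \renyi capacities satisfy \(\abs{\RC{\rno}{\Umn{\ind}}-\ell_{\ind}\rens(\rno)}\leq K\ln\ell_{\ind}\) uniformly in \(\ind\) and \(\rno\), provided each \(\ell_{\ind}\) is at least \(\blx_{0}\). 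So the first step is to record the analogue of \eqref{eq:spb-Fproduct-01}--\eqref{eq:spb-Fproduct-02}: each \(\RC{\rno}{\Umn{\ind}}\) and each \(\spe{\RC{\rno}{\Umn{\ind}},\Umn{\ind}}\) equals \(\ell_{\ind}\rens(\rno)\) up to an \(\bigo{\ln\blx}\) error, and similarly \(\RC{\rno}{\Wmn{[1,\blx]}}=\blx\rens(\rno)+\bigo{\ln\blx}\); these follow from Lemma~\ref{lem:capacityFproduct}, Lemma~\ref{lem:spherepacking}, and Assumption~\ref{assumption:astationary}-iii. The finiteness/continuity inputs that Lemma~\ref{lem:tradeoff} needs (\(\lim_{\rno\uparrow1}\tfrac{1-\rno}{\rno}\RC{\rno}{\cdot}=0\) and \(\RC{0^{_{+}}}{\cdot}\neq\RC{1}{\cdot}\)) transfer to each \(\Umn{\ind}\) from Assumption~\ref{assumption:astationary}-ii and the rate constraint \eqref{eq:thm:fDPCexponent-hypothesis}.

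Second, I would set up the subblock decomposition as in the proof of Lemma~\ref{lem:spb-Fproduct}: split \([1,\blx]\) into \(\knd\) nearly equal pieces with \(\ell_{\ind}\approx\blx/\knd\), reinterpret the \((M,L)\) code as a code on \(\Umn{\vec{[1,\knd]}}\), apply the modified Lemma~\ref{lem:tradeoff}-(b,c) to each \(\Umn{\ind}\) at rate \(\ell_{\ind}\rate\) to produce auxiliary channels \(\Vmn{\ind}\), and assemble \(\Vmn{\vec{[1,\knd]}}\). The bounds \eqref{eq:spb-Fproduct-05}--\eqref{eq:spb-Fproduct-06} acquire extra \(\bigo{\ln\blx}\) correction terms coming from the \(K\ln\ell_{\ind}\) slack in the capacities; summed over the \(\knd\le\blx^{3/4}\) subblocks these contribute \(\bigo{\blx^{3/4}\ln\blx}\) in total to \(\RC{\rnb}{\Vmn{\vec{[1,\knd]}}}\) and to \(\sup_{\dinp}\RD{1}{\Vmn{\vec{[1,\knd]}}(\dinp)}{\Umn{\vec{[1,\knd]}}(\dinp)}\). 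Then exactly as before: use \eqref{eq:lem:augustin} to force \(\Pem{\Vmn{\vec{[1,\knd]}}}\geq 1-e^{-\sqrt[3]{\knd}}\) once \(\RC{\rnb}{\Vmn{\vec{[1,\knd]}}}<\ln\frac{M}{L}\) with a margin, apply Lemma~\ref{lem:divergence-DPI} via \(\div{1}{\Pem{\Vmn{\vec{[1,\knd]}}}}{\Pem{av}}\leq\RD{1}{\mP\mtimes\Vmn{\vec{[1,\knd]}}}{\mP\mtimes\Umn{\vec{[1,\knd]}}}\), and rearrange to get \(\ln\Pem{av}\geq-\spe{\ln\frac{M_{\blx}}{L_{\blx}},\Wmn{[1,\blx]}}-\bigo{\blx^{3/4}\ln\blx}\). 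Finally, choosing \(\epsilon_{\blx}\) of order \(\blx^{-1/4}\) and \(\knd_{\blx}=\lfloor\blx^{3/4}\rfloor\) as in the proof of Theorem~\ref{thm:fDSPCexponent}, and tracking the constants carefully so that the aggregate error is \(\le\frac{6K+1}{\rno_{0}}\blx^{3/4}\ln\blx\), yields \eqref{eq:thm:fDPCexponent} for all \(\blx\) large enough; the passage from \(\spe{\rate,\cdot}\) to \(\spe{\ln\frac{M_{\blx}}{L_{\blx}},\cdot}\) uses the Lipschitz-type bound \(\spe{\rate,\Wm}\leq\spe{\rate+\delta,\Wm}+\tfrac{1-\rno_{0}}{\rno_{0}}\delta\) from Lemma~\ref{lem:spherepacking} exactly as in Lemma~\ref{lem:spb-Fproduct}.

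The main obstacle is bookkeeping the approximation errors so they stay \(\bigo{\blx^{3/4}\ln\blx}\). Two places need care. First, in the modified Lemma~\ref{lem:tradeoff} one must verify that the intermediate value argument in part~(b) still produces an order \(\rno\in[\rnf,\rnt]\) for \(\Umn{\ind}\): since \(\RC{\rno}{\Umn{\ind}}\) is only approximately \(\ell_{\ind}\rens(\rno)\), the thresholds \(\rnf_{\ind},\rnt_{\ind}\) drift slightly from the common \(\rnf,\rnt\) defined via \(\rens\), so one should define \(\rnf,\rnt\) once from \(\rens\) and absorb the per-subblock drift into the error terms, using that \(\RC{\rno}{\cdot}\) and \(\tfrac{1-\rno}{\rno}\RC{\rno}{\cdot}\) are monotone (Lemma~\ref{lem:capacityO}-(a,b)) to keep \(\rnf_{\ind}\in[\rno_{0},\rno_{1}]\) and \(\rnt_{\ind}\in[\rno_{0},\dsta]\) for \(\blx\) large. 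Second, the constant \(\RC{\sfrac{1}{2}}{\Umn{\ind}}\) appearing inside the exponent of Lemma~\ref{lem:tradeoff}-(c) is now \(\approx\ell_{\ind}\rens(\sfrac{1}{2})+\bigo{\ln\blx}\), which grows linearly in \(\ell_{\ind}=\blx/\knd\approx\blx^{1/4}\); one must check that with \(\rnb-1\) of order \(\knd^{2/3}(1-\dsta)/(\blx\,\RC{\sfrac{1}{2}}{\Umn{\ind}})\) the term \((\rnb-1)e^{(\rnb-1)\cdot2\RC{\sfrac{1}{2}}{\Umn{\ind}}/(1-\dsta)}[\cdots]^{2}\) still contributes only \(\bigo{\RC{\sfrac{1}{2}}{\Umn{\ind}}/\sqrt[3]{\knd}}\) per subblock, i.e. \(\bigo{\blx^{3/4}}\) in aggregate — this is the same balancing as in the stationary proof, and it goes through because the dominant scale \(\RC{\sfrac{1}{2}}{\Umn{\ind}}\) enters identically. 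Everything else is a routine transcription of \S\ref{sec:fpouterbound} with \(\RC{\rno}{\Wm}\) replaced by \(\rens(\rno)\) plus controlled errors.
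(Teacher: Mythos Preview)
Your plan is essentially the paper's own approach: the appendix proves Theorem~\ref{thm:fDPCexponent} by first stating a $\gamma$-slack variant of Lemma~\ref{lem:tradeoff} (with $\rens$ as the reference function and the hypothesis $\RC{\rno}{\Wm}\le\rens(\rno)+\gamma$) and a corresponding variant of Lemma~\ref{lem:spb-Fproduct} whose subblock hypothesis is exactly your displayed approximation, then applies the latter with $\knd_{\blx}=\lfloor\blx^{3/4}\rfloor$ and $\gamma_{\blx}=2K\ln\blx$ (the factor $2$ arises because the lemma compares $\RC{\rno}{\Umn{\ind}}$ to $\tfrac{\ell_{\ind}}{\blx}\RC{\rno}{\Wmn{[1,\blx]}}$ rather than to $\ell_{\ind}\rens(\rno)$, so two applications of Assumption~\ref{assumption:astationary}-iii are needed). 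The paper takes $\epsilon_{\blx}=1/\blx$ rather than your $\blx^{-1/4}$, but either choice keeps the $\epsilon$-contribution dominated by the $\knd^{-1/3}$ and $\knd\gamma$ terms, so this is inconsequential.
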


We have confined the claims of Theorem \ref{thm:fDPCexponent} to discrete channels in order avoid 
certain measurability issues. 
We believe, however, it should be possible to resolve those issues and to extend Theorem \ref{thm:fDPCexponent}
to  any sequence of channels  satisfying Assumption \ref{assumption:astationary}.
Augustin makes the same conjecture for the stationary channels in \cite[Cor. 41.9]{augustin78}.

Augustin sketches a derivation of the SPB for codes on 
finite input set SPCs with feedback in \cite[\S41]{augustin78}.
The approximation error terms in Augustin's asymptotic SPB
\cite[Thm. 41.7]{augustin78} 
are \(\bigo{\blx^{-\sfrac{1}{3}}\ln \blx}\) 
rather than \(\bigo{\blx^{-\sfrac{1}{4}}\ln\blx}\).
A complete proof of SPB for codes on DSPCs with feedback 
following Augustin's sketch can be found in \cite{nakiboglu19E}.

Throughout this section, we have refrained from making any assumptions 
on the \renyi centers of the component channels or 
their relation to the output distributions of the component channels.
Such assumptions may lead to sharper bounds under milder stationarity hypotheses. 
For example, Lemma \ref{lem:spb-fixed-density} establishes a non-asymptotic bound for certain
product channels with feedback that can be used, in place of Lemma \ref{lem:spb-product}, 
to prove the asymptotic SPB given in Theorem \ref{thm:productexponent}
under Assumption \ref{assumption:individual-ologn}.
Thus if the sequence \(\{\Wmn{\tin}\}_{\tin\in\integers{+}}\) satisfies 
Assumption \ref{assumption:individual-ologn}
and
every channel in \(\{\Wmn{\tin}\}_{\tin\in\integers{+}}\) satisfies 
the hypothesis of Lemma \ref{lem:spb-fixed-density}, then the SPB
holds with a polynomial prefactor for codes on \(\Wmn{\vec{[1,\blx]}}\).
First Dobrushin \cite{dobrushin62A} and then Haroutunian \cite{haroutunian77} 
employed similar observations to establish the SPB for codes on 
certain DSPCs with feedback.
Later, Augustin \cite[p. 318]{augustin78} did the same for codes on certain product channels 
with feedback.

\subsection{Haroutunian's Bound and Subblocks}\label{sec:haroutunian}
Haroutunian's article \cite{haroutunian77} is probably 
the most celebrated work on the exponential lower bounds 
to the error probability of channel codes on DSPCs with feedback. 
In the rest of this section we discuss \cite{haroutunian77}
in light of Lemma \ref{lem:tradeoff} and the concept of subblocks.

In \cite{haroutunian77}, Haroutunian considers  \((M,L)\)
channel codes satisfying \(\rate\!=\!\tfrac{1}{\blx}\ln \tfrac{M}{L}\) 
on DSPCs with feedback \(\Wmn{\vec{[1,\blx]}}\) 
satisfying \(\Wmn{\tin}=\Wm\)
for a \(\Wm\!:\!\inpS\!\to\!\pdis{\outS}\) to prove that for
any rate \(\rate\geq0\) and \(\varepsilon\!>\!0\)
the following bound holds for large enough \(\blx\)
\begin{align}
\label{eq:haroutunianbound}
\Pem{av}^{(\blx)}\geq (1-\varepsilon) e^{-\blx (E_{h}(\rate-\varepsilon,\Wm)+\varepsilon)}
\end{align}
where \(E_{h}(\rate,\Wm)\), which is customarily called Haroutunian's exponent, 
is defined as, \cite[p. 180]{csiszarkorner}, \cite[(15)]{haroutunian77},  
\begin{align}
\label{eq:def:haroutunianexponent}
\hspace{-.2cm}E_{h}(\rate,\Wm)
&\!\DEF\!\inf\nolimits_{\Vm:\RC{1}{\Vm}\leq\rate} \sup\nolimits_{\dinp\in\inpS}\RD{1}{\Vm(\dinp)}{\Wm(\dinp)}. 
\end{align}
Haroutunian points out not only that \(E_{h}(\rate,\Wm)\) is greater than or equal to 
\(\spe{\rate,\Wm}\) for all \(\rate\),
but also that the inequality is strict on \((\RC{0}{\Wm},\RC{1}{\Wm})\) 
even for most of the binary input binary output 
channels, see \cite[Thm. 3.1]{haroutunian77}.
Thus, for certain \(\Wm\)'s there does not exist any \(\Vm\) satisfying both  \(\RC{1}{\Vm}\!\leq\!\rate\) 
and  \(\sup\nolimits_{\dinp\in\inpS}\RD{1}{\Vm(\dinp)}{\Wm(\dinp)}\!\leq\!\spe{\rate,\Wm}\) at the same time. 

On the other hand, both inequalities are satisfied approximately for 
\(\Vm\!=\!\Wma{\fX_{\epsilon}}{\qma{\fX_{\epsilon},\Wm}{\epsilon}}\)
by Lemma \ref{lem:tradeoff}-(\ref{tradeoff-B},\ref{tradeoff-C}).
In particular,
\begin{align}
\label{eq:haroutunian1}
\sup\nolimits_{\dinp\in\inpS}\!\RD{1}{\!\Vm(\dinp)}{\!\Wm(\dinp)}
&\!\leq\!\spe{\rate,\Wm}\!+\!\tfrac{2\epsilon\RC{\sfrac{1}{2}}{\Wm}}{\rnf^{2}(1-\rnt)}\!
\\
\label{eq:haroutunian2}
\RC{1}{\Vm}
&\!\leq\!\rate\!+\!\tfrac{2\epsilon\RC{\sfrac{1}{2}}{\Wm}}{\rnf(1-\rnf)^{2}}\!+\!\ln \tfrac{1}{\epsilon}
\end{align}
for any \(\epsilon\!\in\!(0,\sfrac{\rnf}{2})\) where \(\rnf\) and \(\rnt\) are determined
uniquely by  
\(\RC{\rnf}{\Wm}\!=\!\rate\) and \(\tfrac{1-\rnt}{\rnt}\!=\!\spe{\rate,\Wm}\).

If we apply \eqref{eq:haroutunian1} and \eqref{eq:haroutunian2}
to \(\Wmn{\vec{[1,\blx]}}\) 
for \(\rate_{\blx}\!=\!\blx(\rate\!-\!\varepsilon)\!\)
and
\(\epsilon_{\blx}\!=\!\sfrac{1}{\blx}\),
then the additivity of the \renyi capacity for the product channels with feedback,
i.e. Lemma \ref{lem:capacityFproduct},
and monotonicity of \(E_{h}(\rate,\Wm)\) in \(\rate\),
i.e. \cite[Thm. 3.5]{haroutunian77}, 
imply 
\begin{align}
\notag
\limsup\nolimits_{\blx\to\infty} \tfrac{1}{\blx} 
E_{h}(\blx\rate,\Wmn{\vec{[1,\blx]}})
&\leq \spe{\rate-\varepsilon,\Wm}
&
&\forall\varepsilon>0.
\end{align}
Then the continuity of \(\spe{\rate,\Wm}\) in \(\rate\), i.e Lemma \ref{lem:spherepackingexponent},
and the identity \(\spe{\rate,\Wm}\leq E_{h}(\rate,\Wm)\)
imply
\begin{align}
\label{eq:haroutunian3}
\lim\nolimits_{\blx\to\infty} \tfrac{1}{\blx} 
E_{h}(\blx\rate,\Wmn{\vec{[1,\blx]}})
&=\spe{\rate,\Wm}.
\end{align}
Recall that any channel code on \(\Wmn{\vec{[1,\blx\ell]}}\)
is also a channel code on \(\Umn{\vec{[1,\blx]}}\)
where \(\Umn{\tin}\!=\!\Wmn{\vec{[1,\ell]}}\) for all \(\tin\),
see the discussion of the concept of subblocks in the beginning of \S\ref{sec:fpouterbound}.
Thus \eqref{eq:haroutunianbound} implies that
for any \(\varepsilon\!>\!0\) for large enough \(\blx\)
\begin{align}
\notag
\Pem{av}^{(\ell\blx)}\geq (1-\varepsilon) e^{-\blx (E_{h}(\ell\rate-\varepsilon,\Wmn{\vec{[1,\ell]}})+\varepsilon)}.
\end{align}
Then \eqref{eq:haroutunian3} implies
\begin{align}
\notag
\limsup\nolimits_{\blx\to\infty}
\tfrac{1}{\blx}\ln \tfrac{1}{\Pem{av}^{(\blx)}}
\leq \spe{\rate,\Wm}.
\end{align}
Thus Lemma \ref{lem:tradeoff}, the concept of subblocks, and Haroutunian's bound
imply the most important asymptotic conclusion of Theorem \ref{thm:fDSPCexponent}, 
i.e. the reliability function of the DSPC with feedback is bounded from above
by the sphere packing exponent.

\begin{remark}
In \cite{palaiyanurS15}, Palaiyanur and Sahai used the method of types 
to establish the following relation for all discrete channels \(\Wm\),
\begin{align}
\label{eq:palaiyanursahai}
\lim\nolimits_{\blx\to\infty}\tfrac{1}{\blx}E_{h}(\blx\rate,\Wmn{[1,\blx]})&=\spe{\rate,\Wm}.
\end{align}
This was first reported in Palaiyanur's thesis \cite[Lemma 7]{palaiyanurthesis}.
Note that \eqref{eq:palaiyanursahai} cannot be used in the preceding
argument to establish 
the sphere packing exponent as an upper bound to the reliability function 
of the DSPCs with feedback because 
\(\Wmn{\vec{[1,\blx\ell]}}\) is not equivalent to \(\Bmn{\vec{[1,\blx]}}\)
for \(\Bmn{\tin}=\Wmn{[1,\ell]}\).

It is worth mentioning that  \eqref{eq:haroutunian3} implies \eqref{eq:palaiyanursahai}
by the definition of Haroutunian's exponent.
\end{remark}  
%!TEX root=../main-B.tex
\section{Discussion}\label{sec:conclusion}
We have established SPBs with approximation error terms that are polynomial 
in the block length for a class of product channels,
which includes all stationary product channels. 
Our results hold for a large class of non-stationary product channels, 
which might have infinite channel capacity. 

We have presented a new proof of the SPB for the codes 
on DSPCs with feedback that can be applied to the codes 
on DPCs with feedback satisfying a milder stationarity hypothesis,
see \S\ref{sec:fcomparison} and 
Appendix \ref{sec:fDPCexponent}.
%\cite[Appendix \ref*{B-sec:fDPCexponent}]{nakiboglu18B}.
The validity of SPB for codes on DSPCs with feedback implies 
improvements in the bounds
for codes with errors-and-erasures decoding on DSPCs with feedback 
that were previously derived using Haroutunian's bound  
in \cite[{\S}V,{\S}IV]{nakibogluZ12} and \cite[\S2.4,\S2.5]{nakiboglu11}.

In our judgment, the averaging described in \S\ref{sec:averaging} is one 
way of employing the following more fundamental observation
\begin{align}
\notag
\lim\nolimits_{\rnf\to\rno} \RRR{\rno}{\Wm}{\qmn{\rnf,\Wm}}
&=\RR{\rno}{\Wm}
&
&\forall \rno\in(0,1).
\end{align}
The preceding observation and Theorem \ref{thm:minimax}, are at the heart of Augustin's method. 
However, only the preceding observation can be  interpreted as a novelty of Augustin's 
method because Theorem \ref{thm:minimax} is employed by Shannon, Gallager, and 
Berlekamp in \cite{shannonGB67A}, albeit in an indirect way and for discrete
channels
only.\footnote{The equality of the \renyi capacity to the \renyi radius and the existence of a 
	\renyi center is invoked via \cite[(4.22)]{shannonGB67A}. The uniqueness of the \renyi center 
	is implicit in the analysis of \(\mathtt{f_{s}}\) as a function of \(\mathtt{s}\); 
	it is established in the discussion between \cite[(A27) and (A28)]{shannonGB67A}.}

In \S \ref{sec:product-outerbound} and \S\ref{sec:fproduct-outerbound}, we have confined our 
discussion of the SPB to the product channels.
The \renyi capacity and center, as defined in \S\ref{sec:capacity}, served our purposes 
satisfactorily.
For studying the SPB on the memoryless channels, however, the Augustin capacity and center, 
described below, are better suited. 
The \renyi information has multiple non-equivalent definitions.
The following definition was proposed and analyzed by Augustin \cite[\S34]{augustin78}
and later popularized by \csiszar \cite{csiszar95}:
\begin{align}
%\label{eq:def:augustininformation}
\notag
{\cnst{I}}_{{\rno}}^{{\scriptscriptstyle c}}\!\left(\!\mP;\!\Wm\!\right)
&\DEF\inf\nolimits_{\mQ\in\pmea{\outA}} \sum\nolimits_{\dinp}\mP(\dinp) \RD{\rno}{\Wm(\dinp)}{\mQ}.
\end{align}
We have called this quantity Augustin information in \cite{nakiboglu17,nakiboglu18C}.
The Augustin capacity and center are defined analogously to 
the \renyi capacity and center.\footnote{The constrained \renyi capacity 
\(\CRC{\rno}{\Wm}{\cset}\) is defined by taking the supremum of the 
\renyi information over the priors in a
subset \(\cset\) of \(\pdis{\inpS}\), rather than \(\pdis{\inpS}\) itself,
see \cite[Appendix \ref*{A-sec:constrainedcapacity}]{nakiboglu19A}.
The unconstrained \renyi and Augustin capacities are 
equal, see \cite[Prop. 1]{csiszar95} or
\cite[Thms. \ref*{C-thm:Lminimax} and \ref*{C-thm:Gminimax}]{nakiboglu18C}; 
however, this is not the case 
in general for the constrained capacities.\label{footnotenumber}}
Using these concepts and assuming a bounded cost function Augustin derived
a SPB for channel codes on  the cost constrained memoryless channels in 
\cite[Ch. VII]{augustin78}.
Augustin's framework is general enough to subsume the Poisson channels described 
in \eqref{eq:def:poissonchannel} 
as special cases in the way that the framework of Theorem \ref{thm:productexponent} 
subsumed the Poisson channels described  in 
\eqref{eq:def:poissonchannel-bounded} and \eqref{eq:def:poissonchannel-product} as 
special cases.  
The Gaussian channels studied by Shannon \cite{shannon59}, Ebert \cite{ebert66},
and Richters \cite{richters67},
however, are not subsumed by Augustin's framework because the quadratic cost function 
used for these channels is not bounded. 
To remedy this situation, we have recently derived a SPB with 
a polynomial prefactor for codes on the cost constrained (possibly non-stationary) 
memoryless channels, \cite{nakiboglu17,nakiboglu18D}, 
without assuming the boundedness of the cost function.
We have also derived the SPB for codes 
on the stationary memoryless channels with 
convex composition constraints on the codewords
in \cite{nakiboglu17,nakiboglu18D}.
It seems extending the results to the channels with memory is the pressing issue
in this line of work; but that is likely to be more challenging than 
the case of memoryless channels.

\numberwithin{equation}{section}
\appendices
%!TEX root=../main-B.tex
\section{Proof of Theorem \ref{thm:fDPCexponent}}\label{sec:fDPCexponent}
The proof of Theorem \ref{thm:fDPCexponent} relies on 
the variants of Lemmas \ref{lem:tradeoff}
and \ref{lem:spb-Fproduct} given in 
Lemmas \ref{lem:tradeoff-gamma} and 
\ref{lem:spb-Fproduct-gamma} presented in the following.
\begin{lemma}\label{lem:tradeoff-gamma}
	Let \(\rens\!:\!(0,1)\!\to\!\reals{+}\) be an increasing function for which
	\(\tfrac{1-\rno}{\rno}\!\rens(\rno)\) is decreasing 
	and \(\lim_{\rno\uparrow 1}\frac{1-\rno}{\rno}\rens(\rno)\!=\!0\),
	\(\gX\!:\!(0,1)\!\to\!\reals{+}\) be 
	\(\gX(\dsta)\!\DEF\!\sup_{\rno\in[\dsta,1)}\tfrac{1-\rno}{\rno}(\rens(\rno)-\rens(\dsta))\),
	\(\rnf\) be in \((0,1)\),
	\(\rnt\!\in\!(\rnf,1)\) be such that \(\gX(\rnf)\!=\!\tfrac{1-\rnt}{\rnt}\rens(\rnt)\),
	and \(\Wm\!:\!\inpS\!\to\!\pmea{\outA}\) be a channel satisfying 
	\begin{align}
	\label{eq:lem:tradeoff-gamma-hypothesis}
	\RC{\rno}{\Wm}
	&\leq \rens(\rno)+\gamma
	&
	&\forall \rno\in[\rnf,\rnt]
	\end{align}
	for a \(\gamma\in\reals{+}\).
	Then for all \(\epsilon\in (0,\sfrac{\rnf}{2})\)
	there exists a channel \(\Vm\!:\!\inpS\!\to\!\pmea{\outA}\) satisfying
	both
	\begin{align}
	\label{eq:lem:tradeoff-gamma-B}
	\!\RD{1}{\!\Vm(\dinp)}{\!\Wm(\dinp)}
	&\!\leq\!\gX(\rnf)\!+\!\tfrac{\gamma}{\rnf}
	\!+\!\tfrac{2\epsilon(\rens(\sfrac{1}{2})+\gamma)}{\rnf^{2}(1-\rnt)}
	\end{align}
	\(\dinp\in\inpS\) and
	\begin{align}
	\notag
	\hspace{-.3cm}
	\RC{\rnb}{\Vm}
	&\!\leq\!\rens(\rnf)\!+\!\gamma\!+\!\tfrac{2\epsilon(\rens(\sfrac{1}{2})+\gamma)}{\rnf(1-\rnt)^{2}}\!+\!\ln \tfrac{1}{\epsilon}
	\\
	\label{eq:lem:tradeoff-gamma-C}
	&\qquad+
	(\rnb\!-\!1)e^{(\rnb-1)\frac{2(\rens(\sfrac{1}{2})+\gamma)}{\rnf(1-\rnt)^{2}}}
	\!\left[\!\tfrac{4\vee 2(\rens(\sfrac{1}{2})+\gamma)}{\rnf (1-\rnt)^{2}}\right]^{2}
	\end{align}
	for all \(\rnb\in(1,\tfrac{1+\rnt}{2\rnt})\), simultaneously.
\end{lemma}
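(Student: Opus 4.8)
\textbf{Proof proposal for Lemma~\ref{lem:tradeoff-gamma}.}
The plan is to mimic the proof of Lemma~\ref{lem:tradeoff}, but to replace every appearance of \(\rate\), \(\spe{\rate,\Wm}\), and \(\RC{\sfrac{1}{2}}{\Wm}\) by the ``profile'' quantities \(\rens(\rnf)\), \(\gX(\rnf)\), and \(\rens(\sfrac{1}{2})+\gamma\), tracking the additive slack \(\gamma\) coming from \eqref{eq:lem:tradeoff-gamma-hypothesis} through each inequality. First I would note that the hypotheses on \(\rens\) guarantee, exactly as in Lemma~\ref{lem:tradeoff}, that \(\rnt\) is well defined: \(\tfrac{1-\rno}{\rno}\rens(\rno)\) is continuous and decreasing, \(\gX(\rnf)\le\tfrac{1-\rnf}{\rnf}\rens(\rnf)\), and \(\lim_{\rno\uparrow1}\tfrac{1-\rno}{\rno}\rens(\rno)=0\), so the intermediate value theorem \cite[4.23]{rudin} supplies an \(\rnt\in(\rnf,1)\) with \(\gX(\rnf)=\tfrac{1-\rnt}{\rnt}\rens(\rnt)\). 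Also, \eqref{eq:lem:tradeoff-gamma-hypothesis} with \(\rno=\rnf<1\) forces \(\RC{\sfrac{1}{2}}{\Wm}\) to be finite (via \eqref{eq:orderoneovertwo} and monotonicity), so the average \renyi center \(\qma{\rno,\Wm}{\epsilon}\) and the tilted channel \(\Wma{\rno}{\mQ}\) of Definition~\ref{def:tiltedchannel} are available, and \(\RC{\sfrac{1}{2}}{\Wm}\le\rens(\sfrac{1}{2})+\gamma\).

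Next I would construct \(\Vm\) exactly as in Lemma~\ref{lem:tradeoff}-(\ref{tradeoff-B}): for each \(\dinp\), set \(\vmn{\rno}\DEF\Wma{\rno}{\qma{\rno,\Wm}{\epsilon}}(\dinp)\), use the chain identity \(\RD{1}{\vmn{\rno}}{\qma{\rno,\Wm}{\epsilon}}+\tfrac{\rno}{1-\rno}\RD{1}{\vmn{\rno}}{\Wm(\dinp)}=\RD{\rno}{\Wm(\dinp)}{\qma{\rno,\Wm}{\epsilon}}\), and bound the right side by \(\RCI{\rno}{\Wm}{\epsilon}\) via Lemma~\ref{lem:avcapacity}. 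Because of \eqref{eq:lem:tradeoff-gamma-hypothesis}, the bound \eqref{eq:avcapacity:bound} now reads \(\RCI{\rno}{\Wm}{\epsilon}\le\rens(\rno)+\gamma+\tfrac{\epsilon}{1-\epsilon}\tfrac{\rens(\rno)+\gamma}{\rno(1-\rno)}\) for \(\rno\in[\rnf,\rnt]\). The same three-case analysis (\(\RD{1}{\vmn{\rnf}}{\qma{\rnf,\Wm}{\epsilon}}\) at its maximum; \(\RD{1}{\vmn{\rnt}}{\qma{\rnt,\Wm}{\epsilon}}\) below \(\RCI{\rnf}{\Wm}{\epsilon}\); or an intermediate-value crossing using continuity from Lemma~\ref{lem:tilting}-(\ref{tilting-divergence})) then produces, for each \(\dinp\), an order \(\rno\in[\rnf,\rnt]\) at which \(\RD{1}{\vmn{\rno}}{\Wm(\dinp)}\) is controlled. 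Here the ``average sphere packing exponent'' bound of Lemma~\ref{lem:avspherepacking} must be replaced by its \(\rens\)-analogue: with \(\widetilde{\mathtt E}_{sp}(\rate)\DEF\sup_{\rno\in(0,1)}\tfrac{1-\rno}{\rno}(\rens(\rno)-\rate)\) one has \(\widetilde{\mathtt E}_{sp}(\rens(\rnf))=\gX(\rnf)\) and \(\RCI{\rno}{\Wm}{\epsilon}\)-based supremum is within \(\tfrac{\epsilon}{1-\epsilon}\tfrac{\rens(\sfrac12)+\gamma}{\rnf^{2}}\) of \(\gX(\rnf)\); adding the \(\tfrac{\gamma}{\rnf}\) term that arises from converting \(\RC{\rno}{\Wm}\le\rens(\rno)+\gamma\) inside the \(\tfrac{1-\rno}{\rno}(\cdot-\rens(\rnf))\) envelope yields \eqref{eq:lem:tradeoff-gamma-B}.

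For \eqref{eq:lem:tradeoff-gamma-C} I would follow Lemma~\ref{lem:tradeoff}-(\ref{tradeoff-C}) verbatim: monotonicity of the \renyi divergence in the order (Lemma~\ref{lem:divergence-order}) gives \(\RD{\frac{1}{\rnt}}{\Vm(\dinp)}{\Qm(\dinp)}\le\RD{\fX_{\epsilon}(\dinp)}{\Wm(\dinp)}{\Qm(\dinp)}\le\tfrac{2(\rens(\sfrac12)+\gamma)}{1-\rnt}\) by \eqref{eq:lem:avcapacityB} and \eqref{eq:lem:tradeoff-gamma-hypothesis}; then Lemma~\ref{lem:taylor} with \(\lgm=\tfrac{1}{\rnt}\) and \(\gamma\) replaced by \(\tfrac{2(\rens(\sfrac12)+\gamma)}{1-\rnt}\) bounds \(\RD{\rnb}{\Vm(\dinp)}{\Qm(\dinp)}-\RD{1}{\Vm(\dinp)}{\Qm(\dinp)}\) for \(\rnb\in(1,\tfrac{1+\rnt}{2\rnt})\); Lemma~\ref{lem:divergence-RM} absorbs the \(\ln\tfrac{1}{\epsilon}\) from \(\Qm(\dinp)\le\tfrac{1}{\epsilon}\qmn{\Wm}\); finally \(\RR{\rnb}{\Vm}=\RC{\rnb}{\Vm}\) from Theorem~\ref{thm:minimax} together with \eqref{eq:lem:tradeoff-gamma-B} and \(\RD{1}{\vmn{\rnf}}{\qma{\rnf,\Wm}{\epsilon}}\le\RCI{\rnf}{\Wm}{\epsilon}\le\rens(\rnf)+\gamma+\tfrac{2\epsilon(\rens(\sfrac12)+\gamma)}{\rnf(1-\rnf)^{2}}\) gives \eqref{eq:lem:tradeoff-gamma-C}. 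The main obstacle is purely bookkeeping: making sure the \(\gamma\) slack enters each of the three bounds (\(\RCI{}{}{}\), the \(\widetilde{\mathtt E}_{sp}\) envelope, the Taylor radius) in a way that collapses into exactly the coefficients displayed in \eqref{eq:lem:tradeoff-gamma-B}--\eqref{eq:lem:tradeoff-gamma-C}; in particular one must verify that \(\gX(\rnf)=\widetilde{\mathtt E}_{sp}(\rens(\rnf))\) plays the role that \(\spe{\rate,\Wm}\) did, and that the replacement \(\RC{\sfrac12}{\Wm}\mapsto\rens(\sfrac12)+\gamma\) is legitimate at every use of \eqref{eq:lem:avcapacityB}. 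No genuinely new idea beyond Lemma~\ref{lem:tradeoff} is needed.
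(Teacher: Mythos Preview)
Your overall plan is the paper's plan: build \(\Vm(\dinp)=\Wma{\fX(\dinp)}{\qma{\fX(\dinp),\Wm}{\epsilon}}(\dinp)\), use the tilting identity together with Lemma~\ref{lem:avcapacity} and \eqref{eq:avcapacity:bound}, run the same three-case intermediate-value argument on \([\rnf,\rnt]\) to obtain the analogues of \eqref{eq:lem:tradeoff-B-1}--\eqref{eq:lem:tradeoff-B-2}, and then repeat the Taylor step of Lemma~\ref{lem:tradeoff}-(\ref{tradeoff-C}). You also correctly locate where the extra \(\gamma/\rnf\) enters (the \(\rno=\rnt\) case).

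There is, however, one genuine gap. You repeatedly invoke \(\RC{\sfrac{1}{2}}{\Wm}\le\rens(\sfrac{1}{2})+\gamma\) and use it together with \eqref{eq:lem:avcapacityB} to bound the Taylor radius by \(\tfrac{2(\rens(\sfrac{1}{2})+\gamma)}{1-\rnt}\). But the hypothesis \eqref{eq:lem:tradeoff-gamma-hypothesis} only controls \(\RC{\rno}{\Wm}\) for \(\rno\in[\rnf,\rnt]\); nothing forces \(\sfrac{1}{2}\) to lie in this interval, and when \(\sfrac{1}{2}<\rnf\) monotonicity gives \(\RC{\sfrac{1}{2}}{\Wm}\le\RC{\rnf}{\Wm}\le\rens(\rnf)+\gamma\), which in general exceeds \(\rens(\sfrac{1}{2})+\gamma\). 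So the substitution \(\RC{\sfrac{1}{2}}{\Wm}\mapsto\rens(\sfrac{1}{2})+\gamma\) at every use of \eqref{eq:lem:avcapacityB} is not justified.

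The paper avoids \eqref{eq:lem:avcapacityB} altogether here. It uses \eqref{eq:avcapacity:bound} to write \(\RCI{\rno}{\Wm}{\epsilon}\le\RC{\rno}{\Wm}+\tfrac{\epsilon}{1-\epsilon}\tfrac{\RC{\rno}{\Wm}}{\rno(1-\rno)}\), then \eqref{eq:lem:tradeoff-gamma-hypothesis} to replace \(\RC{\rno}{\Wm}\) by \(\rens(\rno)+\gamma\) (valid since \(\rno\in[\rnf,\rnt]\)), and only then the purely \(\rens\)-based inequality \(\rens(\rno)\le\tfrac{\rens(\sfrac{1}{2})}{1-\rno}\) coming from the monotonicity hypotheses on \(\rens\). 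This route produces the larger constant \(\tfrac{2(\rens(\sfrac{1}{2})+\gamma)}{\rnf(1-\rnt)^{2}}\) as the bound both on \(\RD{1}{\Vm(\dinp)}{\Qm(\dinp)}\) and on \(\RD{\fX(\dinp)}{\Wm(\dinp)}{\Qm(\dinp)}\) (hence on \(\RD{\sfrac{1}{\rnt}}{\Vm(\dinp)}{\Qm(\dinp)}\) via \eqref{eq:tradeoff-C-2}--\eqref{eq:tradeoff-C-3}), and that constant is exactly what appears in \eqref{eq:lem:tradeoff-gamma-C}. Your tighter constant \(\tfrac{2(\rens(\sfrac{1}{2})+\gamma)}{1-\rnt}\) would not match the stated bound and is an artifact of the unjustified step.
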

\begin{proof}[Proof of Lemma \ref{lem:tradeoff-gamma}]
	The existence of a unique \(\rnf\) and the corresponding \(\rnt\) follows from
	the monotonicity properties of \(\rens\),
	the continuity of \(\rens\) implied by them,
	the intermediate value theorem \cite[4.23]{rudin},
	and \(\lim_{\rno\uparrow 1}\frac{1-\rno}{\rno}\rens(\rno)=0\).
	
	The monotonicity of \(\rens(\rno)\) and \(\tfrac{1-\rno}{\rno}\rens(\rno)\)
	in \(\rno\) also imply
	\begin{align}
	\label{eq:orderoneovertwo-gamma}
	\rens(\rno)&\leq \tfrac{\rens(\sfrac{1}{2})}{1-\rno}.
	\end{align}	
	In the following, we prove that for all \(\epsilon\in(0,\sfrac{\rnf}{2})\) 
	there exists an \(\fX\!:\!\inpS\!\to\![\rnf,\rnt]\) satisfying 
	both \eqref{eq:lem:tradeoff-gamma-B-1} and \eqref{eq:lem:tradeoff-gamma-B-2} 
	for all \(\dinp\!\in\!\inpS\).
	\begin{align}
	\label{eq:lem:tradeoff-gamma-B-1}
	\hspace{-.5cm}\!\RD{1}{\!\Wma{\fX(\dinp)}{\qma{\fX(\dinp),\Wm}{\epsilon}}\!(\dinp)}{\qma{\fX(\dinp),\Wm}{\epsilon}\!}
	&\!\leq\!\rens(\rnf)+\gamma+\tfrac{\epsilon(\rens(\sfrac{1}{2})+\gamma)}{(1-\epsilon)\rnf(1-\rnt)^{2}}
	\\
	\label{eq:lem:tradeoff-gamma-B-2}
	\!\RD{1}{\!\Wma{\fX(\dinp)}{\qma{\fX(\dinp),\Wm}{\epsilon}}\!(\dinp)}{\!\Wm(\dinp)}
	&\!\leq\!\gX(\rnf)+\tfrac{\gamma}{\rnf}\!+\!\tfrac{\epsilon(\rens(\sfrac{1}{2})+\gamma)}{(1-\epsilon)\rnf^{2}(1-\rnt)}\!
	\end{align}
	Note that 
	\eqref{eq:orderoneovertwo-gamma}, \eqref{eq:lem:tradeoff-gamma-B-1},
	\(\tfrac{1}{1-\epsilon}\leq 2\),  and \(\rnf<\rnt<1\) imply
	\begin{align}
	\label{eq:tradeoff-C-4-gamma}
	\RD{1}{\!\Wma{\fX(\dinp)}{\qma{\fX(\dinp),\Wm}{\epsilon}}\!(\dinp)}{\qma{\fX(\dinp),\Wm}{\epsilon}\!}
	&\leq \tfrac{2(\rens(\sfrac{1}{2})+\gamma)}{\rnf (1-\rnt)^{2}}
	&
	&\forall \dinp\in\inpS.
	\end{align}
	Following an analysis analogous to the one deriving  
	\eqref{eq:lem:tradeoff-C} from 
	\eqref{eq:lem:tradeoff-B-1}
	and invoking \eqref{eq:tradeoff-C-4-gamma}
	instead of \eqref{eq:tradeoff-C-4} 
	we conclude that 
	\eqref{eq:lem:tradeoff-gamma-B-1},
	\eqref{eq:lem:tradeoff-gamma-B-2},
	and
	\(\tfrac{1}{1-\epsilon}\!\leq\!2\)
	imply 
	\eqref{eq:lem:tradeoff-gamma-B}
	and
	\eqref{eq:lem:tradeoff-gamma-C}
	for \(\Vm(\dinp)\!=\!\Wma{\fX(\dinp)}{\qma{\fX(\dinp),\Wm}{\epsilon}}\!(\dinp)\).
	Thus we are left with establishing \eqref{eq:lem:tradeoff-gamma-B-1}
	and
	\eqref{eq:lem:tradeoff-gamma-B-2}.
	
	We denote  \(\Wma{\rno}{\qma{\rno,\Wm}{\epsilon}}(\dinp)\) by \(\vmn{\rno}\). Note that \(\vmn{\rno}\) satisfies
	\begin{align}
\notag
	\RD{1}{\vmn{\rno}}{\qma{\rno,\Wm}{\epsilon}}\!+\!\tfrac{\rno}{1-\rno} \RD{1}{\vmn{\rno}}{\!\Wm(\dinp)}
	\!=\!\RD{\rno}{\!\Wm(\dinp)}{\qma{\rno,\Wm}{\epsilon}}
	\end{align}
	for all \(\rno\!\in\!(0,1)\). 
	Then \eqref{eq:lem:avcapacity} of Lemma \ref{lem:avcapacity},
	\eqref{eq:avcapacity:bound}, \eqref{eq:lem:tradeoff-gamma-hypothesis},
	and \eqref{eq:orderoneovertwo-gamma} imply that
	\begin{align}
	\label{eq:tradeoff-gamma-B-1}
	\RD{1}{\vmn{\rno}}{\qma{\rno,\Wm}{\epsilon}}
	\!+\!\tfrac{\rno}{1-\rno} \RD{1}{\vmn{\rno}}{\!\Wm(\dinp)}
	&\!\leq\!\rens(\rno)\!+\!\gamma\!+\!\tfrac{\epsilon(\rens(\sfrac{1}{2})+\gamma)}{(1-\epsilon)\rnf(1-\rnt)^{2}} 
	\end{align}
	for all \(\rno\!\in\![\rnf,\rnt]\).
	Then using the non-negativity of the \renyi divergence and \(\gX(\rnf)=\tfrac{1-\rnt}{\rnt}\rens(\rnt)\) 
	we get
	\begin{align}
	\label{eq:tradeoff-gamma-B-2}
	\RD{1}{\vmn{\rnf}}{\qma{\rnf,\Wm}{\epsilon}}
	&\!\leq\!\rens(\rnf)\!+\!\gamma\!+\!\tfrac{\epsilon(\rens(\sfrac{1}{2})+\gamma)}{(1-\epsilon)\rnf(1-\rnt)^{2}},
	&
	&
	\\
	\label{eq:tradeoff-gamma-B-3}
	\RD{1}{\vmn{\rnt}}{\!\Wm(\dinp)}
	&\!\leq\!\gX(\rnf)\!+\!\tfrac{\gamma}{\rnf}\!+\!\tfrac{\epsilon(\rens(\sfrac{1}{2})+\gamma)}{(1-\epsilon)\rnf^{2}(1-\rnt)}.
	&
	&
	\end{align}
	As a result of \eqref{eq:tradeoff-gamma-B-2}, \(\RD{1}{\vmn{\rnf}}{\qma{\rnf,\Wm}{\epsilon}}\) and  
	\(\RD{1}{\vmn{\rnt}}{\qma{\rnt,\Wm}{\epsilon}}\) satisfy one of the following three cases:
	\begin{enumerate}[(i)]
		\item If \(\RD{1}{\vmn{\rnf}}{\qma{\rnf,\Wm}{\epsilon}}\!=\!\rens(\rnf)\!+\!\gamma\!+\!\tfrac{\epsilon(\rens(\sfrac{1}{2})+\gamma)}{(1-\epsilon)\rnf(1-\rnt)^{2}}\), 
		then \eqref{eq:tradeoff-gamma-B-1} implies \(\RD{1}{\vmn{\rnf}}{\Wm(\dinp)}\!=\!0\).
		Thus \eqref{eq:lem:tradeoff-gamma-B-1}  and \eqref{eq:lem:tradeoff-gamma-B-2} 
		hold for \(\fX(\dinp)\!=\!\rnf\).

		\item If \(\RD{1}{\vmn{\rnt}}{\qma{\rnt,\Wm}{\epsilon}}\!\leq\!
		\rens(\rnf)\!+\!\gamma\!+\!\tfrac{\epsilon(\rens(\sfrac{1}{2})+\gamma)}{(1-\epsilon)\rnf(1-\rnt)^{2}}\), 
		then \eqref{eq:lem:tradeoff-gamma-B-1}  and \eqref{eq:lem:tradeoff-gamma-B-2} 
		hold for \(\fX(\dinp)\!=\!\rnt\) by \eqref{eq:tradeoff-gamma-B-3}.
		
		\item If \(\!\RD{1}{\vmn{\rnf}}{\!\qma{\rnf,\Wm}{\epsilon}\!}\!<\!
		\rens(\rnf)\!+\!\gamma\!+\!\tfrac{2\epsilon(\rens(\sfrac{1}{2})+\gamma)}{\rnf(1-\rnt)^{2}}
		\!<\!\RD{1}{\vmn{\rnt}}{\!\qma{\rnt,\Wm}{\epsilon}\!}\), 
		then 
		\(\RD{1}{\vmn{\rno}}{\qma{\rno,\Wm}{\epsilon}}\!=\!
		\rens(\rnf)\!+\!\gamma\!+\!\tfrac{\epsilon(\rens(\sfrac{1}{2})+\gamma)}{(1-\epsilon)\rnf(1-\rnt)^{2}}\)
		for some \(\rno\!\in\!(\rnf,\rnt)\) by the intermediate value theorem \cite[4.23]{rudin}
		because \(\RD{1}{\vmn{\rno}}{\qma{\rno,\Wm}{\epsilon}}\) is continuous in \(\rno\)
		by Lemma \ref{lem:tilting}-(\ref{tilting-divergence}).
		On the other hand
		\(\RD{1}{\vmn{\rno}}{\Wm(\dinp)}\!\leq\!\gX(\rnf)\) 
		for the \(\rno\) satisfying
		\(\RD{1}{\vmn{\rno}}{\qma{\rno,\Wm}{\epsilon}}\!=\!
		\rens(\rnf)\!+\!\gamma\!+\!\tfrac{\epsilon(\rens(\sfrac{1}{2})+\gamma)}{(1-\epsilon)\rnf(1-\rnt)^{2}}\)
		by \eqref{eq:tradeoff-gamma-B-1}.
		Thus \eqref{eq:lem:tradeoff-gamma-B-1} and \eqref{eq:lem:tradeoff-gamma-B-2}
		holds for \(\fX(\dinp)=\rno\).
	\end{enumerate}
\end{proof} 
\begin{lemma}\label{lem:spb-Fproduct-gamma} 
	Let \(\Wmn{\vec{[1,\blx]}}\) be a DPC with feedback satisfying
	\(\RC{0^{_{+}}\!}{\Wmn{[1,\blx]}}\!\neq\!\RC{1}{\Wmn{[1,\blx]}}\),
	orders \(\rno_{0}\!<\!\rno_{1}\!<\!\dsta\) in \((0,1)\) satisfy
	\(\!\tfrac{1-\dsta}{\dsta}\!\RC{\rnt}{\!\Wmn{[1,\blx]}}\!=\!\spe{\!\RC{\rno_{1}}{\!\Wmn{[1,\blx]}},\!\Wmn{[1,\blx]}\!}\),
	\(\knd\in\integers{+} \) be less than \(\blx\).
	If \(\{\Wmn{\tin}\}\) satisfy both 
	\(\lfloor\tfrac{\blx}{\knd}\rfloor\tfrac{\RC{\sfrac{1}{2}}{\Wmn{[1,\blx]}}}{\blx}+\gamma\geq 2\)
	and 
	\begin{align}
	\label{eq:lem:spb-Fproduct-gamma-hypothesis-new}
	\!\!\!\!
	\sup_{{\rno\in[\rno_{0},\dsta]}}\max_{\tin\in[1,\blx-\ell+1]}
	\!\left(\RC{\rno}{\Wmn{[\tin,\tin+\ell-1]}}-\tfrac{\ell}{\blx}\RC{\rno}{\Wmn{[1,\blx]}}\right)
	&\!\leq\!\gamma\!
	&
	&
	\end{align}
	for all \(\ell=\{\lfloor\tfrac{\blx}{\knd}\rfloor,\lceil\tfrac{\blx}{\knd}\rceil\}\)
	for the same \(\gamma\in\reals{+}\)
	and positive integers \(M\), \(L\) satisfy
	\begin{align}
	\label{eq:lem:spb-Fproduct-gamma-hypothesis-old}
	\RC{\rno_{1}}{\!\Wmn{[1,\blx]}}\!\geq\!\ln\!\tfrac{M}{L}
	&\!\geq\! 
	\RC{\rno_{0}}{\!\Wmn{[1,\blx]}}\!+\!\tfrac{\RC{\sfrac{1}{2}}{\!\Wmn{[1,\blx]}}}{\rno_{0}(1-\dsta)^{2}}\!
	\!\left[\!2\epsilon\!+\!\tfrac{14}{\sqrt[3]{\knd}}\!\right]\!
	\!+\!\knd(\gamma\!-\!\ln\epsilon),
	\end{align}
	for some \(\epsilon\!\in\!(0,\sfrac{\rno_{0}}{2})\),	
	then  any \((M,L)\) code on  \(\Wmn{\vec{[1,\blx]}}\) satisfies
	\begin{align}
	\label{eq:lem:spb-Fproduct-gamma}
	\Pem{av}
	&\!\geq\!\tfrac{1}{4}
	e^{-\spe{\ln\frac{M}{L},\Wmn{[1,\blx]}}
		-\frac{\RC{\sfrac{1}{2}}{\Wmn{[1,\blx]}}+\kappa\gamma}{\rno_{0}^{2}(1-\dsta)^{2}}
		\left[\!6\epsilon+\frac{15}{\sqrt[3]{\knd}}\right]
		-\frac{\knd (3\gamma-\ln\epsilon)}{\rno_{0}}}.
	\end{align}
\end{lemma} 
\begin{proof}[Proof of Lemma \ref{lem:spb-Fproduct-gamma}]
	We divide the interval \([1,\blx]\) into \(\knd\) subintervals of,
	approximately, equal length.
	In particular, 
	we set \(\tin_{0}\) to zero and define \(\ell_{\ind}\) and \(\tin_{\ind}\)
	for \(\ind \in \{1,\ldots,\knd\}\) as follows
	\begin{align}
	\notag
	\ell_{\ind}
	&\DEF \lceil  \sfrac{\blx}{\knd} \rceil  \IND{\ind\leq \blx-\lfloor \sfrac{\blx}{\knd} \rfloor \knd}+
	\lfloor \sfrac{\blx}{\knd} \rfloor \IND{\ind >   \blx-\lfloor \sfrac{\blx}{\knd} \rfloor \knd},
	\\
	\notag
	\tin_{\ind}
	&\DEF \tin_{\ind-1}+\ell_{\ind}.
	\end{align}
	If \(\Umn{\ind}\DEF \Wmn{\vec{[1+\tin_{\ind-1},\tin_{\ind}]}}\)
	for  all \(\ind\in\{1,\ldots,\knd\}\),
	then the length \(\blx\) DPC with feedback \(\Wmn{\vec{[1,\blx]}}\) 
	can be interpreted as a length \(\knd\) 
	DPC  with feedback \(\Umn{\vec{[1,\knd]}}\). 
	As a result any \((M,L)\) channel code \((\enc,\dec)\) on the channel 
	\(\Wmn{\vec{[1,\blx]}}\) is also an \((M,L)\) channel code on 
	\(\Umn{\vec{[1,\knd]}}\) with exactly the same error probability. 
	In the rest of the proof we work with the latter interpretation.
	
	Let \(\rnf\in(0,1)\) be the order satisfying
	\(\RC{\rnf}{\Wmn{[1,\blx]}}=\rate\)
	where
	\begin{align}
	\label{eq:spb-Fproduct-gamma-01}
	\rate
	&\!=\!\tfrac{M}{L}
	\!-\!\tfrac{\RC{\sfrac{1}{2}}{\Wmn{[1,\blx]}}+\knd\gamma}{\rno_{0}(1-\dsta)^{2}}\!\!\left[\!2\epsilon\!+\!\tfrac{14}{\sqrt[3]{\knd}}\!\right]\!-\!\knd(\gamma\!-\!\ln\epsilon).
	\end{align}
	Then \(\rnf\!\in\![\rno_{0},\rno_{1}]\) by \eqref{eq:lem:spb-Fproduct-gamma-hypothesis-old}
	and the monotonicity of \(\RC{\rno}{\Wm}\) in \(\rno\), i.e. Lemma \ref{lem:capacityO}-(\ref{capacityO-ilsc}).
	
	Let \(\rnt\in(0,1)\) be the order satisfying
	\begin{align}
	\notag
	\tfrac{1-\rnt}{\rnt}\RC{\rnt}{\Wmn{[1,\blx]}}
	&\!=\!\spe{\rate,\Wmn{[1,\blx]}}.
	\end{align}
	Then the definition of \(\dsta\),
	the monotonicity of \(\tfrac{1-\rno}{\rno}\RC{\rno}{\Wm}\) in \(\rno\), i.e. Lemma \ref{lem:capacityO}-(\ref{capacityO-zo}),
	and
	the monotonicity of \(\spe{\rate,\Wm}\) in \(\rate\), i.e. Lemma \ref{lem:spherepackingexponent},
	imply \(\rnt\in[\rno_{0},\dsta]\).
	
	For all \(\ind\in\{1,\ldots,\knd\}\) and \(\rno\in(0,1)\), 
	Lemma \ref{lem:capacityFproduct} implies
	\begin{align}
	\label{eq:spb-Fproduct-gamma-02}
	\RC{\rno}{\Umn{\ind}}
	&\!=\!\RC{\rno}{\Wmn{[1+\tin_{\ind-1},\tin_{\ind}]}}.
	\end{align}
	Then for each \(\ind\!\in\!\{1,\ldots,\knd\}\),	
	\(\Wm\!=\!\Umn{\ind}\) and \(\rens(\rno)\!=\!\tfrac{\ell_{\ind}}{\blx}\RC{\rno}{\Wmn{[1,\blx]}}\)
	satisfy the hypothesis of Lemma \ref{lem:tradeoff-gamma} as a result of 
	\eqref{eq:lem:spb-Fproduct-gamma-hypothesis-new}.
	We denote the channel resulting  from applying Lemma \ref{lem:tradeoff-gamma} to 
	\(\Umn{\ind}\) by \(\Vmn{\ind}:\set{A}_{\ind}\to\pdis{\set{B}_{\ind}}\). 
	Then	
	\begin{align}
	\label{eq:spb-Fproduct-gamma-03}
	\hspace{-.1cm}
	\!\RD{1}{\!\Vmn{\ind}\!(\mA)}{\!\Umn{\ind}\!(\mA)}
	&\!\leq\!\tfrac{\ell_{\ind}}{\blx}\spe{\rate,\Wmn{[1,\blx]}}\!+\!\tfrac{\gamma}{\rno_{0}}
	\!+\!\tfrac{2\epsilon(\frac{\ell_{\ind}}{\blx}\RC{\sfrac{1}{2}}{\Wmn{[1,\blx]}}+\gamma)}{\rno_{0}^{2}(1-\dsta)}\!
	\end{align}
	for all \(\mA\!\in\!\set{A}_{\ind}\) and \(\ind\in\{1,\ldots,\knd\}\).
	
	Lemma \ref{lem:tradeoff-gamma} also implies
	\begin{align}
	\notag
	\!\!\RC{\rnb}{\Vmn{\ind}}
	&\!\leq\!\tfrac{\ell_{\ind}}{\blx}\rate\!+\!\gamma
	\!+\!\tfrac{2\epsilon(\frac{\ell_{\ind}}{\blx}\RC{\sfrac{1}{2}}{\Wmn{[1,\blx]}}+\gamma)}{\rno_{0}(1-\dsta)^{2}}\!
	\!+\!\ln \tfrac{1}{\epsilon}
	\\
	\notag
	&\qquad\!+\!
	(\rnb\!-\!1)e^{(\rnb-1)\frac{2(\frac{\ell_{\ind}}{\blx}\RC{\sfrac{1}{2}}{\Wmn{[1,\blx]}}+\gamma)}{\rno_{0}(1-\dsta)^{2}}}
	\!\left[\!
	\tfrac{4\vee 2(\frac{\ell_{\ind}}{\blx}\RC{\sfrac{1}{2}}{\Wmn{[1,\blx]}}+\gamma)}{\rno_{0}(1-\dsta)^{2}}\!\right]^{2}
	\end{align}
	for all \(\rnb\!\in\!(1,\tfrac{1+\dsta}{2\dsta})\). 
	Furthermore,
\(1\!+\!\tfrac{\knd^{\sfrac{2}{3}}\rnf(1-\dsta)^{2}}{4(\RC{\sfrac{1}{2}}{\Wmn{[1,\blx]}}+\knd\gamma)}\!\leq\!\tfrac{1+\dsta}{2\dsta}\)
because
	\(\tfrac{\ell_{\ind}}{\blx}\!\RC{\sfrac{1}{2}}{\Wmn{[1,\blx]}}\!+\!\gamma\!\geq\!2\) 
	and \(\knd\!\geq\!1\). 	
	Hence,
	\begin{align}
	\notag
	\RC{\rnb}{\Vmn{\ind}}
	&\!\leq\!\tfrac{\ell_{\ind}}{\blx}\rate\!+\!\gamma	\!-\!\ln \epsilon
	\!+\!\tfrac{\frac{\ell_{\ind}}{\blx}\RC{\sfrac{1}{2}}{\Wmn{[1,\blx]}}+\gamma}{\rno_{0}(1-\dsta)^{2}}
	\left[2\epsilon+2\tfrac{e^{\sfrac{1}{\sqrt[3]{\knd}}}}{\sqrt[3]{\knd}}\right]\!
	\\
	\notag
	&\!\leq\!\tfrac{\ell_{\ind}}{\blx}\rate\!+\!\gamma	\!-\!\ln \epsilon
	\!+\!\tfrac{\frac{\ell_{\ind}}{\blx}\RC{\sfrac{1}{2}}{\Wmn{[1,\blx]}}+\gamma}{\rno_{0}(1-\dsta)^{2}}
	\left[2\epsilon+\tfrac{6}{\sqrt[3]{\knd}}\right]\!
	\end{align}
	for \(\rnb=1\!+\!\tfrac{\knd^{\sfrac{2}{3}}\rnf(1-\dsta)^{2}}{4(\RC{\sfrac{1}{2}}{\Wmn{[1,\blx]}}+\knd\gamma)}\).
	Then using Lemma \ref{lem:capacityFproduct},  \eqref{eq:spb-Fproduct-gamma-01},  
	and  \eqref{eq:spb-Fproduct-gamma-02} we get
	\begin{align}
	\notag
	\RC{\rnb}{\Vmn{\vec{[1,\blx]}}}
	&\!\leq\!\ln\!\tfrac{M}{L}\!-\!\tfrac{\RC{\sfrac{1}{2}}{\Wmn{[1,\blx]}}+\kappa\gamma}{\rno_{0}(1-\dsta)^{2}}
	\tfrac{8}{\sqrt[3]{\knd}}
	\end{align}
	for \(\rnb=1\!+\!\tfrac{\knd^{\sfrac{2}{3}}\rnf(1-\dsta)^{2}}{4(\RC{\sfrac{1}{2}}{\Wmn{[1,\blx]}}+\knd\gamma)}\).
	We bound the average error probability of \(\!(\enc,\dec)\!\) 
	on \(\!\Vmn{\vec{[1,\blx]}}\!\)
	using \eqref{eq:lem:augustin} and \(\tau\geq\ln(1+\tau)\):
	\begin{align}
	\notag
	\Pem{\Vmn{\vec{[1,\blx]}}}
	&\geq 1-e^{-\sqrt[3]{\knd}}
	\\
	\label{eq:spb-Fproduct-gamma-04}
	&\geq \tfrac{\sqrt[3]{\knd}}{1+\sqrt[3]{\knd}}.
	\end{align}
	
	On the other hand  \eqref{eq:spb-Fproduct-gamma-03} imply
	\begin{align}
	\notag
	\hspace{-.2cm}
	\RD{1}{\!\Vmn{\vec{[1,\knd]}}(\dinp)}{\!\Umn{\vec{[1,\knd]}}(\dinp)}
	&\!\leq\!\spe{\rate,\Wmn{[1,\blx]}}\!+\!\tfrac{\knd\gamma}{\rno_{0}}
	\\
	\label{eq:spb-Fproduct-gamma-05}
	&\qquad\!+\!\tfrac{2\epsilon(\RC{\sfrac{1}{2}}{\Wmn{[1,\blx]}}+\kappa\gamma)}{\rno_{0}^{2}(1-\dsta)}\!
	\end{align}
	for all \(\dinp\!\in\!\left(\bigtimes\nolimits_{\ind=1}^{\knd}{\set{A}_{\ind}}^{\set{B}_{1}^{\ind-1}}\right)\).
	
	Let \(\mP\) be the probability distribution generated by the encoder \(\enc\) on the input set of \(\Umn{\vec{[1,\knd]}}\),
	i.e. on \(\!\left(\!\bigtimes\nolimits_{\ind=1}^{\knd}{\set{A}_{\ind}}^{\set{B}_{1}^{\ind-1}}\!\right)\!\), 
	for the uniform distribution over the message set. 
	Then Lemma \ref{lem:divergence-DPI}
	and the identity \(\tau \ln \tau\!+\!(1-\tau)\ln(1-\tau)\!\geq\!\ln\sfrac{1}{2}\), 
	which holds for all \(\tau\in[0,1]\), imply
	\begin{align}
	\label{eq:spb-Fproduct-gamma-06}
	\RD{1}{\mP \mtimes  \Vmn{\vec{[1,\knd]}}}{\mP \mtimes  \Umn{\vec{[1,\knd]}}}
	&\!\geq\!\ln\sfrac{1}{2}-\Pem{\Vmn{\vec{[1,\blx]}}} \ln \Pem{av}.
	\end{align}
	
	Note that \(\RD{1}{\!\mP\mtimes\!\Vmn{\vec{[1,\knd]}}}{\!\mP\mtimes\!\Umn{\vec{[1,\knd]}}}\)
	is bounded from above by  the supremum of 
	\(\RD{1}{\!\Vmn{\vec{[1,\knd]}}(\dinp)}{\!\Umn{\vec{[1,\knd]}}(\dinp)}\)
	over the common input set of \(\Vmn{\vec{[1,\knd]}}\) and \(\Umn{\vec{[1,\knd]}}\), i.e. \(\bigtimes\nolimits_{\ind=1}^{\knd}{\set{A}_{\ind}}^{\set{B}_{1}^{\ind-1}}\).
	Then using 
	\eqref{eq:spb-Fproduct-gamma-04}, 
	\eqref{eq:spb-Fproduct-gamma-05}, 
	and  
	\eqref{eq:spb-Fproduct-gamma-06}
	we get,
	\begin{align}
	\notag
	\Pem{av}
	&\!\geq\!
	e^{-\frac{1+\sqrt[3]{\knd}}{\sqrt[3]{\knd}}\left(\spe{\rate,\Wmn{[1,\blx]}}+\frac{\knd\gamma}{\rno_{0}}
		+\frac{2\epsilon(\RC{\sfrac{1}{2}}{\Wmn{[1,\blx]}}+\kappa\gamma)}{\rno_{0}^{2}(1-\dsta)}\!
		+\ln 2\right)}.
	\end{align}
	Then using the identity \(\!\spe{\RC{\rnf}{\Wm},\Wm}\!\leq\!\tfrac{(1\!-\!\rnf)\RC{\rnf}{\Wm}}{\rnf}\),
	which is implied by Lemma \ref{lem:spherepackingexponent},
	together with \eqref{eq:orderoneovertwo}, \eqref{eq:lem:spb-Fproduct-gamma-hypothesis-old}, 
	and \eqref{eq:spb-Fproduct-gamma-01} we get
		\begin{align}
	\notag
	\Pem{av}
	&\!\geq\!\tfrac{1}{4}
	e^{-\spe{\rate,\Wmn{[1,\blx]}}-\frac{2\knd\gamma}{\rno_{0}}
		-\frac{\RC{\sfrac{1}{2}}{\Wmn{[1,\blx]}}+\kappa\gamma}{\rno_{0}^{2}(1-\dsta)}[4\epsilon+\frac{1}{\sqrt[3]{\knd}}]}.
	\end{align}
	On the other hand,
	\(\spe{\rate,\Wm}
	\leq \spe{\rate+\delta,\Wm}+\tfrac{(1-\rno_{0})}{\rno_{0}}\delta\) 
	for all \(\delta\geq0\)
	provided that \(\rate\!=\!\RC{\rno}{\Wm}\)  for an \(\rno\!\in\![\rno_{0},\rno_{1}]\)
	as a result of Lemma \ref{lem:spherepackingexponent}.
	Then \eqref{eq:lem:spb-Fproduct-gamma} follows from \eqref{eq:spb-Fproduct-gamma-01}.
\end{proof}

\begin{proof}[Proof of Theorem \ref{thm:fDPCexponent}]
	Note that \(\rno_{0}\), \(\rno_{1}\), \(\epsilon\),  \(\knd\), and \(\gamma\)
	are free parameters in Lemma \ref{lem:spb-Fproduct-gamma}.
	Thus we can choose a different value for each \(\blx\).
	In particular, if 
	\((\rno_{0})_{\blx}=\rno_{0}\),
	\((\rno_{1})_{\blx}=\rno_{1}\),
	\(\epsilon_{\blx}=\tfrac{1}{\blx}\), 
	\(\knd_{\blx}=\lfloor \blx^{\sfrac{3}{4}} \rfloor\),
	and 
	\(\gamma_{\blx}=2K \ln \blx\),
	then Assumption \ref{assumption:astationary} implies 
	the existence of \(\delta>0\) satisfying
	\(\dsta_{\blx}<1-\delta\) for large enough \(\blx\).
	Furthermore, the hypotheses of Lemma \ref{lem:spb-Fproduct-gamma}
	are satisfied for the above defined values of the parameters 
	for large enough \(\blx\) as a result of Assumption \ref{assumption:astationary}.
	Then  Theorem \ref{thm:fDPCexponent} follows from 
	Lemma \ref{lem:spb-Fproduct-gamma}
	and \(\dsta_{\blx}<1-\delta\). 
\end{proof}

\begin{comment}
\section{SPB for Memoryless Poisson Channels}\label{sec:poisson-outerbound}
The Poisson channel \(\Pcha{\tlx,\mA,\gX(\cdot)}\) is a product channel
and if \(\sup_{\tin\in(0,\tlx]}\gX(\tin)\) is \(\bigo{\ln \tlx}\),
then 
Theorem \ref{thm:productexponent} holds for the sequence of 
Poisson channels \(\{\Pcha{\blx,\mA,\gX(\cdot)}\}_{\blx\in\integers{+}}\),
as we have already pointed out in \S\ref{sec:product-outerbound}.
The Poisson channels \(\Pcha{\tlx,\mA,\mB,\costc}\), \(\Pcha{\tlx,\mA,\mB,\leq \costc}\), 
\(\Pcha{\tlx,\mA,\mB,\geq \costc}\), on the other hand,
are cost constrained memoryless channels, but not product channels. 
Thus the SPBs for these channels do not follow from any 
SPB for the product channels; 
but they do follow from the SPBs for the 
cost constrained memoryless channels established in 
\cite[Ch. VII]{augustin78}, \cite{nakiboglu18D,nakiboglu17},
relying on the concepts of the \renyi\!\!-Gallager capacity and 
the Augustin capacity.

For the Poisson channels \(\Pcha{\tlx,\mA,\mB,\costc}\), \(\Pcha{\tlx,\mA,\mB,\leq \costc}\), 
\(\Pcha{\tlx,\mA,\mB,\geq \costc}\) one can derive the SPB without 
using the concepts of the \renyi\!\!-Gallager capacity or the Augustin capacity.
In order to demonstrate this we apply the analysis presented in \S\ref{sec:pouterbound}
with minor modifications
and obtain the non-asymptotic SPB given in Theorem \ref{thm:poissonexponent}.

\begin{theorem}\label{thm:poissonexponent}
	Let \(\mA,\mB,\costc\!\in\!\reals{\geq0}\) satisfy \(\costc\!\in\![\mA,\mB]\),
	\(\tlx\) be such that \(\tlx\!\geq\!\tfrac{21}{\mB-\mA}\),
	\(\gX\) be a function of the form \(\gX:(0,\tlx]\to[\mA,\mB]\),
	\(\Wm\) be a Poisson channel whose input set is one of
	the \(\fXS\)'s  described in \eqref{eq:def:poissonchannel},
	\(\rnf\) be a free parameter in \([\tfrac{1}{\tlx(\mB-\mA)},1)\),
	and \(M\) and \(L\) be positive integers satisfying
	\begin{align}
	\label{eq:thm:poissonexponent-hypothesis}
	\RC{1}{\Wm}
	\!\geq\!\ln \tfrac{M}{L}
	&\!\geq\!\RC{\rnf}{\Wm}\!+\!\tfrac{1.75}{\rnf(1-\rnf)}\!+\!\tfrac{12.2}{(1-\rnf)}\ln[(\mB\!-\!\mA)\tlx].
	\end{align}
	Then any \((M,L)\) channel code on \(\Wm\) satisfies
	\begin{align}
	\label{eq:thm:poissonexponent}
	\hspace{-.3cm}
	\Pem{av}
	&\!\geq\!\left[16 e^{2+\frac{1.05}{\rnf}} 
	[(\mB\!-\!\mA)\tlx]^{26}\right]^{\frac{-1}{\rnf}}\!e^{-\spe{\ln\frac{M}{L},\Wm}}. 
	\end{align}
\end{theorem}
We prove Theorem \ref{thm:poissonexponent}, by first establishing
another parametric outer bound with more free parameters, i.e. Lemma \ref{lem:spb-poisson}.
The use of Lemma \ref{lem:spb-poisson} in the proof of Theorem \ref{thm:poissonexponent} is analogous
to the use of Lemma \ref{lem:spb-product} in the proof of Theorem \ref{thm:productexponent}.
\begin{lemma}\label{lem:spb-poisson}
	Let \(\mA,\mB,\costc\!\in\!\reals{\geq0}\) satisfy \(\costc\!\in\![\mA,\mB]\),
	\(\tlx\) be a positive real number,
	\(\gX\) be a function of the form \(\gX:(0,\tlx]\to[\mA,\mB]\),
	\(\Wm\) be a Poisson channel whose input set is one of 
	the \(\fXS\)'s  described in \eqref{eq:def:poissonchannel},
	\(\blx\!\in\!\integers{+}\), \(\rnf\!\in\!(0,1)\),
	\(\epsilon\!\in\!(0,\tfrac{\blx}{\blx+1})\), \(\knd\geq 3\),  
	and \(\gamma\) be
	\begin{align}
	\label{eq:lem:spb-poisson:gamma}
	\gamma
	&\DEF 3\sqrt[\knd]{3\blx}\left(\tfrac{(\mB-\mA)\tlx}{\blx} \vee \knd \right).
	\end{align}
	If  \(M\) and \(L\) are integers such that
	\(\tfrac{M}{L}> 16 \sqrt{\blx} e^{\RCI{\rnf}{\Wm}{\epsilon}+\frac{\gamma}{1-\rnf}}\), 
	then any \((M,L)\) channel code on \(\Wm\) satisfies
	\begin{align}
	\label{eq:lem:spb-poisson}
	\Pem{av}
	&\geq \left(\tfrac{\epsilon e^{-2\gamma}}{16 e^{2}\blx^{3/2}}\right)^{\sfrac{1}{\rnf}}  e^{-\spa{\epsilon}{\ln\frac{M}{L},\Wm}}  
	\end{align}
	where \(\spa{\epsilon}{\rate,\Wm}\) is defined in \eqref{eq:def:avspherepacking}.
\end{lemma} 
\eqref{eq:lem:spb-poisson} of  Lemma \ref{lem:spb-poisson} can be replaced by the following alternative inequality:
\begin{align}
\label{eq:lem:spb-poisson-alt}
\Pem{av}
&\geq \tfrac{\epsilon e^{-2\gamma}}{16\blx^{3/2}}  e^{-\spa{\epsilon}{\rate,\Wm}}
\end{align}
where\(\rate=\ln \tfrac{M}{L}-2\gamma-\ln \tfrac{16e^{2} \blx^{3/2}}{\epsilon}\).

We have calculated the \renyi capacities of the Poisson channels
whose input set are described in \eqref{eq:def:poissonchannel}
in \cite[\S\ref*{A-sec:examples-poisson}]{nakiboglu19A}.		
	\begin{subequations}
		%		\label{eq:poissoncapacity}
		\notag
		\begin{align}
		%		\label{eq:poissonchannel-mean-capacity}
		\RC{\rno}{\Pcha{\tlx,\mA,\mB,\varrho}}
		&= \FX_{\rno}(\mA,\mB,\varrho)\tlx,
		\\
		%		\label{eq:poissonchannel-constrained-A-capacity}
		\RC{\rno}{\Pcha{\tlx,\mA,\mB,\leq \varrho}}
		&=\FX_{\rno}(\mA,\mB,\varrho \wedge \costc_{\rno}(\mA,\mB))\tlx,
		\\
		%		\label{eq:poissonchannel-constrained-B-capacity}
		\RC{\rno}{\Pcha{\tlx,\mA,\mB,\geq\varrho}}
		&=\FX_{\rno}(\mA,\mB,\varrho \vee \costc_{\rno}(\mA,\mB))\tlx,
		\\
		%		\label{eq:poissonchannel-bounded-capacity}
		\RC{\rno}{\Pcha{\tlx,\mA,\mB}}
		&=\FX_{\rno}(\mA,\mB,\costc_{\rno}(\mA,\mB))\tlx
		\\
		%		\label{eq:poissonchannel-product-capacity}
		\RC{\rno}{\Pcha{\tlx,\mA,\gX(\cdot)}}
		&=\!\int_{0}^{\tlx}
		\FX_{\rno}(\mA,\gX(\tin),\costc_{\rno}(\mA,\gX(\tin)))
		\dif{\tin}
		\end{align}
	\end{subequations}
	where \(\FX_{\rno}(\mA,\mB,\varrho)\) and \(\costc_{\rno}(\mA,\mB)\)
	are defined as follows
	\begin{subequations}
		\notag
		%		\label{eq:poissonfunctions}
		\begin{align}
		%		\label{eq:poissonfunctions-mean}
		\FX_{\rno}(\mA,\mB,\varrho)
		&\!\DEF\!\!\!\begin{cases}
		\!\tfrac{\rno}{\rno-1}\!
		\left[\!\left[\!\tfrac{\costc-\mA}{\mB-\mA} \mB^{\rno}\!+\!
		\tfrac{\mB-\costc}{\mB-\mA} \mA^{\rno}\!\right]^{\frac{1}{\rno}}\!\!-\!\costc\right]
		&\rno\!\neq\!1
		\\
		\tfrac{\costc-\mA}{\mB-\mA} \mB\ln \tfrac{\mB}{\costc}
		+\tfrac{\mB-\costc}{\mB-\mA} \mA\ln \tfrac{\mA}{\costc}
		&\rno\!=\!1
		\end{cases},
		\\
		%		\label{eq:poissonfunctions-optimalcost}
		\costc_{\rno}(\mA,\mB)
		&\!\DEF\!\!\!\begin{cases}
		\rno^{\frac{\rno}{1-\rno}}\!
		\left[\!\tfrac{\mB-\mA}{\mB^{\rno}-\mA^{\rno}}\!\right]^{\frac{1}{1-\rno}}
		\!+\!\tfrac{\mA\mB^{\rno}-\mB \mA^{\rno}}{\mB^{\rno}-\mA^{\rno}}
		&
		\rno\!\neq\!1
		\\
		e^{-1}\mB^{\frac{\mB}{\mB-\mA}}\mA^{-\frac{\mA}{\mB-\mA}}
		&
		\rno\!=\!1
		\end{cases}.
		\end{align}
	\end{subequations}

\begin{proof}[Proof of Lemma \ref{lem:spb-poisson} and \eqref{eq:lem:spb-poisson-alt}]
	Let us first consider \(\Wm\!=\!\Pcha{\tlx,\mA,\mB,\costc}\) case. 
	We divide the interval \((0,\tlx]\) into \(\blx\) equal length subintervals
	for the from \((\tfrac{\tin-1}{\blx}\tlx, \tfrac{\tin}{\blx}\tlx]\)
	for \(\tin\!\in\!\{1,\ldots,\blx\}\).
	For each \(\tin\!\in\!\{1,\ldots,\blx\}\), 
	let \(\qmn{\rno,\tin}\) be the Poisson process on the time interval 
	\((\frac{\tin-1}{\blx}\tlx,\frac{\tin}{\blx}\tlx]\) with the constant intensity level 
	\(\pint_{\rno,\costc}\) for
	\begin{align}
	\notag
	%\label{eq:poissonchannel-mean-center-intensity}
	\pint_{\rno,\costc}
	&\DEF(\tfrac{\costc-\mA}{\mB-\mA} \mB^{\rno}+\tfrac{\mB-\costc}{\mB-\mA} \mA^{\rno})^{\sfrac{1}{\rno}}.
	\end{align} 
	Note that \(\qmn{\rno,\tin}\) is the order \(\rno\) \renyi center for \(\Pcha{\frac{\tlx}{\blx},\mA,\mB,\costc}\)
	given in \cite[(\ref*{A-eq:poissonchannel-mean-center})]{nakiboglu19A}, except for the shift in the time axis. 
	Let \(\qma{\rno,\tin}{\epsilon}\) be the corresponding average \renyi center,
	i.e.
	\begin{align}
	\notag
	\qma{\rno,\tin}{\epsilon}
	&\DEF\tfrac{1}{\epsilon} \int_{\rno-\epsilon\rno}^{\rno+\epsilon(1-\rno)} \qmn{\rnt,\tin} \dif{\rnt}.
	\end{align}
	We repeat the analysis we have done for proving Lemma \ref{lem:spb-product} and  \eqref{eq:lem:spb-product-alt}
	with the following modifications:
	\begin{itemize}
		\item Instead of \(\qma{\rno,\Wmn{\tin}}{\epsilon}\), we use \(\qma{\rno,\tin}{\epsilon}\)
		to define \(\vma{\rno,\tin}{\dmes}\) and \(\qmn{\rno}\).
		\item Instead of \eqref{eq:spb-product-07}, we use the following bound,
		which is obtained by first invoking 
		Lemma \ref{lem:divergence-convexity} and the Jensen's inequality,
		and then invoking
		\cite[(\ref*{A-eq:poissondivergence})]{nakiboglu19A}
		and bounds on the intensities of \(\wma{\tin}{\dmes}\) and \(\qmn{\rnt,\tin}\).		
		\begin{align}
		\notag
		\RD{\rno}{\wma{\tin}{\dmes}}{\qma{\rno,\tin}{\epsilon}}
		&\leq 
		\tfrac{1}{\epsilon} \int_{\rno-\epsilon\rno}^{\rno+\epsilon(1-\rno)} \RD{\rno}{\wma{\tin}{\dmes}}{\qmn{\rnt,\tin}} \dif{\rnt}
		\\
		\notag
		&\leq \tfrac{\mB-\mA}{1-\rno} \tfrac{\tlx}{\blx}.
		\end{align}
		
		\item Instead of Lemma \ref{lem:avcapacity} and  \eqref{eq:avcapacity:additivity}, 
		we use the following derivation to  establish \eqref{eq:spb-product-09}
		\begin{align}
		\notag
		\RD{\rno}{\wma{}{\dmes}}{\qmn{\rno}}
		&\mathop{=}^{(i)}
		\sum\nolimits_{\tin=1}^{\blx} \RD{\rno}{\wma{\tin}{\dmes}}{\qma{\rno,\tin}{\epsilon}}
		&&\\
		\notag
		&\mathop{\leq}^{(ii)}
		\sum\nolimits_{\tin=1}^{\blx} \int_{\rno-\epsilon\rno}^{\rno+\epsilon(1-\rno)}
		\tfrac{\RD{\rno}{\wma{\tin}{\dmes}}{\qmn{\rnt,\tin}} \dif{\rnt}}{\epsilon} 
		&&\\
		\notag
		&\mathop{=}^{(iii)}
		\tfrac{1}{\epsilon} \int_{\rno-\epsilon\rno}^{\rno+\epsilon(1-\rno)}  \RD{\rno}{\wma{}{\dmes}}{\qmn{\rnt,\Wm}} \dif{\rnt}
		&&\\
		\notag
		&\mathop{\leq}^{(iv)}
		\int_{\rno-\epsilon\rno}^{\rno+\epsilon(1-\rno)}
		\left[1\!\vee\!\tfrac{\rno(1-\rnt)}{(1-\rno)\rnt}\right]  
		\tfrac{\RD{\rnt}{\wma{}{\dmes}}{\qmn{\rnt,\Wm}}}{\epsilon} 
		\dif{\rnt}
		&&\\
		\notag
		&\mathop{\leq}^{(v)}
		\int_{\rno-\epsilon\rno}^{\rno+\epsilon(1-\rno)}
		\left[1\!\vee\!\tfrac{\rno(1-\rnt)}{(1-\rno)\rnt}\right]  		
		\tfrac{\RC{\rnt}{\Wm}}{\epsilon} 
		\dif{\rnt}
		&&\\
		\notag
		&\mathop{=}^{(vi)}
		\RCI{\rno}{\Wm}{\epsilon}
		&&
		\end{align}
		where
		\((i)\) follows from 
		\(\!\wma{}{\dmes}\!=\!\bigotimes\nolimits_{\tin=1}^{\blx}\wma{\tin}{\dmes}\), 
		\(\!\qmn{\rno}\!=\!\bigotimes\nolimits_{\tin=1}^{\blx}\qma{\rno,\tin}{\epsilon}\),
		and the definition of \renyi divergence,
		\((ii)\) follows from Lemma \ref{lem:divergence-convexity} and the Jensen's inequality,
		\((iii)\) follows from \(\wma{}{\dmes}=\bigotimes\nolimits_{\tin=1}^{\blx}\wma{\tin}{\dmes}\),
		\(\qmn{\rnt,\Wm}=\bigotimes\nolimits_{\tin=1}^{\blx}\qmn{\rnt,\tin}\),
		and the definition of \renyi divergence,
		\((iv)\) follows from Lemma \ref{lem:divergence-order} and 
		\(\tfrac{1-\rno}{\rno}\!\RD{\rno}{\mW}{\mQ}\!=\!\RD{1-\rno}{\mQ}{\mW}\),
		\((v)\)\! follows from Theorem \ref{thm:minimax},
		\((vi)\) follows from \eqref{eq:def:avcapacity}.
	\end{itemize}

	For proving the lemma for \(\Pcha{\tlx,\mA,\mB,\leq \costc}\),
	\(\Pcha{\tlx,\mA,\mB,\geq \costc}\),
	\(\Pcha{\tlx,\mA,\mB}\) 
	we follow the same steps for the constant intensity levels 
	\(\pint_{\rno,\costc\wedge\costc_{\rno}}\),
	\(\pint_{\rno,\costc\vee\costc_{\rno}}\),
	\(\pint_{\rno,\costc_{\rno}}\)
	rather than \(\pint_{\rno,\costc}\).
	Note that, except for a shift in the time axis, 
	the Poisson process \(\qmn{\rno,\tin}\) on the time interval 
	\((\frac{\tin-1}{\blx}\tlx,\frac{\tin}{\blx}\tlx]\)
	is the order \(\rno\) \renyi center of the corresponding  
	duration \(\tfrac{\tlx}{\blx}\) Poisson channel 
	given
	\cite[(\ref*{A-eq:poissonchannel-mean-center}), (\ref*{A-eq:poissonchannel-constrained-center}), 
	(\ref*{A-eq:poissonchannel-bounded-center})]{nakiboglu19A}.

	For \(\Pcha{\tlx,\mA,\gX(\cdot)}\), we use a time 
	dependent intensity function \(\pint_{\rno}(\cdot)\) ---rather than a constant one--- defined as follows
	\begin{align}
	\notag
	\pint_{\rno}(\tau)
	&\DEF \rno^{\frac{1}{1-\rno}}(\tfrac{\gX(\tau)-\mA}{\gX^{\rno}(\tau)-\mA^{\rno}})^{\frac{1}{1-\rno}}.
	\end{align}
	Except for a shift in the time axis, 
	the Poisson process \(\qmn{\rno,\tin}\) on the time interval \((\frac{\tin-1}{\blx}\tlx,\frac{\tin}{\blx}\tlx]\)
	is the order \(\rno\) \renyi center of the Poisson channel \(\Pcha{\frac{\tlx}{\blx},\mA,\tilde{\gX}(\cdot)}\) 
	given in \cite[(\ref*{A-eq:poissonchannel-product-center-intensity})]{nakiboglu19A},
	where \(\tilde{\gX}(\tau)=\gX(\tau+\tfrac{\tin-1}{\blx}\tlx)\).
\end{proof}

\begin{proof}[Proof of Theorem \ref{thm:poissonexponent}]
	In order to have uniform approximation error terms for all
	possible Poisson channels, 
	we first derive a uniform bound on \(\RC{\rnf}{\Wm}\).
	Note that \(\RC{\rnf}{\Wm}\leq \RC{\rnf}{\Pcha{\tlx,\mA,\mB}}\) 
	by the hypothesis and \(\RC{\rnf}{\Pcha{\tlx,\mA,\mB}}\leq \RC{1}{\Pcha{\tlx,\mA,\mB}}\) 
	by Lemma \ref{lem:capacityO}-(\ref{capacityO-ilsc}).
	We bound \(\RC{1}{\Pcha{\tlx,\mA,\mB}}\) 
	using the expression given in \cite[(\ref*{A-eq:poissonchannel-bounded-capacity})]{nakiboglu19A}
	and the identity  \(\ln \dsta \leq \dsta -1\) as follows:
	\begin{align}
	\notag
	\RC{\rnf}{\Wm}
	&\!\leq\!\RC{1}{\Pcha{\tlx,\mA,\mB}}
	\\
	\notag
	&\!=\!\mB\left(e^{-1+\frac{\mA}{\mB-\mA}\ln\frac{\mB}{\mA}}
	\!+\!\tfrac{\mA}{\mB-\mA}\ln\tfrac{\mA}{\mB}
	\right) \tlx
	\\
	\label{eq:poissonexponent-01}
	&\!\leq\!(\mB-\mA)\tlx
	&
	&\forall\rnf\!\in\!(0,1].
	\end{align}
	We can choose the free parameters \(\blx\), \(\epsilon\), \(\knd\) of 
	Lemma \ref{lem:spb-poisson} using the value of \(\tlx\).
	Let \(\blx\), \(\knd\), and \(\epsilon\) be
	\begin{align}
	\label{eq:poissonexponent-02}
	\blx
	&\!=\!\lfloor (\mB\!-\!\mA) \tlx \rfloor,
	&
	\epsilon 
	&\!=\!\tfrac{1}{(\mB\!-\!\mA)\tlx},
	&
	\knd 
	&\!=\!\ln\!\lfloor(\mB\!-\!\mA)\tlx\rfloor.
	\end{align}
	Then the hypothesis of Lemma \ref{lem:spb-poisson} on \(\epsilon\) is satisfied.
	Furthermore, \(\gamma\) defined in \eqref{eq:lem:spb-poisson:gamma} satisfies
	\begin{align}
	\label{eq:poissonexponent-03}
	\gamma
	&\leq 11.7 \ln [(\mB-\mA) \tlx].
	\end{align}
	Then using \eqref{eq:avcapacity:bound}, \eqref{eq:poissonexponent-01}, \eqref{eq:poissonexponent-02}, 
	and \eqref{eq:poissonexponent-03} we get
	\begin{align}
	\notag
	16 \sqrt{\blx} e^{\RCI{\rnf}{\Wm}\epsilon+\frac{\gamma}{1-\rnf}}
	&\leq 16 e^{\frac{1.05}{\rnf(1-\rnf)}} [(\mB-\mA) \tlx]^{\frac{1}{2}+\frac{11.7}{1-\rnf}}  e^{\RC{\rnf}{\Wm}}
	\\
	\notag
	&\leq e^{\frac{1.75}{\rnf(1-\rnf)}} [(\mB-\mA) \tlx]^{\frac{12.2}{1-\rnf}}  e^{\RC{\rnf}{\Wm}}.
	\end{align}
	Thus the integers \(M\) and \(L\) satisfying \eqref{eq:thm:poissonexponent-hypothesis}
	satisfy the hypothesis of Lemma \ref{lem:spb-poisson}.
	First applying Lemma \ref{lem:spb-poisson} and then invoking 
	\eqref{eq:poissonexponent-02} and
	\eqref{eq:poissonexponent-03} we get 
	\begin{align}
	\label{eq:poissonexponent-4}
	\hspace{-.1cm}
	\Pem{av}
	&\!\geq\!\left(16 e^{2}[(\mB\!-\!\mA) \tlx]^{25.9}\right)^{\frac{-1}{\rnf}} e^{-\spa{\epsilon}{\ln \frac{M}{L},\Wm}}.
	\end{align}
	On the other hand we can bound \(\spa{\epsilon}{\ln \tfrac{M}{L},\Wm}\) in terms of 
	\(\spe{\ln \tfrac{M}{L},\Wm}\) using Lemma \ref{lem:avspherepacking}
	because \(\epsilon\leq\rnf\) holds.
	If we also  invoke 
	\eqref{eq:thm:poissonexponent-hypothesis},
	\eqref{eq:poissonexponent-01},  
	\eqref{eq:poissonexponent-02},
	and \(\epsilon\leq\sfrac{1}{21}\)
	then we get
	\begin{align}
	\label{eq:poissonexponent-5}
	\spa{\epsilon}{\ln \tfrac{M}{L},\Wm}
	&\leq 
	\spe{\ln \tfrac{M}{L},\Wm}+\tfrac{1.05}{\rnf^{2}}.
	\end{align}
	\eqref{eq:thm:poissonexponent} follows from 
	\eqref{eq:poissonexponent-4} and \eqref{eq:poissonexponent-5}.
\end{proof}
%\end{comment}
%!TEX root=../main-B.tex
%\section[Operational Significance of \(\RC{\rno}{\Wm}\)]{Operational Significance of \renyi Capacity}\label{sec:operational} 
\section{Operational Significance of \(\RC{\rno}{\Wm}\)}\label{sec:operational} 
The information transmission problems did not play any role in the definition 
the \renyi capacity, presented in \S\ref{sec:preliminary}; neither did they play 
any role in the analysis of it in \cite{nakiboglu19A}.
In information theorists' parlance: the \renyi capacity does not 
have any operational significance because of its definition. 
Channel coding theorems and their converses establish the operational significance of the \renyi capacity. 
In this section we review two well known results that quantify this operational significance. 

For the DSPCs, the order one \renyi capacity is equal to the channel capacity \cite{shannon48}, 
i.e. \(\RC{1}{\Wm}\) is the threshold for the rates below which reliable 
communication is possible and above which reliable communication is impossible
for any DSPC.
The \renyi capacities of other orders bound the performance of the channel codes through the 
sphere packing exponent defined in \S\ref{sec:SPexponent} 
and the strong converse exponent that will be defined in \S\ref{sec:arimoto}:
via Gallager's bound \cite[Thm. 1]{gallager65} for rates less than \(\RC{1}{\Wm}\) and
via Arimoto's bound \cite[Thm. 1]{arimoto73} for rates greater than \(\RC{1}{\Wm}\).
We derive Gallager's bound  in \S\ref{sec:gallager},  Arimoto's bound  in \S\ref{sec:arimoto}.
Then in \S\ref{sec:Ccapacity} we define the channel capacity formally and demonstrate 
that Gallager's bound and Arimoto's bound determine 
the channel capacity for a class of channels much broader than the DSPCs.

\subsection{Gallager's Bound}\label{sec:gallager} 
For rates less than \(\RC{1}{\Wm}\), the sphere packing exponent confines 
the optimal performance for the channel coding problem 
through  Gallager's bound, \cite[Thm. 1]{gallager65}, \cite[p. 538]{gallager}.
Let us start with deriving Gallager's  bound for 
the list decoding.
\begin{lemma}\label{lem:gallager}
For any \(\Wm\!:\!\inpS\!\to\!\pmea{\outA}\), 
 \(\mP\!\in\!\pdis{\inpS}\), message set size \(M\!\in\!\integers{+}\), 
list size \(L\!\in\!\{\!1,\ldots,M\!-\!1\!\}\), and 
\(\rno\!\in\![\tfrac{1}{1+L},1)\), there exists an \((M,L)\) channel code satisfying
\begin{align}
\label{eq:lem:gallager}
\ln \Pem{av}&\leq \tfrac{\rno-1}{\rno} \left[\RMI{\rno}{\mP}{\Wm}-\tfrac{1}{L}\ln \tbinom{M-1}{L}\right].
\end{align}
Furthermore, for any  
\(M,L\) satisfying \(\ln\!\tfrac{e M}{L}\!\in\![\RC{\!\frac{1}{1+L}\!}{\Wm\!},\!\RC{1}{\Wm\!})\)
and \(\epsilon>0\), there exists an \((M,L)\) 
channel code  such that
\begin{align}
\label{eq:lem:gallager-sp}
\Pem{av}&\leq (1+\epsilon)e^{-\spe{\ln\frac{e M}{L},\Wm}}.
\end{align}
If \(\lim_{\rno \uparrow 1} \tfrac{1-\rno}{\rno}\RC{\rno}{\Wm}=0\), 
then \eqref{eq:lem:gallager-sp} holds for \(\epsilon=0\), 
as well.
\end{lemma}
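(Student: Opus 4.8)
The plan is to obtain \eqref{eq:lem:gallager} by a random-coding argument with Gallager's $\rho$-trick adapted to list-of-$L$ decoding, and then to read off \eqref{eq:lem:gallager-sp} and the $\epsilon=0$ refinement by optimizing the resulting bound over $\mP$ and over the order. Throughout I would set $\rho=\tfrac{1-\rno}{\rno}$, so that $\rno\in[\tfrac{1}{1+L},1)$ corresponds to $\rho\in(0,L]$; fix a probability measure $\rfm$ on $(\outS,\outA)$ with $\Wm(\dinp)\AC\rfm$ for every $\dinp\in\supp{\mP}$ (e.g. $\rfm=\sum_{\dinp}\mP(\dinp)\Wm(\dinp)$); and write $w_{\dinp}$ for $\der{\Wm(\dinp)}{\rfm}$. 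I would analyse the ensemble in which $\enc(1),\dots,\enc(M)$ are drawn independently from $\mP$ and $\dec$ is the list decoder returning the $L$ messages $\dmes$ with the largest $w_{\enc(\dmes)}(\dout)$, ties broken by message index (a measurable decoder, since each $w_{\dinp}$ is measurable in $\dout$ and $\mesS$ is finite).

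For a transmitted $\dmes$ and an output $\dout$ with $w_{\enc(\dmes)}(\dout)>0$, an error requires at least $L$ of the other codewords $\dinp'$ to satisfy $w_{\dinp'}(\dout)\geq w_{\enc(\dmes)}(\dout)$, hence $\IND{\dmes\notin\dec(\dout)}\leq\big(\sum_{\set{S}}\prod_{\dmes'\in\set{S}}\IND{w_{\enc(\dmes')}(\dout)\geq w_{\enc(\dmes)}(\dout)}\big)^{\rho/L}$, the outer sum running over the $L$-element subsets $\set{S}$ of $\mesS\setminus\{\dmes\}$ (the right side is at least $1$ on the error event, and the set $\{w_{\enc(\dmes)}=0\}$ is $\Wm(\enc(\dmes))$-null). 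Dominating each indicator by $(w_{\dinp'}(\dout)/w_{\enc(\dmes)}(\dout))^{1/(1+\rho)}$, taking the expectation over the competing codewords, extracting the power $\rho/L\leq1$ through Jensen's inequality, and then averaging $\dout$ against $\Wm(\enc(\dmes))$ and $\enc(\dmes)$ against $\mP$ — the exponent $\tfrac{1}{1+\rho}$ being chosen precisely so that after the $w_{\enc(\dmes)}$-reweighting every codeword appears with the common power $\rno$ — the integrand collapses to a single power and one arrives at
\begin{align}
\notag
\EX{\Pem{av}}
&\leq \tbinom{M-1}{L}^{\rho/L}\,
\EXS{\rfm}{\Big(\sum\nolimits_{\dinp}\mP(\dinp)\,w_{\dinp}^{\rno}\Big)^{\sfrac{1}{\rno}}}.
\end{align}
By Definition \ref{def:information} the second factor equals $e^{\frac{\rno-1}{\rno}\RMI{\rno}{\mP}{\Wm}}$ and is independent of the choice of $\rfm$, while $\tbinom{M-1}{L}^{\rho/L}=e^{-\frac{\rno-1}{\rno}\cdot\frac{1}{L}\ln\tbinom{M-1}{L}}$; hence some code in the ensemble satisfies \eqref{eq:lem:gallager} (distinctness of codewords is not needed, as Definition \ref{def:code} does not demand it).

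For \eqref{eq:lem:gallager-sp}, put $\rate=\ln\tfrac{e M}{L}$ and note $\tfrac{1}{L}\ln\tbinom{M-1}{L}<\rate$ (from $L!>(L/e)^{L}$), so $\rate-\tfrac{1}{L}\ln\tbinom{M-1}{L}>0$. Since $\RC{\frac{1}{1+L}}{\Wm}\leq\rate<\infty$, Lemma \ref{lem:capacityO}-(\ref{capacityO-zofintieness}) gives $\RC{\rno}{\Wm}<\infty$ for all $\rno\in(0,1)$, so for each such $\rno$ there is a $\mP$ with $\RMI{\rno}{\mP}{\Wm}\geq\RC{\rno}{\Wm}-(\rate-\tfrac{1}{L}\ln\tbinom{M-1}{L})$; feeding this $\mP$ into \eqref{eq:lem:gallager} and using that the bracket is then $\geq\RC{\rno}{\Wm}-\rate$ while $\tfrac{\rno-1}{\rno}<0$, one obtains an $(M,L)$ code with $\ln\Pem{av}\leq-\tfrac{1-\rno}{\rno}(\RC{\rno}{\Wm}-\rate)$. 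A short case check using Lemma \ref{lem:spherepacking} together with the continuity of $\RC{\rno}{\Wm}$ on $(0,1]$ (Lemma \ref{lem:capacityO}-(\ref{capacityO-zo})) shows that, because $\rate\in[\RC{\frac{1}{1+L}}{\Wm},\RC{1}{\Wm})$, the supremand $\tfrac{1-\rno}{\rno}(\RC{\rno}{\Wm}-\rate)$ is nonpositive wherever $\RC{\rno}{\Wm}\leq\rate$, whence $\sup_{\rno\in[\frac{1}{1+L},1)}\tfrac{1-\rno}{\rno}(\RC{\rno}{\Wm}-\rate)=\spe{\rate,\Wm}$. Choosing $\rno$ so that $\tfrac{1-\rno}{\rno}(\RC{\rno}{\Wm}-\rate)\geq\spe{\rate,\Wm}-\ln(1+\epsilon)$ yields \eqref{eq:lem:gallager-sp}.

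Finally, under $\lim_{\rno\uparrow1}\tfrac{1-\rno}{\rno}\RC{\rno}{\Wm}=0$ the map $\rno\mapsto\tfrac{1-\rno}{\rno}(\RC{\rno}{\Wm}-\rate)$ extends continuously to the compact interval $[\tfrac{1}{1+L},1]$ with value $0$ at $\rno=1$, hence attains its maximum $\spe{\rate,\Wm}$ at some $\rno^{\star}\in[\tfrac{1}{1+L},1]$; if $\rno^{\star}=1$ then $\spe{\rate,\Wm}=0$ and any code works, and otherwise the construction of the previous paragraph run at the fixed order $\rno^{\star}$ already produces a single code with $\ln\Pem{av}\leq-\tfrac{1-\rno^{\star}}{\rno^{\star}}(\RC{\rno^{\star}}{\Wm}-\rate)=-\spe{\rate,\Wm}$, i.e. \eqref{eq:lem:gallager-sp} with $\epsilon=0$. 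I expect the first two paragraphs to carry the real weight: pushing the Radon--Nikodym derivatives through the combinatorial estimate, the indicator domination, and the Jensen step in the general (non-discrete) setting, and checking the reference-measure invariance so that the final quantity is exactly $e^{\frac{\rno-1}{\rno}\RMI{\rno}{\mP}{\Wm}}$; once \eqref{eq:lem:gallager} is in hand, the order optimization is routine given Lemmas \ref{lem:capacityO} and \ref{lem:spherepacking}. (If $\spe{\rate,\Wm}=\infty$, which forces $\rate=\RC{0^{_{+}}\!}{\Wm}$, the same estimates show $\Pem{av}$ can be driven below any prescribed positive bound, which is the intended reading of \eqref{eq:lem:gallager-sp} in that degenerate case.)
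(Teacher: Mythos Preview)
Your argument is correct. For the core bound \eqref{eq:lem:gallager} you take a genuinely different route from the paper: you use Gallager's $\rho$-trick for list decoding, bounding the error indicator by $(\sum_{\set{S}}\prod\IND{\cdot})^{\rho/L}$, replacing indicators by likelihood ratios, and applying Jensen (which is where the constraint $\rno\geq\tfrac{1}{1+L}$, i.e.\ $\rho/L\leq1$, enters). The paper instead works with the \renyi mean $\qmn{\rno,\mP}$ as reference measure and splits the error event by an auxiliary threshold $\gamma$ on $\der{\Wm(\dinp)}{\qmn{\rno,\mP}}$, bounding the low-likelihood and high-likelihood pieces separately and then choosing $\gamma$ to balance them; the constraint $\rno\geq\tfrac{1}{1+L}$ enters there as $\rno\geq1-L\rno$, needed to push the exponent on $f_{\dinp}$ past the threshold. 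Your route is the classical one and is arguably more direct, since the integral collapses to $\int(\sum_{\dinp}\mP(\dinp)w_{\dinp}^{\rno})^{1/\rno}\dif{\rfm}$ without any special choice of reference measure; the paper's threshold argument, by contrast, exploits the specific identity $\sum_{\dinp}\mP(\dinp)(\der{\Wm(\dinp)}{\qmn{\rno,\mP}})^{\rno}=e^{(\rno-1)\RMI{\rno}{\mP}{\Wm}}$ and the fact that the threshold can be taken constant precisely because $\qmn{\rno,\mP}$ is the reference (as the paper remarks). Your treatments of \eqref{eq:lem:gallager-sp} and of the $\epsilon=0$ case match the paper's in substance: both bound $\tfrac{1}{L}\ln\tbinom{M-1}{L}$ by $\ln\tfrac{eM}{L}$, absorb the slack into the choice of $\mP$, restrict the optimizing $\rno$ to $[\tfrac{1}{1+L},1)$ via Lemma~\ref{lem:spherepacking}, and for $\epsilon=0$ invoke the extreme value theorem on $[\tfrac{1}{1+L},1]$.
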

Gallager's bound is proved using a standard random coding argument for a maximum likelihood decoder,
without invoking probabilistic results, 
such as the law of large numbers or the central limit theorem. 
It is possible to strengthen the result by considering channels satisfying additional hypotheses. 
\altug and Wagner \cite{altugW12B}, \cite{altugW14D} have shown that for the  DSPCs when \(L=1\) 
for large enough rates it is possible to replace the \((1+\epsilon)\) term with an \(\bigo{\blx^{-\sfrac{1}{2\rno}}}\) term for certain 
\(\rno\in(\sfrac{1}{2},1)\), where \(\blx\) is the block length.
Later  Scarlett, Martinez, and \fabregas \cite{scarlettMF13} presented an alternative derivation of the result.
First Scarlett, Martinez, and \fabregas \cite{scarlettMF14} and
then Honda \cite{honda15}, derived approximations to random coding union bound which can be used to characterize the 
\(\bigo{\blx^{-\sfrac{1}{2\rno}}}\) term explicitly.

Shannon, Gallager, and Berlekamp \cite[Thm. 2]{shannonGB67A} proved the 
SPB for arbitrary DSPCs. This implies that Gallager's bound is tight for any DSPC
and at any fixed rate less than \(\RC{1}{\Wm}\) and greater than \(\RC{0^{_{+}}\!}{\Wm}\) 
for large enough list size \(L\), in terms of the exponential decay rate of the error 
probability with the block length.
Theorem \ref{thm:productexponent} presented in \S\ref{sec:product-outerbound} establishes
the SPB and hence the tightness of Gallager's bound ---in the above 
discussed sense--- more generally.

\begin{proof}[Proof of Lemma \ref{lem:gallager}] 
Instead of bounding the error probability of a code, we bound the average error probability of an ensemble of codes. 
We assume that the messages are assigned to the members of \(\inpS\), 
independently of one another and according to a \(\mP\) in \(\pdis{\inpS}\). 
We use a maximum likelihood decoder: for each \(\dout\in\outS\), decoded list \(\dec(\dout)\) is composed of 
\(L\) messages  with the greatest \(\der{\Wm(\enc(\dmes))}{\qmn{\rno,\mP}}\) values. 
If there is a tie, decoder chooses the messages with the lower indices.  

In order to bound the expected value of the average error probability of the code over the ensemble, let us consider the expected value 
of  the conditional error probability of the message with the greatest index. An error will occur only when Radon-Nikodym derivatives 
of at least \(L\) other messages are at least as large as the Radon-Nikodym derivative of the transmitted message. We bound this probability using 
an auxiliary threshold \(\gamma\): 
\begin{align}
\notag
\hspace{-.1cm}\EX{\Pem{av}} 
&\!\leq\!
\sum\nolimits_{\dinp}\!\mP(\!\dinp\!)
\EXS{\qmn{\rno,\mP}\!}{\IND{\fX_{\dinp}\leq \gamma}\fX_{\dinp}}
\\
\notag
&~~~+\!\tbinom{M-1}{L}\!\sum\nolimits_{\dinp}\!\mP(\!\dinp\!)
\EXS{\qmn{\rno,\mP}\!}{\!
\left[\sum\nolimits_{\dsta}\!\mP(\!\dsta\!)\IND{\fX_{\dsta}\geq\fX_{\dinp}}\!\right]^{L}
\!\!\IND{\fX_{\dinp}>\gamma}\fX_{\dinp}\!}
\end{align}
where \(\fX_{\dinp}=\der{\Wm(\dinp)}{\qmn{\rno,\mP}}\) for all \(\dinp\in\inpS\).

We bound the fist and the second terms separately: 
\begin{align}
\notag
\EXS{\qmn{\rno,\mP}}{\IND{\fX_{\dinp}\leq \gamma}\fX_{\dinp}}
&\leq \gamma^{1-\rno} \EXS{\qmn{\rno,\mP}}{\IND{\fX_{\dinp}\leq \gamma}(\fX_{\dinp})^{\rno}}.
\end{align}
In order to bound the second term we use the identity
\(\sum\nolimits_{\dsta} \mP(\dsta)(\fX_{\dsta})^{\rno}=e^{(\rno-1)\RMI{\rno}{\mP}{\Wm}}\),
which is a consequence of \eqref{eq:def:information} and \eqref{eq:def:mean},
and the fact that \(\rno\geq 1- L\rno\), which follows from
\(\tfrac{1}{1+L}\leq \rno\).
\begin{align}
\notag
&\EXS{\qmn{\rno,\mP}}{
	\left[\sum\nolimits_{\dsta} \mP(\dsta)\IND{\fX_{\dsta}\geq\fX_{\dinp}} \right]^{L}
	\IND{\fX_{\dinp}>\gamma}\fX_{\dinp}}
\\
\notag
&\hspace{1.5cm}\leq
\EXS{\qmn{\rno,\mP}}{
	\left[\sum\nolimits_{\dsta} \mP(\dsta)  (\tfrac{\fX_{\dsta}}{\fX_{\dinp}})^{\rno}\right]^{L}
	\IND{\fX_{\dinp}>\gamma}\fX_{\dinp}}
\\
\notag
&\hspace{1.5cm}
=e^{(\rno-1)\RMI{\rno}{\mP}{\Wm}L}
\EXS{\qmn{\rno,\mP}}{\IND{\fX_{\dinp}>\gamma}(\fX_{\dinp})^{1-\rno L}}
\\
\notag
&\hspace{1.5cm}
\leq e^{(\rno-1)\RMI{\rno}{\mP}{\Wm}L} 
\gamma^{1-\rno-\rno L}
\EXS{\qmn{\rno,\mP}}{\IND{\fX_{\dinp}>\gamma}(\fX_{\dinp})^{\rno}}
\end{align}

If we set \(\gamma=\left[\tbinom{M-1}{L}\right]^{\frac{1}{L\rno}} e^{\frac{\rno-1}{\rno}\RMI{\rno}{\mP}{\Wm}}\) 
and apply the identity 
\(\sum\nolimits_{\dinp} \mP(\dinp)(\fX_{\dinp})^{\rno}=e^{(\rno-1)\RMI{\rno}{\mP}{\Wm}}\)
again we get 
\begin{align}
\notag
\EX{\Pem{av}}
&\leq 
\gamma^{1-\rno} \sum\nolimits_{\dinp} \mP(\dinp) \EXS{\qmn{\rno,\mP}}{(\fX_{\dinp})^{\rno}}
\\
\notag
&=\gamma^{1-\rno}e^{(\rno-1)\RMI{\rno}{\mP}{\Wm}}
\\
\notag
&=\left[\tbinom{M-1}{L}\right]^{\frac{1-\rno}{L\rno}} e^{\frac{\rno-1}{\rno}\RMI{\rno}{\mP}{\Wm}}.
\end{align}
Since there exists a code with \(\Pem{av}\) less than or equal to \(\EX{\Pem{av}}\), there exists a code
satisfying \eqref{eq:lem:gallager}.

Using the Stirling's approximation for factorials,   
i.e. \(\sqrt{2\pi \blx} (\sfrac{\blx}{e})^{\blx}\leq  \blx!  \leq  e \sqrt{\blx} (\sfrac{\blx}{e})^{\blx}\),
and the identity \(\ln \tau \leq \tau-1\) we get
\begin{align}
\notag
\tfrac{1}{L}\ln \tbinom{M-1}{L}
&\leq \tfrac{1}{L}\ln \tfrac{e\sqrt{M-1}}{2\pi \sqrt{L(M-1-L)}}  +\ln \tfrac{M-1}{L}
\\
\notag
&\qquad~\qquad~\qquad
+\tfrac{M-1-L}{L}  \ln \left(1+\tfrac{L}{M - 1-L}\right)   
\\
\label{eq:stirlingbound}
&\leq \ln \tfrac{M-1}{L}+1.
\end{align}
For any \(\rno\in [\tfrac{1}{L+1},1]\),
\eqref{eq:lem:gallager} and \eqref{eq:stirlingbound}
implies
\begin{align}
\notag
\ln \Pem{av}
&\leq \tfrac{\rno-1}{\rno}
[\RMI{\rno}{\mP}{\Wm}+\ln\tfrac{M}{M-1}-\ln \tfrac{M e}{L}].
\end{align}
In order to obtain \eqref{eq:lem:gallager-sp}, first note that
there exists an 
\(\rno\) in \([\tfrac{1}{1+L},1)\) such that 
\(\tfrac{1-\rno}{\rno}(\RC{\rno}{\Wm}-\rate)>\spe{\rate,\Wm}-\ln(1+\epsilon)\)
for any 
\(\rate\!\in\![\RC{\frac{1}{1+L}}{\Wm},\RC{1}{\Wm})\) and \(\epsilon\!>\!0\)
by  Lemma \ref{lem:spherepackingexponent}.
On the other hand,  for any \(M\!\geq\!2\)  and  \(\rno\) there exists a
\(\mP\in\pdis{\inpS}\) such that 
\(\RMI{\rno}{\mP}{\Wm}\!+\!\ln\tfrac{M}{M-1}\!\geq\!\RC{\rno}{\Wm}\)
by Definition \ref{def:capacity}. 
Thus, for any \(\epsilon\!>\!0\) and \(L,M\!\in\!\integers{+}\) such that 
\(\ln \tfrac{e M}{L}\!\in\![\RC{\frac{1}{1+L}}{\Wm},\RC{1}{\Wm})\) there exists 
an \(\rno\!\in\![\frac{1}{1+L},1)\) and a  \(\mP\!\in\!\pdis{\inpS}\) such that
\begin{align}
\notag
\spe{\tfrac{M e}{L},\!\Wm}\!-\!\ln (1+\epsilon)
&\!\leq\!\tfrac{1-\rno}{\rno}\!
[\RMI{\rno}{\mP}{\Wm}\!+\!\ln\tfrac{M}{M-1}\!-\!\ln\tfrac{M e}{L}].
\end{align}
If \(\lim_{\rno\uparrow 1}\!\frac{1-\rno}{\rno}\!\RC{\rno}{\Wm}\!=\!0\)
and \(\rate\!=\!\RC{\rnf}{\Wm}\) for a \(\rnf\!\in\!(0,1)\)
then \(\spe{\rate,\Wm}\!=\!\sup_{\rno\in [\rnf,1]}\!\tfrac{1-\rno}{\rno}\!\left(\RC{\rno}{\Wm}\!-\!\rate\right)\)
by Lemma \ref{lem:spherepackingexponent}.
Then as a result of the extreme value theorem \cite[27.4]{munkres},
for any \(\rate\!\in\![\RC{\frac{1}{1+L}}{\Wm},\RC{1}{\Wm})\) there exists an 
\(\rno\!\in\![\tfrac{1}{1+L},1]\) such that 
\(\tfrac{1-\rno}{\rno}(\RC{\rno}{\Wm}\!-\!\rate)\!=\!\spe{\rate\!,\!\Wm\!}\). 
Thus \eqref{eq:lem:gallager-sp} holds for \(\epsilon\!=\!0\).
\end{proof}
\begin{remark}
	We can choose a constant threshold because we work with \(\der{\Wm(\dinp)}{\qmn{\rno,\mP}}\).
	If we had worked with \(\der{\Wm(\dinp)}{\rfm}\) for another measure \(\rfm\), our threshold 
	would have been  \(\gamma \der{\qmn{\rno,\mP}}{\rfm}\), which does not necessarily have the same value 
	for all \(\dout\in\outS\).
\end{remark}
\begin{remark}
	In \cite[Prob. 5.20, p. 538]{gallager}, \(\tfrac{1}{L}\ln \tbinom{M-1}{L}\) 
	is bounded from above by \(\ln (M-1)\).
	For sequences of codes with bounded list size, this bound and the bound in 
	\eqref{eq:stirlingbound} are asymptotically equivalent. 
	However, if \(L\) grows exponentially with the block length,
	then the bound \(\tfrac{1}{L}\ln \tbinom{M-1}{L}\leq\ln (M-1)\) is inferior
	and only bound in \eqref{eq:stirlingbound} leads to an achievability 
	bound that has a matching converse.
\end{remark}

\subsection{Arimoto's Bound}\label{sec:arimoto} 
For rates greater than \(\RC{1}{\Wm}\) the strong converse exponent 
confines the optimal performance for the channel 
coding problem through Arimoto's bound \cite[Thm. 1]{arimoto73}.
Let us start our discussion by recalling the definition of the
strong converse exponent.

\begin{definition}\label{def:strongconverseexponent}
	For any channel \(\!\Wm\!:\!\inpS\!\to\!\pmea{\outA}\)
	satisfying  \(\RC{1}{\Wm}\!<\!\infty\)
	and rate \(\rate\!\in\!\reals{\geq0}\) 
	\emph{the strong converse exponent} is 
	\begin{align}
	%\label{eq:def:strongconverseexponent}
	\notag
	\sce{\rate,\Wm}
	&\DEF \sup\nolimits_{\rno\in [1,\infty)} \tfrac{1-\rno}{\rno} \left(\RC{\rno}{\Wm}-\rate\right).
	\end{align} 
\end{definition}
%The strong 
%converse exponent is denoted by \(C_{sp}(\rate)\) 
%in \cite{omura75} by Omura.
\begin{lemma}\label{lem:strongconverseexponent}
	For any channel \(\!\Wm\!:\!\inpS\!\to\!\pmea{\outA}\)
	satisfying  \(\RC{1}{\Wm}\!<\!\infty\),
	\(\sce{\rate,\Wm}\) is convex, nondecreasing, continuous, and finite 
	function of \(\rate\) on \(\reals{\geq0}\).
	In particular,
	\vspace{-.2cm}
	\begin{align}
	\notag
	&\hspace{-.2cm}\sce{\rate,\Wm}
	\\
	\label{eq:lem:strongconverseexponent}
	&\!=\!\begin{cases}
	~~~0
	&\!\rate\!\leq\!\RC{1}{\Wm}
	\\
	\sup\limits_{\rno\in [1,\rnf]}\!\tfrac{1-\rno}{\rno}\!\left(\RC{\rno}{\Wm}\!-\!\rate\right)
	&
	\!\rate\!=\!\RC{\rnf}{\Wm}\!\mbox{~for\!~a~}\!\rnf\!\in\!(1,\chi) 
	\\
	\sup\limits_{\rno\in [1,\chi]}\!\tfrac{1-\rno}{\rno}\!\left(\RC{\rno}{\Wm}\!-\!\rate\right)
	&
	\!\rate\!\geq\!\RC{\chi}{\Wm}
	\end{cases}
	\end{align}
	where \(\chi\!=\!\sup\{\rno\!:\!\RC{\rno}{\Wm}\!<\!\infty\}\) 
\end{lemma}
\begin{remark}
	There is a slight abuse of notation in \eqref{eq:lem:strongconverseexponent}
	for channels satisfying \(\lim\nolimits_{\rno\uparrow\infty}\RC{\rno}{\Wm}<\infty\).
	For such channels the term \(\sup\nolimits_{\rno\in [1,\chi]}\) in the last line of
	\eqref{eq:lem:strongconverseexponent} stands for 
	\(\sup\nolimits_{\rno\in [1,\infty)}\). 
\end{remark} 

Note that if \(\chi=1\), then \(\sce{\rate,\Wm}\) is zero for all \(\rate\)'s.

\begin{proof}[Proof of Lemma \ref{lem:strongconverseexponent}]
	\(\sce{\rate,\Wm}\) is convex/nondecreasing in \(\rate\),  
	because \(\tfrac{1-\rno}{\rno}(\RC{\rno}{\Wm}\!-\!\rate)\) is 
	convex/nondecreasing in \(\rate\) for any \(\rno\geq1\)
	and
	the pointwise 
	supremum of a family of convex/nondecreasing functions 
	is convex/nondecreasing.
	Note on the other hand \(\spe{\rate,\Wm}\leq \rate\)
	by definition.
	
	If \(\rate\leq\RC{1}{\Wm}\), then
	the expression \(\tfrac{1-\rno}{\rno}(\RC{\rno}{\Wm}\!-\!\rate)\) 
	can not be positive for \(\rno>1\)
	because
	\(\RC{\rno}{\Wm\!}\) is a nondecreasing in \(\rno\) by 
	Lemma \ref{lem:capacityO}-(\ref{capacityO-ilsc}).
	Thus \(\sce{\rate,\Wm}\) is zero 
	for \(\rate\leq\RC{1}{\Wm}\).
	
	If \(\rate=\RC{\rnf}{\Wm}\) for an \(\rnf\in(1,\chi)\)
	then \(\tfrac{1-\rno}{\rno}(\RC{\rno}{\Wm}\!-\!\rate)\) 
	is not positive for \(\rno\)'s  in \((\rnf,\infty)\). 
	Thus \eqref{eq:lem:strongconverseexponent} holds.
	
	For \(\rate\geq\RC{\chi}{\Wm}\) case, note that
	\(\tfrac{1-\rno}{\rno}(\RC{\rno}{\Wm}\!-\!\rate)\) is
	negative infinity for all \(\rno>\chi\). 
\end{proof}

\begin{lemma}\label{lem:arimoto}
Any \((M,L)\) channel code \((\enc,\dec)\)
on a channel \(\Wm:\inpS\to\pmea{\outA}\)
satisfying \(\abs{\dec(\dout)}=L\) for all \(\dout\in\outS\)
satisfies
\begin{align}
\label{eq:lem:arimoto}
\div{\rno}{\Pem{av}}{1-\tfrac{L}{M}}
&\leq \RMI{\rno}{\mP}{\Wm} 
&
&\forall \rno \in \reals{+} 
\end{align}
where \(\mP\) is the probability mass function generated by the encoder 
\(\enc\)  on \(\inpS\) when each message has equal probability mass  
and 
the function \(\div{\rno}{\cdot}{\cdot}\!:\![0,1]\!\times\![0,1]\!\to\![0,\infty]\)
is defined as 
\begin{align}
\notag
\!\!\div{\rno}{\mA}{\mB}
&\!\DEF\!
\begin{cases}
\tfrac{\ln \left[\mA^{\rno}\mB^{1-\rno}+(1-\mA)^{\rno}(1-\mB)^{1-\rno}\right]}{\rno-1}
&\rno\!\in\!\reals{+}\!\setminus\!\{1\}
\\
\mA\ln \tfrac{\mA}{\mB}+(1-\mA)\ln \tfrac{1-\mA}{1-\mB}
&\rno\!=\!1
\end{cases}.
\end{align}
If \(\RC{1}{\Wm}<\infty\), 
then any \((M,L)\) channel code on \(\Wm\)
such that \(\ln \tfrac{M}{L}\geq \RC{1}{\Wm}\) satisfies
\begin{align}
\label{eq:lem:arimoto-sp}
\Pem{av}\geq 1- e^{-\sce{\ln\frac{M}{L},\Wm}}.
\end{align}
\end{lemma}

We derive Arimoto's bound given in \eqref{eq:lem:arimoto-sp} from
\eqref{eq:lem:arimoto}, 
which is a result of the monotonicity of the \renyi divergence in the underlying 
\(\sigma\)-algebra, i.e.  Lemma \ref{lem:divergence-DPI},
and the expression for the \renyi information given in 
\eqref{eq:lem:information:defA}. 
For \(\rno\!=\!1\), \eqref{eq:lem:arimoto} is equivalent to 
Fano's inequality, \cite[Thm. 3.8]{csiszarkorner}. 
For \(\rno\in(0,1)\) and \(\rno\in(1,\infty)\) the monotonicity of the \renyi divergence 
in the underlying  \(\sigma\)-algebra is equivalent to  the H\"{o}lder's inequality and 
the reverse H\"{o}lder's inequality, respectively.
Sheverdyaev \cite{sheverdyaev82} is the first one to use the reverse H\"{o}lder's inequality to obtain 
a bound equivalent to \eqref{eq:lem:arimoto} for \(\rno\in(1,\infty)\).
Augustin had obtained a similar bound, \cite[Thm. 27.2-(ii)]{augustin78}, using the Jensen's inequality. 
Later, Nagaoka \cite{nagaoka01} derived an analogous bound in the context of quantum information theory.
More recently, Polyanskiy and Verd\'{u} \cite{polyanskiyV10} obtained a bound equivalent to \eqref{eq:lem:arimoto} 
for \(\rno\in(0,1)\cup(1,\infty)\).

First Omura \cite{omura75} and then Dueck and \korner \cite{dueckK79} have shown that 
Arimoto's bound is tight for the DSPCs, 
in terms of the exponential decay rate of the probability of correct decoding with block length.
If \(\RC{\rno}{\Wm}\!=\!\infty\) for all \(\rno\!>\!1\), 
then \(\sce{\rate,\Wm}\!=\!0\) for all \(\rate\!\in\![\RC{1}{\Wm},\infty)\),
see \cite[Example \ref*{A-eg:discontinuity}]{nakiboglu19A} for such a channel. 
Furthermore, \(\sce{\RC{1}{\Wm},\Wm}\!=\!0\) for any \(\Wm\) with finite \(\RC{1}{\Wm}\).
Thus, Arimoto's bound is non-trivial only when \(\ln\frac{M}{L}>\RC{1}{\Wm}\) and \(\RC{\rno}{\Wm}<\infty\) 
for an \(\rno\in (1,\infty)\).

\begin{proof}[Proof of Lemma \ref{lem:arimoto}]
For any sub-\(\sigma\)-algebra \(\alg{G}\) of \(\sss{\mesS}\otimes\outA\)
as a result of \eqref{eq:lem:information:defA} and 
Lemma \ref{lem:divergence-DPI} we have
\begin{align}
\notag
\RDF{\rno}{\alg{G}}{\mP \mtimes \Wm}{\mP\otimes\qmn{\rno,\mP}}
&\leq\RMI{\rno}{\mP}{\Wm}.
\end{align}
Then \eqref{eq:lem:arimoto} holds because 
\(\RDF{\rno}{\alg{G}}{\mP \mtimes \Wm}{\mP\!\otimes\!\qmn{\rno,\mP}}\)
is equal to 
\(\div{\rno}{1\!-\!\Pem{av}}{\!\tfrac{L}{M}\!}\)
when  \(\alg{G}\) is the \(\sigma\)-algebra generated by 
the set \(\{(\dmes,\dout):\dmes\!\in\!\dec(\dout)\}\). 

On the other hand, \(\mA^{\rno}\mB^{1-\rno}\!\leq\!e^{(\rno-1)\div{\rno}{\mA}{\mB}}\) 
for any \(\rno\) in \((1,\infty)\) and \(\mA,\mB\) in \([0,1]\). 
Thus using \(\RMI{\rno}{\mP}{\Wm}\leq \RC{\rno}{\Wm}\) we get
\begin{align}
\notag
(1-\Pem{av})^{\rno} (\tfrac{L}{M})^{1-\rno}
&\leq e^{(\rno-1)\RC{\rno}{\Wm}}
\end{align}
for all \(\rno\in (1,\infty)\). Then  \eqref{eq:lem:arimoto-sp} follows from 
Lemma \ref{lem:strongconverseexponent}.
Note that reducing the size of \(\dec(\dout)\) from \(L\) to a
smaller value for some \(\dout\) can only increase \(\Pem{av}\).
Thus \eqref{eq:lem:arimoto-sp} holds for all \((M,L)\) codes
not just the ones satisfying \(\abs{\dec(\dout)}\!=\!L\) for all
\(\dout\!\in\!\outS\).
\end{proof}

\subsection{The Channel Capacity}\label{sec:Ccapacity} 
All of the concepts we have considered thus far have been defined for a given channel. 
However, some of the most important concepts in information theory, such as the channel capacity, 
are asymptotic concepts that are defined for a given sequence of channels, rather than a given channel.
The sequence of channels in consideration, i.e. 
\(\{\Wm^{(\blx)}\}_{\blx\in\integers{+}}\), 
can usually be described in terms of 
an increasing sequence of input sets satisfying 
\(\inpS^{(\blx)}\!\subset\!\inpS^{(\knd)}\) whenever \(\blx\!\leq\!\knd\)
and a canonical channel \(\Wm\!:\!\inpS\!\to\!\pmea{\outA}\) satisfying
\begin{align}
\notag
\Wm^{(\blx)}(\dinp)
&=\Wm(\dinp)
&
&\dinp\in\inpS^{(\blx)}.
\end{align} 

As an example, consider the Poisson channels \(\Pcha{\tlx,\mA,\mB,\varrho}\) described in \eqref{eq:def:poissonchannel-mean}
and let \(\Wm^{(\blx)}\) be \(\Pcha{\blx,\mA,\mB,\varrho}\). 
When the problem is posed in this form,
\(\outS^{(\blx)}\) is the set of all possible sample paths for the arrival process on the interval \((0,\blx]\)
and hence \(\outS^{(\blx)}\) is different for each value of \(\blx\). 
If we extend the intensity function in time horizon to infinity at fixed intensity level 
\(\varrho\) for each \(\Wm(\dinp)\) for \(\dinp\in\inpS^{(\blx)}\) and let \(\outS^{(\blx)}\) be 
the set of all possible sample paths for the arrival process on \(\reals{+}\),
we get a model that is equivalent to the original one.
This equivalent model, however, satisfies the above mentioned conditions.  
A similar modification works, if  we are given a sequence of channels 
\(\{\Wmn{\ind}\}_{\ind\in\integers{+}}\) and  \(\Wm^{(\blx)}\) is  
the product channel \(\Wmn{[1,\blx]}\), 
or the product channel with feedback \(\Wmn{\vec{[1,\blx]}}\).

\begin{definition}\label{def:Ccapacity}
Given a sequence of channels \(\{\Wm^{(\blx)}\}_{\blx\in\integers{+}}\) 
and  a sequence of scaling factors \(\{\tin^{(\blx)}\}_{\blx\in\integers{+}}\), 
a sequence of codes \(\{(\enc^{(\blx)},\dec^{(\blx)})\}_{\blx\in\integers{+}}\) is reliable iff 
\begin{align}
%\label{eq:def:asmpptoticreliability}
\notag
\lim\nolimits_{\blx \to \infty} \Pem{av}^{(\blx)}
&=0.
\end{align}
A sequence of codes \(\{(\enc^{(\blx)},\dec^{(\blx)})\}_{\blx\in\integers{+}}\) is of rate \(\rate\) iff
\begin{align}
%\label{eq:def:rate}
\notag
\liminf\nolimits_{\blx \to \infty}   \tfrac{1}{\tin^{(\blx)} }\ln \tfrac{M^{(\blx)}}{L^{(\blx)}}
&=\rate.
\end{align}
A sequence of codes \(\{(\enc^{(\blx)},\dec^{(\blx)})\}_{\blx\in\integers{+}}\) is of list size \(L\)
iff \(L^{(\blx)}=L\) for all \(\blx\in\integers{+}\).

\emph{The channel capacity} for the pair \(\{(\Wm^{(\blx)},\tin^{(\blx)})\}_{\blx\in\integers{+}}\),
denoted by \(\SC{\Wm^{(\blx)},\tin^{(\blx)}}\), is the supremum of the rates of  
list size one reliable sequences of codes.
\end{definition}

The proof of the existence of reliable sequences for rates less than 
\(\SC{\Wm^{(\blx)},\tin^{(\blx)}}\) is usually called the direct part. 
The proof of the non-existence of reliable sequences for rates greater than  
\(\SC{\Wm^{(\blx)},\tin^{(\blx)}}\) is usually called the converse part.
For certain sequences of channels, one can strengthen the converse part 
by proving that \(\Pem{av}^{(\blx)}\) converges to one for all rate \(\rate\) 
sequences of codes for any \(\rate\) greater than \(\SC{\Wm^{(\blx)},\tin^{(\blx)}}\). 
These results are called the strong converses as opposed to the weak converses, which only establish that \(\Pem{av}^{(\blx)}\) 
is bounded away from zero.

\begin{theorem}\label{thm:capacity}
Let \(\{\Wm^{(\blx)}\}_{\blx\in\integers{+}}\) be a sequence of channels 
and  \(\{\tin^{(\blx)}\}_{\blx\in\integers{+}}\) be a sequence of scaling factors such that 
\(\lim_{\blx\to\infty}\tin^{(\blx)}=\infty\).
\begin{enumerate}[(a)]
\item\label{capacity:weak} If there exists an \(\varepsilon>0\) and a lower semicontinuous  function \(\rens\) such 
that
\begin{align}
\label{eq:lem:capacity:weak}
\lim\nolimits_{\blx \to \infty} \tfrac{\RC{\rno}{\Wm^{(\blx)}}}{\tin^{(\blx)}}
&=\rens(\rno)
&
&\forall  \rno \in [1-\varepsilon,1]
\end{align}
then \(\SC{\Wm^{(\blx)},\tin^{(\blx)}}=\rens(1)\).
\item\label{capacity:strong} If there exists an \(\varepsilon>0\) and an upper semicontinuous \(\rens\) 
satisfying \(\rens(1+\varepsilon)<\infty\) and
\begin{align}
\label{eq:lem:capacity:strong}
\limsup\nolimits_{\blx \to \infty} \tfrac{\RC{\rno}{\Wm^{(\blx)}}}{\tin^{(\blx)}}
&=\rens(\rno)
&
&\forall  \rno \in [1,1+\varepsilon]
\end{align}
then \(\lim_{\blx\to \infty} \Pem{av}^{(\blx)}=1\) for any rate \(\rate\) sequence of codes for \(\rate>\rens(1)\).
\end{enumerate}
\end{theorem}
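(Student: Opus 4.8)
The plan is to read off the channel capacity in part~(\ref{capacity:weak}) from Gallager's inner bound (Lemma~\ref{lem:gallager}) for the direct part and from Arimoto's outer bound (Lemma~\ref{lem:arimoto}) at order one for the converse part, and to obtain the strong converse of part~(\ref{capacity:strong}) from Arimoto's outer bound \eqref{eq:lem:arimoto-sp} evaluated at an order slightly larger than one. In all three steps the convergence hypotheses \eqref{eq:lem:capacity:weak} and \eqref{eq:lem:capacity:strong} enter only by letting one replace $\RC{\rno}{\Wm^{(\blx)}}/\tin^{(\blx)}$ by $\rens(\rno)$ in the limit, while the semicontinuity of $\rens$ is what transfers statements about orders near one to order one.

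\emph{Part~(\ref{capacity:weak}).} For the direct part I would fix $\rate<\rens(1)$ and, using lower semicontinuity of $\rens$ at one, pick an order $\rno\in[\,\tfrac{1}{2}\vee(1-\varepsilon),\,1)$ and a constant $c>0$ with $\rens(\rno)>\rate+c$ (trivial if $\rens(1)=\infty$). With $L^{(\blx)}=1$, $M^{(\blx)}=\lceil e^{\rate\tin^{(\blx)}}\rceil$, and $\mP^{(\blx)}$ chosen so that $\RMI{\rno}{\mP^{(\blx)}}{\Wm^{(\blx)}}\geq \ln M^{(\blx)}+c\,\tin^{(\blx)}$ --- possible for all large $\blx$ because $\RC{\rno}{\Wm^{(\blx)}}/\tin^{(\blx)}\to\rens(\rno)$ --- inequality \eqref{eq:lem:gallager} gives
\begin{align}
\notag
\limsup\nolimits_{\blx\to\infty}\tfrac{1}{\tin^{(\blx)}}\ln \Pem{av}^{(\blx)}
&\leq \tfrac{\rno-1}{\rno}\,c<0,
\end{align}
so this list size one, rate $\rate$ code sequence is reliable and $\SC{\Wm^{(\blx)},\tin^{(\blx)}}\geq\rate$; taking the supremum over $\rate<\rens(1)$ yields $\SC{\Wm^{(\blx)},\tin^{(\blx)}}\geq\rens(1)$. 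For the converse part I may assume $\rens(1)<\infty$, take any list size one, rate $\rate$ code sequence with $\rate>\rens(1)$, and combine \eqref{eq:lem:arimoto} at $\rno=1$, the bound $\RMI{1}{\mP}{\Wm^{(\blx)}}\leq\RC{1}{\Wm^{(\blx)}}$, and the elementary estimate $\div{1}{\mA}{1-\tfrac{1}{M}}\geq(1-\mA)\ln M-\ln 2$ to get $1-\Pem{av}^{(\blx)}\leq(\RC{1}{\Wm^{(\blx)}}+\ln 2)/\ln M^{(\blx)}$; by \eqref{eq:lem:capacity:weak} and the rate hypothesis the right-hand side has limit superior at most $\rens(1)/\rate<1$, so the sequence is not reliable and $\SC{\Wm^{(\blx)},\tin^{(\blx)}}\leq\rate$ for every $\rate>\rens(1)$. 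Together the two parts give $\SC{\Wm^{(\blx)},\tin^{(\blx)}}=\rens(1)$.

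\emph{Part~(\ref{capacity:strong}).} I would fix a rate $\rate$ code sequence with $\rate>\rens(1)$ and use upper semicontinuity of $\rens$ at one together with $\rens(1+\varepsilon)<\infty$ to pick an order $\rno\in(1,1+\varepsilon]$ with $\rens(\rno)<\rate$; since $\RC{\cdot}{\cdot}$ is nondecreasing in the order (Lemma~\ref{lem:capacityO}-(\ref{capacityO-ilsc})), also $\limsup\nolimits_{\blx\to\infty}\RC{1}{\Wm^{(\blx)}}/\tin^{(\blx)}\leq\rens(\rno)<\rate$, hence $\ln\tfrac{M^{(\blx)}}{L^{(\blx)}}\geq\RC{1}{\Wm^{(\blx)}}$ for all large $\blx$ and \eqref{eq:lem:arimoto-sp} applies. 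Keeping only the order-$\rno$ term in the supremum defining the sphere packing exponent,
\begin{align}
\notag
\spe{\ln\tfrac{M^{(\blx)}}{L^{(\blx)}},\Wm^{(\blx)}}
&\geq \tfrac{\rno-1}{\rno}\,\tin^{(\blx)}\left(\tfrac{1}{\tin^{(\blx)}}\ln\tfrac{M^{(\blx)}}{L^{(\blx)}}-\tfrac{\RC{\rno}{\Wm^{(\blx)}}}{\tin^{(\blx)}}\right),
\end{align}
and the right-hand side tends to $+\infty$ because $\rno>1$, $\tin^{(\blx)}\to\infty$, and the parenthesis has limit inferior at least $\rate-\rens(\rno)>0$ by \eqref{eq:lem:capacity:strong} and the rate hypothesis; hence $\Pem{av}^{(\blx)}\to1$ by \eqref{eq:lem:arimoto-sp}. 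One could also skip \eqref{eq:lem:arimoto-sp} entirely and feed $\RMI{\rno}{\mP}{\Wm^{(\blx)}}\leq\RC{\rno}{\Wm^{(\blx)}}$ straight into \eqref{eq:lem:arimoto}, exactly as in the proof of Lemma~\ref{lem:arimoto}.

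I do not anticipate a real obstacle: the whole argument is ``Gallager's inner bound plus Arimoto's outer bound, evaluated at a well-chosen order near one.'' The points that need care are using lower semicontinuity of $\rens$ in part~(\ref{capacity:weak}) but upper semicontinuity of $\rens$ in part~(\ref{capacity:strong}), keeping the sign of $\tfrac{1-\rno}{\rno}$ straight across $\rno=1$, and the small amount of bookkeeping needed when $\RC{\rno}{\Wm^{(\blx)}}$ or $\rens(1)$ is infinite in the direct part.
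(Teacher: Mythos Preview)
Your proposal is correct and follows essentially the same route as the paper: Gallager's inner bound \eqref{eq:lem:gallager} for the direct half of part~(\ref{capacity:weak}), Fano's inequality (i.e.\ \eqref{eq:lem:arimoto} at \(\rno=1\)) for the converse half, and Arimoto's bound at an order \(\rno\in(1,1+\varepsilon)\) for part~(\ref{capacity:strong}). The only cosmetic difference is that in part~(\ref{capacity:strong}) the paper bypasses \eqref{eq:lem:arimoto-sp} and works directly from \eqref{eq:lem:arimoto} (your ``one could also skip'' alternative), which spares the check that \(\ln\tfrac{M^{(\blx)}}{L^{(\blx)}}\geq \RC{1}{\Wm^{(\blx)}}\); your detour through \eqref{eq:lem:arimoto-sp} is harmless since you verify that hypothesis anyway.
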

\begin{remark}
We do not need the limits given in \eqref{eq:lem:capacity:weak} 
to exist for part (\ref{capacity:weak}). 
We only need 
\(\limsup_{\blx \to \infty} \sfrac{\RC{1}{\Wm^{(\blx)}}}{\tin^{(\blx)}}=\rens(1)\)
and
\(\liminf_{\blx \to \infty} \sfrac{\RC{\rno}{\Wm^{(\blx)}}}{\tin^{(\blx)}}=\rens(\rno)\)
for all \(\rno\in (1-\varepsilon,1)\).
\end{remark}

Theorem \ref{thm:capacity}, which is essentially a corollary of Lemmas \ref{lem:gallager} and \ref{lem:arimoto},
establishes the equality of the channel capacity to the scaled 
order one \renyi capacity for a class of channels much larger than the DSPCs
and  provides a sufficient condition for the existence of a strong converse.
Theorem \ref{thm:capacity}, however, does not claim 
either that the condition given in part (\ref{capacity:weak}) 
is necessary for the equality of the channel capacity to the scaled order one \renyi capacity, 
or that the condition given in part (\ref{capacity:strong}) is necessary 
for the existence of a strong  converse.

The hypotheses of Theorem \ref{thm:capacity} are relatively easy to check if \(\Wm^{(\blx)}\)'s are product channels
or product channels with feedback because of the additivity of the \renyi capacity established in 
Lemmas \ref{lem:capacityProduct} and \ref{lem:capacityFproduct}.
In particular, let \(\{\Wmn{\ind}\}_{\ind\in\integers{+}}\) be  a sequence of channels indexed by 
the positive integers such that
\begin{align}
\notag
\lim\nolimits_{\blx \to \infty}\tfrac{1}{\blx}
\sum\nolimits_{\ind=1}^{\blx} \RC{\rno}{\Wmn{\ind} }  
&\!=\!\rens(\rno)
&
&\forall \rno\in[1-\varepsilon,1+\varepsilon].
\end{align}
If \(\rens\) is a lower semicontinuous function on \([1-\varepsilon,1]\), 
then 
\(\SC{\Wmn{[1,\blx]},\blx}\!=\!\SC{\Wmn{\vec{[1,\blx]}},\blx}\!=\!\rens(1)\),
by Theorem \ref{thm:capacity}-(\ref{capacity:weak}), Lemma \ref{lem:capacityProduct},
 and  Lemma \ref{lem:capacityFproduct}. 
Furthermore, both \(\Wmn{[1,\blx]}\) and \(\Wmn{\vec{[1,\blx]}}\) have 
the strong converse property
provided that \(\rens\) is upper semicontinuous on \([1,1+\varepsilon]\) and finite at \((1+\varepsilon)\).

If \(\Wmn{\ind}\!=\!\Wm\) for all \(\ind\) for some \(\Wm\), 
i.e. if the product channel is stationary, 
then \eqref{eq:lem:capacity:weak} holds for \(\rens(\!\rno\!)\!=\!\RC{\rno}{\Wm}\) 
and the continuity of \(\rens\) on \((0,1]\) 
follows from  Lemma \ref{lem:capacityO}-(\ref{capacityO-zo}). Thus 
\(\SC{\Wmn{[1,\blx]},\blx}\!=\!\SC{\Wmn{\vec{[1,\blx]}},\blx}\!=\!\RC{1}{\Wm}\).
Furthermore, if \(\RC{\rnf}{\Wm}\!<\!\infty\) for a \(\rnf\!>\!1\), 
then \(\RC{\rno}{\Wm}\) is continuous on \((0,\rnf]\) by 
Lemma \ref{lem:capacityO}-(\ref{capacityO-continuity})
and the strong converse holds for both  
\(\Wmn{[1,\blx]}\)  and \(\Wmn{\vec{[1,\blx]}}\).
Recall that there are  SPCs that do not 
have  the strong converse property,
as demonstrated by Augustin \cite[pp. 57,58]{augustin66},
\cite[pp. 79,80]{augustin78}.
However,  \(\liminf_{\blx\to \infty} \Pem{av}^{(\blx)}\geq\varepsilon_{0}\)
for any rate \(\rate\) sequence of codes for \(\rate>\RC{1}{\Wm}\) on a SPC
where \(\varepsilon_{0}>0\) is a universal constant that does
not depend on the channel \(\Wm\), as a consequence of \cite{beckC78}
by Beck and \csiszar\!\!.

\begin{remark}
Augustin works with \(\Pem{\max}\DEF\max_{\dmes} \Pem{\dmes}\), rather than 
\(\Pem{av}\), and provides a necessary and sufficient condition for the strong converse 
property for the SPCs \cite[Cor. 10.3.1]{augustin66}, \cite[Cor. 13.4]{augustin78}.
The SPC described in \cite[pp. 57]{augustin66}, \cite[pp. 79]{augustin78} 
does not satisfy this condition. Hence, the strong converse does not hold for
\(\Pem{\max}\). The strong converse does not hold for \(\Pem{av}\)
either, because \(\Pem{\max(\blx)}\!>\!\Pem{av(\blx)}\).
%In \cite[\S10]{augustin66} and \cite[\S13]{augustin78}, Augustin discusses the 
%existence of the strong converse for the product channels more generally.
\end{remark}

\begin{remark}
To be precise, Beck and \csiszar \cite{beckC78} showed 
\(\liminf_{\blx\to \infty} \Pem{\max}^{(\blx)}\geq\epsilon_{0}\)
for a universal constant \(\epsilon_{0}\)
but this implies 
\(\liminf_{\blx\to \infty} \Pem{av}^{(\blx)}\geq \sfrac{\epsilon_{0}}{2}\)
by the standard application of Markov's inequality to the channel codes.
\end{remark}

Theorem \ref{thm:capacity} applies to certain sequences of 
memoryless channels that are not product channels, as well.
For example, the Poisson channels 
\(\Pcha{\tlx,\mA,\mB,\varrho}\), \(\Pcha{\tlx,\mA,\mB,\leq \varrho}\), 
\(\Pcha{\tlx,\mA,\mB,\geq\varrho}\),
whose input sets are described in \eqref{eq:def:poissonchannel},
satisfy the hypotheses for both parts of Theorem \ref{thm:capacity}
provided that \(\tlx^{(\blx)}\!=\!\tin^{(\blx)}\!=\!\blx\) and \(\mB\) is finite.
Thus \(\SC{\Pcha{\blx},\blx}=\sfrac{\RC{1}{\Pcha{\tlx}}}{\tlx}\) for any \(\tlx>0\)
and the strong converse property holds
for each one of these  Poisson channels,
see \cite[\S\ref*{A-sec:examples-poisson}]{nakiboglu19A}
for closed form expressions.
The closed form expressions for \(\SC{\Pcha{\blx,\mA,\mB},\blx}\) and \(\SC{\Pcha{\blx,\mA,\mB,\leq \varrho},\blx}\) 
	were determined by Kabanov \cite{kabanov78} and Davis \cite{davis80}, but only with a weak converse.
	All of the previous strong converses for the Poisson channels are for zero dark current cases:
	for \(\Pcha{\blx,0,\mB,\leq \varrho}\) by Burnashev and Kutoyants \cite{burnashevK99}, 
	for \(\Pcha{\blx,0,\mB}\) by Wagner and Anantharam \cite{wagnerA04}.
Hypotheses of Theorem \ref{thm:capacity} can be confirmed for certain channels with memory, as well. 
But formal statement and confirmation of those claims are beyond the scope of the current article. 

Lemma \ref{lem:arimoto}
imply \(\Pem{av}\!\geq\!1-e^{-\sce{\rate,\Wm}}\) for \(\rate\!\geq\!\RC{1}{\Wm}\) 
for  the Poisson channels described in \eqref{eq:def:poissonchannel}.
This was reported before by Wagner and Anantharam \cite{wagnerA04}, 
but only for \(\Pcha{\blx,0,\mB}\).

Before presenting the proof of Theorem \ref{thm:capacity}, 
let  us point out a sequence 
violating both hypotheses of Theorem \ref{thm:capacity}. 
Consider \(\cha\)'s described in \cite[Example \ref*{A-eg:erasure}]{nakiboglu19A}
for \(\tin^{(\blx)}=\ln \blx\). Then, 
\begin{align}
\notag
\lim\limits_{\blx\to\infty}\tfrac{\RC{\rno}{\Wm^{(\blx)}}}{\tin^{(\blx)}}
&=\begin{cases}
0
&\rno\in(0,1)
\\
1-\gamma
&\rno=1
\\
1
&\rno\in(1,\infty) 
\end{cases}.
\end{align}
Theorem \ref{thm:capacity} is silent about the channel capacity 
of this sequence of channels.
But \(\lim_{\blx \to \infty} M^{(\blx)}\!=\!1\) for any list size one reliable sequence of codes
because  \(\Pem{av}\!\geq\!\tfrac{M-1}{M}\gamma\) for any \((M,1)\) channel code on \(\Wm^{(\blx)}\).
Thus \(\SC{\Wm^{(\blx)},\tin^{(\blx)}}\!=\!0\) for any  \(\tin^{(\blx)}\) such that 
\(\lim_{\blx \to \infty} \tin^{(\blx)}\!>\!0\).
Previously, this sequence of channels is used by Verd\'{u} and Han \cite{verduH94} 
to motivate their investigation 
of a general expression for the channel capacity.

\begin{proof}[Proof of Theorem \ref{thm:capacity}]~
\begin{enumerate}
\item[(\ref{capacity:weak})]
Let us first prove the direct part.
For any \(\rate\!<\!\rens(1)\) there exists an \(\rno_{0}\)  such that 
\(\rens(\rno)\!>\!\rate\) for all \(\rno\!\geq\!\rno_{0}\) because 
\(\rens\) is lower  semicontinuous.
Thus, there exists an \(\rno_{1}\!\in\!(1\!-\!\varepsilon,1)\) such that 
\(\rens(\rno_{1})\!>\!\rate\).
Furthermore, there exists an \(\blx_{0}\) such that 
\(\sfrac{\RC{\rno_{1}}{\Wm^{(\blx)}}}{\tin^{(\blx)}}\!>\!\tfrac{\rens(\rno_{1})+\rate}{2}\) for all 
\(\blx\!\geq\!\blx_{0}\)
because \(\liminf_{\blx\to\infty}\!\sfrac{\RC{\rno_{1}}{\Wm^{(\blx)}}}{\tin^{(\blx)}}\!=\!\rens(\rno_{1})\).
Consequently, for each \(\blx\!\geq\!\blx_{0}\) there exists a \(\mP^{(\blx)}\) in
\(\pdis{\inpS^{(\blx)}}\) satisfying 
\(\sfrac{\RMI{\rno_{1}}{\mP^{(\blx)}}{\Wm^{(\blx)}}}{\tin^{(\blx)}}\!>\!\tfrac{\rens(\rno_{1})+\rate}{2}\) 
by the definition of \renyi capacity.
Then for each \(\blx\!>\!\blx_{0}\) there exists a 
\((\lfloor e^{\rate \tin^{(\blx)}}\rfloor,1)\) channel code on  \(\Wm^{(\blx)}\) 
satisfying
\begin{align}
\notag
\Pem{av}^{(\blx)}
&\leq e^{\frac{\rno_{1}-1}{\rno_{1}} \frac{\rens(\rno_{1})-\rate}{2} \tin^{(\blx)}}
\end{align}
by  Lemma \ref{lem:gallager}.
Thus there exists a rate \(\rate\) list size one reliable sequence
for any \(\rate\!<\!\rens(1)\).

If \(\rens(1)\) is infinite, then the converse is trivial.
Thus we assume  \(\rens(1)\) to be finite. 
For any \(\rate\!>\!\rens(1)\),  there exists an \(\blx_{0}\) such that \(\sfrac{\RC{1}{\Wm^{(\blx)}}}{\tin^{(\blx)}}\!<\!\tfrac{\rate+2\rens(1)}{3}\) 
for all \(\blx\!\geq\!\blx_{0}\) because  
\(\limsup_{\blx\to\infty}\!\sfrac{\RC{1}{\Wm^{(\blx)}}}{\tin^{(\blx)}}\!=\!\rens(1)\).
Furthermore, for any rate \(\rate\) sequence there exists an \(\blx_{1}\) such that 
\(\sfrac{\ln M^{(\blx)}}{\tin^{(\blx)}}>\tfrac{2\rate+\rens(1)}{3}\) for all \(\blx\geq\blx_{1}\)
because \(\liminf_{\blx\to\infty} \sfrac{\ln M^{(\blx)}}{\tin^{(\blx)}}=\rate\).
On the other hand, \((1\!-\!\Pem{av}^{(\blx)})\ln M^{(\blx)}\!-\!\ln 2\leq \RC{1}{\Wm^{(\blx)}}\) 
by  \eqref{eq:lem:arimoto}  for \(\rno=1\),
i.e. by Fano's inequality. Thus
\begin{align}
\notag
\Pem{av}^{(\blx)}
&\geq \tfrac{\rate-\rens(1)}{2\rate+\rens(1)}-\tfrac{3 \ln 2}{(2\rate+\rens(1))\tin^{(\blx)}}
&
&\forall \blx>(\blx_{0}\vee\blx_{1}).
\end{align}
Consequently,
there does not exist a rate \(\rate\) reliable sequence of codes for \(\rate\!>\!\rens(1)\).

\item[(\ref{capacity:strong})] 
For any \(\rate\!>\!\rens(1)\) there exists an \(\rno_{0}\) such that 
\(\rens(\rno)\!<\!\rate\) for all \(\rno\!\leq\!\rno_{0}\) because 
\(\rens\) is upper semicontinuous by the hypothesis.
Thus there exists an \(\rno_{1}\!\in\!(1,1\!+\!\varepsilon)\) 
such that \(\rens(\rno_{1})\!<\!\rate\).
Hence, there exists an \(\blx_{0}\) satisfying the inequality
\(\sfrac{\RC{\rno_{1}}{\Wm^{(\blx)}}}{\tin^{(\blx)}}\!<\!\tfrac{\rate+2\rens(\rno_{1})}{3}\) 
for all \(\blx\!\geq\!\blx_{0}\)
because \(\limsup_{\blx\to\infty} \sfrac{\RC{\rno_{1}}{\Wm^{(\blx)}}}{\tin^{(\blx)}}\!=\!\rens(\rno_{1})\).
Furthermore, for any rate \(\rate\) sequence of codes there exists an \(\blx_{1}\) satisfying 
the inequality
\(\sfrac{\ln M^{(\blx)}}{\tin^{(\blx)}}\!>\!\tfrac{2\rate+\rens(\rno_{1})}{3}\)
for all \(\blx\!\geq\!\blx_{1}\) because 
\(\liminf_{\blx\to\infty}\sfrac{\ln M^{(\blx)}}{\tin^{(\blx)}}=\rate\). 
Thus \eqref{eq:lem:arimoto} implies that
\begin{align}
\notag
\Pem{av}^{(\blx)}
&\geq 1-e^{\frac{\rno_{1}-1}{\rno_{1}} \frac{\rens(\rno_{1})-\rate}{3} \tin^{(\blx)}}
&
&\forall \blx>(\blx_{0}\vee\blx_{1}).
\end{align}
Thus \(\lim_{\blx\to\infty} \Pem{av}^{(\blx)}=1\) for any rate \(\rate\) sequence of codes for \(\rate>\rens(1)\).
\end{enumerate}
\end{proof}

\section*{Acknowledgment}
The author would like to thank Fatma Nakibo\u{g}lu and Mehmet Nakibo\u{g}lu for their hospitality; 
this work simply would not have been possible without it.  
The author would like to thank 
Imre Csisz\'{a}r for pointing out Agustin's work at Austin in 2010 ISIT,
Harikrishna R. Palaiyanur for sending him Augustin's manuscript \cite{augustin78},
Vincent Tan for pointing out \cite[Thm. 35-(b), (469)]{sasonV16},
Reviewer II for pointing out \cite{augustin66} and \cite{beckC78},
Reviewer III for pointing out \cite{dalaiW17} and \cite{nagaoka01},
and
G\"{u}ne\c{s} Nakibo\u{g}lu and the reviewers for their suggestions on the manuscript.
\bibliographystyle{plain} 
\bibliography{main}
\end{document}